\newtheorem{thm}{Theorem}[section]
\newtheorem{prop}[thm]{Proposition}
\theoremstyle{definition}
\newtheorem{rem}[thm]{Remark}
\def\@biblabel#1{[#1]}
\makeatletter \@addtoreset{equation}{section}
\begin{document}
%\begin{CJK*}{GBK}{song}

\begin{titlepage}
\title{\bf{Double and Triple-Pole Solutions for the Third-Order Flow Equation of the Kaup-Newell System with Zero/Nonzero Boundary Conditions
\footnote{Corresponding authors.\protect\\
\hspace*{3ex} E-mail addresses: ychen@sei.ecnu.edu.cn (Y. Chen)}
}}
\author{Juncai Pu$^{a}$, Yong Chen$^{a,b,*}$\\
%%%%%%%%%%%%%%%%%%%%%%%%%%%%%%%%%%%%%%%%%%%%%%%%%%%%%%%%%%%%%%%%%%%%%%%%%%%%%%%%%%%%%%%%%
%%%%%              以下两行为作者单位
%%%%%%%%%%%%%%%%%%%%%%%%%%%%%%%%%%%%%%%%%%%%%%%%%%%%%%%%%%%%%%%%%%%%%%%%%%%%%%%%%%%%%%%%%
\small \emph{$^{a}$School of Mathematical Sciences, Shanghai Key Laboratory of Pure Mathematics and} \\
\small \emph{Mathematical Practice, East China Normal University, Shanghai, 200241, China} \\
\small \emph{$^{b}$College of Mathematics and Systems Science, Shandong University }\\
\small \emph{of Science and Technology, Qingdao, 266590, China} \\
\date{}}
\thispagestyle{empty}
\end{titlepage}
\maketitle

\vspace{-0.5cm}
\begin{center}
\rule{15cm}{1pt}\vspace{0.3cm}

\parbox{15cm}{\small
{\bf Abstract}\\
\hspace{0.5cm}
In this work, the double and triple-pole solutions for the third-order flow equation of Kaup-Newell system (TOFKN) with zero boundary conditions (ZBCs) and non-zero boundary conditions (NZBCs) are investigated by means of the Riemann-Hilbert (RH) approach stemming from the inverse scattering transformation. Starting from spectral problem of the TOFKN, the analyticity, symmetries, asymptotic behavior of the Jost function and scattering matrix, the matrix RH problem with ZBCs and NZBCs are constructed. Then the obtained RH problem with ZBCs and NZBCs can be solved in the case of scattering coefficients with double or triple zeros, and the reconstruction formula of potential, trace formula as well as theta condition are also derived correspondingly. Specifically, the general formulas of $N$-double and $N$-triple poles solutions with ZBCs and NZBCs are derived systematically by means of determinants. The vivid plots and dynamics analysis for double and triple-pole soliton solutions with the ZBCs as well as double and triple-pole interaction solutions with the NZBCs are exhibited in details. Compared with the most classical second-order flow Kaup-Newell system, we find the third-order dispersion and quintic nonlinear term of the Kaup-Newell system change the trajectory and velocity of solutions. Furthermore, the asymptotic states of the 1-double poles soliton solution and the 1-triple poles soliton solution are analyzed when $t$ tends to infinity.

}

\vspace{0.5cm}
\parbox{15cm}{\small{

\vspace{0.3cm} \emph{Key words: Third-order flow equation of Kaup-Newell system; Riemann-Hilbert approach; Zero/nonzero boundary conditions; Double-pole solutions; Triple-pole solutions}  \\

%\emph{PACS numbers:}  02.30.Ik, 05.45.Yv, 07.05.Mh.
}}
\end{center}
\vspace{0.3cm} \rule{15cm}{1pt} \vspace{0.2cm}

\section{Introduction}\label{Sec1}
For decades, nonlinear integrable systems, which are used to describe complex natural phenomena in the real world, have always been the research hotspots in the field of nonlinear science \cite{Lax(1968),Ablowitz(2003)}. Recently, since abundant solutions for nonlinear integrable systems are one of the most significant characteristics to reveal complex natural phenomena, the study of the integrability and exact solutions for nonlinear integrable systems have been paid more and more attention in optical fiber, fluid dynamics, plasma physics, machine learning and others fields \cite{Hasegawa(1973),Iwao(1997),YuS(1998),Osman(2019),Raissi(2019)}. General speaking, it is difficult to find the localized waves of complex nonlinear integrable systems, with the development of soliton theory in recent decades, some effective methods for solving nonlinear integrable systems have been established, such as inverse scattering transformation (IST) \cite{Gardner(1967)}, Hirota bilinear method \cite{Hirota(2004)}, Darboux transformation \cite{Matveev(1991)}, B\"{a}cklund transformation \cite{Satsuma(1974)}, symmetry theory \cite{Olver(1993)} and deep learning method \cite{Pu(2021)}.

Among these methods mentioned above, the IST is one of the most important and basic theories for solving nonlinear integrable systems. In 1967, Gardner, Greene, Kruskal and Miura (GGKM) discovered the classical IST to solve the initial value problem for the Korteweg-de Vries (KdV) equation with lax pairs \cite{Gardner(1967)}. After that, many researchers try to extend this method to other nonlinear integrable systems which possess so-called lax pairs \cite{Lax(1968)}. Zakharov and Shabat studied the IST of nonlinear Schr\"{o}dinger (NLS) equation in 1972 \cite{Zakharov(1972)}. Later, Ablowitz, Kaup, Newell and Segur (AKNS) proposed a new class of integrable systems, called AKNS systems, and established a general framework for their ISTs in 1973 \cite{Ablowitz(1973),AblowitzM(1973)}. Subsequently, many classical nonlinear integrable systems are proved to be solvable by IST \cite{Wadati(1973),AblowitzMJ(1973),Fokas(1983),Ablowitz(1983)}. However, it is found that the solving process of the classical IST for nonlinear integrable systems with second order spectral problems which was based on the Gel'fand-Levitan-Marchenko integral equations is complicated and tedious, and it is difficult to solve the original IST for the higher order spectral problems of nonlinear integrable systems without the Gel'fand-Levitan-Marchenko theory \cite{Beals(1984)}. Later on, a Riemann-Hilbert (RH) approach was developed which streamlines and simplifies the IST considerably by Zakharov and his collaborators \cite{Zakharov(1984)}. Since 1980s, the RH approach has been applied to nonlinear integrable systems as a more general method than the classical IST. In 1993, Deift and Zhou presented a new and general steepest descent approach to analyzing the asymptotics of oscillatory RH problems, and such problems will arise in the process of evaluating the long-time behavior of nonlinear wave equations solvable by the RH method \cite{Deift(1993)}. Recently, the RH approach has became a powerful tool for constructing IST to obtain abundant solutions of nonlinear integrable systems and dealing with the long time asymptotic behavior of solutions by analysing RH problem. The multisoliton solutions of some important nonlinear integrable systems can be derived by solving a particular RH problem under the reflectionless cases \cite{Yang(2019),Geng(2016),Zhang(2019),Ma(2019),Guo(2012),Peng(2019),Zhang(2020)}. On the other hand, the RH approach has also been utilized to study the initial boundary value problems and the long-time asymptotic behavior for many nonlinear integrable systems \cite{Deift(1993),Lenells(2012),Bilman(2019),Xu(2015),Wang(2019),XuJ(2020)}.

The Kaup-Newell (KN) systems, which are more complex than AKNS system, are of great significance in mathematical physics \cite{Kaup(1978),Qiao(1993)}. The derivative nonlinear Schr\"{o}dinger (DNLS) equation is the most classical second-order flow KN system, which was first derived from Alfv\'{e}n wave propagation in plasma by Mio et al. in 1976, and well described the propagation of small amplitude nonlinear Alfv\'{e}n wave in low plasma \cite{Mio(1976)}. In Ref. \cite{Kaup(1978)}, Kaup and Newell obtained the one-soliton solution and the infinity of conservation laws for the DNLS via using inverse scattering technique. Furthermore, other important KN systems have been widely studied by means of IST and RH approach, such as high order KN equation\cite{ZhuJY(2021)}, the Chen-Lee-Liu equation \cite{Xu(2019)}, the Kundu-type equation \cite{Wen(2020)} and the Gerdjikov-Ivanov equation \cite{ZhangZ(2020)}. However, as far as we know, these aforementioned IST works on these nonlinear integrable systems with ZBCs/NZBCs mainly focus on the case that all discrete spectra are simple. Therefore, a new research topic is to obtain multiple-pole solutions for nonlinear integrable systems with ZBCs/NZBCs by utilizing RH method based on IST, and there are few related works on this aspect at present. For the IST of double-pole solutions, it is worth mentioning that Pichler presented IST for the focusing NLS equation with non-zero boundary conditions at infinity and double zeros of the analytic scattering coefficients \cite{Pichler(2017)}, as well as some important nonlinear integrable systems have been investigated, including defocusing mKdV equations \cite{ZhangG(2020)}, the NLS equation with quartic terms \cite{Wen-AML-2022} and DNLS equation \cite{Zhangg(2020)}. For the IST of triple-pole solutions, there are few nonlinear integrable systems studied, such as the focusing NLS equation \cite{WengWF-PLA-(2021)}, the Gerdjikov-Ivanov equation \cite{Peng(2021)}, the DNLS equation only with ZBCs at infinity \cite{LiuN-AML-(2022)}. In this paper, we consider the double and triple-pole solutions for the third-order flow equation of the KN system (TOFKN) which contains the third-order dispersion and quintic nonlinear terms. In 1999, Imai firstly proposed the coupled TOFKN \cite{Imai(1999)} and pointed out that it is an integrable system, the coupled TOFKN as shown below
\begin{align}\nonumber
\begin{split}
q_{t}=\frac{a_6}{4\varepsilon^2}\bigg(q_{xxx}-\frac{3(qrq_x)_x}{\varepsilon}+\frac{3(q^3r^2)_x}{2\varepsilon^2}\bigg),\\
r_{t}=\frac{a_6}{4\varepsilon^2}\bigg(r_{xxx}+\frac{3(qrr_x)_x}{\varepsilon}+\frac{3(q^2r^3)_x}{2\varepsilon^2}\bigg),
\end{split}
\end{align}
where $q=q(x,t)$, $r=r(x,t)$.

By imposing the condition $r(x,t) = -q(x,t)^*$ (superscript $``*"$ denotes complex conjugation) and taking appropriate parameters $\varepsilon=\mathrm{i},a_6=4$ to the coupled TOFKN, the coupled TOFKN is reduced and obtained the general form of the TOFKN:
\begin{align}\label{T1}
q_{t}+q_{xxx}-3\mathrm{i}(|q|^2q_x)_x-\frac32(|q|^4q)_x=0,
\end{align}
where $|q|^2=q(x,t)q^*(x,t)$. To the best of our knowledge, there are few studies on Eq. \eqref{T1}, and the IST with ZBCs and NZBCs for the \eqref{T1} has not been investigated by utilizing RH method. Lin et al. studied this equation and derived different types of solutions which contain solitons, positons, breathers and rogue waves by using Darboux transformation and generalized Darboux transformation for the KN systems \cite{Lin(2020)}. The TOFKN is completely integrable and associated with the following modified Zakharov-Shabat eigenvalue problem (Lax pairs) \cite{Kaup(1978)}:
\begin{align}\label{T2}
\Psi_x=X\Psi,
\end{align}
\begin{align}\label{T3}
\Psi_t=T\Psi,
\end{align}
where
\begin{align}\nonumber
\begin{split}
&X=X(x,t,\lambda)=\lambda(\mathrm{i}\lambda\sigma_3+Q),\\
&T=T(x,t,\lambda)=\Big(4\lambda^4+\frac32|q|^4-2\lambda^2|q|^2\Big)X+\left(-2\mathrm{i}\lambda^3+3\mathrm{i}\lambda|q|^2\right)\sigma_3Q_x\\
&\qquad\qquad\qquad\quad-\lambda^2(-qq^*_x+q_xq^*)\sigma_3-\lambda Q_{xx},
\end{split}
\end{align}
it is easy to check that Eq. \eqref{T1} as the integrability condition (or zero-curvature condition) $\Psi_{xt}=\Psi_{tx}$ $\big(\text{or } X_t-T_x+[X,T]=0\big)$ of system Eqs. \eqref{T2}-\eqref{T3}, $\Psi=\Psi(\lambda;x,t)$ is a $2\times2$ matrix-valued eigenfunction, $\lambda\in\mathbb{C}$, the potential matrix $Q=Q(x,t)$ is written as
\begin{align}\label{T4}
Q=\bigg(\begin{array}{cc} 0 & q(x,t) \\-q^*(x,t) & 0 \end{array}\bigg),
\end{align}
and the $\sigma_3$ is one of the Pauli's spin matrices given by
\begin{align}\label{T5}
\sigma_1=\bigg(\begin{array}{cc} 0 & 1 \\1 & 0 \end{array}\bigg),\quad\sigma_2=\bigg(\begin{array}{cc} 0 & -\mathrm{i} \\\mathrm{i} & 0 \end{array}\bigg),\quad\sigma_3=\bigg(\begin{array}{cc} 1 & 0 \\0 & -1 \end{array}\bigg).
\end{align}

This paper is organized as follows. In section \ref{Sec2}, the IST for the TOFKN \eqref{T1} with ZBC at infinity is introduced and solved for the double and triple zeros of analytically scattering coefficients by means of the matrix RH problem. The determinant form of explicit $N$-double-pole solutions and $N$-triple-pole solutions have been presented, and the relative plots at fixed $N$ have been given out in detail. Compared with the DNLS in the case of $N=1$, and the correlation analyses are illustrated in detail. Furthermore, we also analyze the asymptotic states of the 1-double poles soliton solution and the 1-triple poles soliton solution when $t$ tends to infinity. In section \ref{Sec3}, the IST for the TOFKN \eqref{T1} with NZBC at infinity is introduced and solved for the double and triple zeros of analytically scattering coefficients. The IST for NZBC at infinity is more complicated than the case of ZBCs since more symmetries and multivalued functions, thus we map the original spectral parameter $\lambda$ into single-valued parameter $z$ by introducing an appropriate two-sheeted Riemann surface. As a result, we derive the general expression for the $N$-double-pole solutions and the $N$-triple-pole solutions for the case of NZBCs by means of determinants, and exhibit vivid plots for the double-pole and triple-pole solutions as taking different $N_1$ and $N_2$ in detail. Moreover, some comparison made between the double-pole dark-bright soliton solution of the TOFKN and the DNLS equation, and the correlation analyses are explained detailly. Summary and conclusion are given out in last section \ref{Sec4}.

\section{The Construction and Solve of RH problem with ZBCs}\label{Sec2}
In this section, we will build the RH problem for the TOFKN \eqref{T1} with ZBCs at infinity as follow
\begin{align}\label{T6}
q(x,t)\sim0,\text{ as }x\rightarrow\pm\infty.
\end{align}

In the following subsection, we will present the IST which contain the direct scattering and the inverse problem for Eq. \eqref{T1} with ZBCs by RH approach respectively.

\subsection{The Direct Scattering with ZBCs}

\subsubsection{Jost Solution, Analyticity and Continuity}
Considering the asymptotic scattering spectrum problem ($x\rightarrow\infty$) of the modified Zakharov-Shabat eigenvalue problem
\begin{align}\label{T7}
\Psi_x=X_0\Psi,
\end{align}
\begin{align}\label{T8}
\Psi_t=T_0\Psi,
\end{align}
where $X_0=\mathrm{i}\lambda^2\sigma_3$ and $T_0=4\lambda^4X_0=4\mathrm{i}\lambda^6\sigma_3$, one can obtain the fundamental matrix solution $\Psi^{\mathrm{bg}}(\lambda;x,t)$ of Eqs. \eqref{T7}-\eqref{T8}
\begin{align}\nonumber
\Psi^{\mathrm{bg}}(\lambda;x,t)=\mathrm{e}^{\mathrm{i}\theta(\lambda;x,t)\sigma_3},\quad\theta(\lambda;x,t)=\lambda^2(x+4\lambda^4t).
\end{align}

Let $\Sigma:=\mathbb{R}\cup \mathrm{i}\mathbb{R}$. Then, the Jost solutions $\psi_{\pm}(\lambda;x,t)$ can be derived as below
\begin{align}\label{T9}
\psi_{\pm}(\lambda;x,t)\thicksim \mathrm{e}^{\mathrm{i}\theta(\lambda;x,t)\sigma_3},\quad\lambda\in\Sigma,\text{ as }x\rightarrow\pm\infty.
\end{align}

In order to obtain the modified Jost solution $\mu_{\pm}(\lambda;x,t)$, we consider a transformation of the form
\begin{align}\label{T10}
\mu_{\pm}(\lambda;x,t)=\psi_{\pm}(\lambda;x,t)\mathrm{e}^{-\mathrm{i}\theta(\lambda;x,t)\sigma_3},
\end{align}
it is evident that
\begin{align}\nonumber
\mu_{\pm}(\lambda;x,t)\thicksim I,\text{ as }x\rightarrow\pm\infty,
\end{align}
where $I$ is $2\times2$ identity matrix. According to the modified Zakharov-Shabat eigenvalue problem Eqs. \eqref{T2}-\eqref{T3} and transformation Eq. \eqref{T10}, one can obtain $\mu_{\pm}(\lambda;x,t)$ satisfy the following equivalent Lax pair:
\begin{align}\label{BT2}
\mu_{\pm,x}(\lambda;x,t)+\mathrm{i}\lambda^2[\mu_{\pm}(\lambda;x,t),\sigma_3]=\lambda Q(x,t)\mu_{\pm}(\lambda;x,t),
\end{align}
\begin{align}\label{BT3}
\mu_{\pm,t}(\lambda;x,t)+4\mathrm{i}\lambda^6[\mu_{\pm}(\lambda;x,t),\sigma_3]=[T(\lambda;x,t)-T_0]\mu_{\pm}(\lambda;x,t),
\end{align}
where the Lie bracket $[L_1,L_2]=L_1L_2-L_2L_1$. Eqs. \eqref{BT2} and \eqref{BT3} can be written in full derivative form
\begin{align}\label{BT4}
d(\mathrm{e}^{-\mathrm{i}\theta(\lambda;x,t)\widehat{\sigma_3}}\mu_{\pm}(\lambda;x,t))=\mathrm{e}^{-\mathrm{i}\theta(\lambda;x,t)\widehat{\sigma_3}}\big\{\big[\lambda Q(x,t)dx+[T(\lambda;x,t)-T_0]dt\big]\mu_{\pm}(\lambda;x,t)\big\},
\end{align}
and $\mu_{\pm}(\lambda;x,t)$ satisfy the Volterra integral equations
\begin{align}\label{T11}
\mu_{\pm}(\lambda;x,t)=I+\int_{\pm\infty}^x\mathrm{e}^{\mathrm{i}\lambda^2(x-y)\widehat{\sigma_3}}(\lambda Q(y,t)\mu_{\pm}(\lambda;y,t))dy,
\end{align}
where $\mathrm{e}^{\chi\widehat{\sigma_3}}E:=\mathrm{e}^{\chi\sigma_3}E\mathrm{e}^{-\chi\sigma_3}$ with $E$ being a $2\times2$ matrix. Let $D^{\pm}:=\{\lambda\in\mathbb{C}|\pm\mathrm{Re}(\lambda)\mathrm{Im}(\lambda)>0\}$, as shown in Fig. \ref{F1}.

\begin{figure}[htbp]
\centerline{\begin{tikzpicture}[scale=1.5]
\path [fill=pink] (-2.5,0) -- (-0.5,0) to
(-0.5,2) -- (-2.5,2);
\path [fill=pink] (-4.5,0) -- (-2.5,0) to
(-2.5,-2) -- (-4.5,-2);
\draw[-][thick](-4.5,0)--(-2.5,0);
\draw[fill] (-2.5,0) circle [radius=0.03];
\draw[->][thick](-2.5,0)--(-0.5,0)node[right]{$\mbox{Re}\lambda$};
\draw[->][thick](-2.5,1)--(-2.5,2)node[above]{$\mbox{Im}\lambda$};
\draw[-][thick](-2.5,1)--(-2.5,0);
\draw[-][thick](-2.5,0)--(-2.5,-1);
\draw[-][thick](-2.5,-1)--(-2.5,-2);
\draw[fill] (-2.5,-0.3) node[right]{$0$};
\draw[fill] (-1.7,0.8) circle [radius=0.03] node[right]{$\lambda_{n}$};
\draw[fill] (-1.7,-0.8) circle [radius=0.03] node[right]{$\lambda^{*}_{n}$};
\draw[fill] (-3.3,0.8) circle [radius=0.03] node[left]{$-\lambda^{*}_{n}$};
\draw[fill] (-3.3,-0.8) circle [radius=0.03] node[left]{$-\lambda_{n}$};
\end{tikzpicture}}
\caption{(Color online) Distribution of the discrete spectrum and jumping curves for the RH problem on complex $\lambda$-plane. Region $D^{+}=\left\{\lambda\in \mathbb{C} \big| \mathrm{Re}\lambda\mathrm{Im}\lambda> 0\right\}$ (pink region), region $D^{-}=\left\{\lambda\in \mathbb{C} \big|\mathrm{Re}\lambda\mathrm{Im}\lambda< 0\right\}$ (white region).}
\label{F1}
\end{figure}
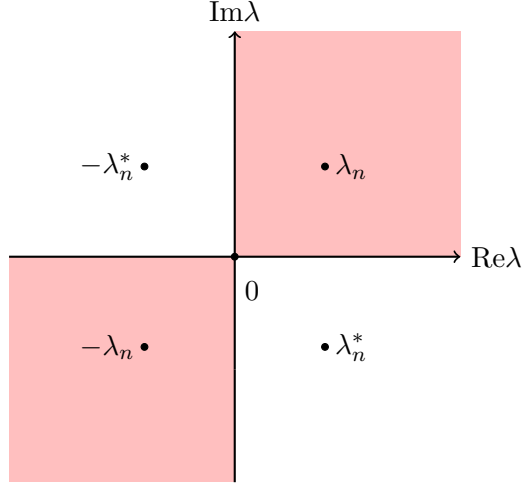

Furthermore, one has the following proposition.
\begin{prop}\label{P1}
Suppose that $q(x,t)\in L^1(\mathbb{R})$ and $\psi_{\pm i}(\lambda;x,t)$ $(\mu_{\pm i}(\lambda;x,t))$ represent the $i$th column of $\psi_{\pm}(\lambda;x,t)$ $\big(\text{modified Jost solutions }\mu_{\pm}(\lambda;x,t)\big)$. Then, the Jost solutions $\psi_{\pm}(\lambda;x,t)$ $\big(\mu_{\pm}(\lambda;x,t)\big)$ possess the properties:\\
$\bullet$ Eq. \eqref{T2} has the unique Jost solutions $\psi_{\pm}(\lambda;x,t)$ $\big(\mu_{\pm}(\lambda;x,t)\big)$ satisfying Eq. \eqref{T9} $($combination of Eqs. \eqref{T9}-\eqref{T10}$)$ on $\Sigma$.\\
$\bullet$ The column vectors $\psi_{+1}(\lambda;x,t)$ $(\mu_{+1}(\lambda;x,t))$ and $\psi_{-2}(\lambda;x,t)$ $(\mu_{-2}(\lambda;x,t))$ can be analytically extended to $D^+$ and continuously extended to $D^+\cup\Sigma$.\\
$\bullet$ The column vectors $\psi_{+2}(\lambda;x,t)$ $(\mu_{+2}(\lambda;x,t))$ and $\psi_{-1}(\lambda;x,t)$ $(\mu_{-1}(\lambda;x,t))$ can be analytically extended to $D^-$ and continuously extended to $D^-\cup\Sigma$.\\
\end{prop}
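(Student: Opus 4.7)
The natural starting point is the Volterra integral equation \eqref{T11}, which encodes both the asymptotic normalization $\mu_{\pm}\to I$ and the differential equation \eqref{BT2}. Uniqueness of the Jost solutions on $\Sigma$ would follow immediately from this reformulation once boundedness of the integral kernel is established. For existence I would construct the Neumann series
\begin{align*}
\mu_{\pm}(\lambda;x,t)=\sum_{n=0}^{\infty}\mu_{\pm}^{(n)}(\lambda;x,t),\qquad \mu_{\pm}^{(0)}=I,
\end{align*}
with $\mu_{\pm}^{(n+1)}$ defined by plugging $\mu_{\pm}^{(n)}$ into the right-hand side of \eqref{T11}. The goal of this first stage is to show, column by column, that this series converges absolutely and uniformly for $\lambda$ in a suitable region.

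The key calculation is to unwrap the action of $\mathrm{e}^{\mathrm{i}\lambda^2(x-y)\widehat{\sigma_3}}$ on the off-diagonal matrix $\lambda Q$: the $(1,1)$ and $(2,2)$ entries of the integrand are left alone, whereas the $(2,1)$ entry picks up the scalar factor $\mathrm{e}^{-2\mathrm{i}\lambda^2(x-y)}$ and the $(1,2)$ entry the factor $\mathrm{e}^{2\mathrm{i}\lambda^2(x-y)}$. Writing $\mathrm{Im}(\lambda^2)=2\,\mathrm{Re}(\lambda)\mathrm{Im}(\lambda)$, one sees that for $\mu_{+1}$ (integration from $+\infty$, so $x-y\le 0$) the relevant exponential $\mathrm{e}^{-2\mathrm{i}\lambda^2(x-y)}$ is bounded precisely when $\mathrm{Im}(\lambda^2)\ge 0$, i.e.\ $\lambda\in\overline{D^+}$; a symmetric analysis gives $\mu_{-2}$ analytic in $D^+$, and $\mu_{+2}$, $\mu_{-1}$ analytic in $D^-$. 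I would then use the hypothesis $q\in L^1(\mathbb{R})$ to bound each iterate in the sup-norm by $\|\,|\lambda Q|\,\|_{L^1}^n/n!$, yielding absolute and uniform convergence of the Neumann series on compact subsets of the appropriate region, which gives both analyticity of each column in $D^\pm$ and its continuous extension to $\Sigma$ via dominated convergence.

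The corresponding statements for $\psi_{\pm}$ then follow from the trivial relation \eqref{T10}, since $\mathrm{e}^{\pm\mathrm{i}\theta(\lambda;x,t)\sigma_3}$ is entire in $\lambda$ and invertible. The mildly delicate issue to be careful about is exactly where the exponential estimates fail: $\mathrm{Im}(\lambda^2)=0$ defines the cross $\Sigma=\mathbb{R}\cup\mathrm{i}\mathbb{R}$, so the boundary values match from both sides by $L^1$-dominated convergence but one cannot hope for analytic extension across $\Sigma$; this is precisely why $\Sigma$ plays the role of the jump contour of the forthcoming RH problem.

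The main obstacle is really just accounting: getting the signs of the exponents, the direction of integration $\int_{\pm\infty}^{x}$, and the column index to align so that exactly the four stated columns are analytic in $D^+$ and the other four in $D^-$. Once that bookkeeping is right, the proof is a standard Neumann-series argument; no further ingredients beyond $q\in L^1(\mathbb{R})$ and the structure of $Q$ in \eqref{T4} are needed.
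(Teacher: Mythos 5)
Your proposal is correct and follows essentially the same route as the paper: both arguments rest on the columnwise Volterra integral equation \eqref{T11}, the observation that the only nontrivial exponential factor has modulus $\mathrm{e}^{\pm 4(x-y)\mathrm{Re}(\lambda)\mathrm{Im}(\lambda)}$, and the matching of the sign of $x-y$ (determined by the direction of integration) with the sign of $\mathrm{Re}(\lambda)\mathrm{Im}(\lambda)$ to decide which columns are bounded in $D^{+}$ versus $D^{-}$, with continuity on $\Sigma$ coming from $\mathrm{Im}(\lambda^{2})=0$ there and the statement for $\psi_{\pm}$ transferred via \eqref{T10}. The only difference is that you make explicit the Neumann-series construction with the $\|\lambda Q\|_{L^{1}}^{\,n}/n!$ bound, which the paper leaves implicit in its appeal to the Volterra equation.
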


\begin{proof}
We define modified Jost solutions $\mu_{\pm}=(\mu_{\pm1},\mu_{\pm2})=\bigg(\begin{array}{cc} \mu_{\pm11} & \mu_{\pm21} \\ \mu_{\pm12} & \mu_{\pm22} \end{array}\bigg)$, in which $\mu_{\pm1}$ and $\mu_{\pm2}$ is the first and second columns of $\mu_{\pm}$, respectively. Then, taking $\mu_{-}$ as an example, Eq. \eqref{T11} can be rewritten as
\begin{align}\nonumber
\bigg(\begin{array}{cc} \mu_{-11} & \mu_{-21} \\ \mu_{-12} & \mu_{-22} \end{array}\bigg)=\bigg(\begin{array}{cc} 1 & 0 \\0 & 1 \end{array}\bigg)+\lambda\int_{\pm\infty}^x\Bigg(\begin{array}{cc} q\mu_{-12} & q\mu_{-22}\mathrm{e}^{2\mathrm{i}\lambda^2(x-y)} \\-q^*\mu_{-11}\mathrm{e}^{-2\mathrm{i}\lambda^2(x-y)} & -q^*\mu_{-21} \end{array}\Bigg)dy,
\end{align}
note that $\mu_{-1}=\bigg(\begin{array}{cc} 1 \\ 0 \end{array}\bigg)+\lambda\int_{\pm\infty}^x\bigg(\begin{array}{cc} q\mu_{-12}  \\-q^*\mu_{-11}\mathrm{e}^{-2\mathrm{i}\lambda^2(x-y)} \end{array}\bigg)dy$, in which, we have $\mathrm{e}^{-2\mathrm{i}\lambda^2(x-y)}=\mathrm{e}^{-2\mathrm{i}(x-y)\mathrm{Re}(\lambda^2)}\mathrm{e}^{4(x-y)\mathrm{Re}(\lambda)\mathrm{Im}(\lambda)}$. Since $x-y>0$ and $\mathrm{Re}(\lambda)\mathrm{Im}(\lambda)<0$, it indicate that the first column of $\mu_{-}$ is analytically extended to $D^{-}$. In addition, since $\mathrm{Im}(\lambda^2)=0$ when $\lambda\in\Sigma$, it demonstrate that the first column of $\mu_{-}$ is continuously extended to $D^-\cup\Sigma$. In the same way, we also obtain the analyticity and continuity of $\mu_{-2},\mu_{+1}$ and $\mu_{+2}$. Similarly, the analyticity and continuity for the Jost solutions $\psi_{\pm}(\lambda;x,t)$ can be simply shown from ones of $\mu_{\pm}(\lambda;x,t)$ and the relation \eqref{T10}.

\end{proof}

\begin{prop}\label{P10}
The Jost solutions $\psi_{\pm}(\lambda;x,t)$ satisfy both parts of the modified Zakharov-Shabat eigenvalue problem \eqref{T2}-\eqref{T3} simultaneously.
\end{prop}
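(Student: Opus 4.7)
The plan is to invoke the zero-curvature compatibility relation $X_t - T_x + [X,T] = 0$ noted just after \eqref{T5}, together with the uniqueness of the Jost functions established in Proposition \ref{P1}. Define the auxiliary matrix $\Phi_{\pm}(\lambda;x,t) := \partial_t\psi_{\pm}(\lambda;x,t) - T\psi_{\pm}(\lambda;x,t)$ and aim to show $\Phi_{\pm} \equiv 0$, which is exactly the assertion that $\psi_{\pm}$ solves \eqref{T3}.

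First I would differentiate $\Phi_{\pm}$ with respect to $x$, replace $\psi_{\pm,x}$ by $X\psi_{\pm}$ using \eqref{T2}, and apply the zero-curvature identity in the form $X_t - T_x = TX - XT$. A short manipulation collapses all stray $T$-terms and yields $\Phi_{\pm,x} = X\Phi_{\pm}$, so $\Phi_{\pm}$ satisfies the same $x$-linear system as the Jost functions themselves.

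Next I would verify the asymptotic boundary condition. Under the ZBC \eqref{T6}, $Q \to 0$ as $x \to \pm\infty$, so $T \to T_0 = 4\mathrm{i}\lambda^6\sigma_3$. Meanwhile \eqref{T9} gives $\psi_{\pm} \sim \mathrm{e}^{\mathrm{i}\theta\sigma_3}$ and, differentiating the asymptotic form in $t$, $\psi_{\pm,t} \sim \mathrm{i}\theta_t\sigma_3\mathrm{e}^{\mathrm{i}\theta\sigma_3} = 4\mathrm{i}\lambda^6\sigma_3\mathrm{e}^{\mathrm{i}\theta\sigma_3}$. The two contributions cancel, giving $\Phi_{\pm}\to 0$ at the corresponding infinity.

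Finally, to conclude I would pass to the gauge-normalized matrix $\widetilde{\Phi}_{\pm} := \Phi_{\pm}\mathrm{e}^{-\mathrm{i}\theta\sigma_3}$, which satisfies the same modified Lax equation \eqref{BT2} as $\mu_{\pm}$ but with trivial asymptotic data $\widetilde{\Phi}_{\pm}\to 0$ as $x\to\pm\infty$. The corresponding homogeneous Volterra integral equation (the analogue of \eqref{T11} with the inhomogeneous term $I$ replaced by $0$) then forces $\widetilde{\Phi}_{\pm}\equiv 0$ via the standard Picard-iteration/contraction argument already underlying Proposition \ref{P1}, and hence $\Phi_{\pm}\equiv 0$. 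The main subtlety I anticipate lies in this last step: one must ensure that the boundary contribution genuinely vanishes uniformly for $\lambda\in\Sigma$ and that the Volterra kernel remains contractive in the presence of the oscillatory factor $\mathrm{e}^{-2\mathrm{i}\lambda^2(x-y)\widehat{\sigma_3}}$, which holds precisely because $\mathrm{Im}(\lambda^2)=0$ along $\Sigma$.
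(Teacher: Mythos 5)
Your proposal is correct and follows essentially the same route as the paper: define the defect $\Phi_{\pm}=\psi_{\pm,t}-T\psi_{\pm}$, use the zero-curvature identity to show it solves the $x$-part \eqref{T2}, and then use the asymptotics $\mu_{\pm}\to I$, $T\to T_0$ as $x\to\pm\infty$ to force it to vanish. The only (minor) difference is the final uniqueness step — the paper writes $\Phi_{\pm}=\psi_{\pm}U_{\pm}(\lambda;t)$ using that $\psi_{\pm}$ is a fundamental matrix solution with $\det\psi_{\pm}=1$ and then kills $U_{\pm}$ by the boundary behavior, whereas you invoke uniqueness for the homogeneous Volterra equation; both are standard and equivalent here.
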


\begin{proof}
The Liouville's formula leads to
\begin{align}\nonumber
\mathrm{det}(\psi_{\pm}(\lambda;x,t))=\lim_{x\rightarrow\pm\infty}\mathrm{det}(\psi_{\pm}(\lambda;x,t))=\lim_{x\rightarrow\pm\infty}\mathrm{det}(\mu_{\pm}(\lambda;x,t))=1,
\end{align}
that show $\psi_{\pm}(\lambda;x,t)$ are the fundamental matrix solutions on $\Sigma$. According to the zero-curvature condition $X_t-T_x+[X,T]=0$, one can obtain that $\psi_{\pm,t}(\lambda;x,t)-T\psi_{\pm}(\lambda;x,t)$ also solve the $x$-part \eqref{T2}, that is
\begin{align}\nonumber
\begin{split}
&(\psi_{\pm,t}(\lambda;x,t)-T\psi_{\pm}(\lambda;x,t))_x=X(\psi_{\pm,t}(\lambda;x,t)-T\psi_{\pm}(\lambda;x,t)),\\
&X_t\psi_{\pm}(\lambda;x,t)+X\psi_{\pm,t}(\lambda;x,t)-T_x\psi_{\pm}(\lambda;x,t)-TX\psi_{\pm}(\lambda;x,t)\\
&=X\psi_{\pm,t}(\lambda;x,t)-XT\psi_{\pm}(\lambda;x,t),\\
&(X_t-T_x+XT-TX)\psi_{\pm}(\lambda;x,t)=0.
\end{split}
\end{align}

Thus, there exist the two matrices $U_{\pm}(\lambda;t)$ such that
\begin{align}\nonumber
\psi_{\pm,t}(\lambda;x,t)-T\psi_{\pm}(\lambda;x,t)=\psi_{\pm}(\lambda;x,t)U_{\pm}(\lambda;t), \text{ as } \lambda\in\Sigma,
\end{align}
multiplying both sides by $\mathrm{e}^{-\mathrm{i}\theta(\lambda;x,t)\sigma_3}$, we have
\begin{align}\nonumber
\begin{split}
(\mu_{\pm}(\lambda;x,t))_t+T_0\mu_{\pm}(\lambda;x,t)-T\mu_{\pm}(\lambda;x,t)=\psi_{\pm}(\lambda;x,t)U_{\pm}(\lambda;t)\mathrm{e}^{-\mathrm{i}\theta(\lambda;x,t)\sigma_3},\text{ as } \lambda\in\Sigma,
\end{split}
\end{align}
and letting $x\rightarrow\pm\infty$, one can find $\mu_{\pm}(\lambda;x,t)\thicksim I,T\thicksim T_0$, and $U_{\pm}(\lambda;t)=0$, that is, $\psi_{\pm}(\lambda;x,t)$ also solve the $t$-part \eqref{T3}.

\end{proof}

\subsubsection{Scattering Matrix and Reflection Coefficients}
Since the Jost solutions $\psi_{\pm}(\lambda;x,t)$ solve the both parts of the modified Zakharov-Shabat eigenvalue problem \eqref{T2}-\eqref{T3}. Therefore, there exist a constant scattering matrix $S(\lambda)=(s_{ij}(\lambda))_{2\times2}$ independent of $x$ and $t$ to satisfy the linear relation between $\psi_{+}(\lambda;x,t)$ and $\psi_{-}(\lambda;x,t)$, where $s_{ij}(\lambda)$ are called the scattering coefficients, we have
\begin{align}\label{T12}
\psi_{+}(\lambda;x,t)=\psi_{-}(\lambda;x,t)S(\lambda),\quad\lambda\in\Sigma,
\end{align}
let $$\psi_+(\lambda;x,t)=(\psi_{+1},\psi_{+2})=\bigg(\begin{array}{cc} \psi_{+11} & \psi_{+12} \\ \psi_{+21} & \psi_{+22} \end{array}\bigg),\,\psi_-(\lambda;x,t)=(\psi_{-1},\psi_{-2})=\bigg(\begin{array}{cc} \psi_{-11} & \psi_{-12} \\ \psi_{-21} & \psi_{-22} \end{array}\bigg),$$ substituting into Eq. \eqref{T12} and using Cramer's rule, one can obtain
\begin{align}\label{T13}
\begin{split}
&s_{11}(\lambda)=\mathrm{det}(\psi_{+1}(\lambda;x,t),\psi_{-2}(\lambda;x,t)),\,s_{12}(\lambda)=\mathrm{det}(\psi_{+2}(\lambda;x,t),\psi_{-2}(\lambda;x,t)),\\
&s_{21}(\lambda)=\mathrm{det}(\psi_{-1}(\lambda;x,t),\psi_{+1}(\lambda;x,t)),\,s_{22}(\lambda)=\mathrm{det}(\psi_{-1}(\lambda;x,t),\psi_{+2}(\lambda;x,t)),\\
\end{split}
\end{align}
where $\mathrm{det}(\psi_{-}(\lambda;x,t))=1$.

\begin{prop}
Suppose that $q(x,t)\in L^1(\mathbb{R})$. Then, $s_{11}(\lambda)$ can be analytically extended to $D^+$ and continuously extended to $D^+\cup\Sigma$, while $s_{22}(\lambda)$ can be analytically extended to $D^-$ and continuously extended to $D^-\cup\Sigma$. Moreover, both $s_{12}(\lambda)$ and $s_{21}(\lambda)$ are continuous in $\Sigma$.
\end{prop}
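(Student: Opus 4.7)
The plan is to read off the analyticity and continuity of each $s_{ij}(\lambda)$ directly from the determinantal representations \eqref{T13} together with the column-by-column analyticity of the Jost solutions established in Proposition \ref{P1}. Since $S(\lambda)$ is independent of $x$ and $t$, each $s_{ij}$ is determined by its values for any fixed $x,t$, so the regularity properties transfer immediately from the corresponding columns of $\psi_{\pm}$.

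First I would verify $s_{11}(\lambda)$. By \eqref{T13}, $s_{11}(\lambda)=\det(\psi_{+1}(\lambda;x,t),\psi_{-2}(\lambda;x,t))$. Proposition \ref{P1} states that both $\psi_{+1}$ and $\psi_{-2}$ are analytic on $D^{+}$ and continuous on $D^{+}\cup\Sigma$. The determinant of a $2\times 2$ matrix is a polynomial in the entries, so the product/sum of analytic (resp.\ continuous) functions remains analytic (resp.\ continuous) on the same set. Hence $s_{11}(\lambda)$ is analytic on $D^{+}$ and continuous on $D^{+}\cup\Sigma$. The argument for $s_{22}(\lambda)=\det(\psi_{-1},\psi_{+2})$ is identical, using that $\psi_{-1}$ and $\psi_{+2}$ are both analytic on $D^{-}$ and continuous on $D^{-}\cup\Sigma$.

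For $s_{12}(\lambda)=\det(\psi_{+2},\psi_{-2})$ and $s_{21}(\lambda)=\det(\psi_{-1},\psi_{+1})$, each determinant mixes a column analytic in $D^{+}$ with a column analytic in $D^{-}$, so no analytic extension off $\Sigma$ is available. However, both columns are continuous on their respective closures $D^{\pm}\cup\Sigma$, and the intersection of the two closures (where both columns simultaneously make sense) is exactly $\Sigma$. Therefore $s_{12}$ and $s_{21}$ are well-defined and continuous on $\Sigma$.

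There is no real obstacle here; the proposition is a direct corollary of Proposition \ref{P1}. The only point requiring a comment is to justify that one may work at a single convenient $(x,t)$: this follows from the fact, already used implicitly in \eqref{T12}, that $S(\lambda)$ is $x,t$-independent, so in particular one can pass to the limit $x\to\pm\infty$ (using $\mu_{\pm}\to I$) to get explicit integral representations of $s_{11},s_{22}$ whose absolute convergence and domain of analyticity in $\lambda$ again follow from $q\in L^{1}(\mathbb{R})$ and the sign of $\operatorname{Re}(\lambda)\operatorname{Im}(\lambda)$ in the exponential factors $\mathrm{e}^{\pm 2\mathrm{i}\lambda^{2}(x-y)}$, reproducing the same conclusions.
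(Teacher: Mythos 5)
Your proposal is correct and follows essentially the same route as the paper: both read the analyticity and continuity of each $s_{ij}(\lambda)$ directly off the determinant representations \eqref{T13} combined with the column-wise analyticity and continuity of the Jost solutions from Proposition \ref{P1}. The additional remark on $x,t$-independence of $S(\lambda)$ is harmless but not needed, since the determinants in \eqref{T13} are already $x,t$-independent by construction.
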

\begin{proof}
From proposition \eqref{P1}, we know that $\psi_{+1}(\lambda;x,t)$ and $\psi_{-2}(\lambda;x,t)$ can be extended analytically to $D^+$ and continuously extended to $D^+\cup\Sigma$. From Eqs. \eqref{T13}, we see that $s_{11}(\lambda)=\mathrm{det}(\psi_{+1}(\lambda;x,t),\psi_{-2}(\lambda;x,t))$, one can obtain $s_{11}(\lambda)$ can be analytically extended to $D^+$ and continuously extended to $D^+\cup\Sigma$. Similarly, we can prove $s_{22}(\lambda)$ can be analytically extended to $D^-$ and continuously extended to $D^-\cup\Sigma$. Due to $\psi_{\pm1}(\lambda;x,t)$ and $\psi_{\pm2}(\lambda;x,t)$ are continuous in $\Sigma$, both $s_{12}(\lambda)$ and $s_{21}(\lambda)$ are continuous in $\Sigma$.
\end{proof}

Note that it cannot be ruled out that $s_{11}(\lambda)$ and $s_{22}(\lambda)$ may exist zeros along $\Sigma$. In order to study the RH problem in the inverse process, we focus on the potential without spectral singularity. In general, the reflection coefficients $\rho(\lambda)$ and $\tilde{\rho}(\lambda)$ are defined by
\begin{align}\label{T14}
\rho(\lambda)=\frac{s_{21}(\lambda)}{s_{11}(\lambda)},\quad \tilde{\rho}(\lambda)=\frac{s_{12}(\lambda)}{s_{22}(\lambda)}.
\end{align}

\subsubsection{Symmetry Properties}
\begin{prop}\label{P4}
$X(\lambda;x,t)$, $T(\lambda;x,t)$, Jost solutions, modified Jost solutions, scattering matrix and reflection coefficients have two kinds of symmetry reductions as follow:\\
$\bullet$ The first symmetry reduction\\
\begin{align}\label{T15}
\begin{split}
X(\lambda;x,t)&=\sigma_2X(\lambda^*;x,t)^*\sigma_2,\quad T(\lambda;x,t)=\sigma_2T(\lambda^*;x,t)^*\sigma_2,\\
\psi_{\pm}(\lambda;x,t)&=\sigma_2\psi_{\pm}(\lambda^*;x,t)^*\sigma_2,\quad \mu_{\pm}(\lambda;x,t)=\sigma_2\mu_{\pm}(\lambda^*;x,t)^*\sigma_2,\\
S(\lambda)&=\sigma_2S(\lambda^*)^*\sigma_2,\quad \rho(\lambda)=-\tilde{\rho}(\lambda^*)^*.
\end{split}
\end{align}
$\bullet$ The second symmetry reduction\\
\begin{align}\label{T16}
\begin{split}
X(\lambda;x,t)&=\sigma_1X(-\lambda^*;x,t)^*\sigma_1,\quad T(\lambda;x,t)=\sigma_1T(-\lambda^*;x,t)^*\sigma_1,\\
\psi_{\pm}(\lambda;x,t)&=\sigma_1\psi_{\pm}(-\lambda^*;x,t)^*\sigma_1,\quad \mu_{\pm}(\lambda;x,t)=\sigma_1\mu_{\pm}(-\lambda^*;x,t)^*\sigma_1,\\
S(\lambda)&=\sigma_1S(-\lambda^*)^*\sigma_1,\quad \rho(\lambda)=\tilde{\rho}(-\lambda^*)^*.
\end{split}
\end{align}

\end{prop}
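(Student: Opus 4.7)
The proof is a matter of symmetry bookkeeping, so I would proceed in five layers, each feeding directly into the next.

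\textbf{Step 1: Symmetries of $X$ and $T$.} I would first verify by direct computation that $\sigma_2\sigma_3\sigma_2=-\sigma_3$ and $\sigma_1\sigma_3\sigma_1=-\sigma_3$, together with the matrix identities $\sigma_2 Q^*\sigma_2=Q$ and $\sigma_1 Q^*\sigma_1=-Q$ (both follow in two lines from the explicit shape of $Q$ in \eqref{T4}). Plugging into $X=i\lambda^2\sigma_3+\lambda Q$ immediately gives the first line of \eqref{T15} and \eqref{T16}, because the two sign flips (from $\sigma_j\sigma_3\sigma_j$ on the $\sigma_3$ term and from the complex conjugation/sign in $\lambda$ on the $Q$ term) either cancel or compound in the right way. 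For $T$, the same mechanism works term by term; the scalar coefficients $4\lambda^4+\tfrac32|q|^4-2\lambda^2|q|^2$, $-2i\lambda^3+3i\lambda|q|^2$, $-\lambda^2(-qq_x^*+q_xq^*)$ and $-\lambda$ behave predictably under $\lambda\mapsto\lambda^*$ (resp.\ $-\lambda^*$) followed by complex conjugation, and the matrix factors $\sigma_3 Q_x$, $\sigma_3$, $Q_{xx}$ transform under $\sigma_2\cdot^*\sigma_2$ and $\sigma_1\cdot^*\sigma_1$ exactly like $\sigma_3$ and $Q$ do.

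\textbf{Step 2: Lifting to $\psi_\pm$.} With the $X,T$ symmetry in hand, I would set $\tilde\psi(\lambda;x,t):=\sigma_2\psi_\pm(\lambda^*;x,t)^*\sigma_2$ and compute
\begin{align*}
\tilde\psi_x=\sigma_2\bigl(X(\lambda^*)\psi_\pm(\lambda^*)\bigr)^*\sigma_2=\bigl(\sigma_2 X(\lambda^*)^*\sigma_2\bigr)\tilde\psi=X(\lambda)\tilde\psi,
\end{align*}
and similarly $\tilde\psi_t=T(\lambda)\tilde\psi$. Next I verify the boundary condition: since $\theta(\lambda^*;x,t)^*=\theta(\lambda;x,t)$, one has $\psi_\pm(\lambda^*)^*\sim e^{-i\theta(\lambda)\sigma_3}$ as $x\to\pm\infty$, and then $\sigma_2 e^{-i\theta\sigma_3}\sigma_2=e^{i\theta\sigma_3}$ shows $\tilde\psi\sim e^{i\theta\sigma_3}$. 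By the uniqueness part of Proposition \ref{P1}, $\tilde\psi\equiv\psi_\pm(\lambda)$, which is the first identity on the second line of \eqref{T15}. The second symmetry \eqref{T16} is handled identically using $\theta(-\lambda^*)^*=\theta(\lambda)$ and $\sigma_1\sigma_3\sigma_1=-\sigma_3$.

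\textbf{Step 3: Transferring to $\mu_\pm$, $S$, and the reflection coefficients.} The $\mu_\pm$ symmetries follow by substituting \eqref{T10} into the $\psi_\pm$ symmetries and using $\sigma_j e^{-i\theta\sigma_3}\sigma_j=e^{i\theta\sigma_3}$ for $j=1,2$. For $S(\lambda)$, I would apply the $\sigma_2\cdot^*\sigma_2$ (resp.\ $\sigma_1\cdot^*\sigma_1$) operation to the relation $\psi_+(\lambda^*)=\psi_-(\lambda^*)S(\lambda^*)$, inserting $\sigma_2^2=I$ between the factors, which yields $\psi_+(\lambda)=\psi_-(\lambda)\bigl(\sigma_2 S(\lambda^*)^*\sigma_2\bigr)$; uniqueness of $S$ then gives $S(\lambda)=\sigma_2 S(\lambda^*)^*\sigma_2$. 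Reading off entries with $\sigma_2 A\sigma_2=\left(\begin{smallmatrix}a_{22}&-a_{21}\\-a_{12}&a_{11}\end{smallmatrix}\right)$ gives $s_{11}(\lambda)=s_{22}(\lambda^*)^*$ and $s_{21}(\lambda)=-s_{12}(\lambda^*)^*$, from which $\rho(\lambda)=-\tilde\rho(\lambda^*)^*$ is immediate from \eqref{T14}. The analogous entry-level computation with $\sigma_1 A\sigma_1=\left(\begin{smallmatrix}a_{22}&a_{21}\\a_{12}&a_{11}\end{smallmatrix}\right)$ delivers $\rho(\lambda)=\tilde\rho(-\lambda^*)^*$.

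\textbf{Main obstacle.} The only nontrivial bookkeeping is in Step 1 for the $T$ matrix, because $T$ is a sum of four $\lambda$-dependent terms mixing $\sigma_3$, $Q_x$, $Q_{xx}$ and the scalar $-qq_x^*+q_xq^*$; in particular one must check that the combination $-qq_x^*+q_xq^*$ (which is purely imaginary) carries the correct sign under each conjugation. Once the conjugation rules for these four building blocks are tabulated, the rest of the argument is routine, and no genuinely new idea is needed beyond uniqueness of Jost solutions.
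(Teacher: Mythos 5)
Your proposal is correct and follows essentially the same route as the paper's proof: verify the conjugation identities for $X$ and $T$ directly, show that $\sigma_j\psi_{\pm}(\cdot)^*\sigma_j$ solves the same Lax pair with the same asymptotics and invoke uniqueness of the Jost solutions, then push the symmetry through \eqref{T10} to $\mu_{\pm}$, through \eqref{T12} to $S(\lambda)$, and entrywise to $\rho,\tilde\rho$. If anything, your version is slightly more careful than the paper's, which states the boundary comparison as $\psi_{\pm}\sim I$ rather than $\psi_{\pm}\sim e^{\mathrm{i}\theta\sigma_3}$ as $x\to\pm\infty$; your use of $\theta(\lambda^*)^*=\theta(\lambda)$ and $\sigma_j e^{-\mathrm{i}\theta\sigma_3}\sigma_j=e^{\mathrm{i}\theta\sigma_3}$ fixes that minor imprecision.
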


\begin{proof}
First, let's prove the first symmetry reduction. Since we know the specific forms of the matrices $X$ and $T$, we can obtain the symmetry reduction of $X$ and $T$ by direct calculation. In order to obtain the symmetry reduction for Jost solutions, in fact, we just need to demonstrate that $\sigma_2\psi_{\pm}(\lambda^*;x,t)^*\sigma_2$ is also a solution of Eq. \eqref{T2} and admits the same asymptotic behavior as the Jost solutions $\psi_{\pm}(\lambda;x,t)$.

Considering
\begin{align}\label{T17}
\psi_{\pm,x}(\lambda;x,t)=X(\lambda;x,t)\psi_{\pm}(\lambda;x,t),
\end{align}
replacing $\lambda$ with $\lambda^*$, and taking conjugate both sides simultaneously, we can obtain $$\psi_{\pm,x}(\lambda^*;x,t)^*=X(\lambda^*;x,t)^*\psi_{\pm}(\lambda^*;x,t)^*,$$ substituting $X(\lambda^*;x,t)^*=\sigma_2^{-1}X(\lambda;x,t)\sigma_2^{-1}$ into previous formula, then both sides of the obtained formula multiplying matrix $\sigma_2$ on left and right at the same time, one can obtain
\begin{align}\nonumber
\begin{split}
\sigma_2\psi_{\pm,x}(\lambda^*;x,t)^*\sigma_2=X(\lambda;x,t)\sigma^{-1}_2\sigma^{-1}_2(\sigma_2\psi_{\pm}(\lambda^*;x,t)^*\sigma_2),
\end{split}
\end{align}
where $\sigma^{-1}_2=\bigg(\begin{array}{cc} 0 & \mathrm{i} \\-\mathrm{i} & 0 \end{array}\bigg)$, so $\sigma^{-1}_2\sigma^{-1}_2=I$, we have
\begin{align}\label{T18}
\sigma_2\psi_{\pm,x}(\lambda^*;x,t)^*\sigma_2=X(\lambda;x,t)(\sigma_2\psi_{\pm}(\lambda^*;x,t)^*\sigma_2).
\end{align}

By comparing Eqs. \eqref{T17} and \eqref{T18}, we know that $\sigma_2\psi_{\pm,x}(\lambda^*;x,t)^*\sigma_2$ is also a solution of Eq. \eqref{T17}, and $\psi_{\pm}(\lambda;x,t)=\sigma_2\psi_{\pm}(\lambda^*;x,t)^*\sigma_2$ because they have the same asymptotic behavior
$$\psi_{\pm}(\lambda;x,t),\quad \sigma_2\psi_{\pm}(\lambda^*;x,t)^*\sigma_2\thicksim I, \text{ as } x\rightarrow\pm\infty.$$

According to Eq. \eqref{T10} and $\theta(\lambda^*)^*=\theta(\lambda)$, we have $\psi_{\pm}(\lambda;x,t)=\mu_{\pm}(\lambda;x,t)\mathrm{e}^{\mathrm{i}\theta(\lambda;x,t)\sigma_3}$ and $\psi_{\pm}(\lambda^*;x,t)^*=\mu_{\pm}(\lambda^*;x,t)^*\mathrm{e}^{-\mathrm{i}\theta(\lambda;x,t)\sigma_3}$. Untilizing $\psi_{\pm}(\lambda;x,t)=\sigma_2\psi_{\pm}(\lambda^*;x,t)^*\sigma_2$, one can derive $\mu_{\pm}(\lambda;x,t)=\sigma_2\mu_{\pm}(\lambda^*;x,t)^*\mathrm{e}^{-\mathrm{i}\theta(\lambda;x,t)\sigma_3}\sigma_2\mathrm{e}^{-\mathrm{i}\theta(\lambda;x,t)\sigma_3}=\sigma_2\mu_{\pm}(\lambda^*;x,t)^*\sigma_2.$

Next, we demonstrate the symmetry reduction of the scattering matrix. We substitute $\lambda=\lambda^*$ into Eq. \eqref{T12} and conjugate both sides of the equation. We have $\psi_{+}(\lambda^*;x,t)^*=\psi_{-}(\lambda^*;x,t)^*S(\lambda^*)^*$, From the symmetry reduction of Jost solutions, one can obtain $\psi_{+}(\lambda;x,t)=\psi_{-}(\lambda;x,t)\sigma^{-1}_2\sigma^{-1}_2(\sigma_2S(\lambda^*)^*\sigma_2)$, then
\begin{align}\label{T19}
\psi_{+}(\lambda;x,t)=\psi_{-}(\lambda;x,t)(\sigma_2S(\lambda^*)^*\sigma_2).
\end{align}

By comparing Eqs. \eqref{T12} and \eqref{T19}, we can obtain $S(\lambda)=\sigma_2S(\lambda^*)^*\sigma_2$. According to the symmetry reduction of scattering matrix, we have
\begin{align}\nonumber
\bigg(\begin{array}{cc} s_{11}(\lambda) & s_{12}(\lambda) \\s_{21}(\lambda) & s_{22}(\lambda) \end{array}\bigg)=\bigg(\begin{array}{cc} s_{22}(\lambda^*)^* & -s_{21}(\lambda^*)^* \\-s_{12}(\lambda^*)^* & s_{11}(\lambda^*)^* \end{array}\bigg),
\end{align}
we then obtain these symmetric relations
\begin{align}\label{BT1}
\begin{split}
s_{11}(\lambda)=s_{22}(\lambda^*)^*,\quad s_{12}(\lambda)=-s_{21}(\lambda^*)^*,\quad s_{21}(\lambda)=-s_{12}(\lambda^*)^*,\quad s_{22}(\lambda)=s_{11}(\lambda^*)^*,
\end{split}
\end{align}
Since $\rho(\lambda)=\frac{s_{21}(\lambda)}{s_{11}(\lambda)}=\frac{-s_{12}(\lambda^*)^*}{s_{22}(\lambda^*)^*}=-\tilde{\rho}(\lambda^*)^*$, the symmetry reduction of reflection coefficient is $\rho(\lambda)=-\tilde{\rho}(\lambda^*)^*$.

Similarly, by repeating the above process, we can prove the second symmetry reduction. This completes the proof.

\end{proof}

\subsubsection{Asymptotic Behaviors}
In order to propose and solve the matrix RH problem for the inverse problem, the asymptotic behavior of the modified Jost solution and the scattering matrix must be determined as $\lambda\rightarrow\infty$. The asymptotic behaviors of the modified Jost solution can be derived by employing the usual Wentzel-Kramers-Brillouin (WKB) expansion.
\begin{prop}\label{P2}
The asymptotic behaviors of the modified Jost solutions are as follows:
\begin{align}\label{T20}
\mu_{\pm}(\lambda;x,t)=\mathrm{e}^{\mathrm{i}\nu_{\pm}(x,t)\sigma_3}+O\bigg(\frac{1}{\lambda}\bigg),\text{ as }\lambda\rightarrow\infty,
\end{align}
where functions $\nu_{\pm}(x,t)$ read as
\begin{align}\label{T21}
\nu_{\pm}(x,t)=\frac12\int^x_{\pm\infty}|q(y,t)|^2dy.
\end{align}
\end{prop}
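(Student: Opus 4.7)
My plan is to establish the asymptotic expansion by a formal WKB (inverse-spectral-parameter) expansion for $\mu_\pm$ and then read off the diagonal coefficient $\mu_\pm^{(0)}$. The natural ansatz, given the structure of the Lax pair \eqref{BT2}, is the formal series
\begin{align*}
\mu_\pm(\lambda;x,t)=\mu_\pm^{(0)}(x,t)+\frac{\mu_\pm^{(1)}(x,t)}{\lambda}+\frac{\mu_\pm^{(2)}(x,t)}{\lambda^2}+O\!\left(\frac{1}{\lambda^3}\right),\quad \lambda\to\infty,
\end{align*}
which I would substitute into \eqref{BT2} and equate powers of $\lambda$. The boundary conditions $\mu_\pm\to I$ as $x\to\pm\infty$ translate into $\mu_\pm^{(0)}(\pm\infty)=I$ and $\mu_\pm^{(k)}(\pm\infty)=0$ for $k\ge 1$.

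At order $\lambda^2$ the equation forces $[\mu_\pm^{(0)},\sigma_3]=0$, so $\mu_\pm^{(0)}$ is diagonal; write $\mu_\pm^{(0)}=\mathrm{diag}(a_\pm,b_\pm)$. At order $\lambda^1$, I get $\mathrm{i}[\mu_\pm^{(1)},\sigma_3]=Q\mu_\pm^{(0)}$; since $Q$ is off-diagonal, this determines the off-diagonal part of $\mu_\pm^{(1)}$, namely $(\mu_\pm^{(1)})_{12}=\tfrac{\mathrm{i}}{2}q\,b_\pm$ and $(\mu_\pm^{(1)})_{21}=\tfrac{\mathrm{i}}{2}q^*a_\pm$ (the diagonal of $\mu_\pm^{(1)}$ stays free and is fixed by the next order together with the boundary condition). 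At order $\lambda^0$ the equation reads $\mu_\pm^{(0)},_x+\mathrm{i}[\mu_\pm^{(2)},\sigma_3]=Q\mu_\pm^{(1)}$; since $[\mu_\pm^{(2)},\sigma_3]$ is purely off-diagonal, taking diagonal parts cancels $\mu_\pm^{(2)}$ and yields the decoupled ODE system
\begin{align*}
a_{\pm,x}=\tfrac{\mathrm{i}}{2}|q|^2 a_\pm,\qquad b_{\pm,x}=-\tfrac{\mathrm{i}}{2}|q|^2 b_\pm,
\end{align*}
which, integrated against the boundary conditions $a_\pm(\pm\infty)=b_\pm(\pm\infty)=1$, give exactly $a_\pm=\mathrm{e}^{\mathrm{i}\nu_\pm}$ and $b_\pm=\mathrm{e}^{-\mathrm{i}\nu_\pm}$ with $\nu_\pm$ as in \eqref{T21}. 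Combining, $\mu_\pm^{(0)}(x,t)=\mathrm{e}^{\mathrm{i}\nu_\pm(x,t)\sigma_3}$, which is the claimed leading term.

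The genuinely delicate step is justifying that the formal series is actually an asymptotic expansion, i.e., that the remainder is truly $O(1/\lambda)$ uniformly in $(x,t)$ on appropriate sets. I would handle this by going back to the Volterra integral equation \eqref{T11}, writing $\mu_\pm=\mu_\pm^{(0)}+\lambda^{-1}R_\pm(\lambda;x,t)$, and substituting to obtain an integral equation for $R_\pm$ whose kernel is bounded (with the same exponential factors that established analyticity in Proposition~\ref{P1}) and whose inhomogeneous term is $O(1)$ under the assumption $q(\cdot,t)\in L^1(\mathbb{R})$; Neumann-series iteration then gives uniform bounds on $R_\pm$ in the relevant half-plane as $\lambda\to\infty$. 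The corresponding asymptotics along the $t$-part \eqref{BT3} are obtained identically by replacing $X$ with $T-T_0$ and checking that the same $\mu_\pm^{(0)}$ is consistent, which is automatic since Jost solutions satisfy both parts simultaneously by Proposition~\ref{P10}.
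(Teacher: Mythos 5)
Your proof follows essentially the same route as the paper's: a formal WKB expansion in powers of $1/\lambda$ substituted into the $x$-part \eqref{BT2}, with the $O(\lambda^2)$, $O(\lambda)$, and $O(1)$ orders forcing $\mu_\pm^{(0)}$ to be diagonal and to satisfy $a_{\pm,x}=\tfrac{\mathrm{i}}{2}|q|^2a_\pm$, $b_{\pm,x}=-\tfrac{\mathrm{i}}{2}|q|^2b_\pm$, after which the boundary conditions at $x\to\pm\infty$ fix the leading term to $\mathrm{e}^{\mathrm{i}\nu_\pm(x,t)\sigma_3}$. Your closing sketch of controlling the remainder through the Volterra equation \eqref{T11} adds a layer of rigor that the paper's purely formal argument omits, but the core derivation is identical.
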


\begin{proof}
By utilizing the WKB expansion, as $\lambda\rightarrow\infty$, we substitute $$\mu_{\pm}(\lambda;x,t)=\sum\limits^n_{i=0}\frac{\mu_{\pm}^{[i]}(x,t)}{\lambda^i}+O\left(\frac{1}{\lambda}\right)$$ into equivalent Lax pair \eqref{BT2}, and compare the coefficients of different powers of $\lambda$, we have
\begin{align}\label{T22}
O(\lambda^2):\Big(\mu^{[0]}_{\pm}(x,t)\Big)^{\mathrm{off}}=0,
\end{align}
where $\left(\mu^{[0]}_{\pm}(x,t)\right)^{\mathrm{off}}$ represent the off-diagonal parts of $\mu^{[0]}_{\pm}(x,t)$. Then
\begin{align}\label{T23}
\begin{split}
O(\lambda):\Big(\mu^{[1]}_{\pm}(x,t)\Big)^{\mathrm{off}}=\frac{\mathrm{i}}{2}\sigma_3Q(x,t)\mu^{[0]}_{\pm}(x,t),
\end{split}
\end{align}
\begin{align}\label{T24}
\begin{split}
O(1):\mu^{[0]}_{\pm}(x,t)=C^{\mathrm{diag}}\mathrm{e}^{\mathrm{i}\nu_{\pm}(x,t)\sigma_3},
\end{split}
\end{align}
where $\nu_{\pm}(x,t)=\frac12\int^x_{\pm\infty}|q(y,t)|^2dy$, $\mu_{\pm}(\lambda;x,t)=C^{\mathrm{diag}}\mathrm{e}^{\mathrm{i}\nu_{\pm}(x,t)\sigma_3}+O\left(\frac{1}{\lambda}\right),\text{ as }\lambda\rightarrow\infty$. When $x\rightarrow\pm\infty$, we have $\nu_{\pm}(x,t)\rightarrow0,\mu_{\pm}(\lambda;x,t)\rightarrow I$, one can obtain $C^{\mathrm{diag}}=I$, and deduce the asymptotic behavior as $\lambda\rightarrow\infty$.

\end{proof}

\begin{prop}\label{PP2}
The asymptotic behavior for the scattering matrix is as follows:
\begin{align}\label{T25}
S(\lambda)=\mathrm{e}^{-\mathrm{i}\nu\sigma_3}+O\bigg(\frac{1}{\lambda}\bigg),\text{ as }\lambda\rightarrow\infty,
\end{align}
where the constant $\nu$ reads as
\begin{align}\label{T26}
\nu=\frac12\int^{+\infty}_{-\infty}|q(y,t)|^2dy.
\end{align}

\end{prop}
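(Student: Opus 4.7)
The plan is to derive the asymptotic form of $S(\lambda)$ by re-expressing the scattering relation \eqref{T12} in terms of the modified Jost solutions and then plugging in the WKB expansion already established in Proposition \ref{P2}. Since $S(\lambda)$ is independent of $x,t$, from $\psi_{+}=\psi_{-}S$ and $\psi_{\pm}=\mu_{\pm}\mathrm{e}^{\mathrm{i}\theta\sigma_3}$ one gets the identity
\begin{align*}
S(\lambda)=\mathrm{e}^{-\mathrm{i}\theta(\lambda;x,t)\sigma_3}\,\mu_{-}^{-1}(\lambda;x,t)\,\mu_{+}(\lambda;x,t)\,\mathrm{e}^{\mathrm{i}\theta(\lambda;x,t)\sigma_3},\qquad \lambda\in\Sigma,
\end{align*}
valid for any $x,t$. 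So the problem reduces to controlling $\mu_{-}^{-1}\mu_{+}$ as $\lambda\to\infty$ and then tracking the conjugation by the oscillatory exponential.

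Next I would substitute the expansion $\mu_{\pm}=\mathrm{e}^{\mathrm{i}\nu_{\pm}\sigma_3}+O(1/\lambda)$ from Proposition \ref{P2}. Since the leading-order matrix is diagonal, $\mathrm{e}^{\mathrm{i}\nu_{\pm}\sigma_3}$ commute and
\begin{align*}
\mu_{-}^{-1}\mu_{+}=\mathrm{e}^{-\mathrm{i}\nu_{-}\sigma_3}\mathrm{e}^{\mathrm{i}\nu_{+}\sigma_3}+O\!\left(\tfrac{1}{\lambda}\right)=\mathrm{e}^{\mathrm{i}(\nu_{+}-\nu_{-})\sigma_3}+O\!\left(\tfrac{1}{\lambda}\right).
\end{align*}
A direct integration check, using $\nu_{\pm}(x,t)=\frac12\int_{\pm\infty}^{x}|q(y,t)|^2\,dy$, yields $\nu_{+}(x,t)-\nu_{-}(x,t)=-\frac12\int_{-\infty}^{+\infty}|q(y,t)|^2\,dy=-\nu$, which is in particular independent of $x$ (this is consistent with the left-hand side being a constant matrix up to $O(1/\lambda)$). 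Moreover, $\nu$ is time-independent since $|q|^2$ is a conserved density of the flow, though this is not needed for the present proposition.

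Finally I would handle the conjugation by $\mathrm{e}^{\pm \mathrm{i}\theta\sigma_3}$. The leading diagonal part $\mathrm{e}^{-\mathrm{i}\nu\sigma_3}$ commutes with $\mathrm{e}^{\mathrm{i}\theta\sigma_3}$, so it passes through unchanged. The error term $O(1/\lambda)$ can contain off-diagonal entries; after conjugation these are multiplied by $\mathrm{e}^{\mp 2\mathrm{i}\theta}$, but on $\Sigma=\mathbb{R}\cup\mathrm{i}\mathbb{R}$ one has $\lambda^{2}\in\mathbb{R}$, so $|\mathrm{e}^{\mp 2\mathrm{i}\theta}|=1$ and the error remains $O(1/\lambda)$ in magnitude. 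This gives $S(\lambda)=\mathrm{e}^{-\mathrm{i}\nu\sigma_3}+O(1/\lambda)$ as claimed.

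The computation is essentially mechanical once Proposition \ref{P2} is in hand, so the only real obstacle is a careful verification that the $O(1/\lambda)$ error behaves uniformly in $x$ and survives the conjugation by $\mathrm{e}^{\pm\mathrm{i}\theta\sigma_3}$ on $\Sigma$; as a cross-check one could alternatively compute the diagonal entries directly from \eqref{T13}, writing $s_{11}=\det(\mu_{+1},\mu_{-2})$ and $s_{22}=\det(\mu_{-1},\mu_{+2})$ (the exponentials cancel in these two determinants) and reading off the limits $\mathrm{e}^{-\mathrm{i}\nu}$ and $\mathrm{e}^{\mathrm{i}\nu}$, with the off-diagonal $s_{12},s_{21}$ automatically $O(1/\lambda)$ because their leading-order determinantal contributions vanish.
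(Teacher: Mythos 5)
Your proposal follows essentially the same route as the paper: start from $\psi_{+}=\psi_{-}S$, rewrite it in terms of the modified Jost solutions, insert the WKB expansion $\mu_{\pm}=\mathrm{e}^{\mathrm{i}\nu_{\pm}\sigma_3}+O(1/\lambda)$ from Proposition \ref{P2}, and read off the leading diagonal factor $\mathrm{e}^{\mathrm{i}(\nu_{+}-\nu_{-})\sigma_3}=\mathrm{e}^{-\mathrm{i}\nu\sigma_3}$. Your handling of the conjugation by $\mathrm{e}^{\pm\mathrm{i}\theta\sigma_3}$ (noting that $\lambda^2\in\mathbb{R}$ on $\Sigma$ so the off-diagonal error entries are only multiplied by unimodular factors) is actually slightly more careful than the paper, which simply pre-multiplies by $\mathrm{e}^{-\mathrm{i}[\nu_{-}+\theta]\sigma_3}$ without commenting on the error term; your determinantal cross-check via $s_{11}=\det(\mu_{+1},\mu_{-2})$ is also a clean alternative.

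The one place you fall short is the remark that the time-independence of $\nu$ ``is not needed for the present proposition.'' It is needed: the proposition asserts that $\nu$ is a \emph{constant}, and the formula $\nu=\frac12\int_{-\infty}^{+\infty}|q(y,t)|^2\,dy$ a priori depends on $t$ (independence of $x$ is automatic). The paper devotes roughly half of its proof to this point, arguing both from the $O(1)$ term of the WKB expansion of the time part \eqref{BT3} of the Lax pair and, more concretely, by computing $I_{1,t}=\int_{-\infty}^{+\infty}(qq^{*})_t\,dx$ directly from the TOFKN equation \eqref{T1} and showing it reduces to boundary terms that vanish under the ZBCs. Your appeal to ``$|q|^2$ is a conserved density of the flow'' is the right idea, but for this particular fifth-order-nonlinearity equation that conservation law is exactly what has to be checked, so you should supply that computation rather than assert it.
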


\begin{proof}
From the relationship among scattering matrix, Jost solution and modified Jost solution, we have $$\mu_{+}(\lambda;x,t)\mathrm{e}^{\mathrm{i}\theta(\lambda;x,t)\sigma_3}=\mu_{-}(\lambda;x,t)\mathrm{e}^{\mathrm{i}\theta(\lambda;x,t)\sigma_3}S(\lambda).$$ According to proposition \eqref{P2}, when $\lambda\rightarrow\infty$, $\mu_{\pm}(\lambda;x,t)=\mathrm{e}^{i\nu_{\pm}(x,t)\sigma_3}+O\left(\frac{1}{\lambda}\right)$, we have
\begin{align}\nonumber
\bigg[\mathrm{e}^{\mathrm{i}\nu_{+}(x,t)\sigma_3}+O\bigg(\frac{1}{\lambda}\bigg)\bigg]\mathrm{e}^{\mathrm{i}\theta(\lambda;x,t)\sigma_3}=\bigg[\mathrm{e}^{\mathrm{i}\nu_{-}(x,t)\sigma_3}+O\bigg(\frac{1}{\lambda}\bigg)\bigg]\mathrm{e}^{\mathrm{i}\theta(\lambda;x,t)\sigma_3}S(\lambda),
\end{align}
then both sides of the equation pre-multiply $\mathrm{e}^{-\mathrm{i}[\nu_{-}(x,t)+\theta(\lambda;x,t)]\sigma_3}$ at the same time, one can obtain
\begin{align}\nonumber
S(\lambda)=\mathrm{e}^{-\mathrm{i}[\nu_{-}(x,t)-\nu_{+}(x,t)]\sigma_3}+o\bigg(\frac{1}{\lambda}\bigg)=\mathrm{e}^{-\mathrm{i}\nu\sigma_3}+O\bigg(\frac{1}{\lambda}\bigg),
\end{align}
where $\nu=\nu_{-}(x,t)-\nu_{+}(x,t)=\frac12\int^{+\infty}_{-\infty}|q(y,t)|^2dy$.

On the other hand, substituting the WKB expansion $\mu_{\pm}(\lambda;x,t)=\sum\limits^n_{i=0}\frac{\mu_{\pm}^{[i]}(x,t)}{\lambda^i}+O\big(\frac{1}{\lambda}\big)$( as $\lambda\rightarrow\infty)$ into the time part of equivalent lax pairs \eqref{BT3} and matching the $O(\lambda^i)(i =6,5,4,3,2,1,0)$ in order, combining the $O(1):\mu^{[0]}_{\pm,t}(x,t)+4\mathrm{i}\big[\mu^{[6]}_{\pm}(x,t),\sigma_3\big]=(T-T_0)\mu^{[0]}_{\pm}(x,t)$ and $\big(\mu^{[0]}_{\pm}(x,t)\big)^{\mathrm{off}}=0$, one can yield $\mu^{[0]}_{\pm,t}(x,t)=0$, $\nu_{\pm,t}(x,t)=0$ and $\nu_t=0$ as $x\rightarrow\pm\infty$. That is, $\nu$ does not depend on the variable $t$. Furthermore, for the Zakharov-Shabat spectral problem of the NLS equation, $I_1=\int^{\infty}_{-\infty}|u(x,t)|^2dx$ is conservations of mass (also called power), and one has $I_{1,t}=0$ \cite{YangJK(2010)}. Here, one can derive $I_1=2\nu=\int^{+\infty}_{-\infty}|q(x,t)|^2dx$ (here $x$ and $y$ in \eqref{T26} are equivalent) is the conservations of mass for the modified Zakharov-Shabat spectral problem of the TOFKN, and we also obtain $\nu_t=0$ by using the $I_{1,t}=0$. Next we give a simple proof of pure algebra for $I_{1,t}=0$.

According to the Eq.\eqref{T1}, we have
\begin{align}\nonumber
\begin{split}
&q_t=\frac92|q|^4q_x+3q^2|q|^2q^*_x+3\mathrm{i}|q|^2q_{xx}+3\mathrm{i}q^2_xq^*+3\mathrm{i}qq^*_xq_x-q_{xxx},\\
&q^*_t=\frac92|q|^4q^*_x+3(q^*)^2|q|^2q_x-3\mathrm{i}|q|^2q^*_{xx}-3\mathrm{i}(q^*_x)^2q-3\mathrm{i}q^*q_xq^*_x-q^*_{xxx},
\end{split}
\end{align}
from $I_1=\int^{+\infty}_{-\infty}|q(x,t)|^2dx$, we have
\begin{align}\nonumber
\begin{split}
I_{1,t}&=\int^{+\infty}_{-\infty}[q(x,t)q^*(x,t)]_tdx\\
&=\frac{5}{2}q^3(q^*)^3\Big|^{+\infty}_{-\infty}+3\mathrm{i}|q|^2(q_xq^*-qq^*_x)\Big|^{+\infty}_{-\infty}-(q_{xx}q^*+qq^*_{xx})\Big|^{+\infty}_{-\infty}+q_xq^*_x\Big|^{+\infty}_{-\infty},
\end{split}
\end{align}
Since $q(x,t)=0\text{ as }x\rightarrow\pm\infty$, one can derive $I_{1,t}=0$ and $\nu_t=0$. Therefore, $\nu$ is a constant. Proof complete.

\end{proof}

\subsection{Inverse Problem with ZBCs and Double Poles}

In the following parts, we will propose an inverse problem with ZBCs and solve it to obtain $N$-double poles solution for the TOFKN \eqref{T1}.

\subsubsection{Discrete Spectrum with with ZBCs and Double Zeros}
The discrete spectrum of the studied scattering problem is the set of all values $\lambda\in\mathbb{C}\backslash\Sigma$ which satisfy the eigenfunctions exist in $L^2(\mathbb{R})$. As was proposed by Biondini and Kova\v{c}i\v{c} \cite{Biondini(2014)}, there are exactly the values of $\lambda$ in $D^{+}$ such that $s_{11}(\lambda)=0$ and those values in $D^{-}$ such that $s_{22}(\lambda)=0$. Differing from the previous results with simple poles \cite{Zhang(2020),Kaup(1978),XuJ(2020)}, we here suppose that $s_{11}(\lambda)$ has $N$ double zeros in $\Lambda_0 = \{\lambda\in\mathbb{C}: \mathrm{Re}\lambda>0, \mathrm{Im}\lambda>0\}$ denoted by $\lambda_n$, $n =1,2,\cdots,N$, that is, $s_{11}(\lambda_n)=s'_{11}(\lambda_n)=0$, and $s''_{11}\neq0$. It follows from the symmetry reduce of the scattering matrix that
\begin{align}\label{T27}
\bigg\{
\begin{aligned}
s_{11}(\pm\lambda_n)=s_{22}(\pm\lambda_n^*)=0,\\
s'_{11}(\pm\lambda_n)=s'_{22}(\pm\lambda_n^*)=0.
\end{aligned}
\end{align}

Therefore, the corresponding discrete spectrum is defined by the set
\begin{align}\label{T28}
\Lambda=\{\pm\lambda_n,\pm\lambda^*_n\}^N_{n=1},
\end{align}
whose distributions are shown in Fig. \ref{F1}. When a given $\lambda_0\in\Lambda\cap D^{+}$, one can obtain that $\psi_{+1}(\lambda_0;x,t)$ and $\psi_{-2}(\lambda_0;x,t)$ are linearly dependent by combining Eq. \eqref{T13} and $s_{11}(\lambda_0)=0$. Similarly, when a given $\lambda_0\in\Lambda\cap D^{-}$, one can obtain that $\psi_{+2}(\lambda_0;x,t)$ and $\psi_{-1}(\lambda_0;x,t)$ are linearly dependent by combining Eq. \eqref{T13} and $s_{22}(\lambda_0)=0$. For convenience, we introduce the norming constant $b[\lambda_0]$ such that
\begin{align}\label{T29}
\begin{split}
\psi_{+1}(\lambda_0;x,t)=b[\lambda_0]\psi_{-2}(\lambda_0;x,t),\text{ as }\lambda_0\in\Lambda\cap D^{+},\\
\psi_{+2}(\lambda_0;x,t)=b[\lambda_0]\psi_{-1}(\lambda_0;x,t),\text{ as }\lambda_0\in\Lambda\cap D^{-}.
\end{split}
\end{align}

For a given $\lambda_0\in\Lambda\cap D^{+}$, according to $s_{11}(\lambda_0)=\mathrm{det}(\psi_{+1}(\lambda_0;x,t),\psi_{-2}(\lambda_0;x,t))$ in Eq. \eqref{T13}, and taking derivative respect to $\lambda$ $(\text{here }\lambda=\lambda_0)$ on both sides of previous equation, we have
\begin{align}\nonumber
\begin{split}
s'_{11}(\lambda_0)=\mathrm{det}(\psi'_{+1}(\lambda_0;x,t)-b[\lambda_0]\psi'_{-2}(\lambda_0;x,t),\psi_{-2}(\lambda_0;x,t)),
\end{split}
\end{align}
note that $s'_{11}(\lambda_0)=0$, $\psi'_{+1}(\lambda_0;x,t)-b[\lambda_0]\psi'_{-2}(\lambda_0;x,t)$ and $\psi_{-2}(\lambda_0;x,t)$ are linearly dependent. Similarly, for a given $\lambda_0\in\Lambda\cap D^{-}$, one can obtain that $\psi'_{+2}(\lambda_0;x,t)-b[\lambda_0]\psi'_{-1}(\lambda_0;x,t)$ and $\psi_{-1}(\lambda_0;x,t)$ are linearly dependent by combining Eq. \eqref{T13} and $s'_{22}(\lambda_0)=0$. For convenience, we define another norming constant $d[\lambda_0]$ such that
\begin{align}\label{T30}
\begin{split}
&\psi'_{+1}(\lambda_0;x,t)-b[\lambda_0]\psi'_{-2}(\lambda_0;x,t)=d[\lambda_0]\psi_{-2}(\lambda_0;x,t),\text{ as }\lambda_0\in\Lambda\cap D^{+},\\
&\psi'_{+2}(\lambda_0;x,t)-b[\lambda_0]\psi'_{-1}(\lambda_0;x,t)=d[\lambda_0]\psi_{-1}(\lambda_0;x,t),\text{ as }\lambda_0\in\Lambda\cap D^{-}.
\end{split}
\end{align}

On the other hand, we notice that $\psi_{+1}(\lambda;x,t)$ and $s_{11}(\lambda)$ are analytic on $D^{+}$, and suppose $\lambda_0$ is the double zeros of $s_{11}$. Let $\psi_{+1}(\lambda;x,t)$ and $s_{11}(\lambda)$ carry out Taylor expansion at $\lambda=\lambda_0$, we have
\begin{align}\nonumber
\begin{split}
\frac{\psi_{+1}(\lambda;x,t)}{s_{11}(\lambda)}&=\frac{2\psi_{+1}(\lambda_0;x,t)}{s''_{11}(\lambda_0)}(\lambda-\lambda_0)^{-2}+\bigg(\frac{2\psi'_{+1}(\lambda_0;x,t)}{s''_{11}(\lambda_0)}-\frac{2\psi_{+1}(\lambda_0;x,t)s'''_{11}(\lambda_0)}{3s''^2_{11}(\lambda_0)}\bigg)\\
&\quad(\lambda-\lambda_0)^{-1}+\cdots,
\end{split}
\end{align}
Then, one has the compact form
\begin{align}\nonumber
\begin{split}
\mathop{P_{-2}}_{\lambda=\lambda_0}\bigg[\frac{\psi_{+1}(\lambda;x,t)}{s_{11}(\lambda)}\bigg]=\frac{2\psi_{+1}(\lambda_0;x,t)}{s''_{11}(\lambda_0)}=\frac{2b[\lambda_0]\psi_{-2}(\lambda_0;x,t)}{s''_{11}(\lambda_0)},\text{ as }\lambda_0\in\Lambda\cap D^{+},
\end{split}
\end{align}
where $\underset{\lambda=\lambda_0}{P_{-2}}[f(\lambda;x,t)]$ denotes the coefficient of $O((\lambda-\lambda_0)^{-2})$ term in the Laurent series expansion of $f(\lambda;x,t)$ at $\lambda=\lambda_0$. Likewise, we have
\begin{align}\nonumber
\begin{split}
\mathop{\mathrm{Res}}_{\lambda=\lambda_0}\bigg[\frac{\psi_{+1}(\lambda;x,t)}{s_{11}(\lambda)}\bigg]&=\frac{2\psi'_{+1}(\lambda_0;x,t)}{s''_{11}(\lambda_0)}-\frac{2\psi_{+1}(\lambda_0;x,t)s'''_{11}(\lambda_0)}{3s''^2_{11}(\lambda_0)}\\
&=\frac{2b[\lambda_0]\psi'_{-2}(\lambda_0;x,t)}{s''_{11}(\lambda_0)}+\bigg[\frac{2b[\lambda_0]}{s''_{11}(\lambda_0)}\bigg(\frac{d[\lambda_0]}{b[\lambda_0]}-\frac{s'''_{11}(\lambda_0)}{3s''_{11}(\lambda_0)}\bigg)\bigg]\psi_{-2}(\lambda_0;x,t),\\
&\quad\text{ as }\lambda_0\in\Lambda\cap D^{+},
\end{split}
\end{align}
where $\underset{\lambda=\lambda_0}{\mathrm{Res}}[f(\lambda;x,t)]$ denotes the coefficient of $O((\lambda-\lambda_0)^{-1})$ term in the Laurent series expansion of $f(\lambda;x,t)$ at $\lambda=\lambda_0$.

Similarly,  for the case of $\psi_{+2}(\lambda;x,t)$ and $s_{22}(\lambda)$ are analytic on $D^{-}$, we repeat the above process and obtain
\begin{align}\nonumber
\begin{split}
&\mathop{P_{-2}}_{\lambda=\lambda_0}\bigg[\frac{\psi_{+2}(\lambda;x,t)}{s_{22}(\lambda)}\bigg]=\frac{2\psi_{+2}(\lambda_0;x,t)}{s''_{22}(\lambda_0)}=\frac{2b[\lambda_0]\psi_{-1}(\lambda_0;x,t)}{s''_{22}(\lambda_0)},\text{ as }\lambda_0\in\Lambda\cap D^{-},\\
&\mathop{\mathrm{Res}}_{\lambda=\lambda_0}\bigg[\frac{\psi_{+2}(\lambda;x,t)}{s_{22}(\lambda)}\bigg]=\frac{2b[\lambda_0]\psi'_{-1}(\lambda_0;x,t)}{s''_{22}(\lambda_0)}+\bigg[\frac{2b[\lambda_0]}{s''_{22}(\lambda_0)}\bigg(\frac{d[\lambda_0]}{b[\lambda_0]}-\frac{s'''_{22}(\lambda_0)}{3s''_{22}(\lambda_0)}\bigg)\bigg]\psi_{-1}(\lambda_0;x,t),\\
&\quad\text{ as }\lambda_0\in\Lambda\cap D^{-}.
\end{split}
\end{align}

Moreover, let
\begin{align}\label{T31}
A[\lambda_0]=\left\{
\begin{aligned}
\frac{2b[\lambda_0]}{s''_{11}(\lambda_0)},\text{ as }\lambda_0\in\Lambda\cap D^{+},\\
\frac{2b[\lambda_0]}{s''_{22}(\lambda_0)},\text{ as }\lambda_0\in\Lambda\cap D^{-},
\end{aligned}
\right.\quad
B[\lambda_0]=\left\{
\begin{aligned}
\frac{d[\lambda_0]}{b[\lambda_0]}-\frac{s'''_{11}(\lambda_0)}{3s''_{11}(\lambda_0)},\text{ as }\lambda_0\in\Lambda\cap D^{+},\\
\frac{d[\lambda_0]}{b[\lambda_0]}-\frac{s'''_{22}(\lambda_0)}{3s''_{22}(\lambda_0)},\text{ as }\lambda_0\in\Lambda\cap D^{-}.
\end{aligned}
\right.
\end{align}
Thus we obtain the following simple expression
\begin{align}\label{T32}
\begin{split}
&\mathop{P_{-2}}_{\lambda=\lambda_0}\bigg[\frac{\psi_{+1}(\lambda;x,t)}{s_{11}(\lambda)}\bigg]=A[\lambda_0]\psi_{-2}(\lambda_0;x,t),\text{ as }\lambda_0\in\Lambda\cap D^{+},\\
&\mathop{P_{-2}}_{\lambda=\lambda_0}\bigg[\frac{\psi_{+2}(\lambda;x,t)}{s_{22}(\lambda)}\bigg]=A[\lambda_0]\psi_{-1}(\lambda_0;x,t),\text{ as }\lambda_0\in\Lambda\cap D^{-},\\
&\mathop{\mathrm{Res}}_{\lambda=\lambda_0}\bigg[\frac{\psi_{+1}(\lambda;x,t)}{s_{11}(\lambda)}\bigg]=A[\lambda_0][\psi'_{-2}(\lambda_0;x,t)+B[\lambda_0]\psi_{-2}(\lambda_0;x,t)],\text{ as }\lambda_0\in\Lambda\cap D^{+},\\
&\mathop{\mathrm{Res}}_{\lambda=\lambda_0}\bigg[\frac{\psi_{+2}(\lambda;x,t)}{s_{22}(\lambda)}\bigg]=A[\lambda_0][\psi'_{-1}(\lambda_0;x,t)+B[\lambda_0]\psi_{-1}(\lambda_0;x,t)],\text{ as }\lambda_0\in\Lambda\cap D^{-}.
\end{split}
\end{align}

Accordingly, by mean of Eqs. \eqref{T29}-\eqref{T31} as well as proposition \ref{P4}, we can derive the following symmetry relations.
\begin{prop}
For $\lambda_0\in\Lambda$, the two symmetry relations for $A[\lambda_0]$ and $B[\lambda_0]$ can be deduced as follows:\\
$\bullet$ The first symmetry relation $A[\lambda_0]=-A[\lambda^*_0]^*$, $B[\lambda_0]=B[\lambda^*_0]^*$.\\
$\bullet$ The second symmetry relation $A[\lambda_0]=A[-\lambda^*_0]^*$, $B[\lambda_0]=-B[-\lambda^*_0]^*$.
\end{prop}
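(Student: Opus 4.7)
The plan is to derive both symmetry relations from Proposition~\ref{P4} combined with the definitions in \eqref{T29}, \eqref{T30}, and \eqref{T31}. I will first collect consequences of the scattering-matrix symmetry at the level of derivatives of $s_{11}$ and $s_{22}$, then use the Jost-solution symmetries to relate the norming constants $b[\lambda_0]$ and $d[\lambda_0]$ to their reflected counterparts, and finally assemble the pieces.

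For the first symmetry, I would begin with the component form of $S(\lambda)=\sigma_2 S(\lambda^*)^*\sigma_2$ written out in \eqref{BT1}: since $s_{11}(\lambda)=\overline{s_{22}(\overline{\lambda})}$ is an analytic reflection, successive differentiation in $\lambda$ gives $s_{11}^{(k)}(\lambda_0)=\overline{s_{22}^{(k)}(\lambda_0^*)}$ for every order $k$; in particular $s''_{11}(\lambda_0)=\overline{s''_{22}(\lambda_0^*)}$ and $s'''_{11}(\lambda_0)=\overline{s'''_{22}(\lambda_0^*)}$. Next I would translate the Jost-column symmetry $\psi_{\pm}(\lambda;x,t)=\sigma_2\psi_{\pm}(\lambda^*;x,t)^*\sigma_2$ into the column identities
\begin{align}\nonumber
\psi_{\pm1}(\lambda;x,t)=\mathrm{i}\sigma_2\,\psi_{\pm2}(\lambda^*;x,t)^*,\qquad
\psi_{\pm2}(\lambda;x,t)=-\mathrm{i}\sigma_2\,\psi_{\pm1}(\lambda^*;x,t)^*,
\end{align}
and then insert them into the defining relation $\psi_{+1}(\lambda_0)=b[\lambda_0]\psi_{-2}(\lambda_0)$ to read off $b[\lambda_0]=-\overline{b[\lambda_0^*]}$. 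Plugging these two identities into the $D^{-}$ definition $A[\lambda_0^*]=2b[\lambda_0^*]/s''_{22}(\lambda_0^*)$ and conjugating yields $A[\lambda_0]=-\overline{A[\lambda_0^*]}$ directly.

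For $B[\lambda_0]$ I would differentiate the Jost-column identities once in $\lambda$ (which is legitimate since the reflection $\lambda\mapsto\lambda^*$ preserves analyticity, giving $\tfrac{d}{d\lambda}\psi_{\pm j}(\lambda^*)^*=\overline{\psi'_{\pm j}(\lambda^*)}$), substitute into \eqref{T30} at $\lambda_0\in D^+$, and then use the already-established relation $b[\lambda_0]=-\overline{b[\lambda_0^*]}$ and the $D^-$ version of \eqref{T30} at $\lambda_0^*$ to extract $d[\lambda_0]=-\overline{d[\lambda_0^*]}$. Combining this with $b[\lambda_0]=-\overline{b[\lambda_0^*]}$ gives $d[\lambda_0]/b[\lambda_0]=\overline{d[\lambda_0^*]/b[\lambda_0^*]}$, while the scattering-derivative identities give $s'''_{11}(\lambda_0)/[3s''_{11}(\lambda_0)]=\overline{s'''_{22}(\lambda_0^*)/[3s''_{22}(\lambda_0^*)]}$. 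Summing the two terms produces $B[\lambda_0]=\overline{B[\lambda_0^*]}$.

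The second symmetry will be handled by repeating the same bookkeeping, but replacing $\sigma_2$ by $\sigma_1$ and $\lambda^*$ by $-\lambda^*$ as dictated by \eqref{T16}. The involution $\lambda\mapsto -\lambda^*$ is still an anti-holomorphic reflection, so the derivative transfer law $s_{11}^{(k)}(\lambda)=\overline{s_{22}^{(k)}(-\lambda^*)}$ still holds order by order; however, differentiating with respect to $\lambda$ now introduces extra signs from the chain rule applied to $-\lambda^*$, and these signs propagate to the norming constants. Carefully tracking them, one finds $b[\lambda_0]=\overline{b[-\lambda_0^*]}$ and $d[\lambda_0]=-\overline{d[-\lambda_0^*]}$, which immediately deliver $A[\lambda_0]=\overline{A[-\lambda_0^*]}$ and $B[\lambda_0]=-\overline{B[-\lambda_0^*]}$.

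The routine part is the algebraic bookkeeping of conjugation and reflection; the only genuine obstacle is ensuring the correct signs when differentiating the Jost symmetry, because the derivatives of $\psi_{\pm1}$ and $\psi_{-2}$ enter \eqref{T30} linearly and the spurious signs from $\mathrm{i}\sigma_2$ versus $-\mathrm{i}\sigma_2$ (or, in the second symmetry, from the chain rule factor $-1$ attached to $-\lambda^*$) must be tracked through the cancellation with $b[\lambda_0]$ before reading off $d[\lambda_0]$. Once that is done, the two claimed relations fall out by pure substitution.
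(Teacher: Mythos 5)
Your proposal is correct and follows precisely the route the paper indicates: the paper states this proposition without proof, citing only Eqs.~\eqref{T29}--\eqref{T31} and Proposition~\ref{P4}, and your derivation of $b[\lambda_0]=-b[\lambda_0^*]^*$, $d[\lambda_0]=-d[\lambda_0^*]^*$ (and $b[\lambda_0]=b[-\lambda_0^*]^*$, $d[\lambda_0]=-d[-\lambda_0^*]^*$) from the column form of the Jost symmetries, followed by substitution into \eqref{T31}, is exactly that argument carried out. One minor wording slip: for the second symmetry the derivative transfer law is $s^{(k)}_{11}(\lambda)=(-1)^k\,s^{(k)}_{22}(-\lambda^*)^*$ rather than $s^{(k)}_{11}(\lambda)=s^{(k)}_{22}(-\lambda^*)^*$ as you display it, but you explicitly flag and track these chain-rule signs and the final identities you arrive at are the correct ones.
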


\subsubsection{The Matrix RH Problem with ZBCs and Double Poles}
In general, according to the relation \eqref{T12} of two Jost solutions $\psi_{\pm}(\lambda;x,t)$, we will investigate the inverse problem with ZBCs by establishing a RH problem. In order to pose and solve the RH problem conveniently, we define
\begin{align}\label{T33}
\zeta_n=\bigg\{
\begin{aligned}
\lambda_n,\quad\quad &n=1,2,\cdots,N,\\
-\lambda_{n-N},\quad &n=N+1,N+2,\cdots,2N.
\end{aligned}
\end{align}

Then, a matrix RH problem is proposed as follows.
\begin{prop}
Define the sectionally meromorphic matrices
\begin{align}\label{T34}
M(\lambda;x,t)=\Bigg\{
\begin{aligned}
M^+(\lambda;x,t)=\bigg(\frac{\mu_{+1}(\lambda;x,t)}{s_{11}(\lambda)},\mu_{-2}(\lambda;x,t)\bigg),\text{ as } \lambda\in D^+,\\
M^-(\lambda;x,t)=\bigg(\mu_{-1}(\lambda;x,t),\frac{\mu_{+2}(\lambda;x,t)}{s_{11}(\lambda)}\bigg),\text{ as } \lambda\in D^-,
\end{aligned}
\Bigg.
\end{align}
where $\lim\limits_{\substack{\lambda'\rightarrow\lambda\\\lambda'\in D^{\pm}}}M(\lambda';x,t)=M^{\pm}(\lambda;x,t)$. Then, the multiplicative matrix RH problem is given below:\\
$\bullet$ Analyticity: $M(\lambda;x,t)$ is analytic in $D^+\cup D^-\backslash\Lambda$ and has the double poles in $\Lambda$, whose principal parts of the Laurent series at each double pole $\zeta_n$ or $\zeta^*_n$, are determined as
\begin{align}\label{T35}
\begin{split}
&\mathop{\mathrm{Res}}_{\lambda=\zeta_n}M^+(\lambda;x,t)=\left(A[\zeta_n]\mathrm{e}^{-2\mathrm{i}\theta(\zeta_n;x,t)}\{\mu'_{-2}(\zeta_n;x,t)+[B[\zeta_n]-2\mathrm{i}\theta'(\zeta_n;x,t)]\mu_{-2}(\zeta_n;x,t)\},0\right),\\
&\mathop{\mathrm{P}_{-2}}_{\lambda=\zeta_n}M^+(\lambda;x,t)=\left(A[\zeta_n]\mathrm{e}^{-2\mathrm{i}\theta(\zeta_n;x,t)}\mu_{-2}(\zeta_n;x,t),0\right),\\
&\mathop{\mathrm{Res}}_{\lambda=\zeta^*_n}M^+(\lambda;x,t)=\left(0,A[\zeta^*_n]\mathrm{e}^{2\mathrm{i}\theta(\zeta^*_n;x,t)}\{\mu'_{-1}(\zeta^*_n;x,t)+[B[\zeta^*_n]+2\mathrm{i}\theta'(\zeta^*_n;x,t)]\mu_{-1}(\zeta^*_n;x,t)\}\right),\\
&\mathop{\mathrm{P}_{-2}}_{\lambda=\zeta^*_n}M^+(\lambda;x,t)=\left(0,A[\zeta^*_n]\mathrm{e}^{2\mathrm{i}\theta(\zeta^*_n;x,t)}\mu_{-1}(\zeta^*_n;x,t)\right).
\end{split}
\end{align}
$\bullet$ Jump condition:
\begin{align}\label{T36}
M^-(\lambda;x,t)=M^+(\lambda;x,t)[I-J(\lambda;x,t)],\text{ as }\lambda\in\Sigma,
\end{align}
where
\begin{align}\label{T37}
J(\lambda;x,t)=\mathrm{e}^{\mathrm{i}\theta(\lambda;x,t)\widehat{\sigma_3}}\bigg(\begin{array}{cc} 0 & -\tilde{\rho}(\lambda) \\ \rho(\lambda) & \rho(\lambda)\tilde{\rho}(\lambda) \end{array}\bigg).
\end{align}
$\bullet$ Asymptotic behavior:
\begin{align}\label{T38}
M(\lambda;x,t)=\mathrm{e}^{\mathrm{i}\nu_{-}(x,t)\sigma_3}+O\bigg(\frac{1}{\lambda}\bigg),\text{ as }\lambda\rightarrow\infty.
\end{align}

\end{prop}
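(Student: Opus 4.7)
The plan is to verify the three defining properties of the piecewise-meromorphic matrix $M$ — the analyticity/singular structure at the discrete spectrum, the jump across $\Sigma$, and the asymptotics as $\lambda\to\infty$ — directly from the corresponding facts already established for $\mu_\pm(\lambda;x,t)$, $s_{11}(\lambda)$, $s_{22}(\lambda)$, and the scattering relation $\psi_+=\psi_-S$. Proposition \ref{P1} tells us that $\mu_{+1},\mu_{-2}$ extend analytically to $D^+$ while $\mu_{-1},\mu_{+2}$ extend analytically to $D^-$, and the scattering coefficients $s_{11}$, $s_{22}$ are analytic in $D^+$, $D^-$ respectively. Since $s_{11}$ vanishes (to order two) precisely on $\Lambda\cap D^+$ and $s_{22}$ precisely on $\Lambda\cap D^-$, the quotients $\mu_{+1}/s_{11}$ and $\mu_{+2}/s_{22}$ inherit meromorphy with double poles at the prescribed points, which gives the analyticity part at once.

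Next, I would translate the Laurent-coefficient identities derived in the previous subsection for $\psi_{+1}/s_{11}$ and $\psi_{+2}/s_{22}$ into statements about $\mu_{+1}/s_{11}$ and $\mu_{+2}/s_{22}$ using $\psi_{\pm,1}=\mu_{\pm,1}\mathrm{e}^{\mathrm{i}\theta}$ and $\psi_{\pm,2}=\mu_{\pm,2}\mathrm{e}^{-\mathrm{i}\theta}$. Since $\mathrm{e}^{\pm \mathrm{i}\theta(\lambda;x,t)}$ is entire in $\lambda$, writing
\begin{align*}
\frac{\mu_{+1}(\lambda;x,t)}{s_{11}(\lambda)}=\frac{\alpha_n}{(\lambda-\zeta_n)^2}+\frac{\beta_n}{\lambda-\zeta_n}+O(1),
\end{align*}
and expanding $\mathrm{e}^{\mathrm{i}\theta(\lambda)}=\mathrm{e}^{\mathrm{i}\theta(\zeta_n)}\bigl[1+\mathrm{i}\theta'(\zeta_n)(\lambda-\zeta_n)+\cdots\bigr]$, the product $(\mu_{+1}/s_{11})\mathrm{e}^{\mathrm{i}\theta}$ must agree termwise with the Laurent expansion of $\psi_{+1}/s_{11}$ at $\zeta_n$. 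Matching the $(\lambda-\zeta_n)^{-2}$ coefficient yields $\alpha_n=A[\zeta_n]\mathrm{e}^{-2\mathrm{i}\theta(\zeta_n)}\mu_{-2}(\zeta_n)$, and matching the $(\lambda-\zeta_n)^{-1}$ coefficient together with the chain rule $\psi'_{-2}=(\mu'_{-2}-\mathrm{i}\theta'\mu_{-2})\mathrm{e}^{-\mathrm{i}\theta}$ produces
\begin{align*}
\beta_n=A[\zeta_n]\mathrm{e}^{-2\mathrm{i}\theta(\zeta_n)}\bigl\{\mu'_{-2}(\zeta_n)+[B[\zeta_n]-2\mathrm{i}\theta'(\zeta_n)]\mu_{-2}(\zeta_n)\bigr\},
\end{align*}
which matches the first two lines of \eqref{T35}. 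The corresponding formulas at $\zeta_n^*$ follow from an identical calculation applied to $\mu_{+2}/s_{22}$.

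For the jump condition, splitting $\psi_+=\psi_-S$ into columns and applying the transformation \eqref{T10} gives
\begin{align*}
\frac{\mu_{+1}}{s_{11}}=\mu_{-1}+\rho(\lambda)\mathrm{e}^{-2\mathrm{i}\theta}\mu_{-2},\qquad \frac{\mu_{+2}}{s_{22}}=\tilde{\rho}(\lambda)\mathrm{e}^{2\mathrm{i}\theta}\mu_{-1}+\mu_{-2}.
\end{align*}
Substituting these into the definitions of $M^\pm$ and carrying out the $2\times2$ product $M^+(I-J)$ with $J$ given by \eqref{T37}, the first column reduces to $\mu_{-1}$ and the second to $\mu_{+2}/s_{22}$, reproducing $M^-$. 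The asymptotics are then immediate: Proposition \ref{P2} gives $\mu_{+1}\to(\mathrm{e}^{\mathrm{i}\nu_+},0)^\top$ and $\mu_{-2}\to(0,\mathrm{e}^{-\mathrm{i}\nu_-})^\top$, Proposition \ref{PP2} gives $s_{11}\to \mathrm{e}^{-\mathrm{i}\nu}$, and since $\nu=\nu_--\nu_+$ one has $\mathrm{e}^{\mathrm{i}\nu_+}/\mathrm{e}^{-\mathrm{i}\nu}=\mathrm{e}^{\mathrm{i}\nu_-}$, so both $M^+$ and $M^-$ tend to $\mathrm{e}^{\mathrm{i}\nu_-(x,t)\sigma_3}$, giving \eqref{T38}.

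The main obstacle is the bookkeeping in the Laurent-expansion step: one must carry along the full Taylor series of $\mathrm{e}^{\mathrm{i}\theta(\lambda)}$ (not just its value at $\zeta_n$) and the derivative $\mu'_{-2}$, so that the two $-\mathrm{i}\theta'(\zeta_n)$ contributions — one coming from differentiating $\mathrm{e}^{-\mathrm{i}\theta}$ inside $\psi'_{-2}$ and the other from differentiating $\mathrm{e}^{\mathrm{i}\theta}$ in the product $(\mu_{+1}/s_{11})\mathrm{e}^{\mathrm{i}\theta}$ — combine correctly into the factor $B[\zeta_n]-2\mathrm{i}\theta'(\zeta_n)$. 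Mishandling this $\theta'$ contribution is the most likely source of sign or factor errors; once it is handled, the remainder of the argument is routine substitution.
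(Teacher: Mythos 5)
Your proposal is correct and follows essentially the same route as the paper: derive the Laurent-coefficient data at $\zeta_n$, $\zeta_n^*$ from Eqs. \eqref{T10} and \eqref{T32}, obtain the jump matrix by splitting $\psi_+=\psi_-S$ into columns, and read off the asymptotics from Propositions \ref{P2} and \ref{PP2} via $\nu=\nu_--\nu_+$. The only difference is that you make explicit the bookkeeping of the Taylor expansion of $\mathrm{e}^{\pm\mathrm{i}\theta}$ that produces the $-2\mathrm{i}\theta'(\zeta_n)$ shift, which the paper leaves implicit in the phrase ``it follows from Eqs. \eqref{T10} and \eqref{T32}.''
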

\begin{proof}
For the analyticity of $M(\lambda;x,t)$, It follows from Eqs. \eqref{T10} and \eqref{T32} that for each double poles $\zeta_n\in D^+$ or $\zeta^*_n\in D^-$. Now, we consider $\zeta_n\in D^+$, and obtain
\begin{small}
\begin{align}\nonumber
\begin{split}
&\mathop{\mathrm{Res}}_{\lambda=\zeta_n}\bigg[\frac{\mu_{+1}(\zeta_n;x,t)}{s_{11}(\zeta_n)}\bigg]=A[\zeta_n]\mathrm{e}^{-2\mathrm{i}\theta(\zeta_n;x,t)}\{\mu'_{-2}(\zeta_n;x,t)+[B[\zeta_n]-2\mathrm{i}\theta'(\zeta_n;x,t)]\mu_{-2}(\zeta_n;x,t)\},\Rightarrow\\
&\mathop{\mathrm{Res}}_{\lambda=\zeta_n}M^+(\lambda;x,t)=\bigg(A[\zeta_n]\mathrm{e}^{-2\mathrm{i}\theta(\zeta_n;x,t)}\{\mu'_{-2}(\zeta_n;x,t)+[B[\zeta_n]-2\mathrm{i}\theta'(\zeta_n;x,t)]\mu_{-2}(\zeta_n;x,t)\},0\bigg),
\end{split}
\end{align}
\end{small}
where the $``'"$ denotes the partial derivative with respect to $\lambda$ $(\text{here }\lambda=\zeta_n)$, and
\begin{align}\nonumber
\begin{split}
&\mathop{P_{-2}}_{\lambda=\zeta_n}\bigg[\frac{\mu_{+1}(\zeta_n;x,t)}{s_{11}(\zeta_n)}\bigg]=A[\zeta_n]\mathrm{e}^{-2\mathrm{i}\theta(\zeta_n;x,t)}\mu_{-2}(\zeta_n;x,t),\Rightarrow\\
&\mathop{\mathrm{P}_{-2}}_{\lambda=\zeta_n}M(\lambda;x,t)=\bigg(A[\zeta_n]\mathrm{e}^{-2\mathrm{i}\theta(\zeta_n;x,t)}\mu_{-2}(\zeta_n;x,t),0\bigg).
\end{split}
\end{align}

Similarly, we also can obtain the analyticity for $\zeta^*_n\in D^-$. It follows from Eqs. \eqref{T10} and \eqref{T12} that
\begin{align}\nonumber
\bigg\{
\begin{aligned}
&\mu_{+1}(\lambda;x,t)=\mu_{-1}(\lambda;x,t)s_{11}(\lambda)+\mu_{-2}(\lambda;x,t)\mathrm{e}^{-2\mathrm{i}\theta(\lambda;x,t)}s_{21}(\lambda),\\
&\mu_{+2}(\lambda;x,t)=\mu_{-1}(\lambda;x,t)\mathrm{e}^{2\mathrm{i}\theta(\lambda;x,t)}s_{12}(\lambda)+\mu_{-2}(\lambda;x,t)s_{22}(\lambda),
\end{aligned}
\end{align}
by combining Eqs. \eqref{T14} and \eqref{T34}, one can obtain
\begin{footnotesize}
\begin{align}\nonumber
\begin{split}
&M^+(\lambda;x,t)=\bigg(\frac{\mu_{+1}(\lambda;x,t)}{s_{11}(\lambda)},\mu_{-2}(\lambda;x,t)\bigg)=\bigg(\mu_{-1}(\lambda;x,t)+\mu_{-2}(\lambda;x,t)\mathrm{e}^{-2\mathrm{i}\theta(\lambda;x,t)}\rho(\lambda),\mu_{-2}(\lambda;x,t)\bigg),\\
&M^-(\lambda;x,t)=\bigg(\mu_{-1}(\lambda;x,t),\frac{\mu_{+2}(\lambda;x,t)}{s_{22}(\lambda)}\bigg)=\bigg(\mu_{-1}(\lambda;x,t),\mu_{-1}(\lambda;x,t)\mathrm{e}^{2\mathrm{i}\theta(\lambda;x,t)}\tilde{\rho}(\lambda)+\mu_{-2}(\lambda;x,t)\bigg),
\end{split}
\end{align}
\end{footnotesize}
and
\begin{small}
\begin{align}\nonumber
\bigg(\mu_{-1}(\lambda;x,t),\frac{\mu_{+2}(\lambda;x,t)}{s_{22}(\lambda)}\bigg)=\bigg(\frac{\mu_{+1}(\lambda;x,t)}{s_{11}(\lambda)},\mu_{-2}(\lambda;x,t)\bigg)\bigg(\begin{array}{cc} 1 & \mathrm{e}^{2\mathrm{i}\theta(\lambda;x,t)}\tilde{\rho}(\lambda) \\ -\mathrm{e}^{-2\mathrm{i}\theta(\lambda;x,t)}\rho(\lambda) & 1-\rho(\lambda)\tilde{\rho}(\lambda) \end{array}\bigg),\\
\end{align}
\end{small}
that is $M^-(z;x,t)=M^+(z;x,t)(I-J(z;x,t))$, where $J(\lambda;x,t)$ is given by Eq. \eqref{T37}. The asymptotic behaviors of the modified Jost solutions $\mu_{\pm}(\lambda;x,t)$ and scattering matrix $S(\lambda)$ given in propositions \eqref{P2} and \eqref{PP2} can easily lead to that of $M(\lambda;x,t)$. Specifically,
\begin{align}\nonumber
\begin{split}
&M^{+}(\lambda;x,t)=\Bigg(\begin{array}{cc} \frac{\mathrm{e}^{\mathrm{i}\nu_{+}(x,t)}}{\mathrm{e}^{-\mathrm{i}\nu}} & 0 \\ 0 & \mathrm{e}^{-\mathrm{i}\nu_{-}(x,t)} \end{array}\Bigg)+O\bigg(\frac{1}{\lambda}\bigg)=\mathrm{e}^{\mathrm{i}\nu_{-}(x,t)\sigma_3}+O\bigg(\frac{1}{\lambda}\bigg),\text{ as }\lambda\in D^{+},\\
&M^{-}(\lambda;x,t)=\Bigg(\begin{array}{cc} \mathrm{e}^{\mathrm{i}\nu_{-}(x,t)} & 0 \\ 0 & \frac{\mathrm{e}^{-\mathrm{i}\nu_{+}(x,t)}}{\mathrm{e}^{\mathrm{i}\nu}} \end{array}\Bigg)+O\bigg(\frac{1}{\lambda}\bigg)=\mathrm{e}^{\mathrm{i}\nu_{-}(x,t)\sigma_3}+O\bigg(\frac{1}{\lambda}\bigg),\text{ as }\lambda\in D^{-}.
\end{split}
\end{align}

Therefore, we obtain the asymptotic behavior \eqref{T38} of $M(\lambda;x,t)$ when $\lambda\rightarrow\infty$. This completes the proof.

\end{proof}

By subtracting out the asymptotic values as $\lambda\rightarrow\infty$ and the singularity contributions, one can regularize the RH problem as a normative form. Then, applying the Plemelj's formula, the solutions of the corresponding matrix RH problem can be established by an integral equation.

\begin{prop}\label{P8}
The solution of the above-mentioned matrix Riemann-Hilbert problem can be expressed as
\begin{small}
\begin{align}\label{T39}
\begin{split}
M(\lambda;x,t)=&\mathrm{e}^{\mathrm{i}\nu_{-}(x,t)\sigma_3}+\frac{1}{2\pi\mathrm{i}}\int_{\Sigma}\frac{M^+(\xi;x,t)J(\xi;x,t)}{\xi-\lambda}d\xi+\sum^{2N}_{n=1}\bigg(C_n(\lambda)\bigg[\mu'_{-2}(\zeta_n;x,t)+\bigg(D_n+\\
&\frac{1}{\lambda-\zeta_n}\bigg)\mu_{-2}(\zeta_n;x,t)\bigg],\widetilde{C}_n(\lambda)\bigg[\mu'_{-1}(\zeta^*_n;x,t)+\bigg(\widetilde{D}_n+\frac{1}{\lambda-\zeta^*_n}\bigg)\mu_{-1}(\zeta^*_n;x,t)\bigg]\bigg),
\end{split}
\end{align}
\end{small}
where $\lambda\in\mathbb{C}\backslash\Sigma$, $\int_{\Sigma}$ is an integral along the oriented contour exhibited in Fig. \ref{F1}, and
\begin{align}\label{T40}
\begin{split}
&C_n(\lambda)=\frac{A[\zeta_n]}{\lambda-\zeta_n}\mathrm{e}^{-2\mathrm{i}\theta(\zeta_n;x,t)},\quad \widetilde{C}_n(\lambda)=\frac{A[\zeta^*_n]}{\lambda-\zeta^*_n}\mathrm{e}^{2\mathrm{i}\theta(\zeta^*_n;x,t)},\\
&D_n=B[\zeta_n]-2\mathrm{i}\theta'(\zeta_n;x,t),\quad \widetilde{D}_n=B[\zeta^*_n]+2\mathrm{i}\theta'(\zeta^*_n;x,t),
\end{split}
\end{align}
$\mu_{-k}$ and $\mu'_{-k}$ $(k=1,2)$ satisfy
\begin{small}
\begin{align}\label{T41}
\begin{split}
&\mu_{-1}(\zeta^*_n;x,t)=\mathrm{e}^{\mathrm{i}\nu_{-}(x,t)\sigma_3}\bigg(\begin{array}{cc} 1 \\ 0  \end{array}\bigg)+\sum_{s=1}^{2N}C_s(\zeta^*_n)\bigg[\mu'_{-2}(\zeta_s;x,t)+\bigg(D_s+\frac{1}{\zeta^*_n-\zeta_s}\bigg)\mu_{-2}(\zeta_s;x,t)\bigg]+\\
&\qquad\qquad\qquad\frac{1}{2\pi \mathrm{i}}\int_{\Sigma}\frac{(M^+(\xi;x,t)J(\xi;x,t))_{1}}{\xi-\zeta^*_n}d\xi,\\
&\mu_{-2}(\zeta_s;x,t)=\mathrm{e}^{\mathrm{i}\nu_{-}(x,t)\sigma_3}\bigg(\begin{array}{cc} 0 \\ 1  \end{array}\bigg)+\sum_{j=1}^{2N}\widetilde{C}_j(\zeta_s)\bigg[\mu'_{-1}(\zeta^*_j;x,t)+\bigg(\widetilde{D}_j+\frac{1}{\zeta_s-\zeta^*_j}\bigg)\mu_{-1}(\zeta^*_j;x,t)\bigg]+\\
&\qquad\qquad\qquad\frac{1}{2\pi \mathrm{i}}\int_{\Sigma}\frac{(M^+(\xi;x,t)J(\xi;x,t))_{2}}{\xi-\zeta_s}d\xi,\\
&\mu'_{-1}(\zeta^*_n;x,t)=-\sum_{s=1}^{2N}\frac{C_s(\zeta^*_n)}{\zeta^*_n-\zeta_s}\bigg[\mu'_{-2}(\zeta_s;x,t)+\bigg(D_s+\frac{2}{\zeta^*_n-\zeta_s}\bigg)\mu_{-2}(\zeta_s;x,t)\bigg]+\\
&\qquad\qquad\qquad\frac{1}{2\pi \mathrm{i}}\int_{\Sigma}\frac{(M^+(\xi;x,t)J(\xi;x,t))_{1}}{(\xi-\zeta^*_n)^2}d\xi,\\
&\mu'_{-2}(\zeta_s;x,t)=-\sum_{j=1}^{2N}\frac{\widetilde{C}_j(\zeta_s)}{\zeta_s-\zeta^*_j}\bigg[\mu'_{-1}(\zeta^*_j;x,t)+\bigg(\widetilde{D}_j+\frac{2}{\zeta_s-\zeta^*_j}\bigg)\mu_{-1}(\zeta^*_j;x,t)\bigg]+\\
&\qquad\qquad\qquad\frac{1}{2\pi \mathrm{i}}\int_{\Sigma}\frac{(M^+(\xi;x,t)J(\xi;x,t))_{2}}{(\xi-\zeta_s)^2}d\xi,
\end{split}
\end{align}
\end{small}
where $(M^+(\xi;x,t)J(\xi;x,t))_{j}$ $(j=1,2)$ represent $j$th column of matrix $M^+(\xi;x,t)J(\xi;x,t)$.
\end{prop}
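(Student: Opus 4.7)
The plan is to convert the meromorphic RH problem into a regular one, solve it with the Plemelj--Sokhotski formula, and then close the system by evaluating the result (and its $\lambda$-derivative) at the discrete spectral points.

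\textbf{Step 1: Regularization.} First I would subtract off both the asymptotic value $\mathrm{e}^{\mathrm{i}\nu_{-}(x,t)\sigma_3}$ at $\lambda\to\infty$ and the full principal parts of the Laurent expansion at each double pole. Using the data in \eqref{T35} and the Cauchy kernel identity
\begin{align}\nonumber
\frac{A[\zeta_n]\mathrm{e}^{-2\mathrm{i}\theta(\zeta_n;x,t)}\mu_{-2}(\zeta_n;x,t)}{(\lambda-\zeta_n)^2}+\frac{A[\zeta_n]\mathrm{e}^{-2\mathrm{i}\theta(\zeta_n;x,t)}\{\mu'_{-2}(\zeta_n;x,t)+[B[\zeta_n]-2\mathrm{i}\theta'(\zeta_n;x,t)]\mu_{-2}(\zeta_n;x,t)\}}{\lambda-\zeta_n},
\end{align}
for the first column (and analogously for the second column using $\zeta_n^*$), the combination collapses to $C_n(\lambda)[\mu'_{-2}(\zeta_n)+(D_n+(\lambda-\zeta_n)^{-1})\mu_{-2}(\zeta_n)]$ in the notation of \eqref{T40}. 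Denote by $\widetilde M(\lambda;x,t)$ the matrix obtained by subtracting from $M$ the sum of these principal parts over $n=1,\dots,2N$ (for both columns), and also the asymptotic $\mathrm{e}^{\mathrm{i}\nu_{-}\sigma_3}$. Then $\widetilde M$ is analytic in $D^+\cup D^-$ and $\widetilde M(\lambda)=O(1/\lambda)$ as $\lambda\to\infty$.

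\textbf{Step 2: Plemelj representation.} The regularized function $\widetilde M$ still satisfies the jump $\widetilde M^- - \widetilde M^+ = -M^+ J$ on $\Sigma$, because the subtracted pieces are the same from either side. Applying the minus-Cauchy projector $\mathcal{P}^{\mp}[f](\lambda)=\frac{1}{2\pi\mathrm{i}}\int_\Sigma \frac{f(\xi)}{\xi-\lambda}d\xi$ (with $\lambda\in D^{\pm}$) and using the vanishing at infinity, I obtain
\begin{align}\nonumber
\widetilde M(\lambda;x,t)=\frac{1}{2\pi\mathrm{i}}\int_\Sigma \frac{M^+(\xi;x,t)J(\xi;x,t)}{\xi-\lambda}\,d\xi.
\end{align}
Adding back the asymptotic constant and the principal parts yields formula \eqref{T39}.

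\textbf{Step 3: Closing the system.} Equation \eqref{T39} is implicit because its right-hand side contains the unknowns $\mu_{-1}(\zeta_n^*;x,t)$, $\mu_{-2}(\zeta_s;x,t)$ together with the derivatives $\mu'_{-1}(\zeta_n^*;x,t)$, $\mu'_{-2}(\zeta_s;x,t)$. To obtain the first two equations of \eqref{T41}, I simply read off the first column of $M$ at $\lambda=\zeta_n^*\in D^-$ (which equals $\mu_{-1}(\zeta_n^*;x,t)$) and the second column at $\lambda=\zeta_s\in D^+$ (which equals $\mu_{-2}(\zeta_s;x,t)$), noting that at these points the column being singled out of the relevant summand vanishes because of the column-structure of the residues in \eqref{T35}.

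\textbf{Step 4: The differentiated equations.} For the last two equations, I differentiate \eqref{T39} with respect to $\lambda$ before evaluating. The crucial observation is $C'_s(\lambda)=-C_s(\lambda)/(\lambda-\zeta_s)$, so
\begin{align}\nonumber
\frac{d}{d\lambda}\!\left\{C_s(\lambda)\!\left[\mu'_{-2}(\zeta_s)+\Bigl(D_s+\frac{1}{\lambda-\zeta_s}\Bigr)\mu_{-2}(\zeta_s)\right]\right\}=-\frac{C_s(\lambda)}{\lambda-\zeta_s}\!\left[\mu'_{-2}(\zeta_s)+\Bigl(D_s+\frac{2}{\lambda-\zeta_s}\Bigr)\mu_{-2}(\zeta_s)\right],
\end{align}
which, evaluated at $\lambda=\zeta_n^*$, produces precisely the factor $2/(\zeta_n^*-\zeta_s)$ appearing in the third equation of \eqref{T41}. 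The integral term produces $(\xi-\zeta_n^*)^{-2}$ after differentiation. The analogous computation for the second column, evaluated at $\lambda=\zeta_s$, gives the fourth equation. The constant $\mathrm{e}^{\mathrm{i}\nu_-\sigma_3}$ disappears under the derivative, which is consistent with the absence of a free-term in the $\mu'_{-k}$ equations.

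\textbf{Main obstacle.} The delicate point is the bookkeeping for the double poles: one must verify that subtracting \emph{both} the $(\lambda-\zeta)^{-2}$ and $(\lambda-\zeta)^{-1}$ contributions genuinely regularizes $M$ (no spurious simple-pole residue remains from the combination), and one must track the $2/(\zeta_n^*-\zeta_s)$ factor that emerges from differentiating $(\lambda-\zeta_s)^{-1}$ inside a term already multiplied by $C_s(\lambda)$. Everything else is a routine application of the Plemelj formula and evaluation at the discrete points.
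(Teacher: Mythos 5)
Your proposal is correct and follows essentially the same route as the paper's proof: subtract the asymptotic value and the principal parts at the double poles, apply the Plemelj formula to the regularized jump, and then close the system by evaluating the columns of \eqref{T39} and their $\lambda$-derivatives at $\zeta^*_n$ and $\zeta_s$. Your explicit check that $C'_s(\lambda)=-C_s(\lambda)/(\lambda-\zeta_s)$ produces the $2/(\zeta^*_n-\zeta_s)$ factor is a detail the paper leaves implicit, but the argument is the same.
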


\begin{proof}
In order to regularize the RH problem, one has to subtract out the asymptotic values as $\lambda\rightarrow\infty$ which exhibited in Eq. \eqref{T38} and the singularity contributions. Then, the jump condition \eqref{T36} becomes
\begin{small}
\begin{align}\label{T42}
\begin{split}
&M^-(\lambda;x,t)-\mathrm{e}^{\mathrm{i}\nu_{-}(x,t)\sigma_3}-\sum_{n=1}^{2N}\Bigg[\frac{\mathop{\mathrm{P}_{-2}}\limits_{\lambda=\zeta_n}M(\lambda;x,t)}{(\lambda-\zeta_n)^2}+\frac{\mathop{\mathrm{Res}}\limits_{\lambda=\zeta_n}M(\lambda;x,t)}{\lambda-\zeta_n}+\frac{\mathop{\mathrm{P}_{-2}}\limits_{\lambda=\zeta^*_n}M(\lambda;x,t)}{(\lambda-\zeta^*_n)^2}+\\
&\frac{\mathop{\mathrm{Res}}\limits_{\lambda=\zeta^*_n}M(\lambda;x,t)}{\lambda-\zeta^*_n}\Bigg]=M^+(\lambda;x,t)-\mathrm{e}^{\mathrm{i}\nu_{-}(x,t)\sigma_3}-\sum_{n=1}^{2N}\Bigg[\frac{\mathop{\mathrm{P}_{-2}}\limits_{\lambda=\zeta_n}M(\lambda;x,t)}{(\lambda-\zeta_n)^2}+\frac{\mathop{\mathrm{Res}}\limits_{\lambda=\zeta_n}M(\lambda;x,t)}{\lambda-\zeta_n}+\\
&\frac{\mathop{\mathrm{P}_{-2}}\limits_{\lambda=\zeta^*_n}M(\lambda;x,t)}{(\lambda-\zeta^*_n)^2}+\frac{\mathop{\mathrm{Res}}\limits_{\lambda=\zeta^*_n}M(\lambda;x,t)}{\lambda-\zeta^*_n}\Bigg]-M^+(\lambda;x,t)J(\lambda;x,t),
\end{split}
\end{align}
\end{small}
where $\mathop{\mathrm{P}_{-2}}\limits_{\lambda=\zeta_n}M(\lambda;x,t),\mathop{\mathrm{Res}}\limits_{\lambda=\zeta_n}M(\lambda;x,t),\mathop{\mathrm{P}_{-2}}\limits_{\lambda=\zeta^*_n}M(\lambda;x,t),\mathop{\mathrm{Res}}\limits_{\lambda=\zeta^*_n}M(\lambda;x,t)$ have given in Eq. \eqref{T35}. By using Plemelj's formula, one can obtain the solution \eqref{T39} with formula \eqref{T40} of the matrix RH problem. By combining Eqs. \eqref{T34} and \eqref{T39}, $\mu_{-1}(\zeta^*_n;x,t)$ is the first column element of the solution \eqref{T39} as the double-pole $\lambda=\zeta^*_n\in D^{-}$, $\mu_{-2}(\zeta_s;x,t)$ is the second column element of the solution \eqref{T39} as the double-pole $\lambda=\zeta_s\in D^{+}$. The specific expressions of $\mu_{-1}(\zeta^*_n;x,t),\mu_{-2}(\zeta_s;x,t)$ and their first derivative to $\lambda$ are provided in Eq. \eqref{T41}.

\end{proof}

\subsubsection{Reconstruction Formula of the Potential with ZBCs and Double poles}
From the solution \eqref{T39} of the matrix RH problem, we have
\begin{align}\label{T43}
M(\lambda;x,t)=\mathrm{e}^{\mathrm{i}\nu_{-}(x,t)\sigma_3}+\frac{M^{[1]}(x,t)}{\lambda}+O\bigg(\frac{1}{\lambda^2}\bigg),\text{ as }\lambda\rightarrow\infty,
\end{align}
where
\begin{align}\label{T44}
\begin{split}
M^{[1]}(x,t)=&-\frac{1}{2\pi \mathrm{i}}\int_{\Sigma}M^+(\xi;x,t)J(\xi;x,t)d\xi+\sum^{2N}_{n=1}\big\{A[\zeta_n]\mathrm{e}^{-2\mathrm{i}\theta(\zeta_n;x,t)}\big[\mu'_{-2}(\zeta_n;x,t)+\\
&D_n\mu_{-2}(\zeta_n;x,t)\big],A[\zeta^*_n]\mathrm{e}^{2\mathrm{i}\theta(\zeta^*_n;x,t)}\big[\mu'_{-1}(\zeta^*_n;x,t)+\widetilde{D}_n\mu_{-1}(\zeta^*_n;x,t)\big]\big\}.
\end{split}
\end{align}

Substituting Eq. \eqref{T43} into Eq. \eqref{BT2} and matching $O(\lambda)$ term, we have
\begin{align}\nonumber
O(\lambda):\mathrm{i}\big[M^{[1]}(x,t)\sigma_3-\sigma_3M^{[1]}(x,t)\big]=Q(x,t)\mathrm{e}^{\mathrm{i}\nu_{-}(x,t)\sigma_3},
\end{align}
then, by expanding the above equation, one can find the reconstruction formula of the double-pole solution (potential) for the TOFKN \eqref{T1} with ZBCs as follows
\begin{align}\label{T45}
q(x,t)=-2\mathrm{i}\mathrm{e}^{\mathrm{i}\nu_{-}(x,t)}M^{[1]}_{12}(x,t),
\end{align}
where $M^{[1]}_{12}(x,t)$ represents the first row and second column element of the matrix $M^{[1]}(x,t)$, and
\begin{align}\label{T46}
\begin{split}
M^{[1]}_{12}(x,t)=&-\frac{1}{2\pi \mathrm{i}}\int_{\Sigma}\big(M^{+}(\xi;x,t)J(\xi;x,t)\big)_{12}d\xi+\\
&\sum^{2N}_{n=1}\big\{A[\zeta^*_n]\mathrm{e}^{2\mathrm{i}\theta(\zeta^*_n;x,t)}\big[\mu'_{-11}(\zeta^*_n;x,t)+\widetilde{D}_n\mu_{-11}(\zeta^*_n;x,t)\big]\big\},
\end{split}
\end{align}
where $\mu_{-11}(\zeta^*_n;x,t)$ and $\mu'_{-11}(\zeta^*_n;x,t)$ represents the first row element of the column vector $\mu_{-1}(\zeta^*_n;x,t)$ and $\mu'_{-1}(\zeta^*_n;x,t)$ respectively. When taking row vector $\alpha=\big(\alpha^{(1)},\alpha^{(2)}\big)$ and column vector $\gamma=(\gamma^{(1)},\gamma^{(2)})^{\mathrm{T}}$, where
\begin{align}\label{T47}
\begin{split}
&\alpha^{(1)}=\left(A[\zeta^*_n]\mathrm{e}^{2\mathrm{i}\theta(\zeta^*_n;x,t)}\widetilde{D}_n\right)_{1\times2N},\quad\alpha^{(2)}=\left(A[\zeta^*_n]\mathrm{e}^{2\mathrm{i}\theta(\zeta^*_n;x,t)}\right)_{1\times2N},\\
&\gamma^{(1)}=\big(\mu_{-11}(\zeta^*_n;x,t)\big)_{1\times2N},\quad\gamma^{(2)}=\big(\mu'_{-11}(\zeta^*_n;x,t)\big)_{1\times2N},
\end{split}
\end{align}
we can obtain a more concise reconstruction formulation of the double poles solution (potential) for the TOFKN \eqref{T1} with ZBCs and as follows
\begin{align}\label{T48}
q(x,t)=-2\mathrm{i}\mathrm{e}^{\mathrm{i}\nu_{-}(x,t)}\bigg(\alpha\gamma-\frac{1}{2\pi \mathrm{i}}\int_{\Sigma}\big(M^{+}(\xi;x,t)J(\xi;x,t)\big)_{12}d\xi\bigg).
\end{align}

\subsubsection{Trace Formulae with ZBCs and double poles}
The so-called trace formulae are the scattering coefficients $s_{11}(\lambda)$ and $s_{22}(\lambda)$ are formulated in terms of the discrete spectrum $\Lambda$ and reflection coefficients $\rho(\lambda)$ and $\tilde{\rho}(\lambda)$. We know that $s_{11}(\lambda),s_{22}(\lambda)$ are analytic on $D^{+},D^{-}$, respectively. The discrete spectral points $\zeta_n$'s are the double zeros of $s_{11}(\lambda)$, while $\zeta^*_n$'s are the double zeros of $s_{22}(\lambda)$. Define the functions $\beta^{\pm}(\lambda)$ as follows:
\begin{align}\label{T49}
\beta^{+}(\lambda)=s_{11}(\lambda)\prod^{2N}_{n=1}\bigg(\frac{\lambda-\zeta^*_n}{\lambda-\zeta_n}\bigg)^2\mathrm{e}^{\mathrm{i}\nu},\,\beta^{-}(\lambda)=s_{22}(\lambda)\prod^{2N}_{n=1}\bigg(\frac{\lambda-\zeta_n}{\lambda-\zeta^*_n}\bigg)^2\mathrm{e}^{-\mathrm{i}\nu}.
\end{align}

Then, $\beta^{+}(\lambda)$ and $\beta^{-}(\lambda)$ are analytic and have no zero in $D^{+}$ and $D^{-}$, respectively. Furthermore, we have the relation $\beta^{+}(\lambda)\beta^{-}(\lambda)=s_{11}(\lambda)s_{22}(\lambda)$ and the asymptotic behaviors $\beta^{\pm}(\lambda)\rightarrow1,\text{ as }\lambda\rightarrow\infty$.

According to $\mathrm{det}(S(\lambda))=s_{11}s_{22}-s_{21}s_{12}=1$, we can derive
\begin{align}\nonumber
\frac{1}{s_{11}s_{22}}=1-\frac{s_{21}s_{12}}{s_{11}s_{22}}=1-\rho(\lambda)\tilde{\rho}(\lambda),
\end{align}
by taking logarithm on both sides of the above equation at the same time, that is
\begin{align}\nonumber
-\mathrm{log}(s_{11}s_{22})=\mathrm{log}[1-\rho(\lambda)\tilde{\rho}(\lambda)]\Rightarrow\mathrm{log}[\beta^{+}(\lambda)\beta^{-}(\lambda)]=-\mathrm{log}[1-\rho(\lambda)\tilde{\rho}(\lambda)],
\end{align}
and employing the Plemelj' formula such that we have
\begin{align}\label{T50}
\mathrm{log}\beta^{\pm}(\lambda)=\mp\frac{1}{2\pi\mathrm{i}}\int_{\Sigma}\frac{\mathrm{log}[1-\rho(\lambda)\tilde{\rho}(\lambda)]}{\xi-\lambda}d\xi,\quad\lambda\in D^{\pm}.
\end{align}

Then, substituting Eq. \eqref{T50} into  Eq. \eqref{T49}, we can obtain the trace formulae
\begin{align}\label{T51}
\begin{split}
&s_{11}(\lambda)=\mathrm{exp}\bigg(-\frac{1}{2\pi \mathrm{i}}\int_{\Sigma}\frac{\mathrm{log}[1-\rho(\lambda)\tilde{\rho}(\lambda)]}{\xi-\lambda}d\xi\bigg)\prod^{2N}_{n=1}\bigg(\frac{\lambda-\zeta_n}{\lambda-\zeta^*_n}\bigg)^2\mathrm{e}^{-\mathrm{i}\nu},\\
&s_{22}(\lambda)=\mathrm{exp}\bigg(\frac{1}{2\pi \mathrm{i}}\int_{\Sigma}\frac{\mathrm{log}[1-\rho(\lambda)\tilde{\rho}(\lambda)]}{\xi-\lambda}d\xi\bigg)\prod^{2N}_{n=1}\bigg(\frac{\lambda-\zeta^*_n}{\lambda-\zeta_n}\bigg)^2\mathrm{e}^{\mathrm{i}\nu}.
\end{split}
\end{align}

\subsubsection{Reflectionless Potential with ZBCs: Double-Pole Solitons}
Now, we consider a special kind of reflectionless potential $q(x,t)$ with the reflection coefficients $\rho(\lambda)=\tilde{\rho}(\lambda)=0$. From the Volterra integral equation \eqref{T11}, one obtains $\psi_{\pm}(0;x,t)=\mu_{\pm}(0;x,t)=I$, and $s_{11}(0)=1$. Combining the trace formula, one obtains that there exists an integer $l\in\mathbb{Z}$ such that $\nu=8\sum_{n=1}^N\mathrm{arg}(\zeta_n)+2\pi l$.

Then Eqs. \eqref{T41} and \eqref{T48} with $J(\lambda;x,t)=0_{2\times2}$ become
\begin{small}
\begin{align}\label{T52}
\begin{split}
&\mu_{-11}(\zeta^*_n;x,t)=\mathrm{e}^{\mathrm{i}\nu_{-}(x,t)}+\sum_{s=1}^{2N}C_s(\zeta^*_n)\bigg[\mu'_{-21}(\zeta_s;x,t)+\bigg(D_s+\frac{1}{\zeta^*_n-\zeta_s}\bigg)\mu_{-21}(\zeta_s;x,t)\bigg],\\
&\mu_{-21}(\zeta_s;x,t)=\sum_{j=1}^{2N}\widetilde{C}_j(\zeta_s)\bigg[\mu'_{-11}(\zeta^*_j;x,t)+\bigg(\widetilde{D}_j+\frac{1}{\zeta_s-\widetilde{\zeta_j}}\bigg)\mu_{-11}(\zeta^*_j;x,t)\bigg],\\
&\mu'_{-11}(\zeta^*_n;x,t)=-\sum_{s=1}^{2N}\frac{C_s(\zeta^*_n)}{\zeta^*_n-\zeta_s}\bigg[\mu'_{-21}(\zeta_s;x,t)+\bigg(D_s+\frac{2}{\zeta^*_n-\zeta_s}\bigg)\mu_{-21}(\zeta_s;x,t)\bigg],\\
&\mu'_{-21}(\zeta_s;x,t)=-\sum_{j=1}^{2N}\frac{\widetilde{C}_j(\zeta_s)}{\zeta_s-\zeta^*_j}\bigg[\mu'_{-11}(\zeta^*_j;x,t)+\bigg(\widetilde{D}_j+\frac{2}{\zeta_s-\zeta^*_j}\bigg)\mu_{-11}(\zeta^*_j;x,t)\bigg],
\end{split}
\end{align}
\end{small}
and
\begin{align}\label{T53}
q(x,t)=-2\mathrm{i}\mathrm{e}^{\mathrm{i}\nu_{-}(x,t)}\alpha\gamma.
\end{align}

\begin{thm}
The explicit expression for the double-pole solution of the TOFKN \eqref{T1} with ZBCs is given by determinant formula
\begin{align}\label{T54}
q(x,t)=2\mathrm{i}\frac{\mathrm{det}(I-G)}{\mathrm{det}R}\Bigg(\frac{\mathrm{det}\widetilde{R}}{\mathrm{det}(I-\widetilde{G})}\Bigg)^2,
\end{align}
where
\begin{align}\label{T55}
\begin{split}
&R=\bigg(\begin{array}{cc} 0 & \alpha \\ \tau & I-G \end{array}\bigg),\quad \widetilde{R}=\bigg(\begin{array}{cc} 0 & \alpha \\ \tau & I-\widetilde{G} \end{array}\bigg),\\
&\tau=\bigg(\begin{array}{cc} \tau^{(1)} \\ \tau^{(2)} \end{array}\bigg),\quad \tau^{(1)}=(1)_{2N\times1},\quad \tau^{(2)}=(0)_{2N\times1},
\end{split}
\end{align}
the $4N\times4N$ partitioned matrix $G=\bigg(\begin{array}{cc} G^{(1,1)} & G^{(1,2)} \\ G^{(2,1)} & G^{(2,2)} \end{array}\bigg)$ with $G^{(i,j)}=\Big(g^{(i,j)}_{n,j}\Big)_{2N\times2N}\,(i,j=1,2)$ is given by
\begin{small}
\begin{align}\nonumber
\begin{split}
&g^{(1,1)}_{n,j}=\sum^{2N}_{s=1}C_s(\zeta^*_n)\widetilde{C}_j(\zeta_s)\bigg[-\frac{1}{\zeta_s-\zeta^*_j}\bigg(\widetilde{D}_j+\frac{2}{\zeta_s-\zeta^*_j}\bigg)+\bigg(D_s+\frac{1}{\zeta^*_n-\zeta_s}\bigg)\bigg(\widetilde{D}_j+\frac{1}{\zeta_s-\zeta^*_j}\bigg)\bigg],\\
&g^{(1,2)}_{n,j}=\sum^{2N}_{s=1}C_s(\zeta^*_n)\widetilde{C}_j(\zeta_s)\bigg[-\frac{1}{\zeta_s-\zeta^*_j}+\bigg(D_s+\frac{1}{\zeta^*_n-\zeta_s}\bigg)\bigg],\\
&g^{(2,1)}_{n,j}=\sum^{2N}_{s=1}\frac{C_s(\zeta^*_n)\widetilde{C}_j(\zeta_s)}{\zeta^*_n-\zeta_s}\bigg[\frac{1}{\zeta_s-\zeta^*_j}\bigg(\widetilde{D}_j+\frac{2}{\zeta_s-\zeta^*_j}\bigg)-\bigg(D_s+\frac{2}{\zeta^*_n-\zeta_s}\bigg)\bigg(\widetilde{D}_j+\frac{1}{\zeta_s-\zeta^*_j}\bigg)\bigg],\\
&g^{(2,2)}_{n,j}=\sum^{2N}_{s=1}\frac{C_s(\zeta^*_n)\widetilde{C}_j(\zeta_s)}{\zeta^*_n-\zeta_s}\bigg[\frac{1}{\zeta_s-\zeta^*_j}-\bigg(D_s+\frac{2}{\zeta^*_n-\zeta_s}\bigg)\bigg],
\end{split}
\end{align}
\end{small}
the $4N\times4N$ partitioned matrix $\widetilde{G}=\bigg(\begin{array}{cc} \widetilde{G}^{(1,1)} & \widetilde{G}^{(1,2)} \\ \widetilde{G}^{(2,1)} & \widetilde{G}^{(2,2)} \end{array}\bigg)$ with $\widetilde{G}^{(i,j)}=\Big(\widetilde{g}^{(i,j)}_{n,j}\Big)_{2N\times2N}\,(i,j=1,2)$ is given by
\begin{footnotesize}
\begin{align}\nonumber
\begin{split}
&\widetilde{g}^{(1,1)}_{n,j}=\zeta^*_n\sum^{2N}_{s=1}\frac{C_s(\zeta^*_n)\widetilde{C}_j(\zeta_s)}{\zeta_s}\bigg[-\frac{1}{\zeta_s-\zeta^*_j}\bigg(\widetilde{D}_j+\frac{2}{\zeta_s-\zeta^*_j}\bigg)+\frac{\zeta_s}{\zeta^*_j}\bigg(D_s+\frac{1}{\zeta^*_n-\zeta_s}-\frac{1}{\zeta_s}\bigg)\bigg(\widetilde{D}_j+\frac{1}{\zeta_s-\zeta^*_j}-\frac{1}{\zeta^*_j}\bigg)\bigg],\\
&\widetilde{g}^{(1,2)}_{n,j}=\zeta^*_n\sum^{2N}_{s=1}\frac{C_s(\zeta^*_n)\widetilde{C}_j(\zeta_s)}{\zeta_s}\bigg[-\frac{1}{\zeta_s-\zeta^*_j}+\frac{\zeta_s}{\zeta^*_j}\bigg(D_s+\frac{1}{\zeta^*_n-\zeta_s}-\frac{1}{\zeta_s}\bigg)\bigg],\\
&\widetilde{g}^{(2,1)}_{n,j}=\sum^{2N}_{s=1}\frac{C_s(\zeta^*_n)\widetilde{C}_j(\zeta_s)}{\zeta^*_n-\zeta_s}\bigg[\frac{1}{\zeta_s-\zeta^*_j}\bigg(\widetilde{D}_j+\frac{2}{\zeta_s-\zeta^*_j}\bigg)-\frac{\zeta_s}{\zeta^*_j}\bigg(D_s+\frac{2}{\zeta^*_n-\zeta_s}\bigg)\bigg(\widetilde{D}_j+\frac{1}{\zeta_s-\zeta^*_j}-\frac{1}{\zeta^*_j}\bigg)\bigg],\\
&\widetilde{g}^{(2,2)}_{n,j}=\sum^{2N}_{s=1}\frac{C_s(\zeta^*_n)\widetilde{C}_j(\zeta_s)}{\zeta^*_n-\zeta_s}\bigg[\frac{1}{\zeta_s-\zeta^*_j}-\frac{\zeta_s}{\zeta^*_j}\bigg(D_s+\frac{2}{\zeta^*_n-\zeta_s}\bigg)\bigg].
\end{split}
\end{align}
\end{footnotesize}

\end{thm}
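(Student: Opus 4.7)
The plan is to start from the reflectionless reconstruction formula \eqref{T53}, namely $q(x,t)=-2\mathrm{i}\mathrm{e}^{\mathrm{i}\nu_{-}(x,t)}\alpha\gamma$, and to reduce the four-block system \eqref{T52} to a single closed $4N\times 4N$ linear system in the column vector $\gamma=(\gamma^{(1)},\gamma^{(2)})^{\mathrm{T}}=\big(\mu_{-11}(\zeta^*_n;x,t),\,\mu'_{-11}(\zeta^*_n;x,t)\big)^{\mathrm{T}}_{n=1,\ldots,2N}$. The four equations in \eqref{T52} split naturally into two pairs: the second and fourth express $\mu_{-21}(\zeta_s;x,t)$ and $\mu'_{-21}(\zeta_s;x,t)$ as explicit linear combinations of $\mu_{-11}(\zeta^*_j;x,t)$ and $\mu'_{-11}(\zeta^*_j;x,t)$, and substituting them into the first and third equations eliminates the auxiliary unknowns. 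Collecting the factors $C_s(\zeta^*_n)$, $\widetilde{C}_j(\zeta_s)$, $D_s$, $\widetilde{D}_j$ and the simple-pole pieces $1/(\zeta^*_n-\zeta_s)$, $1/(\zeta_s-\zeta^*_j)$ produced by this substitution reproduces exactly the block entries $g^{(i,j)}_{n,j}$ stated in the theorem, so the reduced system reads $(I-G)\gamma=\mathrm{e}^{\mathrm{i}\nu_{-}(x,t)}\tau$.

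Once the system is in this form, I would apply the Schur complement (Cramer) identity to the bordered $(4N+1)\times(4N+1)$ matrix $R$, which yields $\alpha(I-G)^{-1}\tau=-\det R/\det(I-G)$. Hence $\alpha\gamma=-\mathrm{e}^{\mathrm{i}\nu_{-}}\det R/\det(I-G)$, and substituting into \eqref{T53} gives the intermediate expression
\begin{equation*}
q(x,t)=2\mathrm{i}\mathrm{e}^{2\mathrm{i}\nu_{-}(x,t)}\,\frac{\det R}{\det(I-G)}.
\end{equation*}

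The remaining and principal difficulty is to rewrite $\mathrm{e}^{2\mathrm{i}\nu_{-}(x,t)}$ as a determinant ratio squared. The key observation is that at $\lambda=0$ the spatial Lax equation \eqref{T2} reduces to $\Psi_x=0$, so the Volterra integral equation \eqref{T11} forces $\mu_{\pm}(0;x,t)=I$. Evaluating the RH integral representation \eqref{T39} at $\lambda=0$ and matching against the identity matrix supplies an extra scalar relation of the form $\mathrm{e}^{\mathrm{i}\nu_{-}(x,t)}\big(1-\kappa(x,t)\big)=1$, where $\kappa$ is linear in the same unknowns that solve $(I-G)\gamma=\mathrm{e}^{\mathrm{i}\nu_{-}}\tau$. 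Combining this with the analogous evaluations of the residue equations at the poles $\zeta_s,\zeta^*_n$ produces an auxiliary $4N\times 4N$ system whose coefficient matrix is built from $G$ through the replacements $C_s(\zeta^*_n)\mapsto \zeta^*_nC_s(\zeta^*_n)/\zeta_s$ and $\widetilde{C}_j(\zeta_s)\mapsto\widetilde{C}_j(\zeta_s)\zeta_s/\zeta^*_j$, together with the residue shifts $D_s\mapsto D_s-1/\zeta_s$ and $\widetilde{D}_j\mapsto\widetilde{D}_j-1/\zeta^*_j$ originating from $C_n(0)=-A[\zeta_n]\mathrm{e}^{-2\mathrm{i}\theta(\zeta_n;x,t)}/\zeta_n$ and $1/(0-\zeta_n)=-1/\zeta_n$. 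These are exactly the extra factors appearing in the definition of $\widetilde{g}^{(i,j)}_{n,j}$. A second application of the Schur complement identity to this auxiliary system then yields
\begin{equation*}
\mathrm{e}^{\mathrm{i}\nu_{-}(x,t)}=\frac{\det(I-G)\,\det\widetilde{R}}{\det R\,\det(I-\widetilde{G})}.
\end{equation*}

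Squaring this identity and substituting into the intermediate expression for $q(x,t)$ cancels one copy of $\det R/\det(I-G)$ and produces the claimed formula. I expect the main obstacle to be the third paragraph: tracking carefully how the $\lambda=0$ matching conditions, combined with the elimination already performed in step one, reorganize exactly into the block form of $\widetilde{G}$ as displayed in the theorem, and verifying that the resulting Cramer computation gives the precise ratio above rather than one differing by $\det R$/$\det(I-G)$ factors. The steps before and after this are essentially linear algebra with the Schur complement identity.
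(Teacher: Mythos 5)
Your proposal follows essentially the same route as the paper: eliminate $\mu_{-21},\mu'_{-21}$ from \eqref{T52} to get the closed system $(I-G)\gamma=\mathrm{e}^{\mathrm{i}\nu_{-}}\tau$ and the bordered-determinant identity yielding the implicit formula $q=2\mathrm{i}\,\mathrm{e}^{2\mathrm{i}\nu_{-}}\det R/\det(I-G)$, then exploit $\mu_{\pm}(0;x,t)=I$ (equivalently, the partial-fraction expansion of $M(\lambda)/\lambda$ normalized at $\lambda=0$) to obtain the second system with coefficient matrix $I-\widetilde{G}$ and eliminate $\mathrm{e}^{\mathrm{i}\nu_{-}}$ between the two representations. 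The paper phrases the second step as the alternative expansion \eqref{T57} giving $q=2\mathrm{i}\,\mathrm{e}^{\mathrm{i}\nu_{-}}\det\widetilde{R}/\det(I-\widetilde{G})$ rather than solving for $\mathrm{e}^{\mathrm{i}\nu_{-}}$ directly, but the two eliminations are algebraically identical and your final combination is correct.
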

\begin{proof}
From the Eqs. \eqref{T47}, \eqref{T52} and \eqref{T53}, the reflectionless potential is derived by determinant formula:
\begin{align}\label{T56}
q(x,t)=2\mathrm{i}\frac{\mathrm{det}(R)}{\mathrm{det}(I-G)}\mathrm{e}^{2\mathrm{i}\nu_{-}(x,t)}.
\end{align}

However, this formula \eqref{T56} is implicit due to unknown function $\nu_{-}(x,t)$ is included. We need to deduce an exact form for the reflectionless potential. From the trace formulae \eqref{T51} and Volterra integral equation \eqref{T11} as $x\rightarrow-\infty$, we get
\begin{align}\label{T57}
\begin{split}
M(\lambda;x,t)=&I+\lambda\sum^{2N}_{n=1}\Bigg[\frac{\mathop{\mathrm{P}_{-2}}\limits_{\lambda=\zeta_n}\big(M(\lambda;x,t)/\lambda\big)}{(\lambda-\zeta_n)^2}+\frac{\mathop{\mathrm{Res}}\limits_{\lambda=\zeta_n}\big(M(\lambda;x,t)/\lambda\big)}{\lambda-\zeta_n}\\
&+\frac{\mathop{\mathrm{P}_{-2}}\limits_{\lambda=\zeta^*_n}\big(M(\lambda;x,t)/\lambda\big)}{(\lambda-\zeta^*_n)^2}+\frac{\mathop{\mathrm{Res}}\limits_{\lambda=\zeta^*_n}\big(M(\lambda;x,t)/\lambda\big)}{\lambda-\zeta^*_n}\Bigg],
\end{split}
\end{align}
which can yield the $\gamma$ given by Eq. \eqref{T47} explicitly. Then, substituting $\gamma$ into the formula of the potential, one yields
\begin{align}\label{T58}
q(x,t)=2\mathrm{i}\frac{\mathrm{det}(\widetilde{R})}{\mathrm{det}(I-\widetilde{G})}\mathrm{e}^{\mathrm{i}\nu_{-}(x,t)},
\end{align}
then, by combining Eq. \eqref{T57} with Eq. \eqref{T58}, we can obtain the determinant formula \eqref{T54}. Complete the proof.

\end{proof}

For example, we obtain the $N$-double-pole solutions of the TOFKN \eqref{T1} with ZBCs via Eq. \eqref{T54}:

$\bullet$ When taking parameters $N=1,\zeta_1=\frac12+\frac12\mathrm{i},A[\zeta_1]=1,B[\zeta_1]=1$, we can obtain the explicit 1-double-pole solution and give out relevant plots in Fig. \ref{F2}. Figs. \ref{F2} (a) and (b) exhibit the three-dimensional and density diagrams for the exact 1-double-pole soliton solution of the TOFKN with ZBCs, which is equivalent to the elastic collisions of two bright-bright solitons. Fig. \ref{F2} (c) displays the distinct profiles of the exact 1-double-pole soliton solution at $t=\pm6,0$.

Compared with the classical second-order flow KN system which also be called the DNLS equation in Ref. \cite{Zhangg(2020)}, the comparison made between the density diagrams of the 1-double-pole soliton solutions for the TOFKN and the DNLS equation shows that the trajectories of solutions are different obviously, it also means the introduction of third-order dispersion and quintic nonlinear term of KN systems can affect the trajectories of solutions. By comparing with Ref. \cite{Zhangg(2020)}, we find that the wave heights of the 1-double-pole solution at $x=0$ and $t=0$ are consistent, and the wave heights are all $1168/157$. Therefore, we make a guess that the energy at $t=0$ of the central peak for the 1-double-pole solution of the TOFKN is the same with that corresponding to the DNLS equation by selecting the same parameters. In other words, third-order dispersion and quintic nonlinear term of KN systems have little effect on the maximum amplitude of the solutions.

\begin{rem}
The parameter selection of the 1-double-pole soliton solution shown in Fig. \ref{F2} is consistent with that in reference \cite{Zhangg(2020)}.
\end{rem}

\begin{figure}[htbp]
\centering
\subfigure[]{
\begin{minipage}[t]{0.33\textwidth}
\centering
\includegraphics[height=4.5cm,width=4.5cm]{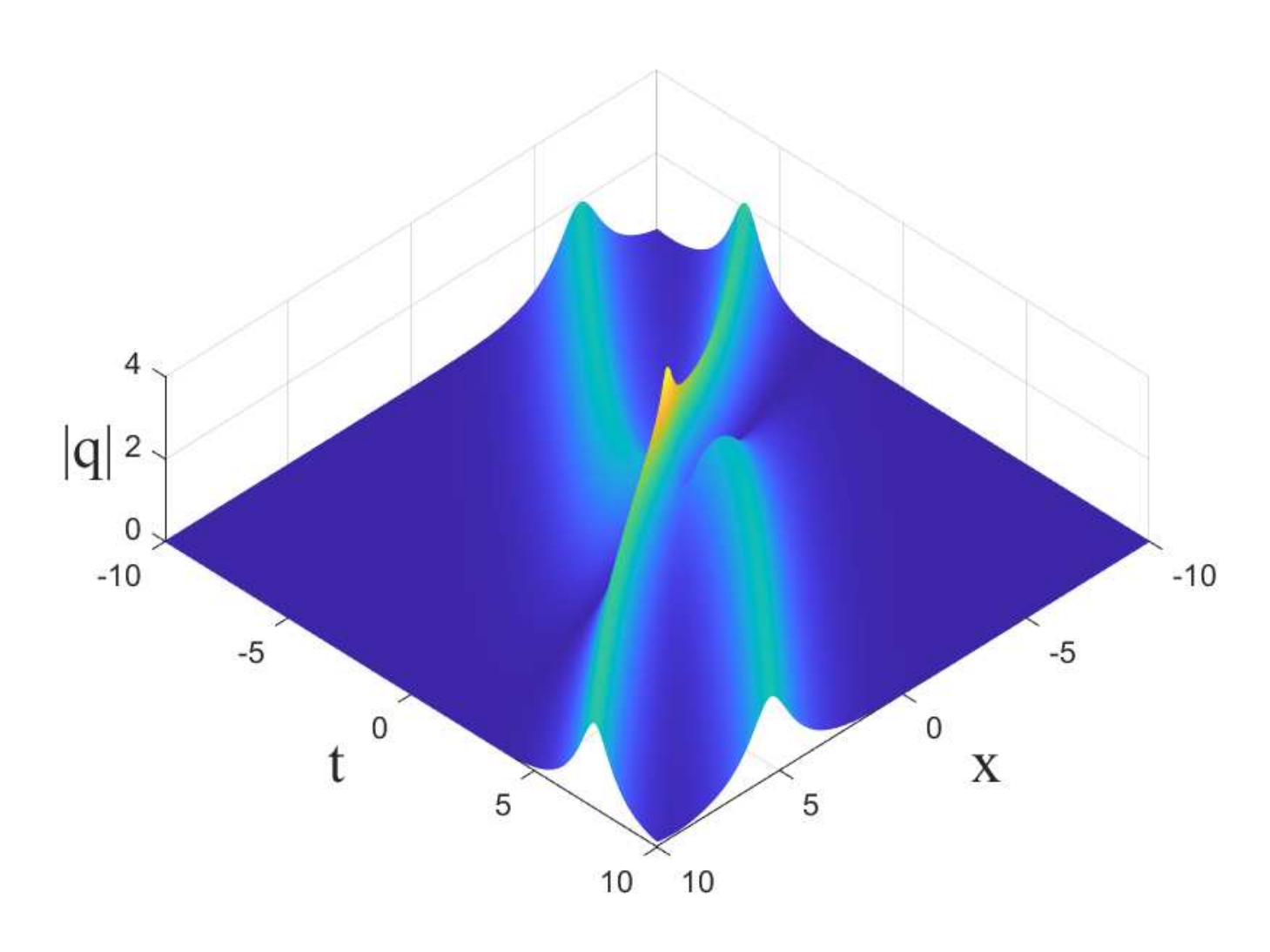}
%\caption{fig1}
\end{minipage}
}%
\subfigure[]{
\begin{minipage}[t]{0.33\textwidth}
\centering
\includegraphics[height=4.5cm,width=4.5cm]{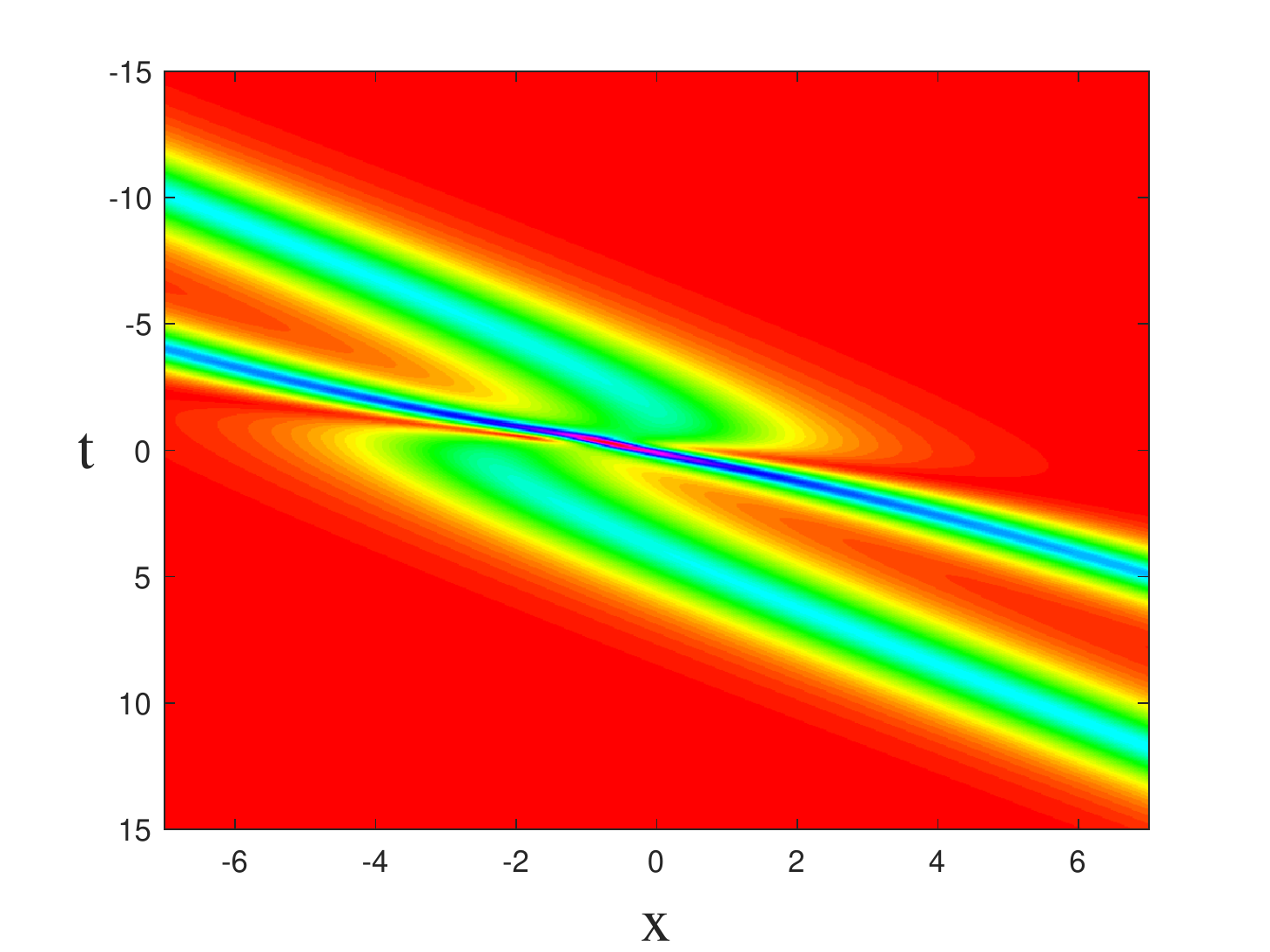}
%\caption{fig2}
\end{minipage}%
}%
\subfigure[]{
\begin{minipage}[t]{0.33\textwidth}
\centering
\includegraphics[height=4.5cm,width=4.5cm]{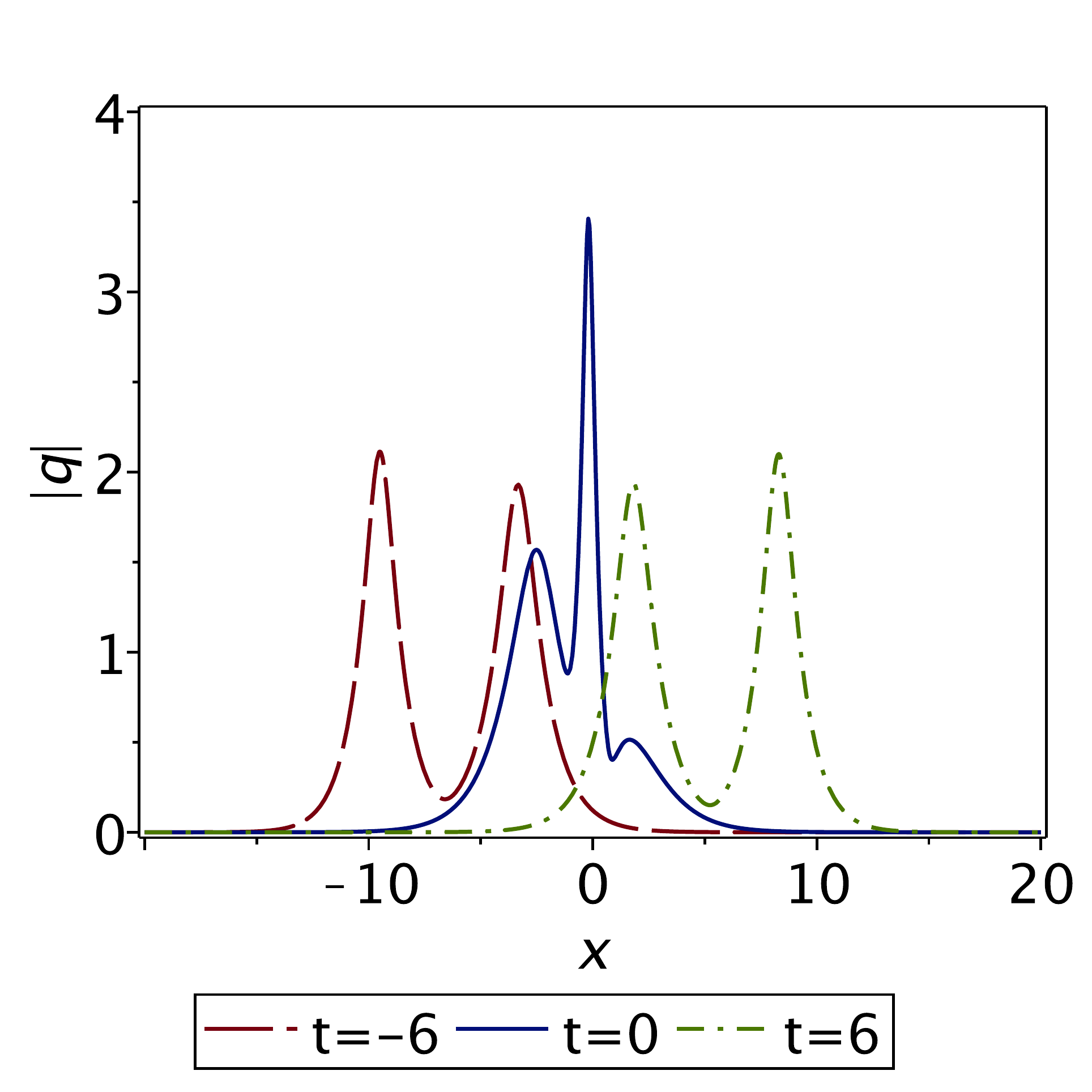}
%\caption{fig1}
\end{minipage}
}%
\centering
\caption{(Color online) The 1-double-pole soliton solution of TOFKN \eqref{T1} with ZBCs and $N=1,\zeta_1=\frac12+\frac12\mathrm{i},A[\zeta_1]=B[\zeta_1]=1$. (a) The three-dimensional plot; (b) The density plot; (c) The sectional drawings at $t=-6$ (dashed line), $t=0$ (solid line), and $t =6$ (dash-dot line).}
\label{F2}
\end{figure}

$\bullet$ When taking parameters $N=2,\zeta_1=\frac12+\frac12\mathrm{i},A[\zeta_1]=B[\zeta_1]=1,\zeta_2=\frac13+\frac12\mathrm{i},A[\zeta_2]=B[\zeta_2]=1$, we can obtain the explicit 2-double-pole solution and give out relevant plots in Fig. \ref{F3}. Figs. \ref{F3} (a) and (b) exhibit the three-dimensional and density diagrams for the exact 2-double-pole soliton solution of the TOFKN with ZBCs, which is equivalent to the interaction of two 1-double-pole soliton solutions. Fig. \ref{F3} (c) displays the distinct profiles of the exact 2-double-pole soliton solution at $t=\pm20,0$.

\begin{figure}[htbp]
\centering
\subfigure[]{
\begin{minipage}[t]{0.33\textwidth}
\centering
\includegraphics[height=4.5cm,width=4.5cm]{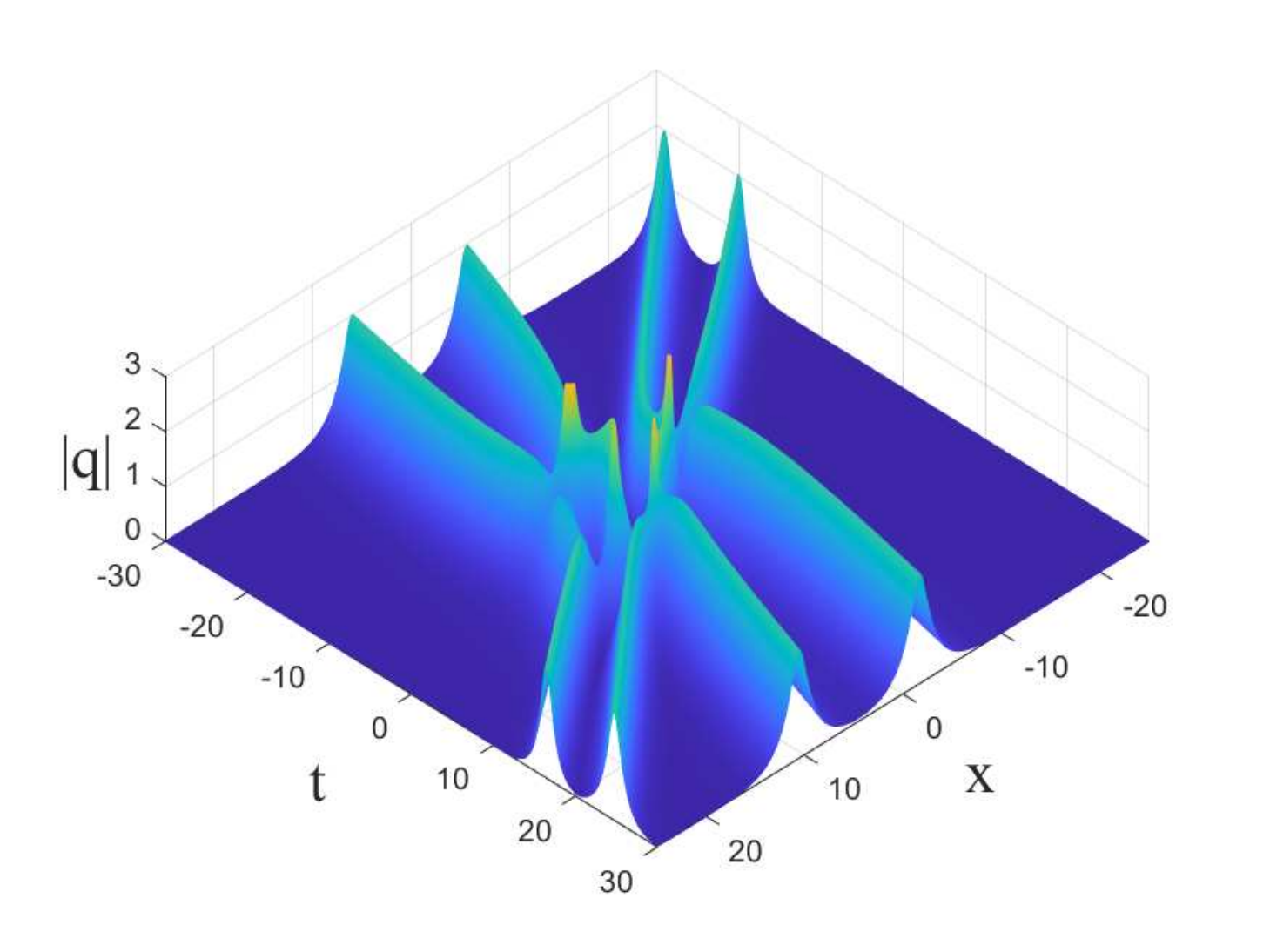}
%\caption{fig1}
\end{minipage}
}%
\subfigure[]{
\begin{minipage}[t]{0.33\textwidth}
\centering
\includegraphics[height=4.5cm,width=4.5cm]{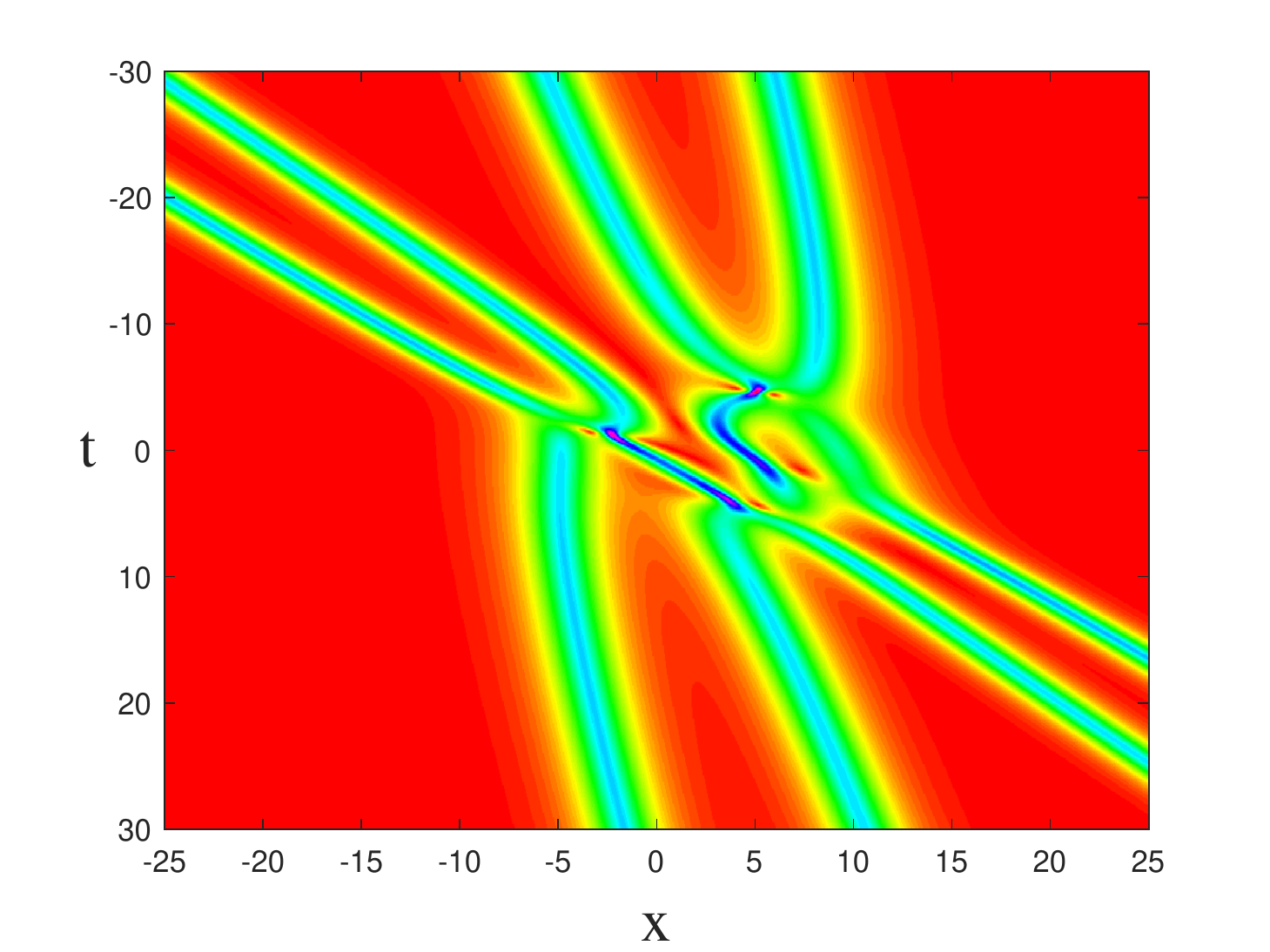}
%\caption{fig2}
\end{minipage}%
}%
\subfigure[]{
\begin{minipage}[t]{0.33\textwidth}
\centering
\includegraphics[height=4.5cm,width=4.5cm]{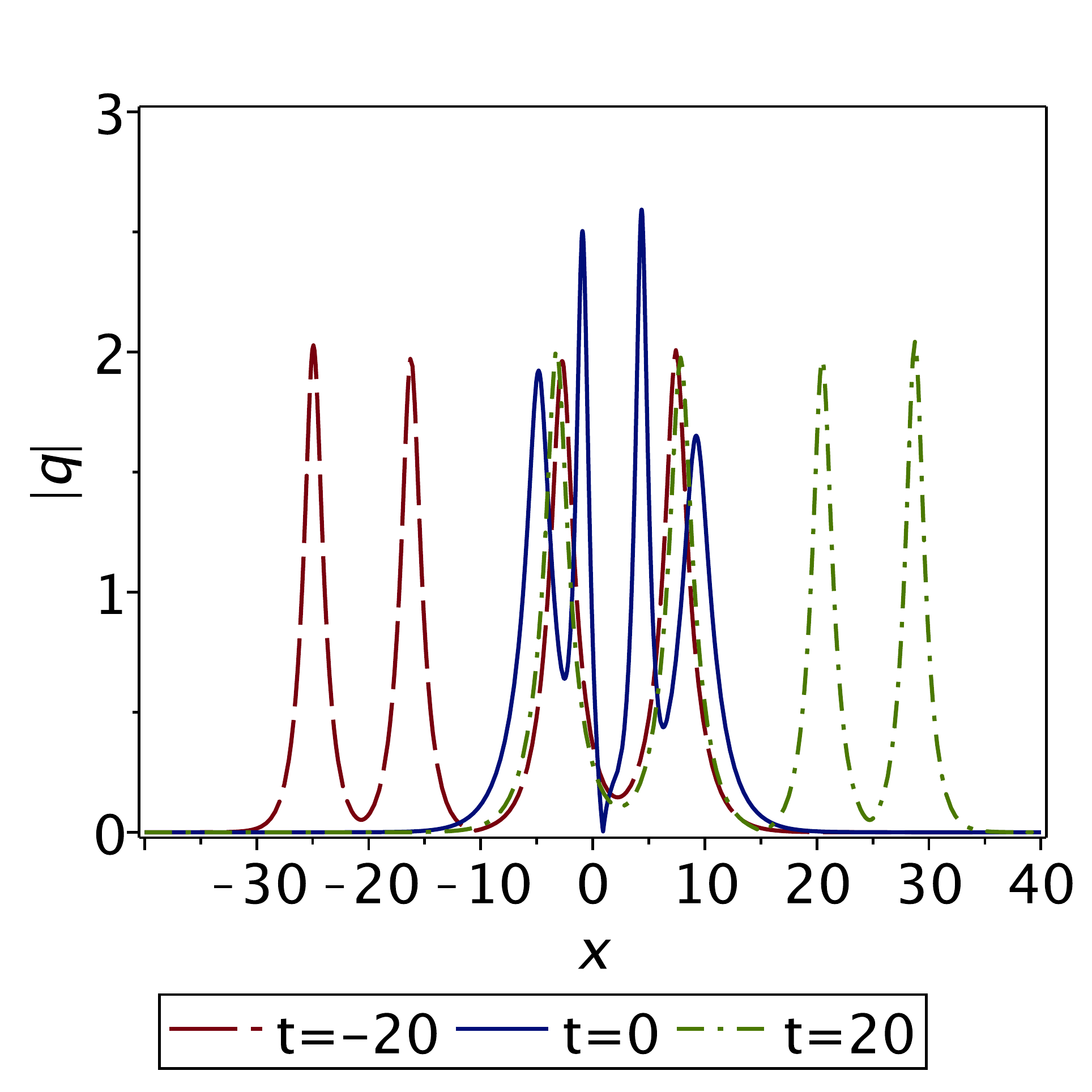}
%\caption{fig1}
\end{minipage}
}%
\centering
\caption{(Color online) The 2-double-pole soliton solution of TOFKN \eqref{T1} with ZBCs and $N=2,\zeta_1=\frac12+\frac12\mathrm{i},A[\zeta_1]=B[\zeta_1]=1,\zeta_2=\frac13+\frac12\mathrm{i},A[\zeta_2]=B[\zeta_2]=1$. (a) The three-dimensional plot; (b) The density plot; (c) The sectional drawings at $t=-20$ (dashed line), $t=0$ (solid line), and $t =20$ (dash-dot line).}
\label{F3}
\end{figure}

Moreover, we analyze the asymptotic states of the 1-double poles soliton solution of TOFKN \eqref{T1} with ZBCs as $t\rightarrow\infty$. Especially, we take $N=1, \zeta1=\frac12+\mathrm{i}, A[\zeta_1]=B[\zeta_1]=1$, and we can derive the asymptotic state of the 1-double poles soliton solution by means of Maple symbol calculations, given by
\begin{align}\label{ZD-AS}
\begin{split}
|q(x,t)|^2\longrightarrow\left\{
\begin{aligned}
\frac{19865600000\mathrm{e}^{4\theta_1}}{390625\mathrm{e}^{8\theta_1}-1862400000\mathrm{e}^{4\theta_1}+6166282240000}, \text{as }\theta_1=x+\frac{11}{4}t-\frac12\mathrm{log}(t)\\
\frac{8136949760\mathrm{e}^{4\theta_1}}{6166282240000\mathrm{e}^{8\theta_1}-762839040\mathrm{e}^{4\theta_1}+65536}, \text{as }\theta_1=x+\frac{11}{4}t+\frac12\mathrm{log}(t).
\end{aligned}
\right.
\end{split}
\end{align}
From the above expression, one can obtain that the 1-double poles soliton solution degrades into the two one-soliton solution as $t\rightarrow\infty$. Of which, the center trajectories are $x+\frac{11}{4}t-\frac12\mathrm{log}(t)$ and $x+\frac{11}{4}t+\frac12\mathrm{log}(t)$, respectively. When $t\rightarrow\infty$, the position shift of the
two standard one soliton solution is $\mathrm{log}(t)$, which depends on $t$.

\subsection{Inverse Problem with ZBCs and Triple Poles}

In the section, we devote to propose an inverse problem with ZBCs and solve it to obtain accurate triple poles solutions for the TOFKN \eqref{T1}.

\subsubsection{Discrete Spectrum with ZBCs and Triple Zeros}
Differing from the previous results with double poles, we here suppose that $s_{11}(\lambda)$ has $N$ triple zeros in $\Lambda_0 = \{\lambda\in\mathbb{C}: \mathrm{Re}\lambda>0, \mathrm{Im}\lambda>0\}$ denoted by $\lambda_n$, $n =1,2,\cdots,N$, that is, $s_{11}(\lambda_n)=s'_{11}(\lambda_n)=s''_{11}(\lambda_n)=0$, and $s'''_{11}(\lambda_n)\neq0$. Obviously, the corresponding discrete spectrum of triple poles is consistent with the discrete spectrum of double poles, as shown in Fig. \ref{F1}.

Once given $\lambda_0\in\Lambda\cap D^{+}$, we obtain that $\psi''_{+1}(\lambda_0;x,t)-b[\lambda_0]\psi''_{-2}(\lambda_0;x,t)-2d[\lambda_0]\psi'_{-2}(\lambda_0;x,t)$ and $\psi_{-2}(\lambda_0;x,t)$ are linearly dependent by combining Eqs. \eqref{T29}-\eqref{T30} and $s''_{11}(\lambda_0)=0$. Similarly, when a given $\lambda_0\in\Lambda\cap D^{-}$, one can obtain that $\psi''_{+2}(\lambda_0;x,t)-b[\lambda_0]\psi''_{-1}(\lambda_0;x,t)-2d[\lambda_0]\psi'_{-1}(\lambda_0;x,t)$ and $\psi_{-1}(\lambda_0;x,t)$ are linearly dependent by combining $s''_{22}(\lambda_0)=0$. For convenience, we introduce the norming constant $h[\lambda_0]$ such that
\begin{align}\label{DS-TZ1}
\begin{split}
\psi''_{+1}(\lambda_0;x,t)-b[\lambda_0]\psi''_{-2}(\lambda_0;x,t)-2d[\lambda_0]\psi'_{-2}(\lambda_0;x,t)=h[\lambda_0]\psi_{-2}(\lambda_0;x,t),\text{ as }\lambda_0\in\Lambda\cap D^{+},\\
\psi''_{+2}(\lambda_0;x,t)-b[\lambda_0]\psi''_{-1}(\lambda_0;x,t)-2d[\lambda_0]\psi'_{-1}(\lambda_0;x,t)=h[\lambda_0]\psi_{-1}(\lambda_0;x,t),\text{ as }\lambda_0\in\Lambda\cap D^{-}.
\end{split}
\end{align}

Then we notice that $\psi_{+1}(\lambda;x,t)$ and $s_{11}(\lambda)$ are analytic on $D^{+}$, and suppose $\lambda_0$ is the triple zeros of $s_{11}$. Let $\psi_{+1}(\lambda;x,t)$ and $s_{11}(\lambda)$ carry out Taylor expansion at $\lambda=\lambda_0$, we have
\begin{align}\nonumber
\begin{split}
&\frac{\psi_{+1}(\lambda;x,t)}{s_{11}(\lambda)}=\frac{6\psi_{+1}(\lambda_0;x,t)}{s'''_{11}(\lambda_0)}(\lambda-\lambda_0)^{-3}+\frac{-3\psi_{+1}(\lambda_0;x,t)s''''_{11}(\lambda_0)+12\psi'_{+1}(\lambda_0;x,t)s'''_{11}(\lambda_0)}{2(s'''_{11}(\lambda_0))^2}\\
&(\lambda-\lambda_0)^{-2}+\frac{3\psi_{+1}(\lambda_0;x,t)(s''''_{11}(\lambda_0))^2-12\psi'_{+1}(\lambda_0;x,t)s'''_{11}(\lambda_0)s''''_{11}(\lambda_0)+24\psi''_{+1}(\lambda_0;x,t)(s'''_{11}(\lambda_0))^2}{8(s'''_{11}(\lambda_0))^3}\\
&(\lambda-\lambda_0)^{-1}+\cdots,
\end{split}
\end{align}
Thus, as $\lambda_0\in\Lambda\cap D^{+}$, by means of Eqs. \eqref{T29}, \eqref{T30} and \eqref{DS-TZ1}, one has the compact form
\begin{align}\nonumber
\begin{split}
\mathop{P_{-3}}_{\lambda=\lambda_0}\bigg[\frac{\psi_{+1}(\lambda;x,t)}{s_{11}(\lambda)}\bigg]=\frac{6b[\lambda_0]\psi_{-2}(\lambda_0;x,t)}{s'''_{11}(\lambda_0)},
\end{split}
\end{align}
where $\underset{\lambda=\lambda_0}{P_{-3}}[f(\lambda;x,t)]$ denotes the coefficient of $O((\lambda-\lambda_0)^{-3})$ term in the Laurent series expansion of $f(\lambda;x,t)$ at $\lambda=\lambda_0$, and
\begin{align}\nonumber
\begin{split}
&\mathop{P_{-2}}_{\lambda=\lambda_0}\bigg[\frac{\psi_{+1}(\lambda;x,t)}{s_{11}(\lambda)}\bigg]=\frac{6b[\lambda_0]}{s'''_{11}(\lambda_0)}\bigg[\psi'_{-2}(\lambda_0;x,t)+\bigg(\frac{d[\lambda_0]}{b[\lambda_0]}-\frac{s''''_{11}(\lambda_0)}{4s'''_{11}(\lambda_0)}\bigg)\psi_{-2}(\lambda_0;x,t)\bigg],\\
&\mathop{\mathrm{Res}}_{\lambda=\lambda_0}\bigg[\frac{\psi_{+1}(\lambda;x,t)}{s_{11}(\lambda)}\bigg]=\frac{6b[\lambda_0]}{s'''_{11}(\lambda_0)}\bigg[\frac12\psi''_{-2}(\lambda_0;x,t)+\bigg(\frac{d[\lambda_0]}{b[\lambda_0]}-\frac{s''''_{11}(\lambda_0)}{4s'''_{11}(\lambda_0)}\bigg)\psi'_{-2}(\lambda_0;x,t)+\\
&\qquad\qquad\qquad\qquad\qquad\bigg(\frac{h[\lambda_0]}{2b[\lambda_0]}-\frac{d[\lambda_0]s''''_{11}(\lambda_0)}{4b[\lambda_0]s'''_{11}(\lambda_0)}+\frac{(s''''_{11}(\lambda_0))^2}{16(s'''_{11}(\lambda_0))^2}\bigg)\psi_{-2}(\lambda_0;x,t)\bigg].
\end{split}
\end{align}

Similarly, for the case of $\psi_{+2}(\lambda;x,t)$ and $s_{22}(\lambda)$ are analytic on $D^{-}$, as $\lambda_0\in\Lambda\cap D^{-}$ we repeat the above process and obtain
\begin{align}\nonumber
\begin{split}
&\mathop{P_{-3}}_{\lambda=\lambda_0}\bigg[\frac{\psi_{+2}(\lambda;x,t)}{s_{22}(\lambda)}\bigg]=\frac{6b[\lambda_0]\psi_{-1}(\lambda_0;x,t)}{s'''_{22}(\lambda_0)},\\
&\mathop{P_{-2}}_{\lambda=\lambda_0}\bigg[\frac{\psi_{+2}(\lambda;x,t)}{s_{22}(\lambda)}\bigg]=\frac{6b[\lambda_0]}{s'''_{22}(\lambda_0)}\bigg[\psi'_{-1}(\lambda_0;x,t)+\bigg(\frac{d[\lambda_0]}{b[\lambda_0]}-\frac{s''''_{22}(\lambda_0)}{4s'''_{22}(\lambda_0)}\bigg)\psi_{-1}(\lambda_0;x,t)\bigg],\\
&\mathop{\mathrm{Res}}_{\lambda=\lambda_0}\bigg[\frac{\psi_{+2}(\lambda;x,t)}{s_{22}(\lambda)}\bigg]=\frac{6b[\lambda_0]}{s'''_{22}(\lambda_0)}\bigg[\frac12\psi''_{-1}(\lambda_0;x,t)+\bigg(\frac{d[\lambda_0]}{b[\lambda_0]}-\frac{s''''_{22}(\lambda_0)}{4s'''_{22}(\lambda_0)}\bigg)\psi'_{-1}(\lambda_0;x,t)+\\
&\qquad\qquad\qquad\qquad\qquad\bigg(\frac{h[\lambda_0]}{2b[\lambda_0]}-\frac{d[\lambda_0]s''''_{22}(\lambda_0)}{4b[\lambda_0]s'''_{22}(\lambda_0)}+\frac{(s''''_{22}(\lambda_0))^2}{16(s'''_{22}(\lambda_0))^2}\bigg)\psi_{-1}(\lambda_0;x,t)\bigg].
\end{split}
\end{align}

Moreover, let
\begin{align}\label{DS-TZ2}
\begin{split}
&\widetilde{A}[\lambda_0]=\left\{
\begin{aligned}
\frac{6b[\lambda_0]}{s'''_{11}(\lambda_0)},\text{ as }\lambda_0\in\Lambda\cap D^{+},\\
\frac{6b[\lambda_0]}{s'''_{22}(\lambda_0)},\text{ as }\lambda_0\in\Lambda\cap D^{-},
\end{aligned}
\right.\quad
\widetilde{B}[\lambda_0]=\left\{
\begin{aligned}
\frac{d[\lambda_0]}{b[\lambda_0]}-\frac{s''''_{11}(\lambda_0)}{4s'''_{11}(\lambda_0)},\text{ as }\lambda_0\in\Lambda\cap D^{+},\\
\frac{d[\lambda_0]}{b[\lambda_0]}-\frac{s''''_{22}(\lambda_0)}{4s'''_{22}(\lambda_0)},\text{ as }\lambda_0\in\Lambda\cap D^{-},
\end{aligned}
\right.\\
&\widetilde{C}[\lambda_0]=\left\{
\begin{aligned}
\frac{h[\lambda_0]}{2b[\lambda_0]}-\frac{d[\lambda_0]s''''_{11}(\lambda_0)}{4b[\lambda_0]s'''_{11}(\lambda_0)}+\frac{(s''''_{11}(\lambda_0))^2}{16(s'''_{11}(\lambda_0))^2},\text{ as }\lambda_0\in\Lambda\cap D^{+},\\
\frac{h[\lambda_0]}{2b[\lambda_0]}-\frac{d[\lambda_0]s''''_{22}(\lambda_0)}{4b[\lambda_0]s'''_{22}(\lambda_0)}+\frac{(s''''_{22}(\lambda_0))^2}{16(s'''_{22}(\lambda_0))^2},\text{ as }\lambda_0\in\Lambda\cap D^{-}.
\end{aligned}
\right.
\end{split}
\end{align}

Then, we have
\begin{align}\label{DS-TZ3}
\begin{split}
&\mathop{P_{-3}}_{\lambda=\lambda_0}\Bigg[\frac{\psi_{+1}(\lambda;x,t)}{s_{11}(\lambda)}\Bigg]=\widetilde{A}[\lambda_0]\psi_{-2}(\lambda_0;x,t),\text{ as }\lambda_0\in\Lambda\cap D^{+},\\
&\mathop{P_{-3}}_{\lambda=\lambda_0}\Bigg[\frac{\psi_{+2}(\lambda;x,t)}{s_{22}(\lambda)}\Bigg]=\widetilde{A}[\lambda_0]\psi_{-1}(\lambda_0;x,t),\text{ as }\lambda_0\in\Lambda\cap D^{-},\\
&\mathop{P_{-2}}_{\lambda=\lambda_0}\Bigg[\frac{\psi_{+1}(\lambda;x,t)}{s_{11}(\lambda)}\Bigg]=\widetilde{A}[\lambda_0][\psi'_{-2}(\lambda_0;x,t)+\widetilde{B}[\lambda_0]\psi_{-2}(\lambda_0;x,t)],\text{ as }\lambda_0\in\Lambda\cap D^{+},\\
&\mathop{P_{-2}}_{\lambda=\lambda_0}\Bigg[\frac{\psi_{+2}(\lambda;x,t)}{s_{22}(\lambda)}\Bigg]=\widetilde{A}[\lambda_0][\psi'_{-1}(\lambda_0;x,t)+\widetilde{B}[\lambda_0]\psi_{-1}(\lambda_0;x,t)],\text{ as }\lambda_0\in\Lambda\cap D^{-},\\
&\mathop{\mathrm{Res}}_{\lambda=\lambda_0}\Bigg[\frac{\psi_{+1}(\lambda;x,t)}{s_{11}(\lambda)}\Bigg]=\widetilde{A}[\lambda_0]\bigg[\frac12\psi''_{-2}(\lambda_0;x,t)+\widetilde{B}[\lambda_0]\psi'_{-2}(\lambda_0;x,t)+\widetilde{C}[\lambda_0]\psi_{-2}(\lambda_0;x,t)\bigg],\\
&\qquad\qquad\qquad\qquad\qquad\text{ as }\lambda_0\in\Lambda\cap D^{+},\\
&\mathop{\mathrm{Res}}_{\lambda=\lambda_0}\Bigg[\frac{\psi_{+2}(\lambda;x,t)}{s_{22}(\lambda)}\Bigg]=\widetilde{A}[\lambda_0]\bigg[\frac12\psi''_{-1}(\lambda_0;x,t)+\widetilde{B}[\lambda_0]\psi'_{-1}(\lambda_0;x,t)+\widetilde{C}[\lambda_0]\psi_{-1}(\lambda_0;x,t)\bigg],\\
&\qquad\qquad\qquad\qquad\qquad\text{ as }\lambda_0\in\Lambda\cap D^{-}.
\end{split}
\end{align}

Accordingly, by mean of Eqs. \eqref{T29}-\eqref{T30}, Eqs. \eqref{DS-TZ1}-\eqref{DS-TZ2} as well as proposition \ref{P4}, we can derive the following symmetry relations.
\begin{prop}\label{DS-TZ-P1}
For $\lambda_0\in\Lambda$, the two symmetry relations for $\widetilde{A}[\lambda_0]$, $\widetilde{B}[\lambda_0]$ and $\widetilde{C}[\lambda_0]$ can be deduced as follows:\\
$\bullet$ The first symmetry relation $\widetilde{A}[\lambda_0]=-\widetilde{A}[\lambda^*_0]^*$, $\widetilde{B}[\lambda_0]=\widetilde{B}[\lambda^*_0]^*$, $\widetilde{C}[\lambda_0]=\widetilde{C}[\lambda^*_0]^*$.\\
$\bullet$ The second symmetry relation $\widetilde{A}[\lambda_0]=-\widetilde{A}[-\lambda^*_0]^*$, $\widetilde{B}[\lambda_0]=-\widetilde{B}[-\lambda^*_0]^*$ , $\widetilde{C}[\lambda_0]=\widetilde{C}[-\lambda^*_0]^*$.
\end{prop}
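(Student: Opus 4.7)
The proof will mirror the double-pole symmetry argument used earlier in the section for $A[\lambda_0]$ and $B[\lambda_0]$. My plan is to first upgrade the Jost-solution symmetries of Proposition \ref{P4} to symmetries for their first and second $\lambda$-derivatives, then transfer these to the three norming constants $b[\lambda_0]$, $d[\lambda_0]$, $h[\lambda_0]$ appearing in the defining relations \eqref{T29}, \eqref{T30}, \eqref{DS-TZ1}, and finally combine with the symmetries of the derivatives of $s_{11}$ and $s_{22}$ obtained from \eqref{BT1} and its $\sigma_1$-analogue.

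First I would note that since $\psi_{\pm}(\lambda;x,t)$ is analytic in $\lambda$ on each region, Proposition \ref{P4} extends by straightforward differentiation: for every integer $k\geq 0$ one has $\partial_\lambda^k\psi_{\pm}(\lambda;x,t)=\sigma_2\,\partial_\lambda^k\psi_{\pm}(\lambda^*;x,t)^*\sigma_2$ under the first symmetry, while the chain rule applied to the argument $-\lambda^*$ forces the extra sign $(-1)^k$ in $\partial_\lambda^k\psi_{\pm}(\lambda;x,t)=(-1)^k\sigma_1\,\partial_\lambda^k\psi_{\pm}(-\lambda^*;x,t)^*\sigma_1$ under the second. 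Extracting the column-vector forms $\psi_{\pm 1}(\lambda)=\mathrm{i}\sigma_2\psi_{\pm 2}(\lambda^*)^*$ and $\psi_{\pm 1}(\lambda)=\sigma_1\psi_{\pm 2}(-\lambda^*)^*$ (and analogous relations for $\psi_{\pm 2}$), and substituting them into \eqref{T29} at a fixed $\lambda_0\in\Lambda\cap D^+$, I compare with the analogous defining relation at $\lambda_0^*\in\Lambda\cap D^-$ to read off $b[\lambda_0]=-b[\lambda_0^*]^*$. Feeding this into the symmetrized version of \eqref{T30} and then of \eqref{DS-TZ1} produces $d[\lambda_0]=-d[\lambda_0^*]^*$ and $h[\lambda_0]=-h[\lambda_0^*]^*$. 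Re-running the same cascade with $\sigma_1$ and $-\lambda_0^*$ in place of $\sigma_2$ and $\lambda_0^*$, and retaining the chain-rule factors $(-1)^k$, gives $b[\lambda_0]=b[-\lambda_0^*]^*$, $d[\lambda_0]=-d[-\lambda_0^*]^*$, and $h[\lambda_0]=h[-\lambda_0^*]^*$.

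Next I would differentiate the scattering-coefficient identities. From $s_{11}(\lambda)=s_{22}(\lambda^*)^*$ in \eqref{BT1} one immediately has $s_{11}^{(k)}(\lambda_0)=s_{22}^{(k)}(\lambda_0^*)^*$ for all $k$; the corresponding computation for the second symmetry gives $s_{11}^{(k)}(\lambda_0)=(-1)^k s_{22}^{(k)}(-\lambda_0^*)^*$. Substituting these, together with the norming-constant relations of the previous step, into the definitions \eqref{DS-TZ2} finishes both claims. Under the first symmetry, the single minus sign on $b$ (and the invariance of the ratios $d/b$, $h/b$, $s_{11}''''/s_{11}'''$) yields $\widetilde{A}[\lambda_0]=-\widetilde{A}[\lambda_0^*]^*$ together with $\widetilde{B}[\lambda_0]=\widetilde{B}[\lambda_0^*]^*$ and $\widetilde{C}[\lambda_0]=\widetilde{C}[\lambda_0^*]^*$. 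Under the second symmetry, the parity factors $(-1)^k$ interact with the signs on $b,d,h$ so that the odd-order derivative in the denominator of $\widetilde{A}$ flips sign, the ratio inside $\widetilde{B}$ flips sign, and in $\widetilde{C}$ every summand is a product of ratios whose signs cancel in pairs, producing $\widetilde{A}[\lambda_0]=-\widetilde{A}[-\lambda_0^*]^*$, $\widetilde{B}[\lambda_0]=-\widetilde{B}[-\lambda_0^*]^*$, and $\widetilde{C}[\lambda_0]=\widetilde{C}[-\lambda_0^*]^*$.

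The main obstacle is sign bookkeeping, because three distinct sources of $\pm$ must be tracked simultaneously: the chain-rule factor $(-1)^k$ from differentiating through $-\lambda^*$, the signs on $b$, $d$, $h$ inherited from the symmetries of the Jost-column relations, and the parity of $s_{11}^{(k)}$ and $s_{22}^{(k)}$ under the second involution. Stating the column-vector versions of the Jost symmetries once and for all at the start, and then working term by term through the expressions in \eqref{DS-TZ2}, reduces this to a short and essentially mechanical verification.
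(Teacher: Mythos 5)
Your proposal is correct and follows exactly the route the paper intends: the paper states this proposition without a written proof, merely citing Eqs. \eqref{T29}--\eqref{T30}, \eqref{DS-TZ1}--\eqref{DS-TZ2} and Proposition \ref{P4}, and your argument fills in precisely those steps, with the sign bookkeeping (in particular the factor $(-1)^k$ from differentiating through $-\lambda^*$, which is what makes $\widetilde{A}$ and $\widetilde{B}$ flip sign under the second involution while $\widetilde{C}$ does not) carried out correctly.
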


\subsubsection{The Matrix RH Problem with ZBCs and Triple Poles}

Similarly, a matrix RH problem is built as follows.
\begin{prop}\label{MRHP-P1}
Define the sectionally meromorphic matrices
\begin{small}
\begin{align}\label{MRHP-1}
M(\lambda;x,t)=\left\{
\begin{aligned}
M^+(\lambda;x,t)=\bigg(\frac{\mu_{+1}(\lambda;x,t)}{s_{11}(\lambda)},\mu_{-2}(\lambda;x,t)\bigg),\text{ as } \lambda\in D^+,\\
M^-(\lambda;x,t)=\bigg(\mu_{-1}(\lambda;x,t),\frac{\mu_{+2}(\lambda;x,t)}{s_{22}(\lambda)}\bigg),\text{ as } \lambda\in D^-,
\end{aligned}
\right.
\end{align}
\end{small}
where $\lim\limits_{\substack{\lambda'\rightarrow\lambda\\\lambda'\in D^{\pm}}}M(\lambda';x,t)=M^{\pm}(\lambda;x,t)$. Then, the multiplicative matrix RH problem is given below:\\
$\bullet$ Analyticity: $M(\lambda;x,t)$ is analytic in $D^+\cup D^-\backslash\Lambda$ and has the triple poles in $\Lambda$, whose principal parts of the Laurent series at each triple pole $\zeta_n$ or $\zeta^*_n$, are determined as
\begin{small}
\begin{align}\label{MRHP-2}
\begin{split}
&\mathop{\mathrm{Res}}_{\lambda=\zeta_n}M^+(\lambda;x,t)=\bigg(\widetilde{A}[\zeta_n]\mathrm{e}^{-2\mathrm{i}\theta(\zeta_n;x,t)}\bigg[\frac12\mu''_{-2}(\zeta_n;x,t)+(\widetilde{B}[\zeta_n]-2\mathrm{i}\theta'(\zeta_n;x,t))\mu'_{-2}(\zeta_n;x,t)\\
&\qquad\qquad\qquad\qquad+(\widetilde{C}[\zeta_n]-2(\theta'(\zeta_n))^2-\mathrm{i}\theta''(\zeta_n)-2\mathrm{i}\theta'(\zeta_n)\widetilde{B}[\zeta_n])\mu_{-2}(\zeta_n;x,t)\bigg],0\bigg),\\
&\mathop{\mathrm{P}_{-2}}_{\lambda=\zeta_n}M^+(\lambda;x,t)=\big(\widetilde{A}[\zeta_n]\mathrm{e}^{-2i\theta(\zeta_n;x,t)}[\mu'_{-2}(\zeta_n;x,t)+(\widetilde{B}[\zeta_n]-2\mathrm{i}\theta'(\zeta_n;x,t))\mu_{-2}(\zeta_n;x,t)],0\big),\\
&\mathop{\mathrm{P}_{-3}}_{\lambda=\zeta_n}M^+(\lambda;x,t)=\big(\widetilde{A}[\zeta_n]\mathrm{e}^{-2\mathrm{i}\theta(\zeta_n;x,t)}\mu_{-2}(\zeta_n;x,t),0\big),\\
&\mathop{\mathrm{Res}}_{\lambda=\zeta^*_n}M^+(\lambda;x,t)=\bigg(0,\widetilde{A}[\zeta^*_n]\mathrm{e}^{2\mathrm{i}\theta(\zeta^*_n;x,t)}\bigg[\frac12\mu''_{-1}(\zeta^*_n;x,t)+(\widetilde{B}[\zeta^*_n]+2\mathrm{i}\theta'(\zeta^*_n;x,t))\mu'_{-1}(\zeta^*_n;x,t)\\
&\qquad\qquad\qquad\qquad+(\widetilde{C}[\zeta^*_n]-2(\theta'(\zeta^*_n))^2+\mathrm{i}\theta''(\zeta^*_n)+2\mathrm{i}\theta'(\zeta^*_n)\widetilde{B}[\zeta^*_n])\mu_{-1}(\zeta^*_n;x,t)\bigg]\bigg),\\
&\mathop{\mathrm{P}_{-2}}_{\lambda=\zeta^*_n}M^+(\lambda;x,t)=\big(0,\widetilde{A}[\zeta^*_n]\mathrm{e}^{2i\theta(\zeta^*_n;x,t)}[\mu'_{-1}(\zeta^*_n;x,t)+(\widetilde{B}[\zeta_n]+2\mathrm{i}\theta'(\zeta^*_n;x,t))\mu_{-1}(\zeta^*_n;x,t)]\big),\\
&\mathop{\mathrm{P}_{-3}}_{\lambda=\zeta^*_n}M^+(\lambda;x,t)=\big(0,\widetilde{A}[\zeta^*_n]\mathrm{e}^{2\mathrm{i}\theta(\zeta^*_n;x,t)}\mu_{-1}(\zeta^*_n;x,t)\big).
\end{split}
\end{align}
\end{small}
$\bullet$ Jump condition:
\begin{align}\label{MRHP-3}
M^-(\lambda;x,t)=M^+(\lambda;x,t)[I-J(\lambda;x,t)],\text{ as }\lambda\in\Sigma,
\end{align}
where
\begin{align}\label{MRHP-4}
J(\lambda;x,t)=\mathrm{e}^{\mathrm{i}\theta(\lambda;x,t)\widehat{\sigma_3}}\left(\begin{array}{cc} 0 & -\tilde{\rho}(\lambda) \\ \rho(\lambda) & \rho(\lambda)\tilde{\rho}(\lambda) \end{array}\right).
\end{align}
$\bullet$ Asymptotic behavior:
\begin{align}\label{MRHP-5}
M(\lambda;x,t)=\mathrm{e}^{\mathrm{i}\nu_{-}(x,t)\sigma_3}+O\bigg(\frac{1}{\lambda}\bigg),\text{ as }\lambda\rightarrow\infty.
\end{align}

\end{prop}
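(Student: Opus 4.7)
My plan is to mirror the argument used for the double-pole matrix RH problem in the proof following \eqref{T39}--\eqref{T42}, upgrading every ingredient from order-2 to order-3 expansions and using the triple-pole identities collected in \eqref{DS-TZ3}. The three items (analyticity with principal parts, jump condition, and asymptotic behavior) are essentially independent, so I will treat them one after the other.

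The analyticity of $M(\lambda;x,t)$ on $D^{+}\cup D^{-}\setminus\Lambda$ is immediate from Proposition \ref{P1} together with the analyticity of $s_{11}(\lambda)$ on $D^{+}$ and $s_{22}(\lambda)$ on $D^{-}$: $\mu_{+1}/s_{11}$ and $\mu_{-2}$ (respectively $\mu_{-1}$ and $\mu_{+2}/s_{22}$) are analytic wherever $s_{11}$ (resp.\ $s_{22}$) does not vanish, and under the hypothesis the only singularities in $D^{\pm}$ are the triple zeros $\zeta_n$ (resp.\ $\zeta_n^{*}$). For the principal parts, I start from \eqref{DS-TZ3} written for $\psi_{+1}/s_{11}$ at $\lambda=\zeta_n$ and translate to $\mu_{+1}/s_{11}=(\psi_{+1}/s_{11})\mathrm{e}^{-\mathrm{i}\theta(\lambda;x,t)}$. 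This requires (i) Taylor expanding the analytic factor $\mathrm{e}^{-\mathrm{i}\theta(\lambda;x,t)}$ about $\zeta_n$ to the second order in $\lambda-\zeta_n$, with coefficients $1,\, -\mathrm{i}\theta'(\zeta_n),\, -\tfrac12\bigl(\theta'(\zeta_n)^{2}+\mathrm{i}\theta''(\zeta_n)\bigr)$, and (ii) re-expressing $\psi_{-2}(\zeta_n),\psi'_{-2}(\zeta_n),\psi''_{-2}(\zeta_n)$ in terms of $\mu_{-2}$ and its derivatives via $\psi_{-2}=\mu_{-2}\mathrm{e}^{-\mathrm{i}\theta}$ and the product rule.

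Convolving the order-3 Laurent tail of $\psi_{+1}/s_{11}$ with the Taylor tail of $\mathrm{e}^{-\mathrm{i}\theta(\lambda;x,t)}$ produces a common prefactor $\mathrm{e}^{-2\mathrm{i}\theta(\zeta_n;x,t)}$. The coefficient of $(\lambda-\zeta_n)^{-3}$ collapses to $\widetilde{A}[\zeta_n]\mu_{-2}(\zeta_n;x,t)$, matching the stated $\mathop{\mathrm{P}_{-3}}$ formula. The coefficient of $(\lambda-\zeta_n)^{-2}$ absorbs one extra $-\mathrm{i}\theta'(\zeta_n)$ from the Taylor factor and another $-\mathrm{i}\theta'(\zeta_n)$ from converting $\psi'_{-2}$ to $\mu'_{-2}$, yielding the combination $\widetilde{B}[\zeta_n]-2\mathrm{i}\theta'(\zeta_n;x,t)$ and matching the stated $\mathop{\mathrm{P}_{-2}}$ formula. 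The residue (coefficient of $(\lambda-\zeta_n)^{-1}$) is the technical heart of the proof: it collects three contributions, and after converting $\psi''_{-2}$ via $\psi''_{-2}=(\mu''_{-2}-2\mathrm{i}\theta'\mu'_{-2}-\mathrm{i}\theta''\mu_{-2}-(\theta')^{2}\mu_{-2})\mathrm{e}^{-\mathrm{i}\theta}$, combining all $\mu_{-2}$ terms gives the coefficient $\widetilde{C}[\zeta_n]-2(\theta'(\zeta_n))^{2}-\mathrm{i}\theta''(\zeta_n)-2\mathrm{i}\theta'(\zeta_n)\widetilde{B}[\zeta_n]$ as claimed. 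This bookkeeping is exactly where I expect the main obstacle to sit: three separate cancellations have to line up across the half-integer coefficient of $\psi''_{-2}$, the cross term in $(\widetilde{B}-\mathrm{i}\theta')(\mu'_{-2}-\mathrm{i}\theta'\mu_{-2})$ and the analytic Taylor correction, so keeping track of signs and factors of $2$ is the one genuinely delicate step. The $\zeta_n^{*}$ block is handled in the same way starting from the $\psi_{+2}/s_{22}$ identity, with $\theta$ replaced by $-\theta$ where needed; the first symmetry of Proposition \ref{DS-TZ-P1} is not required for this step, only for self-consistency checks.

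The jump condition is obtained exactly as in the double-pole proof: substitute $\rho=s_{21}/s_{11}$, $\tilde{\rho}=s_{12}/s_{22}$ from \eqref{T14} and $\mu_{\pm}=\psi_{\pm}\mathrm{e}^{-\mathrm{i}\theta\sigma_3}$ into the column decomposition of $\psi_{+}=\psi_{-}S$, expressing both $M^{+}$ and $M^{-}$ in terms of $(\mu_{-1},\mu_{-2})$, and read off the factor $I-J(\lambda;x,t)$ with $J$ as in \eqref{MRHP-4}; the upgrade from double to triple poles leaves the continuous spectrum unaffected, so no new computation is needed. Finally, the asymptotic behavior \eqref{MRHP-5} follows from Propositions \ref{P2} and \ref{PP2}: in the first column of $M^{+}$ the ratio $\mu_{+1}/s_{11}$ tends to $\mathrm{e}^{\mathrm{i}\nu_{+}}/\mathrm{e}^{-\mathrm{i}\nu}=\mathrm{e}^{\mathrm{i}(\nu_{+}+\nu)}=\mathrm{e}^{\mathrm{i}\nu_{-}}$, while the second column tends to $\mathrm{e}^{-\mathrm{i}\nu_{-}}$, and a symmetric computation gives $M^{-}\sim\mathrm{e}^{\mathrm{i}\nu_{-}(x,t)\sigma_3}$. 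This completes all three parts of the proposition.
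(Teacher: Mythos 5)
Your proposal is correct and follows exactly the route the paper intends: the paper itself gives no written proof of this proposition (it only says ``Similarly, a matrix RH problem is built as follows,'' deferring to the double-pole argument after \eqref{T39}--\eqref{T42}), and your order-3 upgrade is precisely that argument spelled out. I verified your Laurent--Taylor convolution bookkeeping: with $a_{-3},a_{-2},a_{-1}$ read off from \eqref{DS-TZ3} and the Taylor coefficients $1$, $-\mathrm{i}\theta'$, $-\tfrac12\big((\theta')^{2}+\mathrm{i}\theta''\big)$ of $\mathrm{e}^{-\mathrm{i}\theta}$, the residue coefficient of $\mu_{-2}$ indeed collapses to $\widetilde{C}[\zeta_n]-2(\theta'(\zeta_n))^{2}-\mathrm{i}\theta''(\zeta_n)-2\mathrm{i}\theta'(\zeta_n)\widetilde{B}[\zeta_n]$, so the delicate step you flagged works out, and the jump and asymptotics parts carry over from the double-pole case unchanged.
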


Therefore utilizing asymptotic values as $\lambda\rightarrow\infty$ and the singularity contributions, one can regularize the RH problem as a normative form. Then, applying the Plemelj's formula, the solutions of the corresponding matrix RH problem can be solved as follows

\begin{prop}\label{MRHP-P2}
The solution of the above-mentioned matrix Riemann-Hilbert problem can be expressed as
\begin{small}
\begin{align}\label{MRHP-6}
\begin{split}
M(\lambda;x,t)=&\mathrm{e}^{\mathrm{i}\nu_{-}(x,t)\sigma_3}+\frac{1}{2\pi \rm i}\int_{\Sigma}\frac{M^+(\xi;x,t)J(\xi;x,t)}{\xi-\lambda}d\xi+\sum^{2N}_{n=1}\bigg(C_n(\lambda)\bigg[\frac12\mu''_{-2}(\zeta_n;x,t)+\bigg(D_n+\\
&\frac{1}{\lambda-\zeta_n}\bigg)\mu'_{-2}(\zeta_n;x,t)+\bigg(\frac{1}{(\lambda-\zeta_n)^2}+\frac{D_n}{\lambda-\zeta_n}+F_n\bigg)\mu_{-2}(\zeta_n;x,t)\bigg],\widehat{C}_n(\lambda)\bigg[\frac12\mu''_{-1}(\zeta^*_n;x,t)\\
&+\bigg(\widehat{D}_n+\frac{1}{\lambda-\zeta^*_n}\bigg)\mu'_{-1}(\zeta^*_n;x,t)+\bigg(\frac{1}{(\lambda-\zeta^*_n)^2}+\frac{\widehat{D}_n}{\lambda-\zeta^*_n}+\widehat{F}_n\bigg)\mu_{-1}(\zeta^*_n;x,t)\bigg),
\end{split}
\end{align}
\end{small}
where $\lambda\in\mathbb{C}\backslash\Sigma$, $\int_{\Sigma}$ is an integral along the oriented contour exhibited in Fig. \ref{F1}, and
\begin{small}
\begin{align}\label{MRHP-7}
\begin{split}
&C_n(\lambda)=\frac{\widetilde{A}[\zeta_n]}{\lambda-\zeta_n}\mathrm{e}^{-2\mathrm{i}\theta(\zeta_n;x,t)},\quad \widehat{C}_n(\lambda)=\frac{\widetilde{A}[\zeta^*_n]}{\lambda-\zeta^*_n}\mathrm{e}^{2\mathrm{i}\theta(\zeta^*_n;x,t)},\\
&D_n=\widetilde{B}[\zeta_n]-2\mathrm{i}\theta'(\zeta_n;x,t),\quad \widehat{D}_n=\widetilde{B}[\zeta^*_n]+2\mathrm{i}\theta'(\zeta^*_n;x,t),\\
&F_n=\widetilde{C}[\zeta_n]-2(\theta'(\zeta_n))^2-\mathrm{i}\theta''(\zeta_n)-2\mathrm{i}\theta'(\zeta_n)\widetilde{B}[\zeta_n],\\
&\widehat{F}_n=\widetilde{C}[\zeta^*_n]-2(\theta'(\zeta^*_n))^2+\mathrm{i}\theta''(\zeta^*_n)+2\mathrm{i}\theta'(\zeta^*_n)\widetilde{B}[\zeta^*_n],
\end{split}
\end{align}
\end{small}
in which, $\mu_{-k}$, $\mu'_{-k}$ and $\mu''_{-k}$ $(k=1,2)$ satisfy
\begin{footnotesize}
\begin{align}\label{MRHP-8}
\begin{split}
&\mu_{-1}(\zeta^*_n;x,t)=\mathrm{e}^{\mathrm{i}\nu_{-}(x,t)\sigma_3}\bigg(\begin{array}{cc} 1 \\ 0  \end{array}\bigg)+\sum_{s=1}^{2N}C_s(\zeta^*_n)\bigg[\frac12\mu''_{-2}(\zeta_s;x,t)+\bigg(D_s+\frac{1}{\zeta^*_n-\zeta_s}\bigg)\mu'_{-2}(\zeta_s;x,t)+\\
&\qquad\qquad\qquad\bigg(\frac{1}{(\zeta^*_n-\zeta_s)^2}+\frac{D_s}{\zeta^*_n-\zeta_s}+F_s\bigg)\mu_{-2}(\zeta_s;x,t)\bigg]+\frac{1}{2\pi \rm i}\int_{\Sigma}\frac{(M^+(\xi;x,t)J(\xi;x,t))_{1}}{\xi-\zeta^*_n}d\xi,\\
&\mu_{-2}(\zeta_s;x,t)=\mathrm{e}^{\mathrm{i}\nu_{-}(x,t)\sigma_3}\bigg(\begin{array}{cc} 0 \\ 1  \end{array}\bigg)+\sum_{j=1}^{2N}\widehat{C}_j(\zeta_s)\bigg[\frac12\mu''_{-1}(\zeta^*_j;x,t)+\bigg(\widehat{D}_j+\frac{1}{\zeta_s-\zeta^*_j}\bigg)\mu'_{-1}(\zeta^*_j;x,t)+\\
&\qquad\qquad\qquad\bigg(\frac{1}{(\zeta_s-\zeta^*_j)^2}+\frac{\widehat{D}_j}{\zeta_s-\zeta^*_j}+\widehat{F}_j\bigg)\mu_{-1}(\zeta^*_j;x,t)\bigg]+\frac{1}{2\pi \rm i}\int_{\Sigma}\frac{(M^+(\xi;x,t)J(\xi;x,t))_{2}}{\xi-\zeta_s}d\xi,\\
&\mu'_{-1}(\zeta^*_n;x,t)=-\sum_{s=1}^{2N}\frac{C_s(\zeta^*_n)}{\zeta^*_n-\zeta_s}\bigg[\frac12\mu''_{-2}(\zeta_s;x,t)+\bigg(D_s+\frac{2}{\zeta^*_n-\zeta_s}\bigg)\mu'_{-2}(\zeta_s;x,t)+\bigg(\frac{3}{(\zeta^*_n-\zeta_s)^2}+\\
&\qquad\qquad\qquad\frac{2D_s}{\zeta^*_n-\zeta_s}+F_s\bigg)\mu_{-2}(\zeta_s;x,t)\bigg]+\frac{1}{2\pi \rm i}\int_{\Sigma}\frac{(M^+(\xi;x,t)J(\xi;x,t))_{1}}{(\xi-\zeta^*_n)^2}d\xi,\\
&\mu'_{-2}(\zeta_s;x,t)=-\sum_{j=1}^{2N}\frac{\widehat{C}_j(\zeta_s)}{\zeta_s-\zeta^*_j}\bigg[\frac12\mu''_{-1}(\zeta^*_j;x,t)+\bigg(\widehat{D}_j+\frac{2}{\zeta_s-\zeta^*_j}\bigg)\mu'_{-1}(\zeta^*_j;x,t)+\bigg(\frac{3}{(\zeta_s-\zeta^*_j)^2}+\\
&\qquad\qquad\qquad\frac{2\widehat{D}_j}{\zeta_s-\zeta^*_j}+\widehat{F}_j\bigg)\mu_{-1}(\zeta^*_j;x,t)\bigg]+\frac{1}{2\pi \rm i}\int_{\Sigma}\frac{(M^+(\xi;x,t)J(\xi;x,t))_{2}}{(\xi-\zeta_s)^2}d\xi,\\
&\mu''_{-1}(\zeta^*_n;x,t)=\sum_{s=1}^{2N}\frac{2C_s(\zeta^*_n)}{(\zeta^*_n-\zeta_s)^2}\bigg[\frac12\mu''_{-2}(\zeta_s;x,t)+\bigg(D_s+\frac{3}{\zeta^*_n-\zeta_s}\bigg)\mu'_{-2}(\zeta_s;x,t)+\bigg(\frac{6}{(\zeta^*_n-\zeta_s)^2}+\\
&\qquad\qquad\qquad\frac{3D_s}{\zeta^*_n-\zeta_s}+F_s\bigg)\mu_{-2}(\zeta_s;x,t)\bigg]+\frac{1}{2\pi \rm i}\int_{\Sigma}\frac{2(M^+(\xi;x,t)J(\xi;x,t))_{1}}{(\xi-\zeta^*_n)^3}d\xi,\\
&\mu''_{-2}(\zeta_s;x,t)=\sum_{j=1}^{2N}\frac{2\widehat{C}_j(\zeta_s)}{(\zeta_s-\zeta^*_j)^2}\bigg[\frac12\mu''_{-1}(\zeta^*_j;x,t)+\bigg(\widehat{D}_j+\frac{3}{\zeta_s-\zeta^*_j}\bigg)\mu'_{-1}(\zeta^*_j;x,t)+\bigg(\frac{6}{(\zeta_s-\zeta^*_j)^2}+\\
&\qquad\qquad\qquad\frac{3\widehat{D}_j}{\zeta_s-\zeta^*_j}+\widehat{F}_j\bigg)\mu_{-1}(\zeta^*_j;x,t)\bigg]+\frac{1}{2\pi \rm i}\int_{\Sigma}\frac{2(M^+(\xi;x,t)J(\xi;x,t))_{2}}{(\xi-\zeta_s)^3}d\xi,
\end{split}
\end{align}
\end{footnotesize}
where $(M^+(\xi;x,t)J(\xi;x,t))_{j}$ $(j=1,2)$ represent $j$th column of matrix $M^+(\xi;x,t)J(\xi;x,t)$.
\end{prop}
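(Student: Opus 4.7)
The plan is to mirror the double-pole argument of Proposition \ref{P8}, now accounting for the $(\lambda-\zeta_n)^{-3}$, $(\lambda-\zeta_n)^{-2}$, $(\lambda-\zeta_n)^{-1}$ contributions at each triple pole. First, I would subtract from $M(\lambda;x,t)$ both the asymptotic value $\mathrm{e}^{\mathrm{i}\nu_{-}(x,t)\sigma_3}$ and the principal parts listed in \eqref{MRHP-2}, that is, form
\begin{align*}
\widehat{M}(\lambda;x,t)=M(\lambda;x,t)-\mathrm{e}^{\mathrm{i}\nu_{-}(x,t)\sigma_3}-\sum_{n=1}^{2N}\Bigg[\frac{\mathop{\mathrm{P}_{-3}}\limits_{\lambda=\zeta_n}M}{(\lambda-\zeta_n)^3}+\frac{\mathop{\mathrm{P}_{-2}}\limits_{\lambda=\zeta_n}M}{(\lambda-\zeta_n)^2}+\frac{\mathop{\mathrm{Res}}\limits_{\lambda=\zeta_n}M}{\lambda-\zeta_n}+\frac{\mathop{\mathrm{P}_{-3}}\limits_{\lambda=\zeta^*_n}M}{(\lambda-\zeta^*_n)^3}+\frac{\mathop{\mathrm{P}_{-2}}\limits_{\lambda=\zeta^*_n}M}{(\lambda-\zeta^*_n)^2}+\frac{\mathop{\mathrm{Res}}\limits_{\lambda=\zeta^*_n}M}{\lambda-\zeta^*_n}\Bigg].
\end{align*}
This $\widehat{M}$ is analytic in $\mathbb{C}\setminus\Sigma$, decays to zero at $\infty$, and inherits the jump $\widehat{M}^{-}-\widehat{M}^{+}=-M^{+}J$ on $\Sigma$. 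Applying Plemelj's formula to $\widehat{M}$ yields a Cauchy integral over $\Sigma$, and solving for $M$ produces the right-hand side of \eqref{MRHP-6} provided the meromorphic contributions can be repackaged in the compact form written there.

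Next I would verify precisely this repackaging: for each triple pole $\zeta_n\in D^{+}$, expand the bracketed expression multiplying $C_n(\lambda)$ in \eqref{MRHP-6} as a Laurent series about $\zeta_n$ and match the $(\lambda-\zeta_n)^{-3},(\lambda-\zeta_n)^{-2},(\lambda-\zeta_n)^{-1}$ coefficients against \eqref{MRHP-2}. The $(\lambda-\zeta_n)^{-3}$ and $(\lambda-\zeta_n)^{-2}$ matches are immediate from the definitions of $C_n$ and $D_n$, while the residue coefficient produces
\[
\widetilde{A}[\zeta_n]\mathrm{e}^{-2\mathrm{i}\theta(\zeta_n;x,t)}\bigl[\tfrac12\mu''_{-2}(\zeta_n)+D_n\mu'_{-2}(\zeta_n)+F_n\mu_{-2}(\zeta_n)\bigr],
\]
which coincides with $\mathop{\mathrm{Res}}_{\lambda=\zeta_n}M^{+}$ precisely because $D_n=\widetilde{B}[\zeta_n]-2\mathrm{i}\theta'(\zeta_n)$ and $F_n=\widetilde{C}[\zeta_n]-2(\theta'(\zeta_n))^2-\mathrm{i}\theta''(\zeta_n)-2\mathrm{i}\theta'(\zeta_n)\widetilde{B}[\zeta_n]$. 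The same check on $\zeta^*_n\in D^{-}$ (second column) is identical up to the sign conventions in $\widehat{D}_n$ and $\widehat{F}_n$.

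To obtain the closed system \eqref{MRHP-8}, I would evaluate \eqref{MRHP-6} and its first two $\lambda$-derivatives at the spectral points. Taking the first column of \eqref{MRHP-6} at $\lambda=\zeta^*_n$ directly gives the expression for $\mu_{-1}(\zeta^*_n;x,t)$, since at $\zeta^*_n\in D^{-}$ only the $C_s$-terms (associated with $\zeta_s\in D^{+}$) contribute regular values there, while the $\widehat{C}_n$-terms are singular and belong to the second column. The second column evaluated at $\lambda=\zeta_s$ similarly gives $\mu_{-2}(\zeta_s;x,t)$. For the first and second derivatives, I would differentiate the $C_s(\lambda)$-factor and the inner rational expressions $(\lambda-\zeta_s)^{-1},(\lambda-\zeta_s)^{-2}$, collect the three coefficients of $\mu_{-2},\mu'_{-2},\mu''_{-2}$ after evaluating at $\zeta^*_n$, and simplify; this produces the numerical factors $1,2,3,6$ and the combinations $D_s+k/(\zeta^*_n-\zeta_s)$ visible in \eqref{MRHP-8}. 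The Cauchy integrals $\int_{\Sigma}(M^{+}J)_j/(\xi-\zeta^*_n)^k\,d\xi$ for $k=1,2,3$ arise from differentiating $(\xi-\lambda)^{-1}$ under the integral $k-1$ times.

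The main technical obstacle will be the combinatorial bookkeeping in the derivative step: differentiating the product $\tfrac{1}{\lambda-\zeta_s}\cdot\bigl[\tfrac12\mu''_{-2}+(D_s+(\lambda-\zeta_s)^{-1})\mu'_{-2}+((\lambda-\zeta_s)^{-2}+D_s(\lambda-\zeta_s)^{-1}+F_s)\mu_{-2}\bigr]$ once and twice, and then evaluating at $\zeta^*_n\neq\zeta_s$ (these points lie in opposite half-planes and never coincide), requires careful tracking of which terms contribute to each order and with what multiplicity. Everything else — the regularization, Plemelj's formula, and the verification that the meromorphic ansatz matches \eqref{MRHP-2} — follows the same pattern used in the double-pole proof.
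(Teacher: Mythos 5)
Your proposal follows essentially the same route as the paper's own proof: regularize by subtracting the asymptotic value and the principal parts from \eqref{MRHP-2}, apply Plemelj's formula to recover \eqref{MRHP-6}--\eqref{MRHP-7}, and then close the system \eqref{MRHP-8} by evaluating the appropriate columns and their first two $\lambda$-derivatives at $\zeta^*_n\in D^{-}$ and $\zeta_s\in D^{+}$. Your added Laurent-coefficient check that the compact bracketed ansatz reproduces the prescribed residues and $\mathrm{P}_{-2}$, $\mathrm{P}_{-3}$ data is a detail the paper leaves implicit, but it does not change the argument.
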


\begin{proof}
In order to regularize the RH problem, one has to subtract out the asymptotic values as $\lambda\rightarrow\infty$ which exhibited in Eq. \eqref{MRHP-5} and the singularity contributions. Then, the jump condition \eqref{MRHP-3} becomes
\begin{align}\label{MRHP-9}
\begin{split}
&M^-(\lambda;x,t)-\mathrm{e}^{\mathrm{i}\nu_{-}(x,t)\sigma_3}-\sum_{n=1}^{2N}\Bigg[\frac{\mathop{\mathrm{P}_{-3}}\limits_{\lambda=\zeta_n}M^+(\lambda;x,t)}{(\lambda-\zeta_n)^3}+\frac{\mathop{\mathrm{P}_{-2}}\limits_{\lambda=\zeta_n}M^+(\lambda;x,t)}{(\lambda-\zeta_n)^2}+\frac{\mathop{\mathrm{Res}}\limits_{\lambda=\zeta_n}M^+(\lambda;x,t)}{\lambda-\zeta_n}+\\
&\frac{\mathop{\mathrm{P}_{-3}}\limits_{\lambda=\zeta^*_n}M^-(\lambda;x,t)}{(\lambda-\zeta^*_n)^3}+\frac{\mathop{\mathrm{P}_{-2}}\limits_{\lambda=\zeta^*_n}M^-(\lambda;x,t)}{(\lambda-\zeta^*_n)^2}+\frac{\mathop{\mathrm{Res}}\limits_{\lambda=\zeta^*_n}M^-(\lambda;x,t)}{\lambda-\zeta^*_n}\Bigg]=M^+(\lambda;x,t)-\mathrm{e}^{\mathrm{i}\nu_{-}(x,t)\sigma_3}-\\
&\sum_{n=1}^{2N}\Bigg[\frac{\mathop{\mathrm{P}_{-3}}\limits_{\lambda=\zeta_n}M^+(\lambda;x,t)}{(\lambda-\zeta_n)^3}+\frac{\mathop{\mathrm{P}_{-2}}\limits_{\lambda=\zeta_n}M^+(\lambda;x,t)}{(\lambda-\zeta_n)^2}+\frac{\mathop{\mathrm{Res}}\limits_{\lambda=\zeta_n}M^+(\lambda;x,t)}{\lambda-\zeta_n}+\frac{\mathop{\mathrm{P}_{-3}}\limits_{\lambda=\zeta^*_n}M^-(\lambda;x,t)}{(\lambda-\zeta^*_n)^3}+\\
&\frac{\mathop{\mathrm{P}_{-2}}\limits_{\lambda=\zeta^*_n}M^-(\lambda;x,t)}{(\lambda-\zeta^*_n)^2}+\frac{\mathop{\mathrm{Res}}\limits_{\lambda=\zeta^*_n}M^-(\lambda;x,t)}{\lambda-\zeta^*_n}\Bigg]-M^+(\lambda;x,t)J(\lambda;x,t),
\end{split}
\end{align}
where $\mathop{\mathrm{P}_{-3}}\limits_{\lambda=\zeta_n}M^+,\mathop{\mathrm{P}_{-2}}\limits_{\lambda=\zeta_n}M^+,\mathop{\mathrm{Res}}\limits_{\lambda=\zeta_n}M^+,\mathop{\mathrm{P}_{-3}}\limits_{\lambda=\zeta^*_n}M^-,\mathop{\mathrm{P}_{-2}}\limits_{\lambda=\zeta^*_n}M^-,\mathop{\mathrm{Res}}\limits_{\lambda=\zeta^*_n}M^-$ have given in Eq. \eqref{MRHP-2}. By using Plemelj's formula, one can obtain the solution \eqref{MRHP-6} with formula \eqref{MRHP-7} of the matrix RH problem. By combining Eqs. \eqref{MRHP-1} and \eqref{MRHP-6}, $\mu_{-1}(\zeta^*_n;x,t)$ is the first column element of the solution \eqref{MRHP-6} as the triple-pole $\lambda=\zeta^*_n\in D^{-}$, $\mu_{-2}(\zeta_s;x,t)$ is the second column element of the solution \eqref{MRHP-6} as the triple-pole $\lambda=\zeta_s\in D^{+}$. The specific expressions of $\mu_{-1}(\zeta^*_n;x,t),\mu_{-2}(\zeta_s;x,t)$ and their first and second derivative to $\lambda$ are provided in Eq. \eqref{MRHP-8}.

\end{proof}

\subsubsection{Reconstruction Formula of the Potential with ZBCs and Triple poles}
Similarly, the reconstruction formula of the triple-pole solution (potential) for the TOFKN \eqref{T1} with ZBCs is consistent with Eq. \eqref{T45}. From the Eq. \eqref{T43} and solution \eqref{MRHP-6} of the matrix RH problem, we have
\begin{align}\label{RFP-TP-1}
M^{[1]}_{12}(x,t)=&-\frac{1}{2\pi\mathrm{i}}\int_{\Sigma}\big(M^{+}(\xi;x,t)J(\xi;x,t)\big)_{12}d\xi+\sum^{2N}_{n=1}\bigg\{A[\zeta^*_n]\mathrm{e}^{2\mathrm{i}\theta(\zeta^*_n;x,t)}\\
&\bigg[\frac12\mu''_{-11}(\zeta^*_n;x,t)+\widehat{D}_n\mu'_{-11}(\zeta^*_n;x,t)+\widehat{F}_n\mu_{-11}(\zeta^*_n;x,t)\bigg]\bigg\},
\end{align}
where $\mu_{-11}(\zeta^*_n;x,t), \mu'_{-11}(\zeta^*_n;x,t)$ and $\mu''_{-11}(\zeta^*_n;x,t)$ represents the first row element of the column vector $\mu_{-1}(\zeta^*_n;x,t), \mu'_{-1}(\zeta^*_n;x,t)$ and $\mu''_{-1}(\zeta^*_n;x,t)$, respectively. Then taking row vector $\alpha=\big(\alpha^{(1)},\alpha^{(2)},\alpha^{(3)}\big)$ and column vector $\gamma=(\gamma^{(1)},\gamma^{(2)},\gamma^{(3)})^{\mathrm{T}}$, where
\begin{align}\label{RFP-TP-2}
\begin{split}
&\alpha^{(1)}=\big(\widetilde{A}[\zeta^*_n]\mathrm{e}^{2\mathrm{i}\theta(\zeta^*_n;x,t)}\widehat{F}_n\big)_{1\times2N},\,\alpha^{(2)}=\big(\widetilde{A}[\zeta^*_n]\mathrm{e}^{2\mathrm{i}\theta(\zeta^*_n;x,t)}\widehat{D}_n\big)_{1\times2N},\\
&\alpha^{(3)}=\bigg(\frac12\widetilde{A}[\zeta^*_n]\mathrm{e}^{2\mathrm{i}\theta(\zeta^*_n;x,t)}\bigg)_{1\times2N},\\
&\gamma^{(1)}=\big(\mu_{-11}(\zeta^*_n;x,t)\big)_{1\times2N},\,\gamma^{(2)}=\big(\mu'_{-11}(\zeta^*_n;x,t)\big)_{1\times2N},\,\gamma^{(2)}=\big(\mu''_{-11}(\zeta^*_n;x,t)\big)_{1\times2N},
\end{split}
\end{align}
we can obtain a more concise reconstruction formulation of the triple poles solution (potential) for the TOFKN \eqref{T1} with ZBCs and as follows
\begin{align}\label{RFP-TP-3}
q(x,t)=-2\mathrm{i}\mathrm{e}^{\mathrm{i}\nu_{-}(x,t)}\bigg(\alpha\gamma-\frac{1}{2\pi\mathrm{i}}\int_{\Sigma}\big(M^{+}(\xi;x,t)J(\xi;x,t)\big)_{12}d\xi\bigg).
\end{align}

\subsubsection{Trace Formulae with ZBCs and triple poles}
The discrete spectral points $\zeta_n$'s are the triple zeros of $s_{11}(\lambda)$, while $\zeta^*_n$'s are the triple zeros of $s_{22}(\lambda)$. Define the functions $\beta^{\pm}(\lambda)$ as follows:
\begin{align}\label{TF-1}
\beta^{+}(\lambda)=s_{11}(\lambda)\prod^{2N}_{n=1}\bigg(\frac{\lambda-\zeta^*_n}{\lambda-\zeta_n}\bigg)^3\mathrm{e}^{\mathrm{i}\nu},\,\beta^{-}(\lambda)=s_{22}(\lambda)\prod^{2N}_{n=1}\bigg(\frac{\lambda-\zeta_n}{\lambda-\zeta^*_n}\bigg)^3\mathrm{e}^{-\mathrm{i}\nu}.
\end{align}

Then, $\beta^{+}(\lambda)$ and $\beta^{-}(\lambda)$ are analytic and have no zero in $D^{+}$ and $D^{-}$, respectively. Furthermore, we have the relation $\beta^{+}(\lambda)\beta^{-}(\lambda)=s_{11}(\lambda)s_{22}(\lambda)$ and the asymptotic behaviors $\beta^{\pm}(\lambda)\rightarrow1,\text{ as }\lambda\rightarrow\infty$.

By means of employing the Cauchy projectors and Plemelj' formula, we have
\begin{align}\label{TF-2}
\mathrm{log}\beta^{\pm}(\lambda)=\mp\frac{1}{2\pi \mathrm{i}}\int_{\Sigma}\frac{\mathrm{log}[1-\rho(\lambda)\tilde{\rho}(\lambda)]}{\xi-\lambda}d\xi,\quad\lambda\in D^{\pm}.
\end{align}

After substituting Eq. \eqref{TF-2} into  Eq. \eqref{TF-1}, we can obtain the trace formulae
\begin{align}\label{TF-3}
\begin{split}
&s_{11}(\lambda)=\mathrm{exp}\bigg(-\frac{1}{2\pi\mathrm{i}}\int_{\Sigma}\frac{\mathrm{log}[1-\rho(\lambda)\tilde{\rho}(\lambda)]}{\xi-\lambda}d\xi\bigg)\prod^{2N}_{n=1}\bigg(\frac{\lambda-\zeta_n}{\lambda-\zeta^*_n}\bigg)^3\mathrm{e}^{-\mathrm{i}\nu},\\
&s_{22}(\lambda)=\mathrm{exp}\bigg(\frac{1}{2\pi\mathrm{i}}\int_{\Sigma}\frac{\mathrm{log}[1-\rho(\lambda)\tilde{\rho}(\lambda)]}{\xi-\lambda}d\xi\bigg)\prod^{2N}_{n=1}\bigg(\frac{\lambda-\zeta^*_n}{\lambda-\zeta_n}\bigg)^3\mathrm{e}^{\mathrm{i}\nu}.
\end{split}
\end{align}

\subsubsection{Reflectionless Potential with ZBCs: Triple-Pole Solitons}
Now, we consider the case of reflectionless potential $q(x,t)$ with the reflection coefficients $\rho(\lambda)=\tilde{\rho}(\lambda)=0$. Then the Eqs. \eqref{MRHP-8} and \eqref{RFP-TP-3} with $J(\lambda;x,t)=0_{2\times2}$ become
\begin{small}
\begin{align}\label{RP-TPS-1}
\begin{split}
&\mu_{-11}(\zeta^*_n;x,t)=\mathrm{e}^{\mathrm{i}\nu_{-}(x,t)}+\sum_{s=1}^{2N}C_s(\zeta^*_n)\bigg[\frac12\mu''_{-21}(\zeta_s;x,t)+\bigg(D_s+\frac{1}{\zeta^*_n-\zeta_s}\bigg)\mu'_{-21}(\zeta_s;x,t)+\\
&\qquad\qquad\qquad\bigg(\frac{1}{(\zeta^*_n-\zeta_s)^2}+\frac{D_s}{\zeta^*_n-\zeta_s}+F_s\bigg)\mu_{-21}(\zeta_s;x,t)\bigg],\\
&\mu_{-21}(\zeta_s;x,t)=\sum_{j=1}^{2N}\widehat{C}_j(\zeta_s)\bigg[\frac12\mu''_{-11}(\zeta^*_j;x,t)+\bigg(\widehat{D}_j+\frac{1}{\zeta_s-\zeta^*_j}\bigg)\mu'_{-11}(\zeta^*_j;x,t)+\bigg(\frac{1}{(\zeta_s-\zeta^*_j)^2}+\\
&\qquad\qquad\qquad\frac{\widehat{D}_j}{\zeta_s-\zeta^*_j}+\widehat{F}_j\bigg)\mu_{-11}(\zeta^*_j;x,t)\bigg],\\
&\mu'_{-11}(\zeta^*_n;x,t)=-\sum_{s=1}^{2N}\frac{C_s(\zeta^*_n)}{\zeta^*_n-\zeta_s}\bigg[\frac12\mu''_{-21}(\zeta_s;x,t)+\bigg(D_s+\frac{2}{\zeta^*_n-\zeta_s}\bigg)\mu'_{-21}(\zeta_s;x,t)+\bigg(\frac{3}{(\zeta^*_n-\zeta_s)^2}+\\
&\qquad\qquad\qquad\frac{2D_s}{\zeta^*_n-\zeta_s}+F_s\bigg)\mu_{-21}(\zeta_s;x,t)\bigg],\\
&\mu'_{-21}(\zeta_s;x,t)=-\sum_{j=1}^{2N}\frac{\widehat{C}_j(\zeta_s)}{\zeta_s-\zeta^*_j}\bigg[\frac12\mu''_{-11}(\zeta^*_j;x,t)+\bigg(\widehat{D}_j+\frac{2}{\zeta_s-\zeta^*_j}\bigg)\mu'_{-11}(\zeta^*_j;x,t)+\bigg(\frac{3}{(\zeta_s-\zeta^*_j)^2}+\\
&\qquad\qquad\qquad\frac{2\widehat{D}_j}{\zeta_s-\zeta^*_j}+\widehat{F}_j\bigg)\mu_{-11}(\zeta^*_j;x,t)\bigg],\\
&\mu''_{-11}(\zeta^*_n;x,t)=\sum_{s=1}^{2N}\frac{2C_s(\zeta^*_n)}{(\zeta^*_n-\zeta_s)^2}\bigg[\frac12\mu''_{-21}(\zeta_s;x,t)+\bigg(D_s+\frac{3}{\zeta^*_n-\zeta_s}\bigg)\mu'_{-21}(\zeta_s;x,t)+\bigg(\frac{6}{(\zeta^*_n-\zeta_s)^2}+\\
&\qquad\qquad\qquad\frac{3D_s}{\zeta^*_n-\zeta_s}+F_s\bigg)\mu_{-21}(\zeta_s;x,t)\bigg],\\
&\mu''_{-21}(\zeta_s;x,t)=\sum_{j=1}^{2N}\frac{2\widehat{C}_j(\zeta_s)}{(\zeta_s-\zeta^*_j)^2}\bigg[\frac12\mu''_{-11}(\zeta^*_j;x,t)+\bigg(\widehat{D}_j+\frac{3}{\zeta_s-\zeta^*_j}\bigg)\mu'_{-11}(\zeta^*_j;x,t)+\bigg(\frac{6}{(\zeta_s-\zeta^*_j)^2}+\\
&\qquad\qquad\qquad\frac{3\widehat{D}_j}{\zeta_s-\zeta^*_j}+\widehat{F}_j\bigg)\mu_{-11}(\zeta^*_j;x,t)\bigg],
\end{split}
\end{align}
\end{small}
and
\begin{align}\label{RP-TPS-2}
q(x,t)=-2\mathrm{i}\mathrm{e}^{\mathrm{i}\nu_{-}(x,t)}\alpha\gamma.
\end{align}

\begin{thm}\label{RP-TPS-thm1}
The general expression of N-triple poles soliton of the TOFKN \eqref{T1} with ZBCs is given by determinant formula
\begin{align}\label{RP-TPS-3}
q(x,t)=2\mathrm{i}\mathrm{e}^{2\mathrm{i}\nu_{-}(x,t)}\frac{\mathrm{det}(R)}{\mathrm{det}(I-G)},
\end{align}
where
\begin{align}\label{RP-TPS-4}
\begin{split}
&R=\bigg(\begin{array}{cc} 0 & \alpha \\ \tau & I-G \end{array}\bigg),\,\tau=\big(\tau^{(1)},\tau^{(2)},\tau^{(3)}\big)^{\mathrm{T}},\\
&\tau^{(1)}=(1)_{1\times2N},\,\tau^{(2)}=(0)_{1\times2N},\tau^{(3)}=(0)_{1\times2N},
\end{split}
\end{align}
the $6N\times6N$ partitioned matrix $G=\Bigg(\begin{array}{ccc} G^{(1,1)} & G^{(1,2)} & G^{(1,3)} \\ G^{(2,1)} & G^{(2,2)} & G^{(2,3)} \\ G^{(3,1)} & G^{(3,2)} & G^{(3,3)} \end{array}\Bigg)$ with the $G^{(i,j)}=\Big(g^{(i,j)}_{n,j}\Big)_{2N\times2N}$ $(i,j=1,2,3)$ is given by
\begin{footnotesize}
\begin{align}\label{RP-TPS-5}
\begin{split}
&g^{(1,1)}_{n,j}=\sum^{2N}_{s=1}C_s(\zeta^*_n)\widehat{C}_j(\zeta_s)\bigg[\frac{1}{(\zeta_s-\zeta^*_j)^2}\bigg(\frac{6}{(\zeta_s-\zeta^*_j)^2}+\frac{3\widehat{D}_j}{\zeta_s-\zeta^*_j}+\widehat{F}_j\bigg)-\frac{1}{\zeta_s-\zeta^*_j}\bigg(D_s+\frac{1}{\zeta^*_n-\zeta_s}\bigg)\\
&\bigg(\frac{3}{(\zeta_s-\zeta^*_j)^2}+\frac{2\widehat{D}_j}{\zeta_s-\zeta^*_j}+\widehat{F}_j\bigg)+\bigg(\frac{1}{(\zeta^*_n-\zeta_s)^2}+\frac{D_s}{\zeta^*_n-\zeta_s}+F_s\bigg)\bigg(\frac{1}{(\zeta_s-\zeta^*_j)^2}+\frac{\widehat{D}_j}{\zeta_s-\zeta^*_j}+\widehat{F}_j\bigg)\bigg],\\
&g^{(1,2)}_{n,j}=\sum^{2N}_{s=1}C_s(\zeta^*_n)\widehat{C}_j(\zeta_s)\bigg[\frac{1}{(\zeta_s-\zeta^*_j)^2}\bigg(\widehat{D}_j+\frac{3}{\zeta_s-\zeta^*_j}\bigg)-\frac{1}{\zeta_s-\zeta^*_j}\bigg(D_s+\frac{1}{\zeta^*_n-\zeta_s}\bigg)\bigg(\widehat{D}_j+\frac{2}{\zeta_s-\zeta^*_j}\bigg)\\
&+\bigg(\frac{1}{(\zeta^*_n-\zeta_s)^2}+\frac{D_s}{\zeta^*_n-\zeta_s}+F_s\bigg)\bigg(\widehat{D}_j+\frac{1}{\zeta_s-\zeta^*_j}\bigg)\bigg],\\
&g^{(1,3)}_{n,j}=\sum^{2N}_{s=1}\frac{C_s(\zeta^*_n)\widehat{C}_j(\zeta_s)}{2}\bigg[\frac{1}{(\zeta_s-\zeta^*_j)^2}-\frac{1}{\zeta_s-\zeta^*_j}\bigg(D_s+\frac{1}{\zeta^*_n-\zeta_s}\bigg)+\bigg(\frac{1}{(\zeta^*_n-\zeta_s)^2}+\frac{D_s}{\zeta^*_n-\zeta_s}+F_s\bigg)\bigg],\\
&g^{(2,1)}_{n,j}=-\sum^{2N}_{s=1}\frac{C_s(\zeta^*_n)\widehat{C}_j(\zeta_s)}{\zeta^*_n-\zeta_s}\bigg[\frac{1}{(\zeta_s-\zeta^*_j)^2}\bigg(\frac{6}{(\zeta_s-\zeta^*_j)^2}+\frac{3\widehat{D}_j}{\zeta_s-\zeta^*_j}+\widehat{F}_j\bigg)-\frac{1}{\zeta_s-\zeta^*_j}\bigg(D_s+\frac{2}{\zeta^*_n-\zeta_s}\bigg)\\
&\bigg(\frac{3}{(\zeta_s-\zeta^*_j)^2}+\frac{2\widehat{D}_j}{\zeta_s-\zeta^*_j}+\widehat{F}_j\bigg)+\bigg(\frac{3}{(\zeta^*_n-\zeta_s)^2}+\frac{2D_s}{\zeta^*_n-\zeta_s}+F_s\bigg)\bigg(\frac{1}{(\zeta_s-\zeta^*_j)^2}+\frac{\widehat{D}_j}{\zeta_s-\zeta^*_j}+\widehat{F}_j\bigg)\bigg],\\
&g^{(2,2)}_{n,j}=-\sum^{2N}_{s=1}\frac{C_s(\zeta^*_n)\widehat{C}_j(\zeta_s)}{\zeta^*_n-\zeta_s}\bigg[\frac{1}{(\zeta_s-\zeta^*_j)^2}\bigg(\widehat{D}_j+\frac{3}{\zeta_s-\zeta^*_j}\bigg)-\frac{1}{\zeta_s-\zeta^*_j}\bigg(D_s+\frac{2}{\zeta^*_n-\zeta_s}\bigg)\bigg(\widehat{D}_j+\frac{2}{\zeta_s-\zeta^*_j}\bigg)\\
&+\bigg(\frac{3}{(\zeta^*_n-\zeta_s)^2}+\frac{2D_s}{\zeta^*_n-\zeta_s}+F_s\bigg)\bigg(\widehat{D}_j+\frac{1}{\zeta_s-\zeta^*_j}\bigg)\bigg],\\
&g^{(2,3)}_{n,j}=-\sum^{2N}_{s=1}\frac{C_s(\zeta^*_n)\widehat{C}_j(\zeta_s)}{2(\zeta^*_n-\zeta_s)}\bigg[\frac{1}{(\zeta_s-\zeta^*_j)^2}-\frac{1}{\zeta_s-\zeta^*_j}\bigg(D_s+\frac{2}{\zeta^*_n-\zeta_s}\bigg)+\bigg(\frac{3}{(\zeta^*_n-\zeta_s)^2}+\frac{2D_s}{\zeta^*_n-\zeta_s}+F_s\bigg)\bigg],\\
&g^{(3,1)}_{n,j}=\sum^{2N}_{s=1}\frac{2C_s(\zeta^*_n)\widehat{C}_j(\zeta_s)}{(\zeta^*_n-\zeta_s)^2}\bigg[\frac{1}{(\zeta_s-\zeta^*_j)^2}\bigg(\frac{6}{(\zeta_s-\zeta^*_j)^2}+\frac{3\widehat{D}_j}{\zeta_s-\zeta^*_j}+\widehat{F}_j\bigg)-\frac{1}{\zeta_s-\zeta^*_j}\bigg(D_s+\frac{3}{\zeta^*_n-\zeta_s}\bigg)\\
&\bigg(\frac{3}{(\zeta_s-\zeta^*_j)^2}+\frac{2\widehat{D}_j}{\zeta_s-\zeta^*_j}+\widehat{F}_j\bigg)+\bigg(\frac{6}{(\zeta^*_n-\zeta_s)^2}+\frac{3D_s}{\zeta^*_n-\zeta_s}+F_s\bigg)\bigg(\frac{1}{(\zeta_s-\zeta^*_j)^2}+\frac{\widehat{D}_j}{\zeta_s-\zeta^*_j}+\widehat{F}_j\bigg)\bigg],\\
&g^{(3,2)}_{n,j}=\sum^{2N}_{s=1}\frac{2C_s(\zeta^*_n)\widehat{C}_j(\zeta_s)}{(\zeta^*_n-\zeta_s)^2}\bigg[\frac{1}{(\zeta_s-\zeta^*_j)^2}\bigg(\widehat{D}_j+\frac{3}{\zeta_s-\zeta^*_j}\bigg)-\frac{1}{\zeta_s-\zeta^*_j}\bigg(D_s+\frac{3}{\zeta^*_n-\zeta_s}\bigg)\bigg(\widehat{D}_j+\frac{2}{\zeta_s-\zeta^*_j}\bigg)\\
&+\bigg(\frac{6}{(\zeta^*_n-\zeta_s)^2}+\frac{3D_s}{\zeta^*_n-\zeta_s}+F_s\bigg)\bigg(\widehat{D}_j+\frac{1}{\zeta_s-\zeta^*_j}\bigg)\bigg],\\
&g^{(3,3)}_{n,j}=\sum^{2N}_{s=1}\frac{C_s(\zeta^*_n)\widehat{C}_j(\zeta_s)}{(\zeta^*_n-\zeta_s)^2}\bigg[\frac{1}{(\zeta_s-\zeta^*_j)^2}-\frac{1}{\zeta_s-\zeta^*_j}\bigg(D_s+\frac{3}{\zeta^*_n-\zeta_s}\bigg)+\bigg(\frac{6}{(\zeta^*_n-\zeta_s)^2}+\frac{3D_s}{\zeta^*_n-\zeta_s}+F_s\bigg)\bigg].
\end{split}
\end{align}
\end{footnotesize}

\end{thm}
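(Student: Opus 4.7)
The plan is to reduce the reflectionless system \eqref{RP-TPS-1} to a closed linear system for the column vector $\gamma=(\gamma^{(1)},\gamma^{(2)},\gamma^{(3)})^{\mathrm{T}}$ defined in \eqref{RFP-TP-2}, and then apply Cramer's rule exactly as in the double-pole case. First, I will substitute the even-indexed equations of \eqref{RP-TPS-1} (the ones expressing $\mu_{-21}(\zeta_s)$, $\mu'_{-21}(\zeta_s)$, $\mu''_{-21}(\zeta_s)$ in terms of the $\mu_{-11}(\zeta^*_j)$ family) into the odd-indexed equations (the ones for $\mu_{-11}(\zeta^*_n)$, $\mu'_{-11}(\zeta^*_n)$, $\mu''_{-11}(\zeta^*_n)$). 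This eliminates the second-component quantities, giving a closed $6N$-dimensional system of the form
\begin{align*}
(I-G)\gamma = \mathrm{e}^{\mathrm{i}\nu_{-}(x,t)}\tau,
\end{align*}
where the only nonzero inhomogeneous entries come from the $\mathrm{e}^{\mathrm{i}\nu_{-}(x,t)\sigma_{3}}(1,0)^{\mathrm{T}}$ term appearing in the first equation of \eqref{RP-TPS-1}; this produces the block $\tau^{(1)}=(1)_{1\times 2N}$ while $\tau^{(2)}$ and $\tau^{(3)}$ vanish because the second and third odd-indexed equations (for $\mu'_{-11}$ and $\mu''_{-11}$) have no constant source term.

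Second, I will identify the $6N\times 6N$ coefficient matrix $G$. Its three block-rows correspond, respectively, to the equations for $\mu_{-11}(\zeta^*_n)$, $\mu'_{-11}(\zeta^*_n)$, $\mu''_{-11}(\zeta^*_n)$, and pick up the overall prefactors $1$, $-\frac{1}{\zeta^*_n-\zeta_s}$, $\frac{2}{(\zeta^*_n-\zeta_s)^{2}}$ from the three primary equations. Its three block-columns correspond to $\gamma^{(1)}$, $\gamma^{(2)}$, $\gamma^{(3)}$, picking up the coefficients $\bigl(\tfrac{1}{(\zeta_s-\zeta^{*}_j)^{2}}+\tfrac{\widehat{D}_j}{\zeta_s-\zeta^{*}_j}+\widehat{F}_j\bigr)$, $\bigl(\widehat{D}_j+\tfrac{1}{\zeta_s-\zeta^{*}_j}\bigr)$, $\tfrac{1}{2}$ that multiply $\mu_{-11}(\zeta^*_j)$, $\mu'_{-11}(\zeta^*_j)$, $\mu''_{-11}(\zeta^*_j)$ inside the secondary equations. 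Multiplying these row-prefactors against the Taylor-coefficient sums already appearing in \eqref{RP-TPS-1} (with the internal coefficients $D_s+\tfrac{k}{\zeta^*_n-\zeta_s}$, $F_s + \tfrac{3 D_s}{\zeta^*_n-\zeta_s}+\tfrac{6}{(\zeta^*_n-\zeta_s)^{2}}$, etc.\ for $k=1,2,3$) reproduces the nine blocks $g^{(i,j)}_{n,j}$ displayed in \eqref{RP-TPS-5}; this step is pure algebraic bookkeeping and verifies that the matrix $G$ in the statement is precisely the coefficient matrix arising from the elimination.

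Third, once the linear system $(I-G)\gamma=\mathrm{e}^{\mathrm{i}\nu_{-}}\tau$ is established, Cramer's rule together with the cofactor identity
\begin{align*}
\alpha\gamma = -\,\frac{\mathrm{e}^{\mathrm{i}\nu_{-}(x,t)}\det R}{\det(I-G)},
\end{align*}
(where $R$ is the bordered matrix in \eqref{RP-TPS-4} obtained by adjoining the row $(0,\alpha)$ and the column $(\tau,\,I-G)^{\mathrm{T}}$) converts $\alpha\gamma$ into the claimed quotient of determinants. Plugging this into the reconstruction formula \eqref{RP-TPS-2}, namely $q(x,t)=-2\mathrm{i}\,\mathrm{e}^{\mathrm{i}\nu_{-}(x,t)}\alpha\gamma$, produces the factor $\mathrm{e}^{2\mathrm{i}\nu_{-}(x,t)}$ and yields \eqref{RP-TPS-3}.

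The main obstacle is the algebraic bookkeeping in the second step: one must track three row-prefactors against three column-coefficients against three Taylor-coefficient brackets, each carrying denominators of the form $(\zeta^*_n-\zeta_s)^{-k}$ and $(\zeta_s-\zeta^*_j)^{-k}$ for $k=1,2,3$, and confirm that the nine resulting expressions agree exactly with the blocks $g^{(i,j)}_{n,j}$ in \eqref{RP-TPS-5}. Everything else — the Cramer's-rule step and the identification of the inhomogeneous vector $\tau$ — is a direct analogue of the double-pole argument already given in the proof of \eqref{T54}, so the only genuinely new content is the systematic enumeration of these nine blocks.
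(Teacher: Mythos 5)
Your proposal is correct and follows essentially the same route as the paper's own proof: eliminate the $\mu_{-21}$-family from \eqref{RP-TPS-1} to obtain the closed $6N\times6N$ system $(I-G)\gamma=\mathrm{e}^{\mathrm{i}\nu_{-}(x,t)}\tau$, identify the blocks of $G$ with \eqref{RP-TPS-5}, and convert $\alpha\gamma$ into a ratio of determinants via the bordered-matrix identity before inserting it into $q=-2\mathrm{i}\mathrm{e}^{\mathrm{i}\nu_{-}(x,t)}\alpha\gamma$. The paper states this very tersely; your explicit bookkeeping of the three row-prefactors against the three substituted brackets is exactly the computation the paper leaves implicit.
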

\begin{proof}
We rewrite the linear system \eqref{RP-TPS-1} in the matrix form
\begin{align}\nonumber
\gamma-G\gamma=\tau,
\end{align}
where
\begin{align}\nonumber
\begin{split}
&\gamma=(\gamma^{(1)},\gamma^{(2)},\gamma^{(3)})^{\mathrm{T}},\,\gamma^{(1)}=(\mu_{-11}(\zeta^*_n;x,t))_{1\times2N},\\
&\gamma^{(2)}=(\mu'_{-11}(\zeta^*_n;x,t))_{1\times2N},\,\gamma^{(1)}=(\mu''_{-11}(\zeta^*_n;x,t))_{1\times2N},\\
&\tau=(\tau^{(1)},\tau^{(2)},\tau^{(3)})^{\mathrm{T}},\,\tau^{(1)}=(\mathrm{e}^{\mathrm{i}\nu_-})_{1\times2N},\,\tau^{(2)}=(0)_{1\times2N},\,\tau^{(3)}=(0)_{1\times2N}.
\end{split}
\end{align}

The $6N\times6N$ matrix $G$ is shown in Eq. \eqref{RP-TPS-5}. Combining Eqs. \eqref{RFP-TP-2} and \eqref{RP-TPS-2} with the case of reflectionless potential, the triple poles soliton solution \eqref{RP-TPS-3} can be given out.

\end{proof}

For example, we obtain the $N$-triple-pole solutions of the TOFKN \eqref{T1} with ZBCs via Theorem \ref{RP-TPS-thm1}.

$\bullet$ When taking parameters $N=1,\zeta_1=\frac12+\frac12\mathrm{i},A[\zeta_1]=B[\zeta_1]=C[\zeta_1]=1$, we can obtain the 1-triple-pole solution and give out relevant plots in Fig. \ref{ZB-TP-F1}. Figs. \ref{ZB-TP-F1} (a) and (b) exhibit the three-dimensional and density diagrams for the 1-triple-pole soliton solution of the TOFKN with ZBCs, which is equivalent to the elastic collisions of three bright-bright-bright solitons. Fig. \ref{ZB-TP-F1} (c) displays the distinct profiles of the 1-triple-pole soliton solution at $t=\pm6,0$. Combined with Figs. \ref{F2} and \ref{F3}, as $N = 2$, we can obtain a 2-triple-pole solution with six single soliton branches, due to the form of solution is too complex, here we omit the relevant plots.

\begin{figure}[htbp]
\centering
\subfigure[]{
\begin{minipage}[t]{0.33\textwidth}
\centering
\includegraphics[height=4.5cm,width=4.5cm]{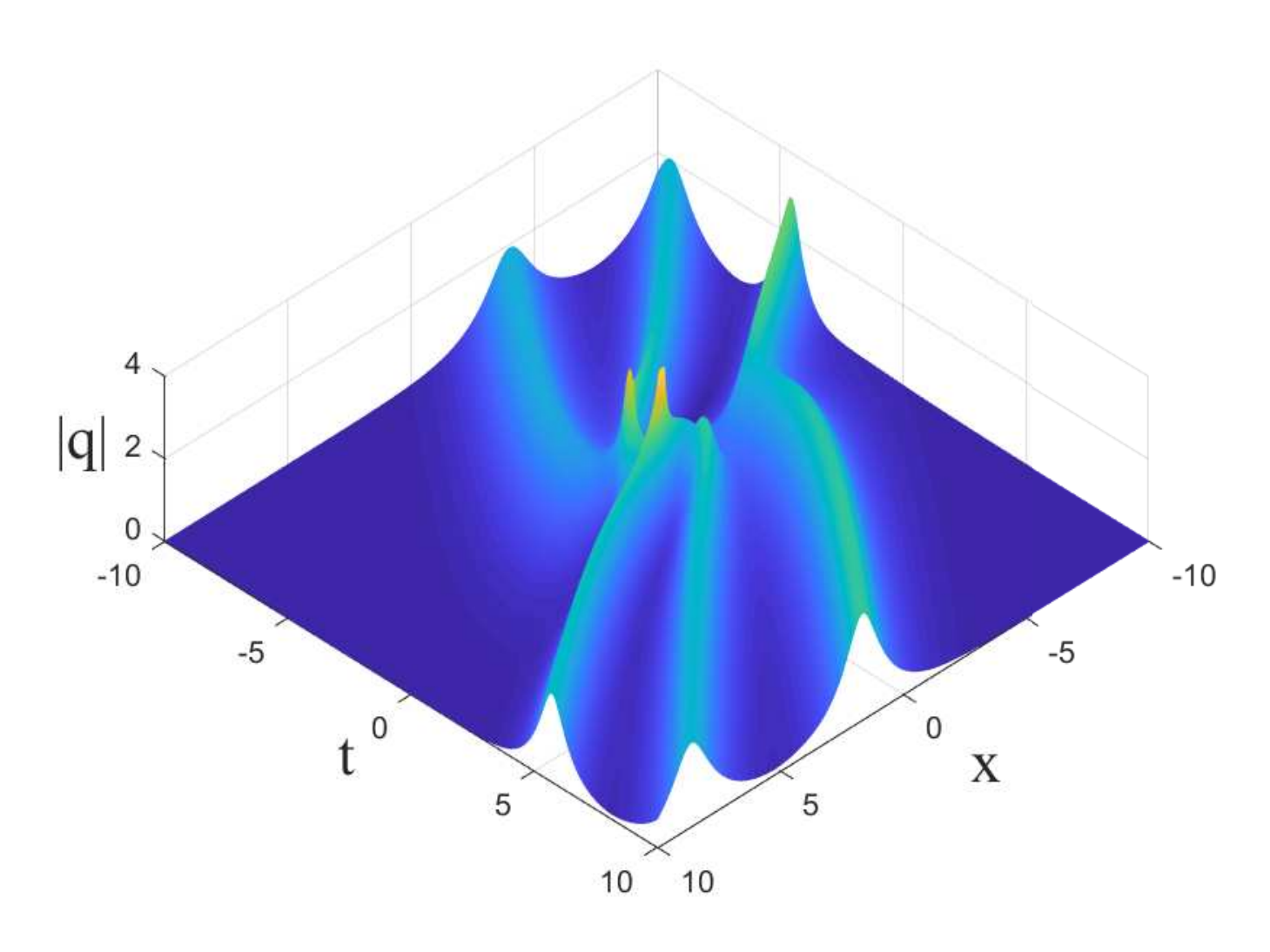}
%\caption{fig1}
\end{minipage}
}%
\subfigure[]{
\begin{minipage}[t]{0.33\textwidth}
\centering
\includegraphics[height=4.5cm,width=4.5cm]{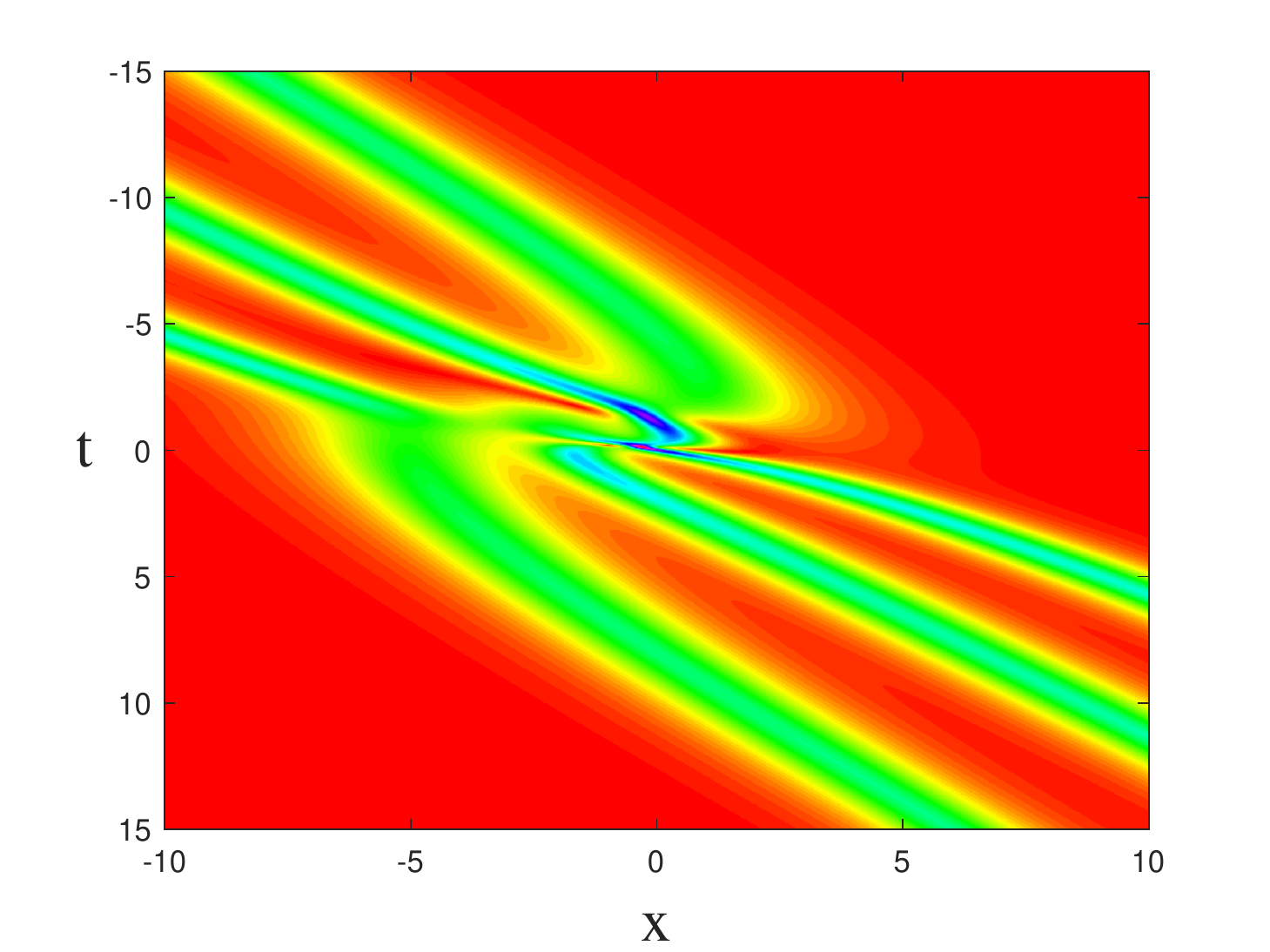}
%\caption{fig2}
\end{minipage}%
}%
\subfigure[]{
\begin{minipage}[t]{0.33\textwidth}
\centering
\includegraphics[height=4.5cm,width=4.5cm]{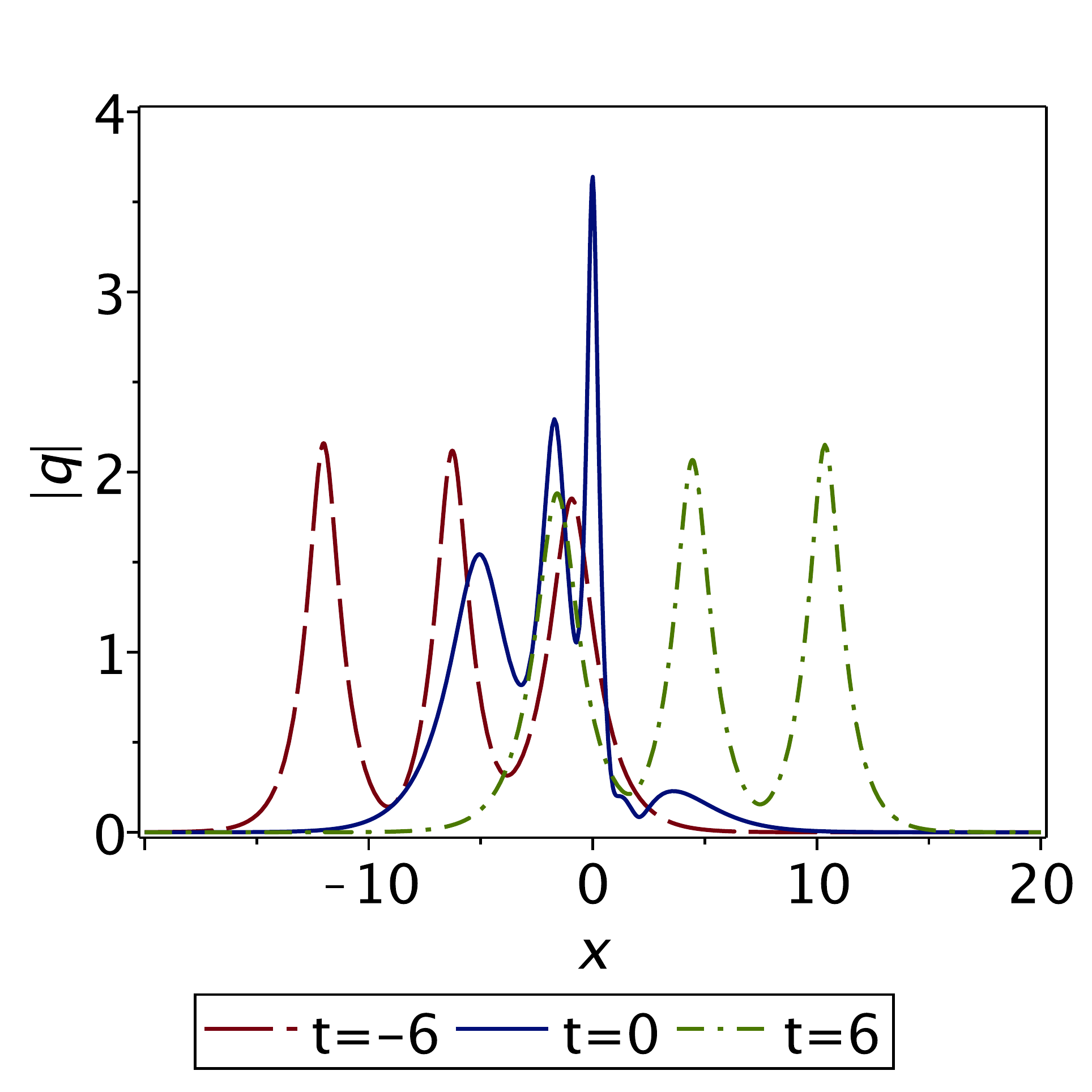}
%\caption{fig1}
\end{minipage}
}%
\centering
\caption{(Color online) The 1-triple-pole soliton solution of TOFKN \eqref{T1} with ZBCs and $N=1,\zeta_1=\frac12+\frac12\mathrm{i},A[\zeta_1]=B[\zeta_1]=C[\zeta_1]=1$. (a) The three-dimensional plot; (b) The density plot; (c) The sectional drawings at $t=-6$ (dashed line), $t=0$ (solid line), and $t =6$ (dash-dot line).}
\label{ZB-TP-F1}
\end{figure}

Moreover, let $N=1,\zeta_1=\frac12+\mathrm{i},A[\zeta_1]=B[\zeta_1]=C[\zeta_1]=1$, with the aid of Maple symbol calculations, one can derive the asymptotic state of the 1-triple poles soliton solution, as following
\begin{tiny}
\begin{align}\label{RP-TPS-6}
\begin{split}
|q(x,t)|^2\longrightarrow\left\{
\begin{aligned}
&\frac{240870400000000000000\mathrm{e}^{4\theta_2}}{3814697265625\mathrm{e}^{8\theta_2}-22581600000000000000\mathrm{e}^{4\theta_2}+92829679353856000000000000}, \text{as }\theta_2=x+\frac{11}{4}t-\mathrm{log}(t)\\
&\frac{16552487849191014400\mathrm{e}^{4\theta_2}}{24334743464537227264000000\mathrm{e}^{8\theta_2}-1551795735861657600\mathrm{e}^{4\theta_2}+68719476736}, \text{as }\theta_2=x+\frac{11}{4}t\\
&\frac{608368586613430681600000000\mathrm{e}^{4\theta_2}}{92829679353856000000000000\mathrm{e}^{8\theta_2}-57034554995009126400000000\mathrm{e}^{4\theta_2}+24334743464537227264000000},\\
&\text{as }\theta_2=x+\frac{11}{4}t+\mathrm{log}(t).
\end{aligned}
\right.
\end{split}
\end{align}
\end{tiny}

From the above expression, one can obtain that the 1-triple poles soliton solution degrades into the three one-soliton solution as $t\rightarrow\infty$. Of which, the center trajectories are $x+\frac{11}{4}t-\mathrm{log}(t)$, $x+\frac{11}{4}t$ and $x+\frac{11}{4}t+\mathrm{log}(t)$, respectively. When $t\rightarrow\infty$, the position shift of the
three standard one soliton solution is $\mathrm{log}(t)$, which depends on $t$.

\section{The Construction and Solve of RH problem with NZBCs}\label{Sec3}
In this section, we will find the RH problem for the TOFKN \eqref{T1} with the NZBCs
\begin{align}\label{T59}
q(x,t)\sim q_{\pm},\text{ as }x\rightarrow\pm\infty,
\end{align}
where $|q_{\pm}|=q_0>0$, and $q_{\pm}$ are independent of $x,t$.

\subsection{Direct Scattering with NZBCs}
In this subsection, we will investigate direct scattering with NZBCs for the modified Zakharov-Shabat eigenvalue problem of TOFKN \eqref{T1} in detail.

\subsubsection{Jost Solutions, Analyticity, and Continuity}
As $x\rightarrow\pm\infty$, we consider the asymptotic scattering problem of the modified Zakharov-Shabat eigenvalue problem \eqref{T2}-\eqref{T3}:
\begin{align}\label{T60}
\begin{split}
&\Psi_x=X_{\pm}\Psi,\quad\Psi_t=T_{\pm}\Psi,\quad X_{\pm}=\mathrm{i}\lambda^2\sigma_3+\lambda Q_{\pm},\\
&T_{\pm}=\bigg(4\lambda^4+\frac32q^4_0-2\lambda^2q^2_0\bigg)X_{\pm},\quad Q_{\pm}=\Bigg(\begin{array}{cc} 0 & q_{\pm} \\-q^*_{\pm} & 0 \end{array}\Bigg).
\end{split}
\end{align}

\begin{prop}
the fundamental matrix solution of Eq. \eqref{T60} is presented as follows:
\begin{align}\label{T61}
\Psi^{\mathrm{bg}}_{\pm}(\lambda;x,t)=\Bigg\{
\begin{aligned}
&Y_{\pm}(\lambda)\mathrm{e}^{\mathrm{i}\theta(\lambda;x,t)\sigma_3},\quad\quad\quad\quad\lambda\neq\pm \mathrm{i}q_0,\text{ and }\lambda+k(\lambda)\neq0,\\
&I+\bigg(x+\frac{15}{2}q^4_0t\bigg)X_{\pm}(\lambda),\quad\lambda=\pm \mathrm{i}q_0,
\end{aligned}
\end{align}
where
\begin{align}\label{T62}
\begin{split}
&Y_{\pm}(\lambda)=\Bigg(\begin{array}{cc} 1 & \frac{\mathrm{i}q_{\pm}}{z} \\ \frac{\mathrm{i}q^*_{\pm}}{z} & 1 \end{array}\Bigg)=I+\frac{\mathrm{i}}{z}\sigma_3Q_{\pm},\quad z=\lambda+k(\lambda),\\
&\theta(\lambda;x,t)=\lambda k(\lambda)\bigg[x+\bigg(4\lambda^4+\frac32q^4_0-2\lambda^2q^2_0\bigg)t\bigg],\quad k^2(\lambda)=\lambda^2+q^2_0.
\end{split}
\end{align}

\end{prop}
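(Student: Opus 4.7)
The plan is to split the verification according to whether $\lambda$ is a branch point of $k(\lambda)$. For the generic case $\lambda\ne\pm\mathrm{i}q_0$ with $z=\lambda+k(\lambda)\ne0$, the goal is to show that $\Psi=Y_{\pm}(\lambda)\mathrm{e}^{\mathrm{i}\theta(\lambda;x,t)\sigma_3}$ is a fundamental solution of both linear systems in \eqref{T60}, and for $\lambda=\pm\mathrm{i}q_0$ (where $k=0$ and $Y_\pm$ becomes singular) the goal is to show that the polynomial ansatz $I+(x+\tfrac{15}{2}q_0^4 t)X_{\pm}(\lambda)$ works instead. The strategy in both cases is essentially diagonalization (or Jordan reduction) of $X_{\pm}$, exploiting the fact that $T_{\pm}$ is a scalar multiple of $X_{\pm}$ so the time part is automatic once the spatial part is settled.

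For the generic case, I would first compute the eigenvalues of $X_\pm=\mathrm{i}\lambda^2\sigma_3+\lambda Q_\pm$: since $\mathrm{tr}(X_\pm)=0$ and $\det(X_\pm)=-\lambda^4-\lambda^2 q_0^2=-\lambda^2 k^2$, the eigenvalues are $\pm\mathrm{i}\lambda k$. The heart of the proof is the single matrix identity
\begin{equation*}
X_\pm(\lambda)\,Y_\pm(\lambda)=\mathrm{i}\lambda k(\lambda)\,Y_\pm(\lambda)\,\sigma_3,
\end{equation*}
which I would derive by expanding $X_\pm Y_\pm=(\mathrm{i}\lambda^2\sigma_3+\lambda Q_\pm)(I+\tfrac{\mathrm{i}}{z}\sigma_3 Q_\pm)$ and applying the three algebraic facts $Q_\pm^2=-q_0^2 I$, $\sigma_3 Q_\pm+Q_\pm\sigma_3=0$, and the crucial simplification $\lambda z+q_0^2=kz$ (equivalently $z^2+q_0^2=2kz$), which follows directly from $k^2=\lambda^2+q_0^2$. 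Granted this identity, a direct computation gives $\partial_x\Psi=Y_\pm\mathrm{i}\theta_x\sigma_3\mathrm{e}^{\mathrm{i}\theta\sigma_3}=\mathrm{i}\lambda k Y_\pm\sigma_3\mathrm{e}^{\mathrm{i}\theta\sigma_3}=X_\pm\Psi$, while $\partial_t\Psi=\mathrm{i}\theta_t Y_\pm\sigma_3\mathrm{e}^{\mathrm{i}\theta\sigma_3}$ with $\theta_t=\lambda k(4\lambda^4+\tfrac{3}{2}q_0^4-2\lambda^2q_0^2)$ immediately matches $T_\pm\Psi$ because $T_\pm=(4\lambda^4+\tfrac{3}{2}q_0^4-2\lambda^2q_0^2)X_\pm$. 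To confirm that $\Psi$ is a fundamental solution I would compute $\det Y_\pm=1+q_0^2/z^2=(z^2+q_0^2)/z^2=2k/z$, which is nonzero precisely under the stated assumptions $\lambda\ne\pm\mathrm{i}q_0$ and $z\ne0$.

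For the degenerate case $\lambda=\pm\mathrm{i}q_0$, here $k=0$ so the eigenvalues of $X_\pm$ collide at $0$ and the above diagonalization fails. The key observation is that $X_\pm^2=-\lambda^2 k^2 I=0$ at these points (a quick check using $\sigma_3 Q_\pm+Q_\pm\sigma_3=0$ and $Q_\pm^2=-q_0^2 I$). Setting $\Psi=I+(x+\alpha t)X_\pm$ therefore gives $X_\pm\Psi=X_\pm$ and $T_\pm\Psi=(4\lambda^4+\tfrac{3}{2}q_0^4-2\lambda^2 q_0^2)X_\pm$; evaluating the scalar coefficient at $\lambda^2=-q_0^2$ yields $4q_0^4+\tfrac{3}{2}q_0^4+2q_0^4=\tfrac{15}{2}q_0^4$, so forcing $\partial_x\Psi=X_\pm\Psi$ and $\partial_t\Psi=T_\pm\Psi$ fixes $\alpha=\tfrac{15}{2}q_0^4$ as claimed. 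Finally, $\det\Psi=1+(x+\alpha t)\mathrm{tr}(X_\pm)+(x+\alpha t)^2\det(X_\pm)=1$ since $\mathrm{tr}(X_\pm)=0$ and $\det(X_\pm)=-\lambda^2 k^2=0$, confirming that $\Psi$ is fundamental.

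The main obstacle is purely algebraic bookkeeping in establishing $X_\pm Y_\pm=\mathrm{i}\lambda k Y_\pm\sigma_3$; once that identity and $X_\pm^2=-\lambda^2k^2 I$ are in hand, the rest reduces to elementary differentiation and evaluation. No genuinely nontrivial analysis is required, only careful use of the defining relation $k^2=\lambda^2+q_0^2$ and the anticommutation $\sigma_3 Q_\pm=-Q_\pm\sigma_3$.
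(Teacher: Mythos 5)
Your proposal is correct and follows essentially the same route as the paper: the generic case rests on $Y_{\pm}$ being the eigenvector matrix of $X_{\pm}$ (your intertwining identity $X_{\pm}Y_{\pm}=\mathrm{i}\lambda k\,Y_{\pm}\sigma_3$ is just the diagonalization $X_{\pm}=Y_{\pm}(\mathrm{i}\lambda k\sigma_3)Y_{\pm}^{-1}$ that the paper writes out), and the time part follows in both treatments because $T_{\pm}$ is a scalar multiple of $X_{\pm}$. The only difference is that for $\lambda=\pm\mathrm{i}q_0$ the paper simply asserts a direct calculation, whereas you supply the cleaner justification via the nilpotency $X_{\pm}^2=-\lambda^2k^2I=0$, which correctly pins down the coefficient $\tfrac{15}{2}q_0^4$.
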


\begin{proof}
For the case of $\lambda=\pm \mathrm{i}q_0$, we can directly calculate and obtain $\Psi^{\mathrm{bg}}_{\pm}(\lambda;x,t)=I+\Big(x+\frac{15}{2}q^4_0t\Big)X_{\pm}(\lambda)$. However, when $\lambda\neq\pm \mathrm{i}q_0$, based on eigenvalues $\pm \mathrm{i}\lambda k(\lambda)$, one can derive $k^2(\lambda)=\lambda^2+q^2_0$, and the eigenvector matrix of $X_{\pm}$ and $T_{\pm}$ can be written
\begin{align}\label{T63}
\begin{split}
&Y_{\pm}(\lambda)=\Bigg(\begin{array}{cc} 1 & \frac{\mathrm{i}q_{\pm}}{\lambda+k(\lambda)} \\ \frac{\mathrm{i}q^*_{\pm}}{\lambda+k(\lambda)} & 1 \end{array}\Bigg)=\Bigg(\begin{array}{cc} 1 & \frac{\mathrm{i}q_{\pm}}{z} \\ \frac{\mathrm{i}q^*_{\pm}}{z} & 1 \end{array}\Bigg)=I+\frac{\mathrm{i}}{z}\sigma_3Q_{\pm},\quad z=\lambda+k(\lambda)\neq0.
\end{split}
\end{align}
Thus, $X_{\pm}$ and $T_{\pm}$ can are diagonalized by the eigenvector matrix \eqref{T63} as
\begin{align}\label{T64}
X_{\pm}=Y_{\pm}[\mathrm{i}\lambda k(\lambda)\sigma_3]Y^{-1}_{\pm},\quad T_{\pm}=Y_{\pm}\bigg[\bigg(4\lambda^2+\frac32q^4_0-2\lambda^2q^2_0\bigg)\mathrm{i}\lambda k(\lambda)\sigma_3\bigg]Y^{-1}_{\pm},
\end{align}
and direct calculation shows that
\begin{align}\label{T65}
\mathrm{det}Y_{\pm}=1+\frac{q^2_0}{z^2}\overset{\triangle}{=}\eta,
\end{align}
and
\begin{align}\label{T66}
Y^{-1}_{\pm}=\frac{1}{\eta}\Bigg(\begin{array}{cc} 1 & -\frac{\mathrm{i}q_{\pm}}{z} \\ -\frac{\mathrm{i}q^*_{\pm}}{z} & 1 \end{array}\Bigg)=\frac{1}{\eta}\bigg(I-\frac{\mathrm{i}}{z}\sigma_3Q_{\pm}\bigg).
\end{align}
Then substituting \eqref{T64} into \eqref{T60}, we have
\begin{align}\label{T67}
\big(Y^{-1}_{\pm}\Psi\big)_x=\mathrm{i}\lambda k(\lambda)\sigma_3\big(Y^{-1}_{\pm}\Psi\big),\quad \big(Y^{-1}_{\pm}\Psi\big)_t=\bigg(4\lambda^2+\frac32q^4_0-2\lambda^2q^2_0\bigg)\mathrm{i}\lambda k(\lambda)\sigma_3\big(Y^{-1}_{\pm}\Psi\big),
\end{align}
from which we can derive the solution of the asymptotic spectral problem \eqref{T60}
\begin{align}\label{T68}
\Psi=\Psi^{\mathrm{bg}}_{\pm}(\lambda;x,t)=Y_{\pm}(\lambda)\mathrm{e}^{\mathrm{i}\theta(\lambda;x,t)\sigma_3},
\end{align}
where $\theta(\lambda;x,t)$ is given by \eqref{T62}. This completes the proof.

\end{proof}

Since $k(\lambda)$ stands for a two-sheeted Riemann surface. Therefore, in order to avoid multi-valued case of eigenvalue $k(\lambda)$, we define a new uniformization variable: $z=\lambda+k(\lambda)$, which was first introduced by Faddeev and Takhtajan in 1987 \cite{Faddeev(1987)}. Next, we will illustrate the scattering problem on a standard $z$-plane instead of the two-sheeted Riemann surface by utilizing two single-valued inverse mappings:
\begin{align}\label{T69}
k(\lambda)=\frac12\bigg(z+\frac{q^2_0}{z}\bigg),\quad\lambda=\frac12\bigg(z-\frac{q^2_0}{z}\bigg),
\end{align}
where the inverse mapping  evidently satisfy the relation $k^2(\lambda)=\lambda^2+q^2_0$. It means that the Jost solutions $\psi_{\pm}(z;x,t)$ of the Lax pairs \eqref{T2}-\eqref{T3} possess the following asymptotics:
\begin{align}\label{T70}
\psi_{\pm}(z;x,t)\sim Y_{\pm}(z)\mathrm{e}^{\mathrm{i}\theta(z;x,t)\sigma_3},\text{ as }x\rightarrow\pm\infty,
\end{align}
and the modified Jost solutions $\mu_{\pm}(z;x,t)$ are derived by making the transform
\begin{align}\label{T71}
\mu_{\pm}(z;x,t)=\psi_{\pm}(z;x,t)\mathrm{e}^{-\mathrm{i}\theta(z;x,t)\sigma_3},
\end{align}
such that
\begin{align}\label{T72}
\mu_{\pm}(z;x,t)\sim Y_{\pm}(z),\text{ as }x\rightarrow\pm\infty.
\end{align}

On the other hand, the modified Jost solutions $\mu_{\pm}(z;x,t)$ satisfy the following equivalent Lax pair:
\begin{align}\label{T73}
\Big[Y^{-1}_{\pm}(z)\mu_{\pm}(z;x,t)\Big]_x+\mathrm{i}\lambda k(\lambda)\Big[Y^{-1}_{\pm}(z)\mu_{\pm}(z;x,t),\sigma_3\Big]=Y^{-1}_{\pm}(z)\Delta X_{\pm}\mu_{\pm}(z;x,t),
\end{align}
\begin{align}\label{T74}
\begin{split}
&\Big[Y^{-1}_{\pm}(z)\mu_{\pm}(z;x,t)\Big]_t+\mathrm{i}\lambda k(\lambda)\bigg(4\lambda^4+\frac32q^4_0-2\lambda^2q^2_0\bigg)\Big[Y^{-1}_{\pm}(z)\mu_{\pm}(z;x,t),\sigma_3\Big]\\
&=Y^{-1}_{\pm}(z)\Delta T_{\pm}\mu_{\pm}(z;x,t),
\end{split}
\end{align}
where $\Delta X_{\pm}=X\big(\lambda(z);x,t\big)-X_{\pm}\big(\lambda(z);x,t\big)=\lambda(z)[Q(x,t)-Q_{\pm}(x,t)]:=\lambda(z)\Delta Q_{\pm}(x,t)$ and $\Delta T_{\pm}=T\big(\lambda(z);x,t\big)-T_{\pm}\big(\lambda(z);x,t\big)$. Equations \eqref{T73} and \eqref{T74} can be written in full derivative form
\begin{align}\label{T75}
d\big(\mathrm{e}^{-\mathrm{i}\theta(z;x,t)\widehat{\sigma}_3}Y^{-1}_{\pm}(z)\mu_{\pm}(z;x,t)\big)=\mathrm{e}^{-\mathrm{i}\theta(z;x,t)\widehat{\sigma}_3}\big[Y^{-1}_{\pm}(z)\big(\Delta X_{\pm}dx+\Delta T_{\pm}dt\big)\mu_{\pm}(z;x,t)\big],
\end{align}
which lead to Volterra integral equations
\begin{small}
\begin{align}\label{T76}
\mu_{\pm}(z;x,t)=Y_{\pm}(z)+\left\{
\begin{aligned}
&\lambda(z)\int^x_{\pm\infty}Y_{\pm}(z)\mathrm{e}^{\mathrm{i}k(z)\lambda(z)(x-y)\widehat{\sigma}_3}\big[Y^{-1}_{\pm}(z)\Delta Q_{\pm}(y,t)\mu_{\pm}(z;y,t)\big]dy,\, z\neq0,\pm \mathrm{i}q_0,\\
&\lambda(z)\int^x_{\pm\infty}\big[I+(x-y)X_{\pm}(z)\big]\Delta Q_{\pm}(y,t)\mu_{\pm}(z;y,t)\big]dy,\, z=\pm \mathrm{i}q_0.
\end{aligned}
\right.
\end{align}
\end{small}

Since the exponential function contains $k(z)\lambda(z)$, we define the $z$ plane by the positive and negative values of $\mathrm{Im}\big(k(z)\lambda(z)\big)$, that is
\begin{small}
\begin{align}\nonumber
\begin{split}
\mathrm{Im} \big[k(z)\lambda(z)\big]=\frac{1}{4|z|^4}\big[(2|z|^4-2q^4_0)\mathrm{Re}z\mathrm{Im}z+4q^4_0\mathrm{Re}z\mathrm{Im}z\big]=\frac{1}{4|z|^4}\big[(2|z|^4+2q^4_0)\mathrm{Re}z\mathrm{Im}z\big],
\end{split}
\end{align}
\end{small}
and
\begin{align}\nonumber
\begin{split}
&\mathrm{Im} \big[k(z)\lambda(z)\big]>0,\text{ that is }\mathrm{Re}z\mathrm{Im}z>0,\\
&\mathrm{Im} \big[k(z)\lambda(z)\big]<0,\text{ that is }\mathrm{Re}z\mathrm{Im}z<0,\\
&\mathrm{Im} \big[k(z)\lambda(z)\big]=0,\text{ that is }z^4=q^4_0\Rightarrow\bigg\{
\begin{aligned}
&z=\pm \mathrm{i}q_0,\\
&z=\pm q_0.
\end{aligned}
\end{split}
\end{align}

Therefore, from the above mapping relation between the $\lambda$-plane and $z$-plane, we define $\Sigma$ and $D^{\pm}$ on the $z$-plane as $\Sigma:=\mathbb{R}\cup \mathrm{i}\mathbb{R}\backslash\{0\},D^{\pm}:=\{z\in\mathbb{C}|\pm(\mathrm{Re}z)(\mathrm{Im}z)>0\}$, which are shown in Fig. \ref{F4}.

\begin{figure}[htbp]
\centerline{\begin{tikzpicture}[scale=1.8]
\path [fill=pink] (2.5,0) -- (0.5,0) to
(0.5,-2) -- (2.5,-2);
\path [fill=pink] (4.5,0) -- (2.5,0) to
(2.5,2) -- (4.5,2);
\draw[-][thick](0.5,0)--(0.75,0);
\draw[<-][thick](0.75,0)--(1,0);
\draw[-][thick](1,0)--(2,0);
\draw[<-][thick](2,0)--(2.5,0);
\draw[fill] (2.5,0) circle [radius=0.03];
\draw[->][thick](2.5,0)--(3,0);
\draw[->][thick](3,0)--(4,0);
\draw[-][thick](4,0)--(4.5,0)node[right]{$\mbox{Re}z$};
\draw[-][thick](2.5,2)node[above]{$\mbox{Im}z$}--(2.5,0);
\draw[-][thick](2.5,0)--(2.5,-2);
\draw[->][thick](2.5,-1.5)--(2.5,-0.5);
\draw[->][thick](2.5,-2)--(2.5,-1.5);
\draw[->][thick](2.5,1.5)--(2.5,0.5);
\draw[->][thick](2.5,2)--(2.5,1.5);
\draw[fill] (2.5,-0.3) node[right]{$0$};
\draw[fill] (2.5,1) circle [radius=0.03];
\draw[fill] (2.5,-1) circle [radius=0.03];
\draw[fill](3.8,1.5) circle [radius=0.03] node[right]{$z_{n}$};
\draw[fill] (3.8,-1.5) circle [radius=0.03] node[right]{$z^{*}_{n}$};
\draw[fill] (1.2,1.5) circle [radius=0.03] node[left]{$-z^{*}_{n}$};
\draw[fill] (1.2,-1.5) circle [radius=0.03] node[left]{$-z_{n}$};
\draw[fill] (2,0.5) circle [radius=0.03] node[right]{$-\frac{q^{2}_{0}}{z_{n}}$};
\draw[fill] (2,-0.5) circle [radius=0.03] node[right]{$-\frac{q^{2}_{0}}{z^{*}_{n}}$};
\draw[fill] (3,0.5) circle [radius=0.03] node[left]{$\frac{q^{2}_{0}}{z^{*}_{n}}$};
\draw[fill] (3,-0.5) circle [radius=0.03] node[left]{$\frac{q^{2}_{0}}{z_{n}}$};
\draw[-][thick](3.5,0) arc(0:360:1);
\draw[-][thick](3.5,0) arc(0:30:1);
\draw[-][thick](3.5,0) arc(0:150:1);
\draw[-][thick](3.5,0) arc(0:210:1);
\draw[-][thick](3.5,0) arc(0:330:1);
\draw[fill] (1.8,0.715) circle [radius=0.03];
\draw[fill] (1.7,0.715) circle [radius=0.0] node[above]{$-\omega^{*}_{m}$};
\draw[fill] (1.8,-0.715) circle [radius=0.03];
\draw[fill] (1.7,-0.715) circle [radius=0.0] node[below]{$-\omega_{m}$};
\draw[fill] (3.2,0.715) circle [radius=0.03];
\draw[fill] (3.3,0.715) circle [radius=0.0] node[above]{$\omega_{m}$};
\draw[fill] (3.2,-0.715) circle [radius=0.03];
\draw[fill] (3.3,-0.715) circle [radius=0.0] node[below]{$\omega^{*}_{m}$};
\end{tikzpicture}}
\caption{(Color online) Distribution of the discrete spectrum and jumping curves for the RH problem on complex $z$-plane, Region $D^{+}=\left\{k\in \mathbb{C}\big|\mathrm{Re}z\mathrm{Im}z> 0\right\}$ (pink region), region $D^{-}=\left\{z\in \mathbb{C}\big|\mathrm{Re}z\mathrm{Im}z< 0\right\}$ (white region).}
\label{F4}
\end{figure}
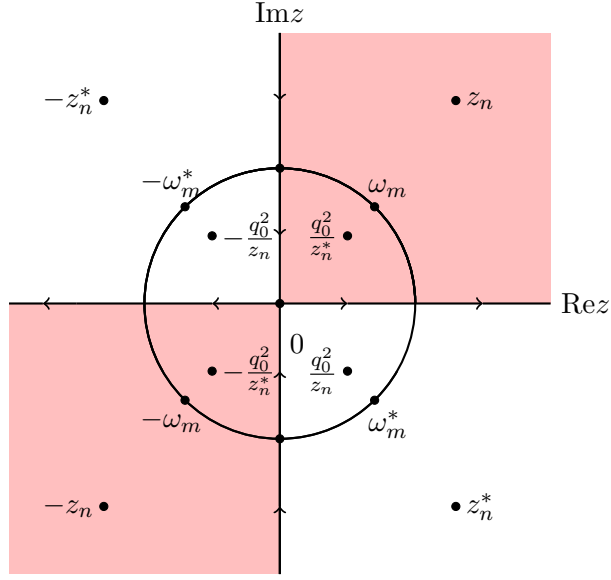

Then, similar to proposition \ref{P1} in Sec. \ref{Sec2} with ZBCs, as well as the Volterra integral equations \eqref{T76} for the case of $z\neq0,\pm \mathrm{i}q_0$, the following analyticity of the (modified) Jost solution can be derived as follows:
\begin{prop}\label{P3}
Suppose $(1+|x|)\big(q(x,t)-q_{\pm}\big)\in L^{1}(\mathbb{R}^{\pm})$. Then, Jost solution $\psi(z;x,t)$ $(\text{modified Jost solution } \mu(z;x,t))$ have the following properties:\\
$\bullet$ Eq. \eqref{T2} has the unique solution $\psi(z;x,t)$ $(\mu(z;x,t))$ satisfying Eq. \eqref{T70} $($Eq. \eqref{T72}$)$ on $\Sigma$.\\
$\bullet$ The column vectors $\psi_{+1}(z;x,t)$ $(\mu_{+1}(z;x,t))$ and $\psi_{-2}(z;x,t)$ $(\mu_{-2}(z;x,t))$ can be analytically extended to $D^+$ and continuously extended to $D^+\cup\Sigma$.\\
$\bullet$ The column vectors $\psi_{+2}(z;x,t)$ $(\mu_{+2}(z;x,t))$ and $\psi_{-1}(z;x,t)$ $(\mu_{-1}(z;x,t))$ can be analytically extended to $D^-$ and continuously extended to $D^-\cup\Sigma$.
\end{prop}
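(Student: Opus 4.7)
The plan is to mimic the argument used for Proposition \ref{P1} in the ZBCs setting, transplanted to the uniformization variable $z$. Working with $\mu_{\pm}(z;x,t)$, one rewrites the Volterra equations \eqref{T76} column by column. For instance, the first column of $\mu_{-}$ reads
\begin{align}\nonumber
\mu_{-1}(z;x,t)=Y_{-,1}(z)+\lambda(z)\int_{-\infty}^{x}\!\bigl[\,\mathcal{K}_{-}(z;x-y)\,\Delta Q_{-}(y,t)\,\mu_{-1}(z;y,t)\bigr]\,dy,
\end{align}
where $\mathcal{K}_{-}(z;x-y)=Y_{-}(z)\,\mathrm{e}^{\mathrm{i}k(z)\lambda(z)(x-y)\widehat{\sigma}_{3}}\,Y_{-}^{-1}(z)$. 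The two linearly independent exponential factors in $\mathcal{K}_{-}$ are $1$ and $\mathrm{e}^{-2\mathrm{i}k(z)\lambda(z)(x-y)}$. Since $x-y>0$ throughout the integral for the $\mu_{-1}$ case, boundedness of the latter exponential requires $\mathrm{Im}[k(z)\lambda(z)]\le 0$, which, by the sign computation performed just before the proposition, corresponds exactly to $z\in D^{-}\cup\Sigma$. This is the analyticity/continuity region one has to prove for $\mu_{-1}$.

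Once the correct region is identified, existence, uniqueness and analyticity of $\mu_{-1}$ follow by a standard Neumann-series argument. First I would construct the iterates $\mu_{-1}^{(n+1)}(z;x,t)=Y_{-,1}(z)+\lambda(z)\!\int_{-\infty}^{x}\!\mathcal{K}_{-}(z;x-y)\Delta Q_{-}(y,t)\mu_{-1}^{(n)}(z;y,t)\,dy$, and prove that $\|\mu_{-1}^{(n+1)}-\mu_{-1}^{(n)}\|$ is dominated by $\frac{C^{n}}{n!}\bigl(\int_{-\infty}^{x}\|\Delta Q_{-}(y,t)\|\,dy\bigr)^{n}$. The hypothesis $(1+|x|)(q(x,t)-q_{\pm})\in L^{1}(\mathbb{R}^{\pm})$ makes this majorant summable uniformly on compact subsets of $D^{-}\cup\Sigma$, so the series converges absolutely and uniformly, producing a solution; uniqueness follows from Gronwall applied to the difference of two solutions. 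Analyticity in $z\in D^{-}$ is inherited term-by-term, because $\lambda(z)$, $Y_{-}(z)$, and the exponential are analytic there and the convergence is uniform on compacts; continuity up to $\Sigma$ comes from the same uniform bounds, together with the fact that on $\Sigma$ the exponential is bounded (since $\mathrm{Im}[k\lambda]=0$ there).

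The same argument applied to the remaining three columns yields the other three claims: $\mu_{+2}$ lives on $D^{-}\cup\Sigma$ because its integration runs over $y>x$ and the relevant exponential is $\mathrm{e}^{+2\mathrm{i}k\lambda(x-y)}$; while $\mu_{+1}$ and $\mu_{-2}$ extend to $D^{+}\cup\Sigma$ by the sign reversal in their exponents. Once the analyticity and continuity of the $\mu_{\pm}$ are established, the corresponding assertions for the Jost solutions $\psi_{\pm}$ follow immediately from the relation \eqref{T71}, because $\mathrm{e}^{\pm\mathrm{i}\theta(z;x,t)\sigma_{3}}$ is entire in $z\in\mathbb{C}\setminus\{0\}$. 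Existence and uniqueness on $\Sigma$ stated in the first bullet then follow by restricting the above solutions to $z\in\Sigma$.

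The main obstacle I anticipate is controlling the behavior of the Volterra kernel uniformly near the branch points $z=\pm\mathrm{i}q_{0}$ and near $z=0$, where $Y_{\pm}(z)$ degenerates (indeed $\det Y_{\pm}=\eta=1+q_{0}^{2}/z^{2}$ vanishes at $z=\pm\mathrm{i}q_{0}$) and where the alternative integral representation in \eqref{T76} must be used. At these singular points one has to perform the analysis directly with the polynomial-in-$(x-y)$ kernel $I+(x-y)X_{\pm}(z)$, and the factor $(1+|x|)$ in the hypothesis is precisely what is needed to make the resulting moment-type integrals converge. Handling these exceptional points carefully—and checking that the resulting $\mu_{\pm}$ admits a continuous extension there from $D^{\pm}\cup\Sigma\setminus\{0,\pm\mathrm{i}q_{0}\}$—is where the argument requires genuine work beyond the ZBCs case.
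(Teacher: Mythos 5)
Your proposal is correct and follows essentially the same route as the paper, which itself only sketches this proposition by referring back to Proposition \ref{P1} and the Volterra integral equations \eqref{T76}: identify the sign of $\mathrm{Im}[k(z)\lambda(z)]$ governing the exponential in each column's kernel, and run the standard Neumann-series argument for existence, uniqueness, analyticity on $D^{\pm}$ and continuity up to $\Sigma$. Your additional remarks on the branch points $z=\pm\mathrm{i}q_0$ and the role of the weight $(1+|x|)$ go beyond what the paper records, but they are consistent with its setup and do not change the method.
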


Similar to proposition \ref{P10} in Sec. \ref{Sec2} with ZBCs, one can also confirm that the Jost solutions $\psi(z;x,t)$ are the simultaneous solutions for both parts of the modified Zakharov-Shabat eigenvalue problem Eqs. \eqref{T2}-\eqref{T3} with NZBCs.

\subsubsection{Scattering Matrix, Analyticity, and Continuity}
Since $\psi_{\pm}(z;x,t)$ are two fundamental matrix solutions of the modified Zakharov-Shabat eigenvalue problem Eqs. \eqref{T2}-\eqref{T3} in $z\in\Sigma\backslash\{\pm \mathrm{i}q_0\}$, there exists a linear relation between $\psi_{+}(z;x,t)$ and $\psi_{-}(z;x,t)$, so one can define the constant scattering matrix $S(z)=\big(s_{ij}(z)\big)_{2\times2}$ such that
\begin{align}\label{T77}
\psi_{+}(z;x,t)=\psi_{-}(z;x,t)S(z),\quad z\in\Sigma\backslash\{\pm \mathrm{i}q_0\}.
\end{align}

According to the \eqref{T13} of the derivation process, and combining with the formula $$\mathrm{det}\big(\psi_{\pm}(z;x,t)\big)=\mathrm{det}(Y_{\pm})=\eta,$$ the scattering coefficients shown as:
\begin{align}\label{T78}
\begin{split}
&s_{11}(z)=\frac{\mathrm{det}\big(\psi_{+1}(z;x,t),\psi_{-2}(z;x,t)\big)}{\eta},\quad s _{12}(z)=\frac{\mathrm{det}\big(\psi_{+2}(z;x,t),\psi_{-2}(z;x,t)\big)}{\eta},\\
&s_{21}(z)=\frac{\mathrm{det}\big(\psi_{-1}(z;x,t),\psi_{+1}(z;x,t)\big)}{\eta},\quad s_{22}(z)=\frac{\mathrm{det}\big(\psi_{-1}(z;x,t),\psi_{+2}(z;x,t)\big)}{\eta}.
\end{split}
\end{align}

From these determinant representations, one can extend the analytical regions of $s_{11}(z)$ and $s_{22}(z)$.
\begin{prop}
Suppose that $q(x,t)-q_{\pm}\in L^1(\mathbb{R}^{\pm})$. Then, $s_{11}(z)$ can be analytically extended to $D^+$ and continuously extended to $D^+\cup\big(\Sigma\backslash\{\pm \mathrm{i}q_0\}\big)$, while $s_{22}(z)$ can be analytically extended to $D^-$ and continuously extended to $D^-\cup\big(\Sigma\backslash\{\pm \mathrm{i}q_0\}\big)$. Moreover, both $s_{12}(z)$ and $s_{21}(z)$ are continuous in $\Sigma\backslash{\pm \mathrm{i}q_0}$.
\end{prop}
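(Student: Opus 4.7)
The plan is to mirror the argument from the ZBC case, exploiting the determinant representations \eqref{T78} together with the analyticity/continuity of the Jost columns established in Proposition \ref{P3}. First I would observe that the denominator $\eta(z)=1+q_0^2/z^2$ from \eqref{T65} is a meromorphic function on $\mathbb{C}\setminus\{0\}$ whose only zeros occur at $z=\pm\mathrm{i}q_0\in\Sigma$; in particular $\eta$ is nowhere vanishing on $D^+\cup D^-$ and is continuous and nonzero on $\Sigma\setminus\{\pm\mathrm{i}q_0\}$. Hence dividing by $\eta$ preserves analyticity on $D^\pm$ and continuity on $\Sigma\setminus\{\pm\mathrm{i}q_0\}$.

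Next I would treat $s_{11}$. From \eqref{T78} we have $s_{11}(z)=\eta^{-1}\det\bigl(\psi_{+1}(z;x,t),\psi_{-2}(z;x,t)\bigr)$, and by Proposition \ref{P3} both columns $\psi_{+1}$ and $\psi_{-2}$ extend analytically to $D^+$ and continuously to $D^+\cup\Sigma$. Since the determinant is a polynomial (hence entire) function of its entries, the numerator inherits analyticity on $D^+$ and continuity on $D^+\cup\Sigma$. Combining with the behavior of $\eta$ described above yields analyticity of $s_{11}$ on $D^+$ and continuity on $D^+\cup(\Sigma\setminus\{\pm\mathrm{i}q_0\})$. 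The same reasoning applied to $\det\bigl(\psi_{-1},\psi_{+2}\bigr)/\eta$ gives the corresponding statement for $s_{22}$ on $D^-$.

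For $s_{12}$ and $s_{21}$, only continuity on $\Sigma\setminus\{\pm\mathrm{i}q_0\}$ is claimed. On $\Sigma$ all four columns $\psi_{\pm 1},\psi_{\pm 2}$ are simultaneously well-defined and continuous (this is part of Proposition \ref{P3}), so the determinants $\det(\psi_{+2},\psi_{-2})$ and $\det(\psi_{-1},\psi_{+1})$ are continuous on $\Sigma$. Dividing by $\eta$, which is continuous and nonzero on $\Sigma\setminus\{\pm\mathrm{i}q_0\}$, we obtain the desired continuity of $s_{12}$ and $s_{21}$.

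The only delicate point is the exceptional set $\{\pm\mathrm{i}q_0\}$: there $\eta$ vanishes and the asymptotic eigenvector matrix $Y_\pm(z)$ in \eqref{T62} degenerates (the fundamental matrix \eqref{T61} has a different form there). This is precisely why these points are excluded from the continuity statement, and no extra work is needed beyond noting that the Volterra integral representation \eqref{T76} used to construct $\psi_\pm$ away from $z=\pm\mathrm{i}q_0$ does not control the behavior as $z\to\pm\mathrm{i}q_0$. Thus I anticipate no real obstacle beyond carefully handling $\eta$, and the proof reduces to invoking Proposition \ref{P3} together with the determinant formulas \eqref{T78}.
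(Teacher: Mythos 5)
Your proposal is correct and follows exactly the route the paper takes: the paper's own proof is a one-line citation of Proposition \ref{P3} and the determinant formulas \eqref{T78}, and your argument simply fills in the details, including the correct observation that $\eta(z)=(z^2+q_0^2)/z^2$ is nonvanishing on $D^\pm$ and on $\Sigma\setminus\{\pm\mathrm{i}q_0\}$ but vanishes at the branch points, which is precisely why those points are excluded. No gaps.
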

\begin{proof}
The proposition can be verified by using proposition \eqref{P3} and Eq. \eqref{T78}.
\end{proof}

\begin{prop}
Suppose that $(1+|x|)(q(x,t)-q_{\pm})\in L^1(\mathbb{R}^{\pm})$. Then, $k(z)s_{11}(z)$ can be analytically extended to $D^+$ and continuously extended to $D^+\cup\Sigma$, while $k(z)s_{22}(z)$ can be analytically extended to $D^-$ and continuously extended to $D^-\cup\Sigma$. Moreover, both $k(z)s_{12}(z)$ and $k(z)s_{21}(z)$ are continuous in $\Sigma$.
\end{prop}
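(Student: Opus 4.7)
The plan is to reduce everything to a single algebraic identity relating $k(z)$ to $\eta(z)=\det Y_{\pm}(z)$. From the uniformization $\lambda=\frac12(z-q_0^2/z)$ one computes directly that $k^2(\lambda)=\lambda^2+q_0^2=(z^2+q_0^2)^2/(4z^2)$, and fixing the branch by the asymptotic $k\sim z/2$ as $z\to\infty$ gives $k(z)=(z^2+q_0^2)/(2z)$. On the other hand, $\eta(z)=1+q_0^2/z^2=(z^2+q_0^2)/z^2$. Hence
\begin{align*}
k(z)=\tfrac{z}{2}\,\eta(z).
\end{align*}
This is the central observation: the only reason $s_{ij}(z)$ could fail to extend continuously to $\{\pm\mathrm{i}q_0\}$ is the factor $1/\eta$ appearing in the determinant representation \eqref{T78}, and this factor is exactly cancelled by $k(z)$ up to the factor $z/2$ which is regular at $z=\pm\mathrm{i}q_0$.

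Using this identity in \eqref{T78}, I would rewrite
\begin{align*}
k(z)s_{11}(z)&=\tfrac{z}{2}\det\!\bigl(\psi_{+1}(z;x,t),\psi_{-2}(z;x,t)\bigr),\\
k(z)s_{22}(z)&=\tfrac{z}{2}\det\!\bigl(\psi_{-1}(z;x,t),\psi_{+2}(z;x,t)\bigr),\\
k(z)s_{12}(z)&=\tfrac{z}{2}\det\!\bigl(\psi_{+2}(z;x,t),\psi_{-2}(z;x,t)\bigr),\\
k(z)s_{21}(z)&=\tfrac{z}{2}\det\!\bigl(\psi_{-1}(z;x,t),\psi_{+1}(z;x,t)\bigr).
\end{align*}
The analyticity of $k(z)s_{11}(z)$ in $D^+$ (and similarly $k(z)s_{22}(z)$ in $D^-$) is then immediate from proposition \ref{P3}, which guarantees the analytic extension of the relevant columns of the Jost solutions to the respective half regions; independence of $x,t$ carries over since the determinants are independent of $x,t$ (as in the previous proposition).

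The remaining issue is continuity up to and including the branch points $z=\pm\mathrm{i}q_0$, and this is where the strengthened hypothesis $(1+|x|)(q-q_{\pm})\in L^1(\mathbb{R}^{\pm})$ is needed. The proof would proceed by showing that the modified Jost solutions $\mu_{\pm}(z;x,t)$ extend continuously to $z=\pm\mathrm{i}q_0$. At a generic $z\in\Sigma$ the Volterra equation \eqref{T76} converges uniformly under $q-q_{\pm}\in L^1$, but at $z=\pm\mathrm{i}q_0$ the kernel acquires the extra polynomial factor $(x-y)X_{\pm}(z)$ coming from the non-diagonalizable Jordan block of $X_{\pm}$. The weighted $L^1$ condition $(1+|x|)(q-q_{\pm})\in L^1$ is precisely what is required to apply the standard Neumann-series/contraction argument to this modified Volterra equation and obtain a continuous solution that matches the limit of $\mu_{\pm}(z;x,t)$ from $z\to\pm\mathrm{i}q_0$. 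Once this continuity is established for the Jost solutions on all of $\Sigma$ (including $\pm\mathrm{i}q_0$), the continuity of the four determinants — and hence of the four products $k(z)s_{ij}(z)$ — follows directly.

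The main technical obstacle is the last step: showing that under the weighted integrability the Volterra equations of both kinds in \eqref{T76} yield $\mu_{\pm}(z;x,t)$ that match in a continuous fashion as $z\to\pm\mathrm{i}q_0$. This amounts to a uniform-in-$z$ estimate in a neighbourhood of the branch points, which is the one place where $L^1$ integrability alone is not enough and the extra factor $(1+|x|)$ is essential. The rest of the statement is a direct consequence of the algebraic identity $k(z)=z\eta(z)/2$.
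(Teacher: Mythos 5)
Your proposal is correct and follows essentially the same route as the paper, whose one-line proof cites exactly the ingredients you use: the identity $k(z)=\tfrac{z}{2}\eta(z)$ obtained from Eqs.~\eqref{T65} and \eqref{T69}, which cancels the $1/\eta$ factor in the determinant representation \eqref{T78}, together with the analyticity of the Jost columns from proposition \ref{P3}. You go somewhat beyond the paper by making explicit that the weighted hypothesis $(1+|x|)(q-q_{\pm})\in L^1(\mathbb{R}^{\pm})$ is needed precisely to control the Volterra kernel with the Jordan-block factor $(x-y)X_{\pm}$ at the branch points $z=\pm\mathrm{i}q_0$; the paper leaves this point entirely implicit, and your sketch of that estimate is the standard argument.
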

\begin{proof}
The proposition can be verified by using proposition \eqref{P3} and Eqs. \eqref{T65}, \eqref{T69} and \eqref{T78}.
\end{proof}

In order to study the RH problem in the inverse process, we focus on the potential without spectral singularity, and suppose $s_{ij}(z)(i,j=1,2)$ are continuous in the branch points $\{\pm \mathrm{i}q_0\}$. The reflection coefficients $\rho(\lambda)$ and $\tilde{\rho}(\lambda)$ are defined as follows:
\begin{align}\label{T79}
\rho(z)=\frac{s_{21}(z)}{s_{11}(z)},\quad \tilde{\rho}(z)=\frac{s_{12}(z)}{s_{22}(z)},\quad z\in\Sigma,
\end{align}
which will be utilized in the inverse scattering problem.

\subsubsection{Symmetry Structures}
Compared with the case of ZBCs, The symmetries of $X(z;x,t),T(z;x,t)$, Jost solutions, modified Jost solutions, scattering matrix and reflection coefficients of the case of NZBCs will be more complicated.
\begin{prop}\label{P5}
For the case of NZBCs, the Jost solutions, modified Jost solutions, scattering matrix and reflection coefficients admit following three kinds of reduction conditions on the $z$-plane:\\
$\bullet$ The first symmetry reduction\\
\begin{align}\label{T80}
\begin{split}
X(z;x,t)&=\sigma_2X(z^*;x,t)^*\sigma_2,\quad T(z;x,t)=\sigma_2T(z^*;x,t)^*\sigma_2,\\
\psi_{\pm}(z;x,t)&=\sigma_2\psi_{\pm}(z^*;x,t)^*\sigma_2,\quad \mu_{\pm}(z;x,t)=\sigma_2\mu_{\pm}(z^*;x,t)^*\sigma_2,\\
S(z)&=\sigma_2S(z^*)^*\sigma_2,\quad \rho(z)=-\tilde{\rho}(z^*)^*.
\end{split}
\end{align}
$\bullet$ The second symmetry reduction\\
\begin{align}\label{T81}
\begin{split}
X(z;x,t)&=\sigma_1X(-z^*;x,t)^*\sigma_1,\quad T(z;x,t)=\sigma_1T(-z^*;x,t)^*\sigma_1,\\
\psi_{\pm}(z;x,t)&=\sigma_1\psi_{\pm}(-z^*;x,t)^*\sigma_1,\quad \mu_{\pm}(z;x,t)=\sigma_1\mu_{\pm}(-z^*;x,t)^*\sigma_1,\\
S(z)&=\sigma_1S(-z^*)^*\sigma_1,\quad \rho(z)=\tilde{\rho}(-z^*)^*.
\end{split}
\end{align}
$\bullet$ The third symmetry reduction\\
\begin{align}\label{T82}
\begin{split}
&X(z;x,t)=X\bigg(-\frac{q^2_0}{z};x,t\bigg),\quad T(z;x,t)=T\bigg(-\frac{q^2_0}{z};x,t\bigg),\\
&\psi_{\pm}(z;x,t)=\frac{\mathrm{i}}{z}\psi_{\pm}\bigg(-\frac{q^2_0}{z};x,t\bigg)\sigma_3Q_{\pm},\quad \mu_{\pm}(z;x,t)=\frac{\mathrm{i}}{z}\mu_{\pm}\bigg(-\frac{q^2_0}{z};x,t\bigg)\sigma_3Q_{\pm},\\
&S(z)=(\sigma_3Q_{-})^{-1}S\bigg(-\frac{q^2_0}{z}\bigg)\sigma_3Q_{+},\quad \rho(z)=\frac{q^*_{-}}{q_{-}}\tilde{\rho}\bigg(-\frac{q^2_0}{z}\bigg).
\end{split}
\end{align}

\end{prop}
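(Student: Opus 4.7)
The plan is to handle the three symmetries in two waves: the first two imitate Proposition \ref{P4} from the ZBC case almost verbatim, while the third exploits the double-sheet structure of the Riemann surface and is genuinely new.

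For the first two reductions, I would proceed exactly as in the proof of Proposition \ref{P4}. First verify by direct matrix multiplication that $X(z;x,t)$ and $T(z;x,t)$, whose $\lambda$-dependence comes only through $\lambda(z)$ and $k(z)\lambda(z)$, satisfy the $\sigma_2$-conjugation identity under $z\mapsto z^{*}$ and the $\sigma_1$-conjugation identity under $z\mapsto -z^{*}$. Next show that $\sigma_2\psi_{\pm}(z^{*};x,t)^{*}\sigma_2$ solves \eqref{T2}--\eqref{T3} whenever $\psi_{\pm}(z;x,t)$ does, and that it carries the same asymptotic normalization $Y_{\pm}(z)\mathrm{e}^{\mathrm{i}\theta(z;x,t)\sigma_3}$. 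This last point reduces to the elementary identity $\sigma_2 Y_{\pm}(z^{*})^{*}\sigma_2=Y_{\pm}(z)$, which is immediate from $Y_{\pm}(z)=I+\tfrac{\mathrm{i}}{z}\sigma_3 Q_{\pm}$ together with $\sigma_2\sigma_3 Q_{\pm}^{*}\sigma_2=\sigma_3 Q_{\pm}$. Uniqueness of the Jost solutions (Proposition \ref{P3}) then forces equality. The $\mu_{\pm}$ identities follow from \eqref{T71} after checking $\theta(z^{*})^{*}=\theta(z)$, the $S(z)$ identity from substituting into the scattering relation \eqref{T77}, and the $\rho$-$\tilde\rho$ identity by taking ratios of matrix entries. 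The second reduction is handled identically with $\sigma_1$ in place of $\sigma_2$.

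The third reduction is the genuinely new piece. The key preliminary observation is that under $z\mapsto-q_0^{2}/z$ one has $\lambda(z)\mapsto\lambda(z)$ but $k(z)\mapsto-k(z)$, so $X$ and $T$ (which depend only on $\lambda$) are invariant while $\theta(z;x,t)\mapsto-\theta(z;x,t)$. Consequently $\psi_{\pm}(-q_0^{2}/z;x,t)$ already solves the same Lax pair as $\psi_{\pm}(z;x,t)$; only the asymptotic normalization has to be matched. The central algebraic fact I would rely on is $(\sigma_3 Q_{\pm})^{2}=|q_{\pm}|^{2}I=q_0^{2}I$, from which a direct expansion gives
\begin{equation}\nonumber
\frac{\mathrm{i}}{z}\,Y_{\pm}\!\left(-\frac{q_0^{2}}{z}\right)\sigma_3 Q_{\pm}=\frac{\mathrm{i}}{z}\sigma_3 Q_{\pm}+I=Y_{\pm}(z).
\end{equation}
Combining this with $\mathrm{e}^{-\mathrm{i}\theta\sigma_3}\sigma_3 Q_{\pm}=\sigma_3 Q_{\pm}\mathrm{e}^{\mathrm{i}\theta\sigma_3}$ (since $Q_{\pm}$ is off-diagonal, it anticommutes with $\sigma_3$ in the exponential sense), one verifies that $\tfrac{\mathrm{i}}{z}\psi_{\pm}(-q_0^{2}/z;x,t)\sigma_3 Q_{\pm}$ has exactly the asymptotic $Y_{\pm}(z)\mathrm{e}^{\mathrm{i}\theta(z;x,t)\sigma_3}$. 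Uniqueness closes \eqref{T82} for $\psi_{\pm}$, and the version for $\mu_{\pm}$ then follows from \eqref{T71}. Substituting the $\psi_{\pm}$ identity into $\psi_{+}=\psi_{-}S$ and matching coefficients yields $S(z)=(\sigma_3 Q_{-})^{-1}S(-q_0^{2}/z)\sigma_3 Q_{+}$, and the reflection-coefficient identity follows after reading off $s_{11}$ and $s_{21}$, using $|q_{\pm}|^{2}=q_0^{2}$ to simplify the prefactor to $q_{-}^{*}/q_{-}$.

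The main obstacle I foresee is bookkeeping: one must be careful that the $+$ and $-$ subscripts on $Q_{\pm}$ are not interchanged when the sheet-swap is combined with the matrix conjugation in the scattering relation (since $\psi_{+}$ and $\psi_{-}$ are normalized at opposite ends, they pick up $Q_{+}$ and $Q_{-}$ respectively), and that the cancellations rely crucially on $(\sigma_3 Q_{\pm})^{2}=q_0^{2}I$ rather than on any pointwise identity between $q_{+}$ and $q_{-}$. Once those two points are respected, the verification is a sequence of short matrix computations, each of the same flavor as the ZBC proof of Proposition \ref{P4}.
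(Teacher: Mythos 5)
Your proposal is correct and follows essentially the same route as the paper: the first two reductions are handled by repeating the ZBC argument of Proposition \ref{P4} with $\lambda$ replaced by $z$, and the third rests on the invariance of $\lambda(z)$ under $z\mapsto -q_0^2/z$, the sign flip of $k$ and hence of $\theta$, the normalization identity $\tfrac{\mathrm{i}}{z}Y_{\pm}(-q_0^2/z)\sigma_3Q_{\pm}=Y_{\pm}(z)$ via $(\sigma_3Q_{\pm})^2=q_0^2 I$, and uniqueness of the Jost solutions, exactly as in the paper. One tiny slip worth fixing: the intermediate identity should read $\sigma_2(\sigma_3 Q_{\pm})^{*}\sigma_2=-\sigma_3 Q_{\pm}$, the minus sign being absorbed by $(\mathrm{i}/z^{*})^{*}=-\mathrm{i}/z$, so your stated conclusion $\sigma_2 Y_{\pm}(z^{*})^{*}\sigma_2=Y_{\pm}(z)$ remains correct.
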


\begin{proof}
The proof of the first symmetry reduction and the second symmetry reduction are similar to proposition \eqref{P4}. We only need to replace $\lambda$ in proposition \eqref{P4} with $z$. Next, we mainly prove the third symmetry reduction.

Since $\lambda(z)=\frac12\big(z-\frac{q^2_0}{z}\big)$, one can derive $\lambda\big(-\frac{q^2_0}{z}\big)=\frac12\big(-\frac{q^2_0}{z}+z\big)=\lambda(z)$, we have
\begin{align}\nonumber
X(\lambda(z);x,t)=X\bigg(\lambda\bigg(-\frac{q^2_0}{z}\bigg);x,t\bigg)\Rightarrow X(z;x,t)=X\bigg(-\frac{q^2_0}{z};x,t\bigg),\\
T(\lambda(z);x,t)=T\bigg(\lambda\bigg(-\frac{q^2_0}{z}\bigg);x,t\bigg)\Rightarrow T(z;x,t)=T\bigg(-\frac{q^2_0}{z};x,t\bigg).
\end{align}
Then $\psi_{\pm}\bigg(-\frac{q^2_0}{z};x,t\bigg)C$ is also the solution of Eqs. \eqref{T2}-\eqref{T3}, for any $2\times2$ matrix $C$ independent of $x$ and $t$. Hence, according to \eqref{T70}, it is apparent that
\begin{align}\label{T83}
\psi_{\pm}\bigg(-\frac{q^2_0}{z};x,t\bigg)C\sim Y_{\pm}\bigg(-\frac{q^2_0}{z}\bigg)\mathrm{e}^{\mathrm{i}\theta\big(-\frac{q^2_0}{z};x,t\big)\sigma_3}C=Y_{\pm}\bigg(-\frac{q^2_0}{z}\bigg)\mathrm{e}^{-\mathrm{i}\theta(z;x,t)\sigma_3}C,\text{ as }x\rightarrow\pm\infty,
\end{align}
where $\theta\big(-\frac{q^2_0}{z};x,t\big)=-\theta(z;x,t)$, and noting that
\begin{align}\nonumber
\frac{\mathrm{i}}{z}Y_{\pm}\bigg(-\frac{q^2_0}{z}\bigg)\mathrm{e}^{-\mathrm{i}\theta(z;x,t)\sigma_3}\sigma_3Q_{\pm}=Y_{\pm}(z)\mathrm{e}^{\mathrm{i}\theta(z;x,t)\sigma_3},
\end{align}
which together with \eqref{T83} implies that $C=\frac{\mathrm{i}}{z}\sigma_3Q_{\pm}$, we then obtain the symmetric relation
\begin{align}\label{T84}
\psi_{\pm}(z;x,t)=\frac{\mathrm{i}}{z}\psi_{\pm}\bigg(-\frac{q^2_0}{z};x,t\bigg)\sigma_3Q_{\pm},
\end{align}
and by using the transformation \eqref{T71}, we have
\begin{align}\nonumber
\begin{split}
\mu_{\pm}(z;x,t)&=\frac{\mathrm{i}}{z}\mu_{\pm}\bigg(-\frac{q^2_0}{z};x,t\bigg)\mathrm{e}^{\mathrm{i}\theta\big(-\frac{q^2_0}{z};x,t\big)\sigma_3}\sigma_3Q_{\pm}\mathrm{e}^{-\mathrm{i}\theta(z;x,t)\sigma_3}=\frac{\mathrm{i}}{z}\mu_{\pm}\bigg(-\frac{q^2_0}{z};x,t\bigg)\sigma_3Q_{\pm}.
\end{split}
\end{align}

In addition, By using Eq. \eqref{T77}, we have
\begin{align}\label{T85}
\psi_{+}\bigg(-\frac{q^2_0}{z};x,t\bigg)=\psi_{-}\bigg(-\frac{q^2_0}{z};x,t\bigg)S\bigg(-\frac{q^2_0}{z}\bigg).
\end{align}
and according the symmetry of Jost solutions \eqref{T84}, one has following formulas:
\begin{align}\label{T86}
\psi_{+}\bigg(-\frac{q^2_0}{z};x,t\bigg)=\frac{z}{\mathrm{i}}\psi_{+}(z;x,t)(\sigma_3Q_{+})^{-1},\quad \psi_{-}\bigg(-\frac{q^2_0}{z};x,t\bigg)=\frac{z}{\mathrm{i}}\psi_{-}(z;x,t)(\sigma_3Q_{-})^{-1},
\end{align}
then substituting \eqref{T86} into Eq. \eqref{T85}, one can obtain
\begin{align}\label{T87}
\begin{split}
\psi_{+}(z;x,t)=\psi_{-}(z;x,t)(\sigma_3Q_{-})^{-1}S\bigg(-\frac{q^2_0}{z}\bigg)\sigma_3Q_{+},
\end{split}
\end{align}
combining \eqref{T87} with \eqref{T77}, and considering the same asymptotic behavior,
\begin{align}\nonumber
S(z),\quad S\bigg(-\frac{q^2_0}{z}\bigg)\thicksim I, \text{ as } x\rightarrow\pm\infty,
\end{align}
one can lead to
\begin{align}\label{T88}
S(z)=(\sigma_3Q_{-})^{-1}S\bigg(-\frac{q^2_0}{z}\bigg)\sigma_3Q_{+}.
\end{align}

According to the symmetry reduction \eqref{T88} of scattering matrix, we have
\begin{align}\nonumber
\begin{split}
\bigg(\begin{array}{cc} s_{11}(z) & s_{12}(z) \\s_{21}(z) & s_{22}(z) \end{array}\bigg)=\Bigg(\begin{array}{cc} \frac{q^*_{+}}{q^*_{-}}s_{22}\big(-\frac{q^2_0}{z}\big) & \frac{q_{+}}{q^*_{-}}s_{21}\big(-\frac{q^2_0}{z}\big) \\ \frac{q^*_{+}}{q_{-}}s_{12}\big(-\frac{q^2_0}{z}\big) & \frac{q_{+}}{q_{-}}s_{11}\big(-\frac{q^2_0}{z}\big) \end{array}\Bigg),
\end{split}
\end{align}
we then obtain these symmetric relations
\begin{align}\label{T89}
\begin{split}
&s_{11}(z)=\frac{q^*_{+}}{q^*_{-}}s_{22}\bigg(-\frac{q^2_0}{z}\bigg),\quad s_{12}(z)=\frac{q_{+}}{q^*_{-}}s_{21}\bigg(-\frac{q^2_0}{z}\bigg),\\ &s_{21}(z)=\frac{q^*_{+}}{q_{-}}s_{12}\bigg(-\frac{q^2_0}{z}\bigg),\quad s_{22}(z)=\frac{q_{+}}{q_{-}}s_{11}\bigg(-\frac{q^2_0}{z}\bigg).
\end{split}
\end{align}

Since $\rho(z)=\frac{s_{21}(z)}{s_{11}(z)}=\frac{\frac{q^*_{+}}{q_{-}}s_{12}\big(-\frac{q^2_0}{z}\big)}{\frac{q^*_{+}}{q^*_{-}}s_{22}\big(-\frac{q^2_0}{z}\big)}=\frac{q^*_{-}}{q_{-}}\tilde{\rho}\Big(-\frac{q^2_0}{z}\Big)$, the symmetry reduction of reflection coefficient is $\rho(z)=\frac{q^*_{-}}{q_{-}}\tilde{\rho}\Big(-\frac{q^2_0}{z}\Big)$. This completes the proof.

\end{proof}

\subsubsection{Asymptotic Behaviors}
In order to propose and solve the matrix RH problem for the inverse problem, it is necessary to discuss the asymptotic behaviors of the modified Jost solutions and scattering matrix as $z\rightarrow0$ and $z\rightarrow\infty$, which differ from the case of ZBCs. The asymptotic behaviors of the modified Jost solution are deduce with the aid of the general WKB expansions.

\begin{prop}\label{P7}
The asymptotic behaviors of the modified Jost solutions are exhibited as
\begin{align}\label{T90}
\mu_{\pm}(z;x,t)=\left\{
\begin{aligned}
&\frac{\mathrm{i}}{z}\mathrm{e}^{\mathrm{i}\nu_{\pm}(x,t)\sigma_3}\sigma_3Q_{\pm}+O(1),\text{ as } z\rightarrow0,\\
&\mathrm{e}^{\mathrm{i}\nu_{\pm}(x,t)\sigma_3}+O\Big(\frac{1}{z}\Big),\quad\quad\,\text{ as } z\rightarrow\infty,
\end{aligned}
\right.
\end{align}
where
\begin{align}\label{T91}
\nu_{\pm}(x,t)=\frac12\int^x_{\pm\infty}\big(|q(y,t)|^2-q^2_0\big)dy.
\end{align}

\end{prop}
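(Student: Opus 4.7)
The plan is to handle the two asymptotic regimes separately: the $z\to\infty$ expansion is obtained by a WKB series in inverse powers of $z$, adapting the argument of Proposition \ref{P2}, while the $z\to 0$ expansion follows as an algebraic consequence of the third symmetry reduction in Proposition \ref{P5}. Because the involution $z\mapsto -q_0^2/z$ sends a neighbourhood of $z=\infty$ to a neighbourhood of $z=0$, once the large-$z$ behaviour is established the small-$z$ behaviour is essentially automatic.

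For the large-$z$ step, I would first rewrite the spatial Lax equation satisfied by $\mu_\pm$ in the form
\begin{align*}
\mu_{\pm,x}+\mathrm{i}\lambda k(\lambda)[\mu_\pm,\sigma_3]=\mathrm{i}\big(\lambda^2-\lambda k(\lambda)\big)\sigma_3\mu_\pm+\lambda Q\mu_\pm,
\end{align*}
which follows from \eqref{T2} together with the identity $\theta_x=\lambda k$. Using the uniformization identities $\lambda=\tfrac12(z-q_0^2/z)$ and $k=\tfrac12(z+q_0^2/z)$ one has $\lambda k=\tfrac{z^2}{4}-\tfrac{q_0^4}{4z^2}$ and $\lambda^2-\lambda k=-\tfrac{q_0^2}{2}+\tfrac{q_0^4}{2z^2}$. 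I would then insert the formal expansion $\mu_\pm=\mu_\pm^{[0]}+z^{-1}\mu_\pm^{[1]}+z^{-2}\mu_\pm^{[2]}+O(z^{-3})$ and balance powers of $z$. The $O(z^2)$ balance forces $[\mu_\pm^{[0]},\sigma_3]=0$, so $\mu_\pm^{[0]}$ is diagonal; the $O(z)$ balance yields $[\mu_\pm^{[1]},\sigma_3]=-2\mathrm{i}Q\mu_\pm^{[0]}$, which determines the off-diagonal part of $\mu_\pm^{[1]}$; and the diagonal part of the $O(1)$ balance reduces, after substituting the expression for $(\mu_\pm^{[1]})^{\mathrm{off}}$, to the first-order ODE
\begin{align*}
\mu_{\pm,x}^{[0]}=\tfrac{\mathrm{i}}{2}\big(|q|^2-q_0^2\big)\sigma_3\,\mu_\pm^{[0]}.
\end{align*}
The boundary condition $\mu_\pm(z;\pm\infty,t)=Y_\pm(z)\to I$ as $x\to\pm\infty$ (at leading order in $1/z$) forces $\mu_\pm^{[0]}(\pm\infty,t)=I$, and integration yields $\mu_\pm^{[0]}=\mathrm{e}^{\mathrm{i}\nu_\pm(x,t)\sigma_3}$ with $\nu_\pm$ as in \eqref{T91}, which is precisely the $z\to\infty$ branch of \eqref{T90}.

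For the $z\to 0$ step, I would invoke the third symmetry reduction
\begin{align*}
\mu_\pm(z;x,t)=\frac{\mathrm{i}}{z}\,\mu_\pm\!\Big(-\frac{q_0^2}{z};x,t\Big)\,\sigma_3Q_\pm
\end{align*}
from Proposition \ref{P5}. As $z\to 0$ the argument $-q_0^2/z$ tends to infinity, so the just-obtained large-$z$ asymptotic gives $\mu_\pm(-q_0^2/z;x,t)=\mathrm{e}^{\mathrm{i}\nu_\pm(x,t)\sigma_3}+O(z)$, and multiplying on the right by $\frac{\mathrm{i}}{z}\sigma_3Q_\pm$ produces the claimed $\frac{\mathrm{i}}{z}\mathrm{e}^{\mathrm{i}\nu_\pm\sigma_3}\sigma_3Q_\pm+O(1)$ behaviour, completing \eqref{T90}. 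The main obstacle I anticipate is the careful bookkeeping in the WKB expansion: because $\lambda^2-\lambda k(\lambda)$ contributes a nonzero $O(1)$ term $-q_0^2/2$ that was absent in the ZBC case, it is easy to miss the shift $-q_0^2$ inside the integrand of $\nu_\pm$. This shift is precisely what guarantees absolute integrability of $|q|^2-q_0^2$ under the NZBCs \eqref{T59} and hence the well-posedness of $\nu_\pm(x,t)$ as a genuine function of $(x,t)$.
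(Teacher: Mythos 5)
Your argument is correct. For the $z\to\infty$ branch you follow essentially the same route as the paper: a WKB expansion in powers of $1/z$ whose $O(z^2)$, $O(z)$ and $O(1)$ balances force $\mu_\pm^{[0]}$ to be diagonal, give $(\mu_\pm^{[1]})^{\mathrm{off}}=\mathrm{i}\sigma_3Q\mu_\pm^{[0]}$, and reduce to $\mu^{[0]}_{\pm,x}=\frac{\mathrm{i}}{2}(|q|^2-q_0^2)\sigma_3\mu^{[0]}_\pm$; the only cosmetic difference is that you expand a rewritten form of the original Lax pair \eqref{T2} (via $\theta_x=\lambda k$) rather than the equivalent Lax pair \eqref{T73}, and your identities $\lambda k=\frac{z^2}{4}-\frac{q_0^4}{4z^2}$ and $\lambda^2-\lambda k=-\frac{q_0^2}{2}+\frac{q_0^4}{2z^2}$ check out, with the $-q_0^2/2$ term indeed accounting for the subtraction of $q_0^2$ in \eqref{T91}. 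Where you genuinely diverge is the $z\to0$ branch: the paper runs a second, independent WKB expansion starting at order $z^{-1}$ (balances at $O(z^{-4})$, $O(z^{-3})$, $O(z^{-2})$) and then fixes the constant matrix $C=\mathrm{i}\sigma_3Q_\pm$ by matching $z\mu_\pm(z;x,t)$ against $zY_\pm(z)$ as $x\to\pm\infty$, whereas you obtain the small-$z$ behaviour for free from the third symmetry reduction \eqref{T82}. Your route is shorter and produces the prefactor $\frac{\mathrm{i}}{z}\sigma_3Q_\pm$ automatically; it is legitimate because Proposition \ref{P5} is proved before and independently of the $z$-asymptotics (its proof uses only uniqueness of the Jost solutions with prescribed behaviour as $x\to\pm\infty$), and because the involution $z\mapsto-q_0^2/z$ swaps $D^+$ and $D^-$ in exactly the way required by the column swap induced by right multiplication by the off-diagonal matrix $\sigma_3Q_\pm$. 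The paper's double-expansion is, by contrast, self-contained and serves as an independent consistency check that the same $\nu_\pm$ governs both limits; either presentation is acceptable.
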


\begin{proof}
First, we consider the following asymptotic WKB expansion of the modified Jost solutions $\mu_{\pm}(z;x,t)$ as $z\rightarrow0$
\begin{align}\label{T92}
\mu_{\pm}(z;x,t)=\sum\limits^n_{i=-1}\mu_{\pm}^{[i]}(x,t)z^{i}+O\big(z^{n+1}\big)=\frac{\mu^{[-1]}_{\pm}(x,t)}{z}+\mu^{[0]}_{\pm}(x,t)+\mu^{[1]}_{\pm}(x,t)z+\cdots.
\end{align}

Substituting \eqref{T92} into the equivalent Lax pair \eqref{T73}, and comparing the coefficients of different powers of $z$, we have
\begin{align}\nonumber
O(z^{-4}):\Big(\mu^{[-1]}_{\pm}(x,t)\Big)^{\mathrm{diag}}=0,
\end{align}
where $\Big(\mu^{[-1]}_{\pm}(x,t)\Big)^{\mathrm{diag}}$ represent the diagonal parts of $\mu^{[-1]}_{\pm}(x,t)$.
\begin{align}\nonumber
\begin{split}
O(z^{-3}):\Big(\mu^{[0]}_{\pm}(x,t)\Big)^{\mathrm{diag}}=\mathrm{diag}\bigg(\frac{-\mathrm{i}q}{q^2_0}\mu^{[-1]}_{\pm21}(x,t),\frac{-\mathrm{i}q^*}{q^2_0}\mu^{[-1]}_{\pm12}(x,t)\bigg),
\end{split}
\end{align}
where $\mu^{[-1]}_{\pm21}(x,t)$ and $\mu^{[-1]}_{\pm12}(x,t)$ represent the second row and first column element and the first row and second column element of matrix $\mu^{[-1]}_{\pm}(x,t)$, respectively. Moreover, $$\mathrm{diag}\bigg(\frac{-\mathrm{i}q}{q^2_0}\mu^{[-1]}_{\pm21}(x,t),\frac{-\mathrm{i}q^*}{q^2_0}\mu^{[-1]}_{\pm12}(x,t)\bigg)=\Bigg(\begin{array}{cc} \frac{-\mathrm{i}q}{q^2_0}\mu^{[-1]}_{\pm21}(x,t) & 0 \\ 0 & \frac{-\mathrm{i}q^*}{q^2_0}\mu^{[-1]}_{\pm12}(x,t) \end{array}\Bigg).$$
\begin{align}\nonumber
\begin{split}
O(z^{-2}):\Big(\mu^{[-1]}_{\pm}(x,t)\Big)_x=\frac12\mathrm{i}\big(|q(x,t)|^2-q^2_0\big)\sigma_3\mu^{[-1]}_{\pm}(x,t).
\end{split}
\end{align}
Therefore, we obtain that $\mu^{[-1]}_{\pm}(x,t)=\mathrm{e}^{\int^x_{\pm\infty}\frac{\mathrm{i}}{2}\big(|q(y,t)|^2-q^2_0\big)dy\sigma_3}C$, where $C$ is off-diagonal constant matrix. Let $\nu_{\pm}(x,t)$ satisfy Eq. \eqref{T91}, one can obtain $\mu^{[-1]}_{\pm}(x,t)=\mathrm{e}^{\mathrm{i}\nu_{\pm}(x,t)\sigma_3}C$ and $\lim\limits_{x\rightarrow\pm\infty}\mu^{[-1]}_{\pm}(x,t)=C$. According to the expansion \eqref{T92}, we have
\begin{align}\nonumber
\lim\limits_{x\rightarrow\pm\infty}z\mu_{\pm}(z;x,t)=zY_{\pm}(z)=z\Big(I+\frac{\mathrm{i}}{z}\sigma_3Q_{\pm}\Big)=\lim\limits_{x\rightarrow\pm\infty}\Big(\mu^{[-1]}_{\pm}(x,t)+z\mu^{[0]}_{\pm}(x,t)+\cdots\Big),
\end{align}
so we have $C=\mathrm{i}\sigma_3Q_{\pm}$ and $\mu^{[-1]}_{\pm}(x,t)=\mathrm{i}\mathrm{e}^{\mathrm{i}\nu_{\pm}(x,t)\sigma_3}\sigma_3Q_{\pm}$. Finally, we obtain the asymptotic behaviors
\begin{align}\nonumber
\mu_{\pm}(z;x,t)=\frac{\mathrm{i}}{z}\mathrm{e}^{\mathrm{i}\nu_{\pm}(x,t)\sigma_3}\sigma_3Q_{\pm}+O(1),\text{ as } z\rightarrow0.
\end{align}

Next, we consider the asymptotic expansions of the Jost solutions $\mu_{\pm}(z;x,t)$ as $z\rightarrow\infty$
\begin{align}\label{T93}
\mu_{\pm}(z;x,t)=\sum\limits^n_{i=0}\frac{\mu_{\pm}^{[i]}(x,t)}{z^i}+O\bigg(\frac{1}{z^{n+1}}\bigg)=\mu^{[0]}_{\pm}(x,t)+\frac{\mu^{[1]}_{\pm}(x,t)}{z}+\frac{\mu^{[2]}_{\pm}(x,t)}{z^2}+\cdots.
\end{align}

By utilizing same way, substituting \eqref{T93} into the equivalent Lax pair \eqref{T73}, and comparing the coefficients of different powers of $z$, we have
\begin{align}\nonumber
O(z^{2}):\Big(\mu^{[0]}_{\pm}(x,t)\Big)^{\mathrm{off}}=0,
\end{align}
where $\Big(\mu^{[0]}_{\pm}(x,t)\Big)^{\mathrm{off}}$ represent the off-diagonal parts of $\mu^{[0]}_{\pm}(x,t)$.

\begin{align}\nonumber
\begin{split}
O(z):\Big(\mu^{[1]}_{\pm}(x,t)\Big)^{\mathrm{off}}=\mathrm{off}\Big(\mathrm{i}q\mu^{[0]}_{\pm22}(x,t),\mathrm{i}q^*\mu^{[0]}_{\pm11}(x,t)\Big),
\end{split}
\end{align}
where $\mathrm{off}\Big(\mathrm{i}q\mu^{[0]}_{\pm22}(x,t),\mathrm{i}q^*\mu^{[0]}_{\pm11}(x,t)\Big)=\Bigg(\begin{array}{cc} 0 & \mathrm{i}q\mu^{[0]}_{\pm22}(x,t) \\ \mathrm{i}q^*\mu^{[0]}_{\pm11}(x,t) & 0 \end{array}\Bigg)$.

\begin{align}\nonumber
\begin{split}
O(1):\Big(\mu^{[0]}_{\pm}(x,t)\Big)_x=\frac{\mathrm{i}}{2}\big(|q(x,t)|^2-q^2_0\big)\sigma_3.
\end{split}
\end{align}

Therefore, we obtain that $\mu^{[0]}_{\pm}(x,t)=\widetilde{C}\mathrm{e}^{\int^x_{\pm\infty}\frac{\mathrm{i}}{2}\big(|q(y,t)|^2-q^2_0\big)dy\sigma_3}$, where $\widetilde{C}$ is diagonal constant matrix. Let $\nu_{\pm}(x,t)$ satisfy Eq. \eqref{T91}, one can obtain $\mu^{[0]}_{\pm}(x,t)=\widetilde{C}\mathrm{e}^{\mathrm{i}\nu_{\pm}(x,t)\sigma_3}$ and $\lim\limits_{x\rightarrow\pm\infty}\mu^{[0]}_{\pm}(x,t)=\widetilde{C}$. According to the expansion \eqref{T93}, we have
\begin{align}\nonumber
\lim\limits_{\substack{x\rightarrow\pm\infty\\ z\rightarrow\infty}}\mu_{\pm}(z;x,t)=\lim\limits_{z\rightarrow\infty}Y_{\pm}(z)=\lim\limits_{z\rightarrow\infty}\bigg(I+\frac{\mathrm{i}}{z}\sigma_3Q_{\pm}\bigg)=I,
\end{align}
so we have $\widetilde{C}=I$ and $\mu^{[0]}_{\pm}(x,t)=\mathrm{e}^{\mathrm{i}\nu_{\pm}(x,t)\sigma_3}$. Finally, we get the asymptotic behaviors
\begin{align}\nonumber
\mu_{\pm}(z;x,t)=\mathrm{e}^{\mathrm{i}\nu_{\pm}(x,t)\sigma_3}+O\bigg(\frac{1}{z}\bigg),\text{ as } z\rightarrow\infty.
\end{align}

This completes the proof.
\end{proof}

The asymptotic behaviors of the scattering matrix can be directly yielded by exploiting the determinant representations of the scattering matrix and the asymptotic behaviors of the modified Jost solutions.
\begin{prop}\label{PP7}
The asymptotic behaviors of the scattering matrix are as follows
\begin{align}\label{T94}
S(z)=\left\{
\begin{aligned}
&\mathrm{diag}\bigg(\frac{q_{-}}{q_{+}},\frac{q_{+}}{q_{-}}\bigg)\mathrm{e}^{\mathrm{i}\nu(t)\sigma_3}+O(z),\text{ as } z\rightarrow0,\\
&\mathrm{e}^{-\mathrm{i}\nu(t)\sigma_3}+O\bigg(\frac{1}{z}\bigg),\quad\quad\quad\quad\quad\text{ as } z\rightarrow\infty,
\end{aligned}
\right.
\end{align}
where
\begin{align}\label{T95}
\nu(t)=\frac12\int^{+\infty}_{-\infty}\big(|q(y,t)|^2-q^2_0\big)dy.
\end{align}

\end{prop}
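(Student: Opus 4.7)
The plan is to exploit the scattering relation $\psi_+(z;x,t)=\psi_-(z;x,t)S(z)$, which, expressed in terms of the modified Jost solutions via $\psi_\pm=\mu_\pm\mathrm{e}^{\mathrm{i}\theta\sigma_3}$, rearranges to
\begin{align*}
S(z)=\mathrm{e}^{-\mathrm{i}\theta(z;x,t)\sigma_3}\big[\mu_-^{-1}(z;x,t)\mu_+(z;x,t)\big]\mathrm{e}^{\mathrm{i}\theta(z;x,t)\sigma_3}.
\end{align*}
Although the bracket on the right depends on $(x,t)$ while $S(z)$ does not, the leading-order term in each $z$-limit must come out diagonal; then the conjugation by $\mathrm{e}^{\pm\mathrm{i}\theta\sigma_3}$ acts trivially and yields an $(x,t)$-independent answer, exactly as required. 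I would compute each asymptotic by substituting the expansions supplied by Proposition \ref{P7}.

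For $z\to\infty$, plug $\mu_{\pm}(z;x,t)=\mathrm{e}^{\mathrm{i}\nu_\pm(x,t)\sigma_3}+O(1/z)$ into the bracket, obtaining $\mu_-^{-1}\mu_+=\mathrm{e}^{-\mathrm{i}\nu_-\sigma_3}\mathrm{e}^{\mathrm{i}\nu_+\sigma_3}+O(1/z)=\mathrm{e}^{-\mathrm{i}\nu(t)\sigma_3}+O(1/z)$, where $\nu(t)=\nu_-(x,t)-\nu_+(x,t)=\tfrac12\int_{-\infty}^{+\infty}(|q|^2-q_0^2)dy$ is manifestly independent of $x$. Since this leading matrix is diagonal, conjugation by $\mathrm{e}^{\pm\mathrm{i}\theta\sigma_3}$ collapses and yields $S(z)=\mathrm{e}^{-\mathrm{i}\nu(t)\sigma_3}+O(1/z)$.

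For $z\to 0$, substitute $\mu_\pm=\tfrac{\mathrm{i}}{z}\mathrm{e}^{\mathrm{i}\nu_\pm\sigma_3}\sigma_3 Q_\pm+O(1)$. Since $\mathrm{det}\big(\mathrm{e}^{\mathrm{i}\nu_-\sigma_3}\sigma_3 Q_-\big)=-q_0^2\neq0$, the leading singular part is invertible and $\mu_-^{-1}=\tfrac{z}{\mathrm{i}}(\sigma_3 Q_-)^{-1}\mathrm{e}^{-\mathrm{i}\nu_-\sigma_3}+O(z^2)$, so the $1/z$ singularities cancel in $\mu_-^{-1}\mu_+\to(\sigma_3 Q_-)^{-1}\mathrm{e}^{-\mathrm{i}\nu\sigma_3}\sigma_3 Q_+$. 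Two elementary identities then finish the calculation: the commutation $\mathrm{e}^{-\mathrm{i}\nu\sigma_3}(\sigma_3 Q_+)=(\sigma_3 Q_+)\mathrm{e}^{\mathrm{i}\nu\sigma_3}$, which holds because $\sigma_3 Q_+$ is anti-diagonal, and the NZBC identity $q_\pm^*=q_0^2/q_\pm$, which reduces $(\sigma_3 Q_-)^{-1}(\sigma_3 Q_+)$ to $\mathrm{diag}(q_-/q_+,q_+/q_-)$. Assembling these yields $\mu_-^{-1}\mu_+\to\mathrm{diag}(q_-/q_+,q_+/q_-)\mathrm{e}^{\mathrm{i}\nu(t)\sigma_3}$, which is again diagonal, so the $\mathrm{e}^{\pm\mathrm{i}\theta\sigma_3}$ conjugation is trivial and the stated $z\to 0$ asymptotic follows.

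Finally, the $t$-independence of $\nu$ needs to be confirmed; I would mimic the end of the proof of Proposition \ref{PP2} and insert the WKB ansatz into the $t$-part \eqref{T74} of the equivalent Lax pair, matching the $O(1)$ coefficient to derive $\mu_{\pm,t}^{[0]}=0$ and hence $\nu_{\pm,t}=\nu_t=0$; equivalently, differentiate $|q|^2-q_0^2$ using \eqref{T1}, integrate in $x$, and observe that every boundary term involves $|q_\pm|^2-q_0^2=0$ and hence vanishes. The main obstacle is the $z\to 0$ bookkeeping: the leading parts of $\mu_\pm$ are anti-diagonal, so a priori the leading part of $\mu_-^{-1}\mu_+$ could be any $2\times 2$ matrix, and it is only after simultaneously using the commutation identity and the constraint $|q_\pm|=q_0$ that the leading term turns out to be the diagonal matrix $\mathrm{diag}(q_-/q_+,q_+/q_-)\mathrm{e}^{\mathrm{i}\nu\sigma_3}$ demanded by the proposition.
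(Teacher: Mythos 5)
Your proof is correct and arrives at the stated asymptotics, but by a route different from the paper's. The paper never inverts $\mu_{-}$: it works entry by entry from the Cramer's-rule representations $s_{11}(z)=\det(\psi_{+1},\psi_{-2})/\eta$, etc., in which the factors $\mathrm{e}^{\pm\mathrm{i}\theta}$ cancel identically inside each $2\times2$ determinant, and then reads off the limits from Proposition \ref{P7} together with $1/\eta=z^{2}/q_{0}^{2}+O(z^{4})$ (resp.\ $1/\eta=1+O(z^{-2})$). Your formulation $S=\mathrm{e}^{-\mathrm{i}\theta\sigma_3}\mu_{-}^{-1}\mu_{+}\mathrm{e}^{\mathrm{i}\theta\sigma_3}$ is equivalent but needs two small remarks you only half-make: $\mu_{-}$ is invertible in both limits because $\det\mu_{-}=\eta=1+q_{0}^{2}/z^{2}$ is nonzero there, and the conjugation by $\mathrm{e}^{\pm\mathrm{i}\theta\sigma_3}$ controls not only the (diagonal) leading term but also the error term because $\theta$ is real for $z\in\Sigma$ (this is exactly the condition $\mathrm{Im}(\lambda k)=0$ defining $\Sigma$). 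Your $z\to0$ bookkeeping --- cancellation of the $1/z$ singularities, the anti-diagonal commutation $\mathrm{e}^{-\mathrm{i}\nu\sigma_3}\sigma_3Q_{+}=\sigma_3Q_{+}\mathrm{e}^{\mathrm{i}\nu\sigma_3}$, and $q_{\pm}^{*}=q_{0}^{2}/q_{\pm}$ --- is clean and lands exactly on the stated $\mathrm{diag}(q_{-}/q_{+},q_{+}/q_{-})\mathrm{e}^{\mathrm{i}\nu\sigma_3}$; note that the paper's own displayed computation carries a sign slip in the exponents of the off-diagonal entries of $\mu_{\pm}$ and ends with $\tfrac{q_{-}}{q_{+}}\mathrm{e}^{-\mathrm{i}\nu}$, which does not match \eqref{T94}, so your version is the internally consistent one (it also agrees with the third symmetry $s_{11}(z)=\tfrac{q_{+}^{*}}{q_{-}^{*}}s_{22}(-q_{0}^{2}/z)$ evaluated at $z\to0$).

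The one place where you genuinely part ways with the paper is the time dependence of $\nu$. The paper closes its proof by matching the $O(1)$ coefficient of the $t$-part \eqref{T74} and concludes $\nu_t\neq0$, i.e.\ that $\nu$ really is a function of $t$ (hence the notation $\nu(t)$ in the statement). Consequently your first suggested route, ``mimic Proposition \ref{PP2} and derive $\mu^{[0]}_{\pm,t}=0$,'' does not go through verbatim: in the NZBC case the $O(1)$ equation acquires many additional commutator terms and no longer reduces to $\mu^{[0]}_{\pm,t}=0$. Your fallback argument, however, is sound and is the decisive one: writing $(|q|^{2})_t$ as a total $x$-derivative via \eqref{T1} and using $|q_{\pm}|=q_{0}$ together with the decay of $q_x,q_{xx}$ makes the boundary contributions cancel, so $\nu_t=0$; this is moreover forced by the $(x,t)$-independence of $S(z)$, whose $z\to\infty$ limit $\mathrm{e}^{-\mathrm{i}\nu\sigma_3}$ cannot then depend on $t$. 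You should state explicitly that your conclusion ($\nu$ constant) differs from the paper's and rest it on the conservation-law computation rather than on the WKB matching.
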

\begin{proof}
From the relationship \eqref{T71} between Jost solutions and modified Jost solutions, the determinant representations \eqref{T78} of the scattering matrix and the asymptotic behaviors \eqref{T90}, for $z\rightarrow0$, we have
\begin{align}\nonumber
\begin{split}
s_{11}(z)=&\frac{\mathrm{det}\big(\psi_{+1}(z;x,t),\psi_{-2}(z;x,t)\big)}{\eta}=\frac{\mathrm{det}\big(\mu_{+1}(z;x,t),\mu_{-2}(z;x,t)\big)}{1+\frac{q^2_0}{z^2}}=\\
&\mathrm{det}\Bigg(\begin{array}{cc} O(1) & \frac{\mathrm{i}}{z}q_{-}\mathrm{e}^{-\mathrm{i}\nu_{-}(x,t)}+O(1) \\ \frac{\mathrm{i}}{z}q^*_{+}\mathrm{e}^{\mathrm{i}\nu_{+}(x,t)}+O(1) & O(1) \end{array}\Bigg) \bigg(\frac{z^2}{q^2_0}-\frac{z^4}{q^4_0}+\cdots\bigg)\\
=&\bigg\{\frac{q_{-}q^*_{+}}{z^2}\mathrm{e}^{-\mathrm{i}[\nu_{-}(x,t)-\nu_{+}(x,t)]}+O(z)\bigg\}\bigg(\frac{z^2}{q^2_0}-\frac{z^4}{q^4_0}+\cdots\bigg)=\frac{q_{-}}{q_{+}}\mathrm{e}^{-\mathrm{i}\nu(t)}+O(z),
\end{split}
\end{align}
where $\frac{1}{1+\frac{q^2_0}{z^2}}=\frac{z^2}{z^2+q^2_0}=\frac{z^2}{q^2_0}\cdot\frac{1}{1-\big(-\frac{z^2}{q^2_0}\big)}=\frac{z^2}{q^2_0}-\frac{z^4}{q^4_0}+\cdots$, and $\nu(t)=\nu_{-}(x,t)-\nu_{+}(x,t)=\frac12\int^{+\infty}_{-\infty}\big(|q(y,t)|^2-q^2_0\big)dy$. Similarly, one can obtain
\begin{align}\nonumber
s_{12}=O(z),\quad s_{21}=O(z),\quad s_{22}=\frac{q_{+}}{q_{-}}\mathrm{e}^{\mathrm{i}\nu(t)}+O(z),
\end{align}
so we have the asymptotic behavior \eqref{T94} as $z\rightarrow0$.

When $z\rightarrow\infty$, by using the same way, we have

\begin{align}\nonumber
\begin{split}
s_{11}(z)&=\frac{\mathrm{det}\big(\psi_{+1}(z;x,t),\psi_{-2}(z;x,t)\big)}{\eta}=\frac{\mathrm{det}\big(\mu_{+1}(z;x,t),\mu_{-2}(z;x,t)\big)}{1+\frac{q^2_0}{z^2}}\\
&=\mathrm{det}\bigg(\begin{array}{cc} \mathrm{e}^{\mathrm{i}\nu_{+}(x,t)}+O(z^{-1}) & O(z^{-1}) \\ O(z^{-1}) & \mathrm{e}^{-\mathrm{i}\nu_{-}(x,t)}+O(z^{-1}) \end{array}\bigg)\bigg(1-\frac{q^2_0}{z^2}+\frac{q^4_0}{z^4}+\cdots\bigg)\\
&=\big\{\mathrm{e}^{-\mathrm{i}[\nu_{-}(x,t)-\nu_{+}(x,t)]}+O(z^{-1})\big\}\bigg(1-\frac{q^2_0}{z^2}+\frac{q^4_0}{z^4}+\cdots\bigg)=\mathrm{e}^{-\mathrm{i}\nu(t)}+O(z^{-1}),
\end{split}
\end{align}

where $\frac{1}{1+\frac{q^2_0}{z^2}}=\frac{1}{1-\big(-\frac{q^2_0}{z^2}\big)}=1-\frac{q^2_0}{z^2}+\frac{q^4_0}{z^4}+\cdots$, and
\begin{align}\nonumber
s_{12}=O(z^{-1}),\quad s_{21}=O(z^{-1}),\quad s_{22}=\mathrm{e}^{\mathrm{i}\nu(t)}+O(z^{-1}),
\end{align}
Hence, we also obtain the asymptotic behavior \eqref{T94} as $z\rightarrow\infty$.

Moreover, on the one hand, substituting the WKB expansion $$\mu_{\pm}(\lambda;x,t)=\sum\limits^n_{i=-1}\mu_{\pm}^{[i]}(x,t)z^{i}+O(z^{n+1})(\text{ as }z\rightarrow0)$$ into the time part of equivalent Lax pairs \eqref{T74} and matching the $O(\lambda^i)(i=-6,\cdots,0)$ in order, combining the
\begin{align}\nonumber
\begin{split}
O(1):&-\frac{13}{16}\mathrm{i}q_0^{10}\big[\mu^{[2]}_{\pm}(x,t),\sigma_3\big]+\frac{39}{16}q^8_0\sigma_3Q_{\pm}\big[\mu^{[1]}_{\pm}(x,t),\sigma_3\big]+\frac{39}{16}\mathrm{i}q^8_0\big[\mu^{[0]}_{\pm}(x,t),\sigma_3\big]-\\
&\frac{27}{16}q^6_0\sigma_3Q_{\pm}\big[\mu^{[-1]}_{\pm}(x,t),\sigma_3\big]+\mathrm{i}\sigma_3Q_{\pm}\mu^{[-1]}_{\pm,t}(x,t)=\mathrm{i}\sigma_3Q_{\pm}(T-T_{\pm})\mu^{[-1]}_{\pm}(x,t)
\end{split}
\end{align}
and $\Big(\mu^{[-1]}_{\pm}(x,t)\Big)^{\mathrm{diag}}=0$, one can yield $\mu^{[-1]}_{\pm,t}(x,t)\neq0$, $\nu_{\pm,t}(x,t)\neq0$ and $\nu_t\neq0$ as $x\rightarrow\pm\infty$. That is, $\nu$ is dependent on the variable $t$.

On the other hand, substituting the WKB expansion $$\mu_{\pm}(\lambda;x,t)=\sum\limits^n_{i=0}\frac{\mu_{\pm}^{[i]}(x,t)}{z^i}+O\Big(\frac{1}{z^{n+1}}\Big)(\text{ as }z\rightarrow\infty)$$ into the time part of equivalent Lax pairs \eqref{T74} and matching the $O(\lambda^i)(i=6,\cdots,0)$ in order, combining the
\begin{align}\nonumber
\begin{split}
O(1):&\frac{1}{16}\mathrm{i}\big[\mu^{[6]}_{\pm}(x,t),\sigma_3\big]+\frac{1}{16}\sigma_3Q_{\pm}\big[\mu^{[5]}_{\pm}(x,t),\sigma_3\big]-\frac{3}{16}\mathrm{i}q^2_0\big[\mu^{[4]}_{\pm}(x,t),\sigma_3\big]-\\
&\frac{13}{16}q^2_0\sigma_3Q_{\pm}\big[\mu^{[3]}_{\pm}(x,t),\sigma_3\big]+\frac52\mathrm{i}q^4_0\big[\mu^{[2]}_{\pm}(x,t),\sigma_3\big]+\frac52q^4_0\sigma_3Q_{\pm}\big[\mu^{[1]}_{\pm}(x,t),\sigma_3\big]-\\
&\frac{39}{16}\mathrm{i}q^6_0\big[\mu^{[0]}_{\pm}(x,t),\sigma_3\big]+\mu^{[0]}_{\pm,t}(x,t)=(T-T_{\pm})\mu^{[0]}_{\pm}(x,t)
\end{split}
\end{align}
and $\Big(\mu^{[0]}_{\pm}(x,t)\Big)^{\mathrm{off}}=0$, one can yield $\mu^{[0]}_{\pm,t}(x,t)\neq0$, $\nu_{\pm,t}(x,t)\neq0$ and $\nu_t\neq0$ as $x\rightarrow\pm\infty$. That is, $\nu$ also is dependent on the variable $t$. In summary, $\nu$ is a function about time t. Proof complete.

\end{proof}

\subsection{Inverse Problem with NZBCs and Double Poles}

\subsubsection{Discrete Spectrum with NZBCs and Double Zeros}
In this section, we consider the case of $s_{11}(z)$ with double zeros and suppose that $s_{11}(z)$ has $N_1$ double zeros in $Z_0=\{z\in\mathbb{C}|\mathrm{Re}z>0,\mathrm{Im}z>0,|z|>q_0\}$ denoted by $z_n$, and $N_2$ double zeros on the circle $W_0=\{z\in\mathbb{C}|z=q_0\mathrm{e}^{\mathrm{i}\phi},0<\phi<\frac{\pi}{2}\}$ denoted by $\omega_m$, that is, $s_{11}(z_0)=s'_{11}(z_0)=0$ and $s''_{11}(z_0)\neq0$ if $z_0$ is a double zero of $s_{11}(z)$. From the symmetries of the scattering matrix presented in proposition \eqref{P5}, the $4(2N_1+N_2)$ discrete spectrum can be given by the set
\begin{align}\label{T96}
Z=\Bigg\{\pm z_n,\pm z^*_n, \pm \frac{q_0^2}{z_n},\pm \frac{q^2_0}{z^*_n}\Bigg\}^{N_1}_{n=1}\bigcup\big\{\pm\omega_m,\pm\omega^*_m\big\}^{N_2}_{m=1},
\end{align}
the distributions are shown in Figure \ref{F4}. When a given $z_0\in Z\cap D^{+}$, one can obtain that $\psi_{+1}(z_0;x,t)$ and $\psi_{-2}(z_0;x,t)$ are linearly dependent by combining Eq. \eqref{T78} and $s_{11}(z_0)=0$. Similarly, when a given $z_0\in Z\cap D^{-}$, one can obtain that $\psi_{+2}(z_0;x,t)$ and $\psi_{-1}(z_0;x,t)$ are linearly dependent by combining Eq. \eqref{T78} and $s_{22}(z_0)=0$. For convenience, we introduce the norming constant $b[z_0]$ such that
\begin{align}\label{T97}
\begin{split}
\psi_{+1}(z_0;x,t)=b[z_0]\psi_{-2}(z_0;x,t),\text{ as }z_0\in Z\cap D^{+},\\
\psi_{+2}(z_0;x,t)=b[z_0]\psi_{-1}(z_0;x,t),\text{ as }z_0\in Z\cap D^{-}.
\end{split}
\end{align}

For a given $z_0\in Z\cap D^{+}$, according to $s_{11}(z_0)=\frac{\mathrm{det}\big(\psi_{+1}(z;x,t),\psi_{-2}(z;x,t)\big)}{\eta}$ in Eq. \eqref{T78}, and taking derivative respect to $z$ $(\text{here }z=z_0)$ on both sides of this equation, combining $s_{11}(z_0)=0,\, s'_{11}(z_0)=0$ and Eq. \eqref{T97}, we have
\begin{align}\nonumber
\begin{split}
\mathrm{det}(\psi'_{+1}(z_0;x,t)-b[z_0]\psi'_{-2}(z_0;x,t),\psi_{-2}(z_0;x,t))=0,
\end{split}
\end{align}
it is evidence that $\psi'_{+1}(z_0;x,t)-b[z_0]\psi'_{-2}(z_0;x,t)$ and $\psi_{-2}(z_0;x,t)$ are linearly dependent. Similarly, when a given $z_0\in Z\cap D^{-}$, one can obtain that $\psi'_{+2}(z_0;x,t)-b[z_0]\psi'_{-1}(z_0;x,t)$ and $\psi_{-1}(z_0;x,t)$ are linearly dependent by combining Eq. \eqref{T78} and $s'_{22}(z_0)=0$. For convenience, we define another norming constant $d[z_0]$ such that
\begin{align}\label{T98}
\begin{split}
&\psi'_{+1}(z_0;x,t)-b[z_0]\psi'_{-2}(z_0;x,t)=d[z_0]\psi_{-2}(z_0;x,t),\text{ as }z_0\in Z\cap D^{+},\\
&\psi'_{+2}(z_0;x,t)-b[z_0]\psi'_{-1}(z_0;x,t)=d[z_0]\psi_{-1}(z_0;x,t),\text{ as }z_0\in Z\cap D^{-}.
\end{split}
\end{align}

On the other hand, we notice that $\psi_{+1}(z;x,t)$ and $s_{11}(z)$ are analytic on $D^{+}$. Suppose $z_0$ is the double zeros of $s_{11}(z)$, let $\psi_{+1}(z;x,t)$ and $s_{11}(z)$ carry out Taylor expansion at $z=z_0$, we have
\begin{footnotesize}
\begin{align}\nonumber
\begin{split}
\frac{\psi_{+1}(z;x,t)}{s_{11}(z)}&=\frac{2\psi_{+1}(z_0;x,t)}{s''_{11}(z_0)}(z-z_0)^{-2}+\Bigg(\frac{2\psi'_{+1}(z_0;x,t)}{s''_{11}(z_0)}-\frac{2\psi_{+1}(z_0;x,t)s'''_{11}(z_0)}{3s''^2_{11}(z_0)}\Bigg)(z-z_0)^{-1}+\cdots,
\end{split}
\end{align}
\end{footnotesize}
Then, one has the compact form
\begin{align}\nonumber
\begin{split}
\mathop{P_{-2}}_{z=z_0}\bigg[\frac{\psi_{+1}(z;x,t)}{s_{11}(z)}\bigg]=\frac{2\psi_{+1}(z_0;x,t)}{s''_{11}(z_0)}=\frac{2b[z_0]\psi_{-2}(z_0;x,t)}{s''_{11}(z_0)},\text{ as }z_0\in Z\cap D^{+},
\end{split}
\end{align}
where $\underset{z=z_0}{P_{-2}}[f(z;x,t)]$ denotes the coefficient of $O((z-z_0)^{-2})$ term in the Laurent series expansion of $f(z;x,t)$ at $z=z_0$. and
\begin{footnotesize}
\begin{align}\nonumber
\begin{split}
\mathop{\mathrm{Res}}_{z=z_0}\bigg[\frac{\psi_{+1}(z;x,t)}{s_{11}(z)}\bigg]=\frac{2b[z_0]\psi'_{-2}(z_0;x,t)}{s''_{11}(z_0)}+\bigg[\frac{2b[z_0]}{s''_{11}(z_0)}\bigg(\frac{d[z_0]}{b[z_0]}-\frac{s'''_{11}(z_0)}{3s''_{11}(z_0)}\bigg)\bigg]\psi_{-2}(z_0;x,t),\text{ as }z_0\in Z\cap D^{+},
\end{split}
\end{align}
\end{footnotesize}
where $\underset{z=z_0}{\mathrm{Res}}[f(z;x,t)]$ denotes the coefficient of $O((z-z_0)^{-1})$ term in the Laurent series expansion of $f(z;x,t)$ at $z=z_0$.

Similarly,  for the case of $\psi_{+2}(z;x,t)$ and $s_{22}(z)$ are analytic on $D^{-}$, we repeat the above process and obtain
\begin{align}\nonumber
\begin{split}
&\mathop{P_{-2}}_{z=z_0}\bigg[\frac{\psi_{+2}(z;x,t)}{s_{22}(z)}\bigg]=\frac{2\psi_{+2}(z_0;x,t)}{s''_{22}(z_0)}=\frac{2b[z_0]\psi_{-1}(z_0;x,t)}{s''_{22}(z_0)},\text{ as }z_0\in Z\cap D^{-},\\
&\mathop{\mathrm{Res}}_{z=z_0}\bigg[\frac{\psi_{+2}(z;x,t)}{s_{22}(z)}\bigg]=\frac{2b[z_0]\psi'_{-1}(z_0;x,t)}{s''_{22}(z_0)}+\bigg[\frac{2b[z_0]}{s''_{22}(z_0)}\bigg(\frac{d[z_0]}{b[z_0]}-\frac{s'''_{22}(z_0)}{3s''_{22}(z_0)}\bigg)\bigg]\psi_{-1}(z_0;x,t),\\
&\qquad\qquad\qquad\qquad\quad\text{ as }z_0\in Z\cap D^{-}.
\end{split}
\end{align}

Moreover, let
\begin{align}\label{T99}
A[z_0]=\left\{
\begin{aligned}
\frac{2b[z_0]}{s''_{11}(z_0)},\text{ as }z_0\in Z\cap D^{+},\\
\frac{2b[z_0]}{s''_{22}(z_0)},\text{ as }z_0\in Z\cap D^{-},
\end{aligned}
\right.\quad
B[z_0]=\left\{
\begin{aligned}
\frac{d[z_0]}{b[z_0]}-\frac{s'''_{11}(z_0)}{3s''_{11}(z_0)},\text{ as }z_0\in Z\cap D^{+},\\
\frac{d[z_0]}{b[z_0]}-\frac{s'''_{22}(z_0)}{3s''_{22}(z_0)},\text{ as }z_0\in Z\cap D^{-}.
\end{aligned}
\right.
\end{align}

Then, we have
\begin{align}\label{T100}
\begin{split}
&\mathop{P_{-2}}_{z=z_0}\bigg[\frac{\psi_{+1}(z;x,t)}{s_{11}(z)}\bigg]=A[z_0]\psi_{-2}(z_0;x,t),\text{ as }z_0\in Z\cap D^{+},\\
&\mathop{P_{-2}}_{z=z_0}\bigg[\frac{\psi_{+2}(z;x,t)}{s_{22}(z)}\bigg]=A[z_0]\psi_{-1}(z_0;x,t),\text{ as }z_0\in Z\cap D^{-},\\
&\mathop{\mathrm{Res}}_{z=z_0}\bigg[\frac{\psi_{+1}(z;x,t)}{s_{11}(z)}\bigg]=A[z_0][\psi'_{-2}(z_0;x,t)+B[z_0]\psi_{-2}(z_0;x,t)],\text{ as }z_0\in Z\cap D^{+},\\
&\mathop{\mathrm{Res}}_{z=z_0}\bigg[\frac{\psi_{+2}(z;x,t)}{s_{22}(z)}\bigg]=A[z_0][\psi'_{-1}(z_0;x,t)+B[z_0]\psi_{-1}(z_0;x,t)],\text{ as }z_0\in Z\cap D^{-}.
\end{split}
\end{align}

Accordingly, by mean of Eqs. \eqref{T97}-\eqref{T99} as well as proposition \ref{P5}, we can derive the following symmetry relations.
\begin{prop}\label{P6}
For a $z_0\in Z$, the three symmetry relations for $A[z_0]$ and $B[z_0]$ are deduced as follow\\
$\bullet$ The first symmetry relation $A[z_0]=-A[z^*_0]^*,\quad B[z_0]=B[z^*_0]^*$.\\
$\bullet$ The second symmetry relation $A[z_0]=A[-z^*_0]^*,\quad B[z_0]=-B[-z^*_0]^*$.\\
$\bullet$ The third symmetry relation $A[z_0]=\frac{z^4_0q^*_{-}}{q^4_0q_{-}}A\Big[-\frac{q^2_0}{z_0}\Big],\quad B[z_0]=\frac{q^2_0}{z^2_0}B\Big[-\frac{q^2_0}{z_0}\Big]+\frac{2}{z_0}$.

\end{prop}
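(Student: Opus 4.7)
My plan is to reduce each symmetry of $A[z_0]$ and $B[z_0]$ to corresponding symmetries of the norming constants $b[z_0]$ and $d[z_0]$ defined by \eqref{T97}--\eqref{T98}, together with symmetries of $s_{11}''(z_0)$, $s_{11}'''(z_0)$ (and their $s_{22}$ counterparts) obtained by differentiating the scattering-matrix identities from Proposition \ref{P5}. The first and second symmetries should be nearly mechanical: starting from $\psi_{+1}(z_0)=b[z_0]\psi_{-2}(z_0)$, replace $z_0$ by $z_0^*$ (resp.\ $-z_0^*$), take complex conjugates, and apply $\psi_\pm(z;x,t)=\sigma_2\psi_\pm(z^*;x,t)^*\sigma_2$ (resp.\ the $\sigma_1$-identity). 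This swaps the first and second Jost columns and yields $b[z_0^*]^*=-1/b[z_0]$ type relations that translate immediately into $A[z_0]=-A[z_0^*]^*$ etc., once one notes that differentiating $s_{11}(z)=s_{22}(z^*)^*$ gives $s_{11}''(z_0)=s_{22}''(z_0^*)^*$ and $s_{11}'''(z_0)=s_{22}'''(z_0^*)^*$. An analogous computation with the $\sigma_1$-symmetry (which, crucially, carries a sign in its $z\to-z^*$ derivatives so the second slot of $B$ flips sign) yields the second pair.

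The third symmetry is the real work. I would first rewrite Proposition \ref{P5}'s columnwise identity explicitly as
\begin{equation*}
\psi_{+1}(-q_0^2/z)=\tfrac{z}{\mathrm{i}q_+}\psi_{+2}(z),\qquad \psi_{+2}(-q_0^2/z)=\tfrac{z}{\mathrm{i}q_+^*}\psi_{+1}(z),
\end{equation*}
and the analogous $\psi_-$ pair with $q_-$. Writing $w_0=-q_0^2/z_0$, substitute these relations into the defining identity $\psi_{+2}(w_0)=b[w_0]\psi_{-1}(w_0)$ (note $w_0\in D^-$ exactly when $z_0\in D^+$). Matching coefficients of $\psi_{-2}(z_0)$ recovers the first relation $b[z_0]=(q_+^*/q_-)b[w_0]$. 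Next, differentiate both sides of the formulas above with respect to $w$ at $w=w_0$ via the chain rule for $-q_0^2/w$; the resulting expression for $\psi_{+2}'(w_0)$ has \emph{two} pieces, one from differentiating the prefactor $z/(\mathrm{i}q_+^*)$ and one from the inner derivative $q_0^2/w^2$. Substituting these into $\psi_{+2}'(w_0)-b[w_0]\psi_{-1}'(w_0)=d[w_0]\psi_{-1}(w_0)$ and using $\psi_{+1}'(z_0)=b[z_0]\psi_{-2}'(z_0)+d[z_0]\psi_{-2}(z_0)$, the $\psi_{-2}'(z_0)$ piece cancels automatically (reconfirming the $b$-relation), while the $\psi_{-2}(z_0)$ coefficient yields $d[w_0]=(z_0^2 q_-/(q_+^*q_0^2))\,d[z_0]$.

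Finally, to obtain the $B$-relation, I would differentiate $s_{11}(z)=(q_+^*/q_-^*)s_{22}(-q_0^2/z)$ twice and three times and evaluate at $z_0$, using $s'_{22}(w_0)=0$ to get
\begin{equation*}
s_{11}''(z_0)=\tfrac{q_+^*}{q_-^*}\tfrac{q_0^4}{z_0^4}s_{22}''(w_0),\qquad
s_{11}'''(z_0)=\tfrac{q_+^*}{q_-^*}\Bigl[\tfrac{q_0^6}{z_0^6}s_{22}'''(w_0)-\tfrac{6q_0^4}{z_0^5}s_{22}''(w_0)\Bigr].
\end{equation*}
Plugging these together with the $b,d$ relations into $A[z_0]=2b[z_0]/s_{11}''(z_0)$ and $B[z_0]=d[z_0]/b[z_0]-s_{11}'''(z_0)/(3s_{11}''(z_0))$ gives $A[z_0]=(z_0^4 q_-^*/(q_0^4 q_-))A[w_0]$ directly, and for $B$ one finds the expected $(q_0^2/z_0^2)B[w_0]$ plus a residual $-(-6q_0^4/z_0^5)/(3\cdot q_0^4/z_0^4)=2/z_0$ coming precisely from the derivative of the Jacobian $q_0^2/z^2$ in the third derivative of $s_{11}$. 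The principal obstacle I anticipate is bookkeeping: ensuring that the two independent contributions to $\psi_{+2}'(w_0)$ (and the corresponding piece in $s_{11}'''$) are combined with the right signs so that the cancellations producing the anomalous $2/z_0$ term work out correctly.
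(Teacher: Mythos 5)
Your proposal is correct and follows the same route the paper intends (the paper offers no written proof for this proposition, merely asserting that it follows from Eqs. \eqref{T97}--\eqref{T99} and Proposition \ref{P5}); your treatment of the third symmetry --- the $b$- and $d$-relations $b[z_0]=\tfrac{q_+^*}{q_-}b[-q_0^2/z_0]$, $d[-q_0^2/z_0]=\tfrac{z_0^2 q_-}{q_+^*q_0^2}d[z_0]$, the chain-rule formulas for $s_{11}''(z_0)$ and $s_{11}'''(z_0)$ using $s_{22}'(w_0)=0$, and the residual $2/z_0$ from the Jacobian term --- all check out. One small correction: the first symmetry gives $b[z_0^*]=-b[z_0]^*$ (together with $d[z_0^*]=d[z_0]^*$), not a reciprocal relation of the form $b[z_0^*]^*=-1/b[z_0]$; the latter would not yield $A[z_0]=-A[z_0^*]^*$ unless $b[z_0]^2=1$, whereas the former does so immediately via $s_{11}''(z_0)=s_{22}''(z_0^*)^*$.
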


In the following subsections, we will propose an inverse problem with NZBCs and solve it to obtain explicit double solutions for the TOFKN \eqref{T1}.
\subsubsection{The Matrix RH Problem with NZBCs and Double Poles}
Similar to the case of ZBCs in Section 2, the matrix RH problem for the NZBCs can also be established. In order to pose and solve the RH problem conveniently, we define $\widetilde{\zeta}_n=-\frac{q^2_0}{\zeta_n}$ with
\begin{align}\label{T103}
\zeta_n=\left\{
\begin{aligned}
&z_n,\quad\quad\quad\quad\quad n=1,2,\cdots,N_1,\\
&-z_{n-N},\quad\quad\quad n=N_1+1,N_1+2,\cdots,2N_1,\\
&\frac{q^2_0}{z^*_{n-2N_1}},\quad\quad\quad n=2N_1+1,2N_1+2,\cdots,3N_1,\\
&-\frac{q^2_0}{z^*_{n-3N_1}},\quad\quad n=3N_1+1,3N_1+2,\cdots,4N_1,\\
&\omega_{n-4N_1},\quad\quad\quad n=4N_1+1,4N_1+2,\cdots,4N_1+N_2,\\
&-\omega_{n-4N_1-N_2},\quad n=4N_1+N_2+1,4N_1+N_2+2,\cdots,4N_1+2N_2.
\end{aligned}
\right.
\end{align}

Then, we can proposed a matrix RH problem as follows.
\begin{prop}\label{P9}
Define the sectionally meromorphic matrices
\begin{align}\label{T104}
M(z;x,t)=\left\{
\begin{aligned}
M^+(z;x,t)=\bigg(\frac{\mu_{+1}(z;x,t)}{s_{11}(z)},\mu_{-2}(z;x,t)\bigg),\text{ as }z\in D^+,\\
M^-(z;x,t)=\bigg(\mu_{-1}(z;x,t),\frac{\mu_{+2}(z;x,t)}{s_{22}(z)}\bigg),\text{ as }z\in D^-,
\end{aligned}
\right.
\end{align}
where $\lim\limits_{\substack{z'\rightarrow z\\ z'\in D^{\pm}}}M(z';x,t)=M^{\pm}(z;x,t)$. Then, the multiplicative matrix RH problem is given bellow:\\
$\bullet$ Analyticity: $M(z;x,t)$ is analytic in $D^+\cup D^-\backslash Z$ and has the double poles in $Z$, whose principal parts of the Laurent series at each double pole $\zeta_n$ or $\widetilde{\zeta}_n$, are determined as
\begin{align}\label{T105}
\begin{split}
&\mathop{\mathrm{Res}}_{z=\zeta_n}M(z;x,t)=\big(A[\zeta_n]\mathrm{e}^{-2\mathrm{i}\theta(\zeta_n;x,t)}\{\mu'_{-2}(\zeta_n;x,t)+[B[\zeta_n]-2\mathrm{i}\theta'(\zeta_n;x,t)]\mu_{-2}(\zeta_n;x,t)\},0\big),\\
&\mathop{\mathrm{P}_{-2}}_{z=\zeta_n}M(z;x,t)=\big(A[\zeta_n]\mathrm{e}^{-2\mathrm{i}\theta(\zeta_n;x,t)}\mu_{-2}(\zeta_n;x,t),0\big),\\
&\mathop{\mathrm{Res}}_{z=\widetilde{\zeta}_n}M(z;x,t)=\big(0,A[\widetilde{\zeta}_n]\mathrm{e}^{2\mathrm{i}\theta(\widetilde{\zeta}_n;x,t)}\{\mu'_{-1}(\widetilde{\zeta}_n;x,t)+[B[\widetilde{\zeta}_n]+2\mathrm{i}\theta'(\widetilde{\zeta}_n;x,t)]\mu_{-1}(\widetilde{\zeta}_n;x,t)\}\big),\\
&\mathop{\mathrm{P}_{-2}}_{z=\widetilde{\zeta}_n}M(z;x,t)=\big(0,A[\widetilde{\zeta}_n]\mathrm{e}^{2\mathrm{i}\theta(\widetilde{\zeta}_n;x,t)}\mu_{-1}(\widetilde{\zeta}_n;x,t)\big).
\end{split}
\end{align}
$\bullet$ Jump condition:
\begin{align}\label{T106}
M^-(z;x,t)=M^+(z;x,t)[I-J(z;x,t)],\text{ as }z\in\Sigma,
\end{align}
where
\begin{align}\label{T107}
J(z;x,t)=\mathrm{e}^{\mathrm{i}\theta(z;x,t)\widehat{\sigma_3}}\bigg(\begin{array}{cc} 0 & -\tilde{\rho}(z) \\ \rho(z) & \rho(z)\tilde{\rho}(z) \end{array}\bigg).
\end{align}
$\bullet$ Asymptotic behavior:
\begin{align}\label{T108}
M(z;x,t)=\left\{
\begin{aligned}
&\frac{\mathrm{i}}{z}\mathrm{e}^{\mathrm{i}\nu_{-}(x,t)\sigma_3}\sigma_3Q_{-}+O(1),\text{ as } z\rightarrow0,\\
&\mathrm{e}^{\mathrm{i}\nu_{-}(x,t)\sigma_3}+\frac{1}{z},\quad\quad\quad\text{ as } z\rightarrow\infty.
\end{aligned}
\right.
\end{align}

\end{prop}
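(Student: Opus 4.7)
The plan is to adapt the ZBC double-pole RH-problem proof from Section \ref{Sec2} to the NZBC setting, replacing $\lambda$-objects by their uniformized $z$-counterparts and invoking the NZBC preparatory results (Propositions \ref{P3}, \ref{P5}, \ref{P7}, \ref{PP7}, \ref{P6}) together with the pole relations \eqref{T97}--\eqref{T100} in place of the ZBC ones.

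I would first establish analyticity and the principal-part formulae \eqref{T105}. By Proposition \ref{P3}, both $\mu_{+1}(z;x,t)$ and $\mu_{-2}(z;x,t)$ are analytic on $D^{+}$; combined with the analyticity of $s_{11}(z)$ on $D^{+}$ and the hypothesis that its zeros in $Z\cap D^{+}$ are exactly double, the first column of $M^{+}$ is meromorphic on $D^{+}$ with double poles precisely at the $\zeta_{n}\in Z\cap D^{+}$, while its second column is holomorphic there. To extract the principal parts at $\zeta_{n}$, I would apply \eqref{T100} to $\psi_{+1}/s_{11}$ and convert $\psi$ to $\mu$ via \eqref{T71}: the identity $\psi_{+1}=\mu_{+1}\mathrm{e}^{\mathrm{i}\theta}$ produces the exponential factor $\mathrm{e}^{-2\mathrm{i}\theta(\zeta_{n};x,t)}$, and Leibniz differentiation of $\psi_{-2}=\mu_{-2}\mathrm{e}^{-\mathrm{i}\theta}$ in the residue formula contributes the $-2\mathrm{i}\theta'(\zeta_{n};x,t)$ shift in the $B[\zeta_{n}]$ coefficient, reproducing the first two lines of \eqref{T105}. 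The last two lines, at $\widetilde{\zeta}_{n}\in D^{-}$, follow identically from the $D^{-}$ version of \eqref{T100} with the opposite sign in the exponent.

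Next, the jump condition \eqref{T106}--\eqref{T107} on $\Sigma$ is a direct algebraic consequence of the scattering relation \eqref{T77}. Expanding \eqref{T77} columnwise, converting $\psi_{\pm}\mapsto\mu_{\pm}$ via \eqref{T71}, and dividing by $s_{11}$ or $s_{22}$ as appropriate, yields the factorisation $(\mu_{-1},\mu_{+2}/s_{22})=(\mu_{+1}/s_{11},\mu_{-2})(I-J)$ with $J$ given by \eqref{T107} in terms of the reflection coefficients \eqref{T79}; this is the verbatim NZBC analogue of the ZBC derivation and goes through unchanged in the $z$ variable.

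The main obstacle is the two-sided asymptotic behavior \eqref{T108}, which has no counterpart in the ZBC case: here I must control $M(z;x,t)$ both as $z\to\infty$ and as $z\to 0$. The limit $z\to\infty$ follows straightforwardly from Propositions \ref{P7} and \ref{PP7}: componentwise, $\mu_{+1}(z;x,t)/s_{11}(z)$ and $\mu_{-2}(z;x,t)$ combine to give the diagonal $\mathrm{e}^{\mathrm{i}\nu_{-}(x,t)\sigma_{3}}$ up to $O(1/z)$, via the cancellation $\mathrm{e}^{\mathrm{i}\nu_{+}(x,t)}\cdot\mathrm{e}^{\mathrm{i}\nu(t)}=\mathrm{e}^{\mathrm{i}\nu_{-}(x,t)}$. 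The expansion at $z\to 0$ is more delicate, because both $\mu_{\pm}$ (via Proposition \ref{P7}) and the diagonal of $S$ (via Proposition \ref{PP7}) carry explicit $1/z$ singularities with compensating prefactors that must cancel consistently. The cleanest route is to exploit the third symmetry reduction \eqref{T82}, $\mu_{\pm}(z;x,t)=(\mathrm{i}/z)\mu_{\pm}(-q_{0}^{2}/z;x,t)\sigma_{3}Q_{\pm}$, together with its scattering-matrix analogue in \eqref{T89}: these transfer the already established $z\to\infty$ asymptotics of $\mu_{\pm}$ and $s_{jj}$ into the $z\to 0$ asymptotics of $M^{\pm}$ and produce the leading singular term $(\mathrm{i}/z)\mathrm{e}^{\mathrm{i}\nu_{-}(x,t)\sigma_{3}}\sigma_{3}Q_{-}$ without requiring delicate pointwise cancellation. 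This completes the proof.
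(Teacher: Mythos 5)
Your proposal is correct and follows essentially the same route as the paper: the principal parts come from \eqref{T100} combined with the transformation \eqref{T71}, the jump factorization from the columnwise expansion of \eqref{T77} together with \eqref{T79}, and the asymptotics from Propositions \ref{P7} and \ref{PP7}. The only (harmless) deviation is at $z\rightarrow0$, where the paper substitutes the known small-$z$ expansions of $\mu_{\pm}$ and $S(z)$ directly and checks the entrywise cancellation, while you transfer the large-$z$ behavior through the third symmetry \eqref{T82} and \eqref{T89}; both computations yield the same leading term $\frac{\mathrm{i}}{z}\mathrm{e}^{\mathrm{i}\nu_{-}(x,t)\sigma_3}\sigma_3Q_{-}$.
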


\begin{proof}
For the analyticity of $M(z;x,t)$, It follows from Eqs. \eqref{T71} and \eqref{T100} that for each double poles $\zeta_n\in D^+$ or $\widetilde{\zeta}_n\in D^-$. Now, we consider $\zeta_n\in D^+$, and obtain
\begin{small}
\begin{align}\nonumber
\begin{split}
&\mathop{\mathrm{Res}}_{z=\zeta_n}\bigg[\frac{\mu_{+1}(\zeta_n;x,t)}{s_{11}(\zeta_n)}\bigg]=A[\zeta_n]\mathrm{e}^{-2\mathrm{i}\theta(\zeta_n;x,t)}\{\mu'_{-2}(\zeta_n;x,t)+[B[\zeta_n]-2\mathrm{i}\theta'(\zeta_n;x,t)]\mu_{-2}(\zeta_n;x,t)\},\Rightarrow\\
&\mathop{\mathrm{Res}}_{z=\zeta_n}M^+(z;x,t)=\Big(A[\zeta_n]\mathrm{e}^{-2\mathrm{i}\theta(\zeta_n;x,t)}\{\mu'_{-2}(\zeta_n;x,t)+[B[\zeta_n]-2\mathrm{i}\theta'(\zeta_n;x,t)]\mu_{-2}(\zeta_n;x,t)\},0\Big),
\end{split}
\end{align}
\end{small}
where the $``'"$ denotes the partial derivative with respect to $z$ $(\text{here }z=\zeta_n)$, and
\begin{align}\nonumber
\begin{split}
&\mathop{P_{-2}}_{z=\zeta_n}\bigg[\frac{\mu_{+1}(\zeta_n;x,t)}{s_{11}(\zeta_n)}\bigg]=A[\zeta_n]\mathrm{e}^{-2\mathrm{i}\theta(\zeta_n;x,t)}\mu_{-2}(\zeta_n;x,t),\Rightarrow\\
&\mathop{\mathrm{P}_{-2}}_{\lambda=\zeta_n}M(\lambda;x,t)=\Big(A[\zeta_n]\mathrm{e}^{-2\mathrm{i}\theta(\zeta_n;x,t)}\mu_{-2}(\zeta_n;x,t),0\Big).
\end{split}
\end{align}

Similarly, we also can obtain the analyticity for $\widetilde{\zeta}_n\in D^-$. It follows from Eqs. \eqref{T71} and \eqref{T77} that
\begin{align}\nonumber
\bigg\{
\begin{aligned}
&\mu_{+1}(z;x,t)=\mu_{-1}(z;x,t)s_{11}(z)+\mu_{-2}(z;x,t)\mathrm{e}^{-2\mathrm{i}\theta(z;x,t)}s_{21}(z),\\
&\mu_{+2}(z;x,t)=\mu_{-1}(z;x,t)\mathrm{e}^{2\mathrm{i}\theta(z;x,t)}s_{12}(z)+\mu_{-2}(z;x,t)s_{22}(z),
\end{aligned}
\end{align}
by combining formula \eqref{T79}, one can obtain
\begin{align}\nonumber
\begin{split}
&M^+(z;x,t)=\Big(\mu_{-1}(z;x,t)+\mu_{-2}(z;x,t)\mathrm{e}^{-2\mathrm{i}\theta(z;x,t)}\rho(z),\mu_{-2}(z;x,t)\Big),\\
&M^-(z;x,t)=\Big(\mu_{-1}(z;x,t),\mu_{-1}(z;x,t)\mathrm{e}^{2\mathrm{i}\theta(z;x,t)}\tilde{\rho}(z)+\mu_{-2}(z;x,t)\Big),
\end{split}
\end{align}
and
\begin{small}
\begin{align}\nonumber
&\bigg(\mu_{-1}(z;x,t),\frac{\mu_{+2}(z;x,t)}{s_{22}(z)}\bigg)=\bigg(\frac{\mu_{+1}(z;x,t)}{s_{11}(z)},\mu_{-2}(z;x,t)\bigg)\bigg(\begin{array}{cc} 1 & \mathrm{e}^{2\mathrm{i}\theta(z;x,t)}\tilde{\rho}(z) \\ -\mathrm{e}^{-2\mathrm{i}\theta(z;x,t)}\rho(z) & 1-\rho(z)\tilde{\rho}(z) \end{array}\bigg),\\
\end{align}
\end{small}
that is
\begin{align}\nonumber
M^-(z;x,t)=M^+(z;x,t)(I-J(z;x,t)),
\end{align}
where $J(z;x,t)$ is given by Eq. \eqref{T107}. The asymptotic behaviors of the modified Jost solutions $\mu_{\pm}(z;x,t)$ and scattering matrix $S(z)$ given in propositions \eqref{P7} and \eqref{PP7} can easily lead to the asymptotic behavior of $M(z;x,t)$. Specifically, when $z\rightarrow0$, we have
\begin{align}\nonumber
\begin{split}
&M^{+}(z;x,t)=\bigg(\frac{\mu_{+1}(z;x,t)}{s_{11}(z)},\mu_{-2}(z;x,t)\bigg)=\Bigg(\begin{array}{cc} 0 & \frac{\mathrm{i}}{z}q_{-}\mathrm{e}^{\mathrm{i}\nu_{-}(x,t)} \\ \frac{\mathrm{i}q^*_{+}\mathrm{e}^{-\mathrm{i}\nu_{+}(x,t)}}{z\frac{q_{-}}{q_{+}}\mathrm{e}^{\mathrm{i}\nu(t)}} & 0 \end{array}\Bigg)+O(1)\\
&\qquad\qquad=\frac{\mathrm{i}}{z}\bigg(\begin{array}{cc} 0 & q_{-}\mathrm{e}^{\mathrm{i}\nu_{-}(x,t)} \\ q^*_{-}\mathrm{e}^{-\mathrm{i}\nu_{-}(x,t)} & 0 \end{array}\bigg)+O(1)=\frac{\mathrm{i}}{z}\mathrm{e}^{\mathrm{i}\nu_{-}(x,t)\sigma_3}\sigma_3Q_{-}+O(1),\\
&M^{-}(z;x,t)=\bigg(\mu_{-1}(z;x,t),\frac{\mu_{+2}(z;x,t)}{s_{22}(z)}\bigg)=\Bigg(\begin{array}{cc} 0 & \frac{\frac{\mathrm{i}}{z}q_{+}\mathrm{e}^{\mathrm{i}\nu_{+}(x,t)}}{\frac{q_{+}}{q_{-}}\mathrm{e}^{-\mathrm{i}\nu(t)}} \\ \frac{\mathrm{i}}{z}q^*_{-}\mathrm{e}^{-\mathrm{i}\nu_{-}(x,t)} & 0 \end{array}\Bigg)+O(1)\\
&\qquad\qquad=\frac{\mathrm{i}}{z}\bigg(\begin{array}{cc} 0 & q_{-}\mathrm{e}^{\mathrm{i}\nu_{-}(x,t)} \\ q^*_{-}\mathrm{e}^{-\mathrm{i}\nu_{-}(x,t)} & 0 \end{array}\bigg)+O(1)=\frac{\mathrm{i}}{z}\mathrm{e}^{\mathrm{i}\nu_{-}(x,t)\sigma_3}\sigma_3Q_{-}+O(1),
\end{split}
\end{align}
it can obtain that $M(z;x,t)=\frac{\mathrm{i}}{z}\mathrm{e}^{\mathrm{i}\nu_{-}(x,t)\sigma_3}\sigma_3Q_{-}+O(1)\text{ as }z\rightarrow0$. Moreover, when $z\rightarrow\infty$, we have
\begin{align}\nonumber
\begin{split}
&M^{+}(z;x,t)=\bigg(\frac{\mu_{+1}(z;x,t)}{s_{11}(z)},\mu_{-2}(z;x,t)\bigg)=\Bigg(\begin{array}{cc} \frac{\mathrm{e}^{\mathrm{i}\nu_{+}(x,t)}}{\mathrm{e}^{-\mathrm{i}\nu(t)}} & 0 \\ 0 & \mathrm{e}^{-\mathrm{i}\nu_{-}(x,t)} \end{array}\Bigg)+O\Big(\frac{1}{z}\Big)\\
&\qquad\qquad=\bigg(\begin{array}{cc} \mathrm{e}^{\mathrm{i}\nu_{-}(x,t)} & 0 \\ 0 & \mathrm{e}^{-\mathrm{i}\nu_{-}(x,t)} \end{array}\bigg)+O\Big(\frac{1}{z}\Big)=\mathrm{e}^{\mathrm{i}\nu_{-}(x,t)\sigma_3}+O\Big(\frac{1}{z}\Big),\\
&M^{-}(z;x,t)=\bigg(\mu_{-1}(z;x,t),\frac{\mu_{+2}(z;x,t)}{s_{22}(z)}\bigg)=\Bigg(\begin{array}{cc} \mathrm{e}^{\mathrm{i}\nu_{-}(x,t)} & 0 \\ 0 & \frac{\mathrm{e}^{-\mathrm{i}\nu_{+}(x,t)}}{\mathrm{e}^{\mathrm{i}\nu(t)}} \end{array}\Bigg)+O\Big(\frac{1}{z}\Big)\\
&\qquad\qquad=\bigg(\begin{array}{cc} \mathrm{e}^{\mathrm{i}\nu_{-}(x,t)} & 0 \\ 0 & \mathrm{e}^{-\mathrm{i}\nu_{-}(x,t)} \end{array}\bigg)+O\Big(\frac{1}{z}\Big)=\mathrm{e}^{\mathrm{i}\nu_{-}(x,t)\sigma_3}+O\Big(\frac{1}{z}\Big),
\end{split}
\end{align}
it can obtain that $M(z;x,t)=\mathrm{e}^{\mathrm{i}\nu_{-}(x,t)\sigma_3}+O\big(\frac{1}{z}\big)$. Therefore, we obtain the asymptotic behavior \eqref{T108} of $M(z;x,t)$ when $z\rightarrow0$ and $z\rightarrow\infty$. This completes the proof.

\end{proof}

\begin{prop}
The solution of the matrix Riemann-Hilbert problem with double poles can be expressed as
\begin{align}\label{T109}
\begin{split}
M(z;x,t)=&\mathrm{e}^{\mathrm{i}\nu_{-}(x,t)\sigma_3}\Big(I+\frac{\mathrm{\mathrm{i}}}{z}\sigma_3Q_{-}\Big)+\frac{1}{2\pi \mathrm{i}}\int_{\Sigma}\frac{M^+(\xi;x,t)J(\xi;x,t)}{\xi-z}d\xi+\\
&\sum^{4N_1+2N_2}_{n=1}\bigg(C_n(z)\bigg[\mu'_{-2}(\zeta_n;x,t)+\bigg(D_n+\frac{1}{z-\zeta_n}\bigg)\mu_{-2}(\zeta_n;x,t)\bigg],\\
&\widetilde{C}_n(z)\bigg[\mu'_{-1}(\widetilde{\zeta}_n;x,t)+\bigg(\widetilde{D}_n+\frac{1}{z-\widetilde{\zeta}_n}\bigg)\mu_{-1}(\widetilde{\zeta}_n;x,t)\bigg]\bigg),
\end{split}
\end{align}
where $\int_{\Sigma}$ stands for an integral along the oriented contour displayed in Fig. \ref{F4},
\begin{align}\label{T110}
\begin{split}
&C_n(z)=\frac{A[\zeta_n]}{z-\zeta_n}\mathrm{e}^{-2\mathrm{i}\theta(\zeta_n;x,t)},\quad \widetilde{C}_n(z)=\frac{A[\widetilde{\zeta}_n]}{z-\widetilde{\zeta}_n}\mathrm{e}^{2\mathrm{i}\theta(\widetilde{\zeta}_n;x,t)},\\
&D_n=B[\zeta_n]-2\mathrm{i}\theta'(\zeta_n;x,t),\quad \widetilde{D}_n=B[\widetilde{\zeta}_n]+2\mathrm{i}\theta'(\widetilde{\zeta}_n;x,t),
\end{split}
\end{align}
$\mu_{-2}(\zeta_n;x,t)$ and $\mu'_{-2}(\zeta_n;x,t)$ are determined via $\mu_{-1}(\widetilde{\zeta}_n;x,t)$ and $\mu'_{-1}(\widetilde{\zeta}_n;x,t)$ as
\begin{align}\label{T111}
\mu_{-2}(\zeta_n;x,t)=\frac{\mathrm{i}q_{-}}{\zeta_n}\mu_{-1}(\widetilde{\zeta}_n;x,t),\, \mu'_{-2}(\zeta_n;x,t)=-\frac{\mathrm{i}q_{-}}{\zeta^2_n}\mu_{-1}(\widetilde{\zeta}_n;x,t)+\frac{\mathrm{i}q_{-}q^2_0}{\zeta^3_n}\mu'_{-1}(\widetilde{\zeta}_n;x,t),
\end{align}
and $\mu_{-1}(\widetilde{\zeta}_n;x,t)$ and $\mu'_{-1}(\widetilde{\zeta}_n;x,t)$ satisfy the linear system of $8N_1+4N_2$ as bellow:
\begin{align}\label{T112}
\begin{split}
&\sum^{4N_1+2N_2}_{n=1}\bigg\{\widetilde{C}_n(\zeta_s)\mu'_{-1}(\widetilde{\zeta}_n;x,t)+\bigg[\widetilde{C}_n(\zeta_s)\bigg(\widetilde{D}_n+\frac{1}{\zeta_s-\widetilde{\zeta}_n}\bigg)-\frac{\mathrm{i}q_{-}}{\zeta_s}\delta_{s,n}\bigg]\mu_{-1}(\widetilde{\zeta}_n;x,t)\bigg\}=\\
&-\mathrm{e}^{\mathrm{i}\nu_{-}(x,t)\sigma_3}\bigg(\begin{array}{cc} \frac{\mathrm{i}q_{-}}{\zeta_s} \\ 1  \end{array}\bigg)-\frac{1}{2\pi \mathrm{i}}\int_{\Sigma}\frac{(M^+(\xi;x,t)J(\xi;x,t))_{2}}{\xi-\zeta_s}d\xi,\\
&\sum^{4N_1+2N_2}_{n=1}\bigg\{\bigg(\frac{\widetilde{C}_n(\zeta_s)}{\zeta_s-\widetilde{\zeta}_n}+\frac{\mathrm{i}q_{-}q^2_0}{\zeta^3_s}\delta_{s,n}\bigg)\mu'_{-1}(\widetilde{\zeta}_n;x,t)+\bigg[\frac{\widetilde{C}_n(\zeta_s)}{\zeta_s-\widetilde{\zeta}_n}\bigg(\widetilde{D}_n+\frac{2}{\zeta_s-\widetilde{\zeta}_n}\bigg)-\frac{\mathrm{i}q_{-}}{\zeta^2_s}\delta_{s,n}\bigg]\\
&\mu_{-1}(\widetilde{\zeta}_n;x,t)\bigg\}=-\mathrm{e}^{\mathrm{i}\nu_{-}(x,t)\sigma_3}\bigg(\begin{array}{cc} \frac{\mathrm{i}q_{-}}{\zeta^2_s} \\ 0  \end{array}\bigg)+\frac{1}{2\pi \mathrm{i}}\int_{\Sigma}\frac{(M^+(\xi;x,t)J(\xi;x,t))_{2}}{(\xi-\zeta_s)^2}d\xi,
\end{split}
\end{align}
where $s=1,2,\cdots,4N_1+2N_2$ and $\delta_{s,n}$ are the Kronecker $\delta$-symbol.

\end{prop}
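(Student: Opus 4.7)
The plan is to follow exactly the same template as the proof of Proposition \ref{P8} for the ZBCs case, but to accommodate two new features specific to NZBCs: (i) the singular contribution of $M(z;x,t)$ as $z\to 0$ dictated by the first line of \eqref{T108}, and (ii) the third symmetry reduction in Proposition \ref{P5}, which ties $\mu_{-2}(\zeta_n;\cdot)$ to $\mu_{-1}(\widetilde{\zeta}_n;\cdot)$ and will be used to reduce the linear system from four scalar unknowns per pole to two.

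First I would regularize the RH problem \eqref{T106} in the usual way: subtract from both sides the full asymptotic of $M$ (namely $\mathrm{e}^{\mathrm{i}\nu_-\sigma_3}(I+\mathrm{i}\sigma_3Q_-/z)$, which covers simultaneously $z\to\infty$ and $z\to 0$ because the leading behaviour at $0$ is $\mathrm{i}z^{-1}\mathrm{e}^{\mathrm{i}\nu_-\sigma_3}\sigma_3Q_-$) together with the principal parts of the Laurent series at every double pole $\zeta_n\in D^+$ and $\widetilde{\zeta}_n\in D^-$ given in \eqref{T105}. The resulting difference $M^\pm(z;x,t)-\mathrm{e}^{\mathrm{i}\nu_-\sigma_3}(I+\mathrm{i}\sigma_3Q_-/z)-\sum_n(\cdots)$ is analytic in $D^+\cup D^-\setminus\Sigma$, has no poles in $D^+\cup D^-$, and tends to zero as $z\to 0,\infty$; applying the Cauchy/Plemelj projectors to the jump $M^+J$ across $\Sigma$ then produces the integral representation \eqref{T109}, with the definitions \eqref{T110} read off directly from \eqref{T105}.

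Second, to justify \eqref{T111} I would apply the third symmetry relation $\mu_\pm(z;x,t)=\frac{\mathrm{i}}{z}\mu_\pm(-q_0^2/z;x,t)\sigma_3Q_\pm$ from \eqref{T82} at $z=\zeta_n$, noting $-q_0^2/\zeta_n=\widetilde{\zeta}_n$: the second column then gives $\mu_{-2}(\zeta_n)=\frac{\mathrm{i}q_-}{\zeta_n}\mu_{-1}(\widetilde{\zeta}_n)$. Differentiating the same symmetry in $z$ and using the chain rule $\frac{d}{dz}(-q_0^2/z)=q_0^2/z^2$ before evaluating at $z=\zeta_n$ yields the corresponding identity for $\mu'_{-2}(\zeta_n)$ stated in the second half of \eqref{T111}.

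Third, to obtain the closed linear system \eqref{T112} I would evaluate the second column of \eqref{T109} at $z=\zeta_s$ (so the second-column singularity at $\zeta_s$ becomes the explicit $-\mathrm{i}q_-/\zeta_s$ contribution from $\mathrm{e}^{\mathrm{i}\nu_-\sigma_3}\mathrm{i}\sigma_3Q_-/z$ and the Kronecker $\delta_{s,n}$ appears where the summation index meets $s$), then use \eqref{T111} to eliminate $\mu_{-2}(\zeta_n)$ and $\mu'_{-2}(\zeta_n)$ in favour of $\mu_{-1}(\widetilde{\zeta}_n)$ and $\mu'_{-1}(\widetilde{\zeta}_n)$; this yields the first line of \eqref{T112}. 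Differentiating \eqref{T109} once in $z$ beforehand and then evaluating at $z=\zeta_s$ in the same manner gives the second line, the $\mathrm{i}q_-q_0^2/\zeta_s^3$ contribution arising from $\frac{d}{dz}(\mathrm{i}q_-/z)$ evaluated at $\zeta_s$.

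The main bookkeeping obstacle will be the $z\to 0$ piece: unlike the ZBCs proof, the Cauchy regularization must absorb a $1/z$ singularity, and its two matrix entries must be matched against the symmetry \eqref{T111} so that the $8N_1+4N_2$ algebraic conditions obtained by evaluating at each $\zeta_s$ and $\widetilde{\zeta}_s$ collapse consistently onto a non-degenerate $(8N_1+4N_2)\times(8N_1+4N_2)$ system in the $\mu_{-1}(\widetilde{\zeta}_n),\mu'_{-1}(\widetilde{\zeta}_n)$ alone. Checking that the equations produced by evaluating at $\widetilde{\zeta}_s$ are redundant with those at $\zeta_s$ (again by the third symmetry) is what finally closes the system in the form \eqref{T112}.
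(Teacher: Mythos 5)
Your proposal follows essentially the same route as the paper's own proof: regularize the jump condition by subtracting the $z\to 0$ and $z\to\infty$ asymptotics together with the principal parts at each double pole, apply Plemelj to get \eqref{T109}--\eqref{T110}, derive \eqref{T111} from the third symmetry reduction and its $z$-derivative, and close the system by evaluating the second column of \eqref{T109} and its derivative at $z=\zeta_s$ and eliminating $\mu_{-2},\mu'_{-2}$ via \eqref{T111}. The only difference is your final worry about redundancy of the equations at $\widetilde{\zeta}_s$: the paper never needs them, since the two vector equations per $\zeta_s$ for $s=1,\dots,4N_1+2N_2$ already match the $8N_1+4N_2$ unknowns $\mu_{-1}(\widetilde{\zeta}_n),\mu'_{-1}(\widetilde{\zeta}_n)$ exactly.
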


\begin{proof}
Similar to the proof of proposition \eqref{P8} for the case of ZBCs. In order to regularize the RH problem established in proposition \eqref{P9} for the case of NZBCs, one has to subtract out the asymptotic values as $z\rightarrow0$ and $z\rightarrow\infty$ given by Eq. \eqref{T108} and the singular contributions. Then, the jump condition \eqref{T106} becomes
\begin{align}\label{T113}
\begin{split}
&M^-(z;x,t)-\frac{\mathrm{i}}{z}\mathrm{e}^{\mathrm{i}\nu_{-}(x,t)\sigma_3}\sigma_3Q_{-}-\mathrm{e}^{\mathrm{i}\nu_{-}(x,t)\sigma_3}-\\
&\sum_{n=1}^{4N_1+2N_2}\left[\frac{\mathop{\mathrm{P}_{-2}}\limits_{z=\zeta_n}M(z;x,t)}{(z-\zeta_n)^2}+\frac{\mathop{\mathrm{Res}}\limits_{z=\zeta_n}M(z;x,t)}{z-\zeta_n}+\frac{\mathop{\mathrm{P}_{-2}}\limits_{z=\widetilde{\zeta}_n}M(z;x,t)}{(z-\widetilde{\zeta}_n)^2}+\frac{\mathop{\mathrm{Res}}\limits_{z=\widetilde{\zeta}_n}M(z;x,t)}{z-\widetilde{\zeta}_n}\right]\\
&=M^+(z;x,t)-\frac{\mathrm{i}}{z}\mathrm{e}^{\mathrm{i}\nu_{-}(x,t)\sigma_3}\sigma_3Q_{-}-\mathrm{e}^{\mathrm{i}\nu_{-}(x,t)\sigma_3}-\\
&\sum_{n=1}^{4N_1+2N_2}\left[\frac{\mathop{\mathrm{P}_{-2}}\limits_{z=\zeta_n}M(z;x,t)}{(z-\zeta_n)^2}+\frac{\mathop{\mathrm{Res}}\limits_{z=\zeta_n}M(z;x,t)}{z-\zeta_n}+\frac{\mathop{\mathrm{P}_{-2}}\limits_{z=\widetilde{\zeta}_n}M(z;x,t)}{(z-\widetilde{\zeta}_n)^2}+\frac{\mathop{\mathrm{Res}}\limits_{z=\widetilde{\zeta}_n}M(z;x,t)}{z-\widetilde{\zeta}_n}\right]\\
&-M^+(z;x,t)J(z;x,t),
\end{split}
\end{align}
where $\mathop{\mathrm{P}_{-2}}\limits_{z=\zeta_n}M(z;x,t),\mathop{\mathrm{Res}}\limits_{z=\zeta_n}M(z;x,t),\mathop{\mathrm{P}_{-2}}\limits_{z=\widetilde{\zeta}_n}M(z;x,t),\mathop{\mathrm{Res}}\limits_{z=\widetilde{\zeta}_n}M(z;x,t)$ have given in Eq. \eqref{T105}. By using Plemelj's formula, one can obtain the solution \eqref{T109} with formula \eqref{T110} of the matrix RH problem. According to the symmetry reduction $\mu_{\pm}(z;x,t)=\frac{\mathrm{i}}{z}\mu_{\pm}\big(-\frac{q^2_0}{z};x,t\big)\sigma_3Q_{\pm}$, we can obtain
\begin{align}\nonumber
&\mu_{-2}(z;x,t)=\frac{\mathrm{i}}{z}q_{-}\mu_{-1}\Big(-\frac{q^2_0}{z};x,t\Big),
\end{align}
and take derivative respect to $z$ on both sides of above equation at the same time. Then, let $z=\zeta_n$, one have following formula
\begin{align}\nonumber
\mu'_{-2}(\zeta_n;x,t)=-\frac{\mathrm{i}q_{-}}{\zeta^2_n}\mu_{-1}(\widetilde{\zeta}_n;x,t)+\frac{\mathrm{i}q_{-}q^2_0}{\zeta^3_n}\mu'_{-1}(\widetilde{\zeta}_n;x,t).
\end{align}

From Eq. \eqref{T104}, when we take $z=\zeta_s$, $\mu_{-2}(\zeta_s;x,t)$ is the second column element of matrix $M(\zeta_s;x,t)$, that is
\begin{align}\label{T114}
\begin{split}
\mu_{-2}(\zeta_s;x,t)=&\bigg(\begin{array}{cc} \frac{\mathrm{i}}{\zeta_s}\mathrm{e}^{\mathrm{i}\nu_{-}(x,t)}q_{-} \\ \mathrm{e}^{-\mathrm{i}\nu_{-}(x,t)} \end{array}\bigg)+\frac{1}{2\pi \mathrm{i}}\int_{\Sigma}\frac{(M^+(\xi;x,t)J(\xi;x,t))_{2}}{\xi-\zeta_s}d\xi+\\
&\sum^{4N_1+2N_2}_{n=1}\bigg\{\widetilde{C}_n(\zeta_s)\bigg[\mu'_{-1}(\widetilde{\zeta}_n;x,t)+\bigg(\widetilde{D}_n+\frac{1}{\zeta_s-\widetilde{\zeta}_n}\bigg)\mu_{-1}(\widetilde{\zeta}_n;x,t)\bigg]\bigg\},
\end{split}
\end{align}
where $s=1,2,\cdots,4N_1+2N_2$ and $\zeta_s$ is equivalent to $\zeta_n$. Then, taking the derivative respect to $z$ $(\text{here }z=\zeta_s)$ on both sides of Eq. \eqref{T114} at the same time, we have
\begin{align}\label{T115}
\begin{split}
\mu'_{-2}(\zeta_s;x,t)=&\bigg(\begin{array}{cc} -\frac{\mathrm{i}}{\zeta^2_s}q_{-}\mathrm{e}^{\mathrm{i}\nu_{-}(x,t)} \\ 0 \end{array}\bigg)+\frac{1}{2\pi \mathrm{i}}\int_{\Sigma}\frac{(M^+(\xi;x,t)J(\xi;x,t))_{2}}{(\xi-\zeta_s)^2}d\xi-\sum^{4N_1+2N_2}_{n=1}\widetilde{C}_n(\zeta_s)\\
&\bigg\{\frac{1}{\zeta_s-\widetilde{\zeta}_n}\bigg[\mu'_{-1}(\widetilde{\zeta}_n;x,t)+\bigg(\widetilde{D}_n+\frac{2}{\zeta_s-\widetilde{\zeta}_n}\bigg)\mu_{-1}(\widetilde{\zeta}_n;x,t)\bigg]\bigg\}.
\end{split}
\end{align}

After that, substituting Eqs. \eqref{T114}-\eqref{T115} into Eq. \eqref{T111} to eliminate $\mu_{-2}(\zeta_s;x,t)$ and $\mu'_{-2}(\zeta_s;x,t)$, by merging and simplifying, one can derived Eq. \eqref{T112}. Completing the proof.

\end{proof}

\subsubsection{Reconstruction Formula of the Potential with NZBCs and Double poles}
From the solution \eqref{T109} of the matrix RH problem, we have
\begin{align}\label{T116}
M(z;x,t)=\mathrm{e}^{\mathrm{i}\nu_{-}(x,t)\sigma_3}+\frac{M^{[1]}(x,t)}{z}+O\Big(\frac{1}{z^2}\Big),\text{ as }z\rightarrow\infty,
\end{align}
where
\begin{small}
\begin{align}\label{T117}
\begin{split}
M^{[1]}(x,t)=&\mathrm{i}\mathrm{e}^{\mathrm{i}\nu_{-}(x,t)\sigma_3}\sigma_3Q_{-}-\frac{1}{2\pi \mathrm{i}}\int_{\Sigma}M^+(\xi;x,t)J(\xi;x,t)d\xi+\sum^{4N_1+2N_2}_{n=1}\big\{A[\zeta_n]\mathrm{e}^{-2\mathrm{i}\theta(\zeta_n;x,t)}\\
&\big[\mu'_{-2}(\zeta_n;x,t)+D_n\mu_{-2}(\zeta_n;x,t)\big],A[\widetilde{\zeta}_n]\mathrm{e}^{2\mathrm{i}\theta(\widetilde{\zeta}_n;x,t)}\big[\mu'_{-1}(\widetilde{\zeta}_n;x,t)+\widetilde{D}_n\mu_{-1}(\widetilde{\zeta}_n;x,t)\big]\big\}.
\end{split}
\end{align}
\end{small}

Substituting Eq. \eqref{T116} into Eq. \eqref{T73} and matching $O(z)$ term, we have
\begin{align}\nonumber
O(z):\frac{\mathrm{i}}{2}\big[M^{[1]}(x,t),\sigma_3\big]+\frac{1}{2}\big[\sigma_3Q_{\pm}\mathrm{e}^{\mathrm{i}\nu_{-}(x,t)\sigma_3},\sigma_3\big]=(Q-Q_{\pm})\mathrm{e}^{\mathrm{i}\nu_{-}(x,t)\sigma_3},
\end{align}
then, by expanding the above equation, one can find the reconstruction formula of the double poles solution (potential) for the TOFKN with NZBCs as follows
\begin{align}\label{T118}
q(x,t)=-\mathrm{i}\mathrm{e}^{\mathrm{i}\nu_{-}(x,t)}M^{[1]}_{12}(x,t),
\end{align}
where $M^{[1]}_{12}(x,t)$ represents the first row and second column element of the matrix $M^{[1]}(x,t)$, and
\begin{small}
\begin{align}\label{T119}
\begin{split}
M^{[1]}_{12}(x,t)=&\mathrm{i}\mathrm{e}^{\mathrm{i}\nu_{-}(x,t)}q_{-}-\frac{1}{2\pi \mathrm{i}}\int_{\Sigma}\big[M^{+}(\xi;x,t)J(\xi;x,t)\big]_{12}d\xi+\\
&\sum^{4N_1+2N_2}_{n=1}\big\{A[\widetilde{\zeta}_n]\mathrm{e}^{2\mathrm{i}\theta(\widetilde{\zeta}_n;x,t)}\big[\mu'_{-11}(\widetilde{\zeta}_n;x,t)+\widetilde{D}_n\mu_{-11}(\widetilde{\zeta}_n;x,t)\big]\big\},
\end{split}
\end{align}
\end{small}
when taking row vector $\alpha=\big(\alpha^{(1)},\alpha^{(2)}\big)$ and column vector $\gamma=(\gamma^{(1)},\gamma^{(2)})^{\mathrm{T}}$, where
\begin{align}\label{T120}
\begin{split}
&\alpha^{(1)}=\Big(A[\widetilde{\zeta}_n]\mathrm{e}^{2\mathrm{i}\theta(\widetilde{\zeta}_n;x,t)}\Big)_{1\times(4N_1+2N_2)},\quad\alpha^{(2)}=\Big(A[\widetilde{\zeta}_n]\mathrm{e}^{2\mathrm{i}\theta(\widetilde{\zeta}_n;x,t)}\widetilde{D}_n\Big)_{1\times(4N_1+2N_2)},\\
&\gamma^{(1)}=\Big(\mu'_{-11}(\widetilde{\zeta}_n;x,t)\Big)_{1\times(4N_1+2N_2)},\qquad\gamma^{(2)}=\Big(\mu_{-11}(\widetilde{\zeta}_n;x,t)\Big)_{1\times(4N_1+2N_2)},
\end{split}
\end{align}
we can obtain a more concise reconstruction formulation of the double-pole solution (potential) for the TOFKN with NZBCs as follows
\begin{align}\label{T121}
q(x,t)=\mathrm{e}^{\mathrm{i}\nu_{-}(x,t)}\bigg[q_{-}\mathrm{e}^{\mathrm{i}\nu_{-}(x,t)}-\mathrm{i}\alpha\gamma+\frac{1}{2\pi}\int_{\Sigma}(M^+(\xi;x,t)J(\xi;x,t))_{12}d\xi\bigg].
\end{align}

\subsubsection{Trace Formulae and Theta Condition with NZBCs and Double poles}
The so-called trace formulae are that the scattering coefficients $s_{11}(z)$ and $s_{22}(z)$ are formulated in terms of the discrete spectrum $Z$ and reflection coefficients $\rho(z)$ and $\tilde{\rho}(z)$. We know that $s_{11}(z),s_{22}(z)$ are analytic on $D^{+},D^{-}$, respectively. The discrete spectral points $\zeta_n$'s are the double zeros of $s_{11}(z)$, while $\widetilde{\zeta}_n$'s are the double zeros of $s_{22}(z)$. Define the functions $\beta^{\pm}(z)$ as follows:
\begin{align}\label{T122}
\begin{split}
&\beta^{+}(z)=s_{11}(z)\prod^{4N_1+2N_2}_{n=1}\Bigg(\frac{z-\widetilde{\zeta}_n}{z-\zeta_n}\Bigg)^2\mathrm{e}^{\mathrm{i}\nu},\\
&\beta^{-}(z)=s_{22}(z)\prod^{4N_1+2N_2}_{n=1}\Bigg(\frac{z-\zeta_n}{z-\widetilde{\zeta}_n}\Bigg)^2\mathrm{e}^{-\mathrm{i}\nu}.
\end{split}
\end{align}

Then, $\beta^{+}(z)$ and $\beta^{-}(z)$ are analytic and have no zero in $D^{+}$ and $D^{-}$, respectively. Furthermore, we have the relation $\beta^{+}(z)\beta^{-}(z)=s_{11}(z)s_{22}(z)$ and the asymptotic behaviors $\beta^{\pm}(z)\rightarrow1,\text{ as }z\rightarrow\infty$.

According to $\mathrm{det}S(z)=s_{11}s_{22}-s_{21}s_{12}=1$, we can derive
\begin{align}\nonumber
\frac{1}{s_{11}(z)s_{22}(z)}=1-\frac{s_{21}(z)s_{12}(z)}{s_{11}(z)s_{22}(z)}=1-\rho(z)\tilde{\rho}(z),
\end{align}
by taking logarithm on both sides of the above equation at the same time, we have
\begin{align}\nonumber
-\mathrm{log}(s_{11}(z)s_{22}(z))=\mathrm{log}[1-\rho(z)\tilde{\rho}(z)]\Rightarrow\mathrm{log}[\beta^{+}(z)\beta^{-}(z)]=-\mathrm{log}[1-\rho(z)\tilde{\rho}(z)],
\end{align}
then employing the Plemelj' formula such that one has
\begin{align}\label{T123}
\mathrm{log}\beta^{\pm}(z)=\mp\frac{1}{2\pi \mathrm{i}}\int_{\Sigma}\frac{\mathrm{log}[1-\rho(z)\tilde{\rho}(z)]}{\xi-z}d\xi,\quad z\in D^{\pm}.
\end{align}

Substituting Eq. \eqref{T123} into  Eq. \eqref{T122}, we can obtain the trace formulae
\begin{align}\label{T124}
\begin{split}
&s_{11}(z)=\mathrm{exp}\Bigg(-\frac{1}{2\pi \mathrm{i}}\int_{\Sigma}\frac{\mathrm{log}[1-\rho(z)\tilde{\rho}(z)]}{\xi-z}d\xi\Bigg)\prod^{4N_1+2N_2}_{n=1}\Bigg(\frac{z-\zeta_n}{z-\widetilde{\zeta}_n}\Bigg)^2\mathrm{e}^{-\mathrm{i}\nu},\\
&s_{22}(z)=\mathrm{exp}\Bigg(\frac{1}{2\pi \mathrm{i}}\int_{\Sigma}\frac{\mathrm{log}[1-\rho(z)\tilde{\rho}(z)]}{\xi-z}d\xi\Bigg)\prod^{4N_1+2N_2}_{n=1}\Bigg(\frac{z-\widetilde{\zeta}_n}{z-\zeta_n}\Bigg)^2\mathrm{e}^{\mathrm{i}\nu}.
\end{split}
\end{align}

As $z\rightarrow0$, we consider formulas \eqref{T124} and the asymptotic behavior of the scattering matrix \eqref{T94}, then the so-called theta condition is obtained. That is to say, there exists $l\in\mathbb{Z}$ such that
\begin{align}\label{T125}
\mathrm{arg}\left(\frac{q_{-}}{q_{+}}\right)+2\nu=16\sum^{N_1}_{n=1}\mathrm{arg}(z_n)+8\sum^{N_2}_{m=1}\mathrm{arg}(\omega_m)+2\pi l+\frac{1}{2\pi}\int_{\Sigma}\frac{\mathrm{log}[1-\rho(\xi)\tilde{\rho}(\xi)]}{\xi}d\xi.
\end{align}

\subsubsection{Reflectionless Potential with NZBCs: Double-Pole Solutions}
For the case of the reflectionless potential: $\rho(z)=\tilde{\rho}(z)=0$, the part jump matrix $J(z;x,t)$ in Eq. \eqref{T107} can be simplified as $J(z;x,t)=0_{2\times2}$. From the Volterra integral equation (72), one can derive $\psi_{\pm}(q_0;x,t)=Y_{\pm}q_0$. Combining with the definition of scattering matrix, one has $S(q_0)=I$ and $q_{+}=q_{-}$. From the theta condition, there exists $i\in\mathbb{Z}$ lead to
\begin{align}\label{T126}
\nu=8\sum^{N_1}_{n=1}\mathrm{arg}(z_n)+4\sum^{N_2}_{m=1}\mathrm{arg}(\omega_m)+\pi i.
\end{align}

Then Eqs. \eqref{T112} and \eqref{T121} with $J(z;x,t)=0_{2\times2}$ become
\begin{align}\label{T127}
\begin{split}
&\sum^{4N_1+2N_2}_{n=1}\bigg\{\widetilde{C}_n(\zeta_s)\mu'_{-11}(\widetilde{\zeta}_n;x,t)+\bigg[\widetilde{C}_n(\zeta_s)\bigg(\widetilde{D}_n+\frac{1}{\zeta_s-\widetilde{\zeta}_n}\bigg)-\frac{iq_{-}}{\zeta_s}\delta_{s,n}\bigg]\mu_{-11}(\widetilde{\zeta}_n;x,t)\bigg\}=\\
&-\mathrm{e}^{\mathrm{i}\nu_{-}(x,t)}\frac{\mathrm{i}q_{-}}{\zeta_s},\\
&\sum^{4N_1+2N_2}_{n=1}\bigg\{\bigg(\frac{\widetilde{C}_n(\zeta_s)}{\zeta_s-\widetilde{\zeta}_n}+\frac{iq_{-}q^2_0}{\zeta^3_s}\delta_{s,n}\bigg)\mu'_{-11}(\widetilde{\zeta}_n;x,t)+\bigg[\frac{\widetilde{C}_n(\zeta_s)}{\zeta_s-\widetilde{\zeta}_n}\bigg(\widetilde{D}_n+\frac{2}{\zeta_s-\widetilde{\zeta}_n}\bigg)-\frac{\mathrm{i}q_{-}}{\zeta^2_s}\delta_{s,n}\bigg]\\
&\mu_{-11}(\widetilde{\zeta}_n;x,t)\bigg\}=-\mathrm{e}^{\mathrm{i}\nu_{-}(x,t)}\frac{\mathrm{i}q_{-}}{\zeta^2_s},
\end{split}
\end{align}
and
\begin{align}\label{T128}
q(x,t)=\mathrm{e}^{\mathrm{i}\nu_{-}(x,t)}\left(q_{-}\mathrm{e}^{\mathrm{i}\nu_{-}(x,t)}-\mathrm{i}\alpha\gamma\right).
\end{align}

\begin{thm}
The explicit expression for the double-pole solution of the TOFKN \eqref{T1} with NZBCs is given by determinant formula
\begin{align}\label{T129}
q(x,t)=q_{-}\bigg(\frac{\mathrm{det}(\widetilde{R})}{\mathrm{det}(\widetilde{G})}\bigg)^2\frac{\mathrm{det}(G)}{\mathrm{det}(R)}\bigg(1+\frac{\mathrm{det}(G)}{\mathrm{det}(R)}\bigg),
\end{align}
where
\begin{align}\label{T130}
\begin{split}
&R=\bigg(\begin{array}{cc} 0 & \alpha \\ \tau & G \end{array}\bigg),\quad \widetilde{R}=\bigg(\begin{array}{cc} 0 & \alpha \\ \tau & \widetilde{G} \end{array}\bigg),\quad\tau=\bigg(\begin{array}{cc} \tau^{(1)} \\ \tau^{(2)} \end{array}\bigg),\\
&\tau^{(1)}=\bigg(\frac{1}{\zeta_s}\bigg)_{(4N_1+2N_2)\times1},\quad \tau^{(2)}=\bigg(\frac{1}{\zeta^2_s}\bigg)_{(4N_1+2N_2)\times1},
\end{split}
\end{align}
and the $(8N_1+4N_2)\times(8N_1+4N_2)$ partitioned matrix $G=\bigg(\begin{array}{cc} G^{(1,1)} & G^{(1,2)} \\ G^{(2,1)} & G^{(2,2)} \end{array}\bigg)$ with $G^{(i,j)}=\big(g^{(i,j)}_{s,n}\big)_{(4N_1+2N_2)\times(4N_1+2N_2)}\,(i,j=1,2)$ given by
\begin{align}\nonumber
\begin{split}
&g^{(1,1)}_{s,n}=\widetilde{C}_n(\zeta_s),\quad g^{(1,2)}_{s,n}=\widetilde{C}_n(\zeta_s)\bigg(\widetilde{D}_n+\frac{1}{\zeta_s-\widetilde{\zeta}_n}\bigg)-\frac{\mathrm{i}q_{-}}{\zeta_s}\delta_{s,n},\\
&g^{(2,1)}_{s,n}=\frac{\widetilde{C}_n(\zeta_s)}{\zeta_s-\widetilde{\zeta}_n}+\frac{\mathrm{i}q_{-}q^2_0}{\zeta^3_s}\delta_{s,n},\quad g^{(2,2)}_{s,n}=\frac{\widetilde{C}_n(\zeta_s)}{\zeta_s-\widetilde{\zeta}_n}\bigg(\widetilde{D}_n+\frac{2}{\zeta_s-\widetilde{\zeta}_n}\bigg)-\frac{\mathrm{i}q_{-}}{\zeta^2_s}\delta_{s,n},
\end{split}
\end{align}
and the $(8N_1+4N_2)\times(8N_1+4N_2)$ partitioned matrix $\widetilde{G}=\bigg(\begin{array}{cc} \widetilde{G}^{(1,1)} & \widetilde{G}^{(1,2)} \\ \widetilde{G}^{(2,1)} & \widetilde{G}^{(2,2)} \end{array}\bigg)$ with $\widetilde{G}^{(i,j)}=\big(\widetilde{g}^{(i,j)}_{s,n}\big)_{(4N_1+2N_2)\times(4N_1+2N_2)}\,(i,j=1,2)$ given by
\begin{align}\nonumber
\begin{split}
&\widetilde{g}^{(1,1)}_{s,n}=\frac{\lambda(\zeta_s)}{\lambda(\widetilde{\zeta}_n)}\widetilde{C}_n(\zeta_s),\quad \widetilde{g}^{(1,2)}_{s,n}=\frac{\lambda(\zeta_s)}{\lambda(\widetilde{\zeta}_n)}\widetilde{C}_n(\zeta_s)\bigg(\widetilde{D}_n+\frac{1}{\zeta_s-\widetilde{\zeta}_n}-\frac{\lambda'(\widetilde{\zeta}_n)}{\lambda(\widetilde{\zeta}_n)}\bigg)-\frac{\mathrm{i}q_{-}}{\zeta_s}\delta_{s,n},\\
&\widetilde{g}^{(2,1)}_{s,n}=\frac{\widetilde{C}_n(\zeta_s)}{\lambda(\widetilde{\zeta}_n)}\bigg(\frac{\lambda(\zeta_s)}{\zeta_s-\widetilde{\zeta}_n}-\lambda'(\zeta_s)\bigg)+\frac{\mathrm{i}q_{-}q^2_0}{\zeta^3_s}\delta_{s,n},\\
&\widetilde{g}^{(2,2)}_{s,n}=\frac{}{}\bigg[\frac{\lambda(\zeta_s)}{\zeta_s-\widetilde{\zeta}_n}\bigg(\widetilde{D}_n+\frac{2}{\zeta_s-\widetilde{\zeta}_n}-\frac{\lambda'(\widetilde{\zeta}_n)}{\lambda(\widetilde{\zeta}_n)}\bigg)-\lambda'(\zeta_s)\bigg(\widetilde{D}_n+\frac{1}{\zeta_s-\widetilde{\zeta}_n}-\frac{\lambda'(\widetilde{\zeta}_n)}{\lambda(\widetilde{\zeta}_n)}\bigg)\bigg]-\frac{\mathrm{i}q_{-}}{\zeta^2_s}\delta_{s,n}.
\end{split}
\end{align}

\end{thm}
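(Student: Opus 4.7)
The plan mirrors the strategy used in the ZBC double-pole theorem culminating in Eq.~\eqref{T54}. First, I would rewrite the linear system \eqref{T127} in block-matrix form as
\begin{equation*}
G\gamma = -\mathrm{i}q_{-}\mathrm{e}^{\mathrm{i}\nu_{-}(x,t)}\tau,
\end{equation*}
where $\gamma=(\gamma^{(1)},\gamma^{(2)})^{\mathrm{T}}$ collects the unknowns $\mu'_{-11}(\widetilde\zeta_n;x,t)$ and $\mu_{-11}(\widetilde\zeta_n;x,t)$, with the right-hand side proportional to $\tau$. The standard bordered-determinant identity
\begin{equation*}
\det\begin{pmatrix} 0 & \alpha \\ \tau & G \end{pmatrix} \;=\; -\det(G)\,\alpha G^{-1}\tau
\end{equation*}
then expresses the inner product $\alpha\gamma$ as a ratio of $\det R$ to $\det G$, up to the scalar prefactor $-\mathrm{i}q_{-}\mathrm{e}^{\mathrm{i}\nu_{-}}$. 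Inserting this into the reflectionless reconstruction formula \eqref{T128} yields a first, implicit representation
\begin{equation*}
q(x,t) \;=\; q_{-}\mathrm{e}^{2\mathrm{i}\nu_{-}(x,t)}\,\Phi_1\!\bigl(\det R,\det G\bigr),
\end{equation*}
in complete analogy with Eq.~\eqref{T56} in the ZBC case. This first step is routine linear algebra.

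The main difficulty is eliminating the still-unknown scalar $\mathrm{e}^{\mathrm{i}\nu_{-}(x,t)}$, which is defined by the integral \eqref{T91}. Following the idea used for Eqs.~\eqref{T57}--\eqref{T58}, I would derive an \emph{independent} representation of the unknowns at the poles by exploiting the third symmetry of Proposition~\ref{P5}, $\mu_{\pm}(z;x,t)=\tfrac{\mathrm{i}}{z}\mu_{\pm}(-q_0^{2}/z;x,t)\sigma_3 Q_{\pm}$, together with its differentiated consequence \eqref{T111} at the paired double poles $\zeta_n$ and $\widetilde\zeta_n=-q_0^{2}/\zeta_n$. When combined with the third-symmetry relations for the norming constants $A[\widetilde\zeta_n]$ and $B[\widetilde\zeta_n]$ given in Proposition~\ref{P6}, this converts the original linear system \eqref{T127} into a second, symmetry-rotated linear system whose block coefficient matrix is precisely the tilded matrix $\widetilde G$, containing the factors $\lambda(\zeta_s)/\lambda(\widetilde\zeta_n)$ and $\lambda'(\zeta_s),\lambda'(\widetilde\zeta_n)$ visible in the statement. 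Applying Cramer's rule to this system produces a representation of the form
\begin{equation*}
q(x,t) \;=\; q_{-}\mathrm{e}^{\mathrm{i}\nu_{-}(x,t)}\,\Phi_2\!\bigl(\det \widetilde R,\det \widetilde G\bigr),
\end{equation*}
carrying only a single power of $\mathrm{e}^{\mathrm{i}\nu_{-}}$, parallel to Eq.~\eqref{T58}.

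Finally, dividing the two representations cancels $\mathrm{e}^{\mathrm{i}\nu_{-}(x,t)}$, and back-substitution collapses the result into the closed determinant form \eqref{T129}. The hard part will be setting up the second system correctly: the symmetry $z\mapsto -q_0^{2}/z$ must be tracked with care at both the values $\mu_{-k}(\widetilde\zeta_n)$ and their $z$-derivatives (which is precisely how the $\lambda'$ terms are generated in $\widetilde G^{(2,1)}$ and $\widetilde G^{(2,2)}$), while simultaneously invoking the third-symmetry relations of Proposition~\ref{P6} to rewrite data at $\widetilde\zeta_n$ as data at $\zeta_n$. Once this bookkeeping is complete, the remaining determinant manipulations are purely algebraic and mirror those already carried out for the ZBC proof.
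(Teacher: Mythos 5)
Your first and third steps match the paper: the linear system \eqref{T127} together with \eqref{T128} gives the implicit representation $q=q_{-}\mathrm{e}^{2\mathrm{i}\nu_{-}}\big(1+\det R/\det G\big)$ (the paper's Eq.~\eqref{T131}), and the final formula \eqref{T129} is indeed obtained by combining this with a second representation carrying only one power of $\mathrm{e}^{\mathrm{i}\nu_{-}}$ so that the unknown phase cancels. The gap is in how you propose to obtain that second representation. You attribute it to the third symmetry $\mu_{\pm}(z)=\tfrac{\mathrm{i}}{z}\mu_{\pm}(-q_0^2/z)\sigma_3Q_{\pm}$ and its differentiated form \eqref{T111}, combined with Proposition~\ref{P6}. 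But that symmetry is already fully consumed in setting up the \emph{first} system: Eqs.~\eqref{T111} are exactly what converts the residue relations at the paired poles $\zeta_n$ and $\widetilde{\zeta}_n=-q_0^2/\zeta_n$ into the closed system \eqref{T112}/\eqref{T127} for $\mu_{-1}(\widetilde{\zeta}_n)$ alone. Since the pole set is invariant under $z\mapsto-q_0^2/z$ (it merely swaps $\zeta_n\leftrightarrow\widetilde{\zeta}_n$) and $\lambda(-q_0^2/z)=\lambda(z)$, applying the symmetry again reproduces the same system and cannot generate an independent equation, nor can it generate the factors $\lambda(\zeta_s)$, $\lambda'(\zeta_s)$, $\lambda'(\widetilde{\zeta}_n)$ appearing in $\widetilde{G}$.

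What actually produces the second system in the paper (Eqs.~\eqref{T132}--\eqref{T133}, paralleling \eqref{T57}--\eqref{T58} in the ZBC case) is a renormalization of the meromorphic matrix by $\lambda(z)$: one writes $M(z)=Y_{-}(z)+\lambda(z)\sum_n[\cdots]$ where the bracket contains the principal parts of $M(z)/\lambda(z)$ at the poles. The extra normalization needed to justify this comes from the trace formulae \eqref{T124} together with the Volterra integral equation \eqref{T76} evaluated where $\lambda(z)$ vanishes (at $z=q_0$ the integral term carries the prefactor $\lambda(z)$ and drops out, giving $\psi_{\pm}(q_0)=Y_{\pm}(q_0)$ and $S(q_0)=I$), which fixes the representation without reference to $\nu_{-}(x,t)$. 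The $\lambda'$ terms in $\widetilde{G}^{(2,1)}$ and $\widetilde{G}^{(2,2)}$ arise from differentiating $f(z)/\lambda(z)$ at the double poles, i.e.\ from $\big(f/\lambda\big)'=f'/\lambda-f\lambda'/\lambda^{2}$, not from differentiating the symmetry relation. Without this $M/\lambda$ device your argument has no independent source of information to eliminate $\mathrm{e}^{\mathrm{i}\nu_{-}(x,t)}$, so the proof as proposed would stall at the implicit formula \eqref{T131}.
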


\begin{proof}
From the Eqs. \eqref{T120}, \eqref{T127} and \eqref{T128}, the reflectionless potential is deduced by determinants:
\begin{align}\label{T131}
q(x,t)=q_{-}\mathrm{e}^{2\mathrm{i}\nu_{-}(x,t)}\bigg(1+\frac{\mathrm{det}(R)}{\mathrm{det}(G)}\bigg).
\end{align}

However, this formula \eqref{T131} is implicit since $\nu_{-}(x,t)$ is included. One needs to derive an explicit form for the reflectionless potential. From the trace formulae \eqref{T124} and Volterra integral equation \eqref{T76} as $x\rightarrow-\infty$, one derives that
\begin{align}\label{T132}
\begin{split}
M(z;x,t)=&Y_{-}(z)+\lambda(z)\sum^{4N_1+2N_2}_{n=1}\left[\frac{\mathop{\mathrm{P}_{-2}}\limits_{z=\zeta_n}\big(M(\lambda(z);x,t)/\lambda(z)\big)}{(z-\zeta_n)^2}+\frac{\mathop{\mathrm{Res}}\limits_{z=\zeta_n}\big(M(\lambda(z);x,t)/\lambda(z)\big)}{z-\zeta_n}\right.\\
&\left.+\frac{\mathop{\mathrm{P}_{-2}}\limits_{z=\widetilde{\zeta}_n}\big(M(\lambda(z);x,t)/\lambda(z)\big)}{(z-\widetilde{\zeta}_n)^2}+\frac{\mathop{\mathrm{Res}}\limits_{z=\widetilde{\zeta}_n}\big(M(\lambda(z);x,t)/\lambda(z)\big)}{z-\widetilde{\zeta}_n}\right],
\end{split}
\end{align}
which can yield the $\gamma$ given by Eq. \eqref{T120} exactly. Then, substituting $\gamma$ into the formula of the potential, one yields
\begin{align}\label{T133}
q(x,t)=q_{-}\mathrm{e}^{\mathrm{i}\nu_{-}(x,t)}\bigg(\mathrm{e}^{\mathrm{i}\nu_{-}(x,t)}+\frac{\mathrm{det}(\widetilde{R})}{\mathrm{det}(\widetilde{G})}\bigg),
\end{align}
then, by combining Eq. \eqref{T131} with Eq. \eqref{T133}, we can obtain Eq. \eqref{T129}, and complete the proof.

\end{proof}

For example, we exhibit the double-pole solutions which contain three kind of types for the TOFKN with NZBCs:\\

$\bullet$ When taking parameters $N_1=0,N_2=1,q_{\pm}=1,\omega_1=\mathrm{e}^{\frac{\pi}{4}\mathrm{i}},A[\omega_1]=\mathrm{i},B[\omega_1]=1+(1-\sqrt{2})\mathrm{i}$, we can obtain the explicit double-pole dark-bright soliton solution $q(x,t)=\frac{E_1}{E_2}$ with
\begin{small}
\begin{align}\nonumber
\begin{split}
E_1=&-\mathrm{i}\big\{(3\sqrt{2}\mathrm{i}t+3\sqrt{2}t+\sqrt{2}-1-\mathrm{i})^2\big(\mathrm{e}^{2x+7t}\big)^2+\big[2(12\sqrt{2}\mathrm{i}t-36\mathrm{i}t^2+2\sqrt{2}\mathrm{i}-12\mathrm{i}t-4\mathrm{i})(3\sqrt{2}\mathrm{i}t\\
&+3\sqrt{2}t+\sqrt{2}-1-\mathrm{i})\mathrm{e}^{x+\frac{7}{2}t}+2(-3\sqrt{2}\mathrm{i}t-\sqrt{2}\mathrm{i}+3\sqrt{2}t-1+\mathrm{i})(3\sqrt{2}\mathrm{i}t+3\sqrt{2}t+\sqrt{2}-1-\mathrm{i})\big]\\
&\mathrm{e}^{2x+7t}+(12\sqrt{2}\mathrm{i}t-36\mathrm{i}t^2+2\sqrt{2}\mathrm{i}-12\mathrm{i}t-4\mathrm{i})^2\big(\mathrm{e}^{x+\frac72t}\big)^2+2(-3\sqrt{2}\mathrm{i}t-\sqrt{2}\mathrm{i}+3\sqrt{2}t-1+\mathrm{i})\\
&(12\sqrt{2}\mathrm{i}t-36\mathrm{i}t^2+2\sqrt{2}\mathrm{i}-12\mathrm{i}t-4\mathrm{i})\mathrm{e}^{x+\frac72t}+(-3\sqrt{2}\mathrm{i}t-\sqrt{2}\mathrm{i}+3\sqrt{2}t-1+\mathrm{i})^2\big\}\big[(12\sqrt{2}\mathrm{i}t-\\
&36\mathrm{i}t^2+2\sqrt{2}\mathrm{i}-12\mathrm{i}t+4\sqrt{2}-24t-4)\mathrm{e}^{2x+7t}+(2\sqrt{2}\mathrm{i}-6\sqrt{2}t-2\sqrt{2}+2)\mathrm{e}^{x+\frac72t}+\\
&(2\sqrt{2}\mathrm{i}-6\sqrt{2}t+2)\mathrm{e}^{3x+\frac{21}{2}t}+\mathrm{i}\mathrm{e}^{4x+14t}+\mathrm{i}\big],
\end{split}
\end{align}
\end{small}
\begin{small}
\begin{align}\nonumber
\begin{split}
E_2=&\big[(12\sqrt{2}t-36t^2+2\sqrt{2}-12t-4)\mathrm{e}^{2x+7t}+(-6\sqrt{2}t-2\sqrt{2}+2)\mathrm{e}^{x+\frac72t}+(6\sqrt{2}t-2)\mathrm{e}^{3x+\frac{21}{2}t}-\\
&\mathrm{e}^{4x+14t}-1\big]\big[(3\sqrt{2}\mathrm{i}t+\sqrt{2}\mathrm{i}-3\sqrt{2}t+1-\mathrm{i})\mathrm{e}^{2x+7t}+(12\sqrt{2}\mathrm{i}t-36\mathrm{i}t^2+2\sqrt{2}\mathrm{i}-12\mathrm{i}t+\\
&2\sqrt{2}-2-2\mathrm{i}-12t)\mathrm{e}^{x+\frac72t}-3\sqrt{2}\mathrm{i}t-3\sqrt{2}t-\sqrt{2}+1+\mathrm{i}\big]^2,
\end{split}
\end{align}
\end{small}
and give out relevant plots in Fig. \ref{F5}. Fig. \ref{F5} (a) and (b) exhibit the three-dimensional and density diagrams for the exact double-pole dark-bright soliton solution of the TOFKN with NZBCs. Fig. \ref{F5} (c) displays the distinct profiles of the exact double-pole soliton for $t=\pm3,0$. It is a semi rational soliton, which is different from the simple pole solution usually expressed by exponential function, even if the double-pole dark-bright soliton solution shows the interaction of dark soliton and bright soliton.

Compared with the classical second-order flow Kaup-Newell system which also be called the DNLS equation in Ref. \cite{Zhangg(2020)}, according to the density diagrams of the double-pole dark-bright soliton solutions for the TOFKN and the DNLS equation in Ref. \cite{Zhangg(2020)} shows that the trajectories of solutions are different obviously. Moreover, by observing the form of exact double-pole dark-bright soliton solution, the average wave velocity of the double-pole dark-bright soliton solution of the DNLS equation from Ref. \cite{Zhangg(2020)} is about $-2$, while the average wave velocity of the double-pole dark-bright soliton solution in this paper is about $-7/2$. It further validates the introduction of third-order dispersion and quintic nonlinear term of Kaup-Newell systems can affect the trajectories and the speed of solutions.

\begin{rem}
The parameter selection of the double-pole dark-bright soliton solution shown in Fig. \ref{F5}. is consistent with that in reference \cite{Zhangg(2020)}.
\end{rem}

\begin{figure}[htbp]
\centering
\subfigure[]{
\begin{minipage}[t]{0.33\textwidth}
\centering
\includegraphics[height=4.5cm,width=4.5cm]{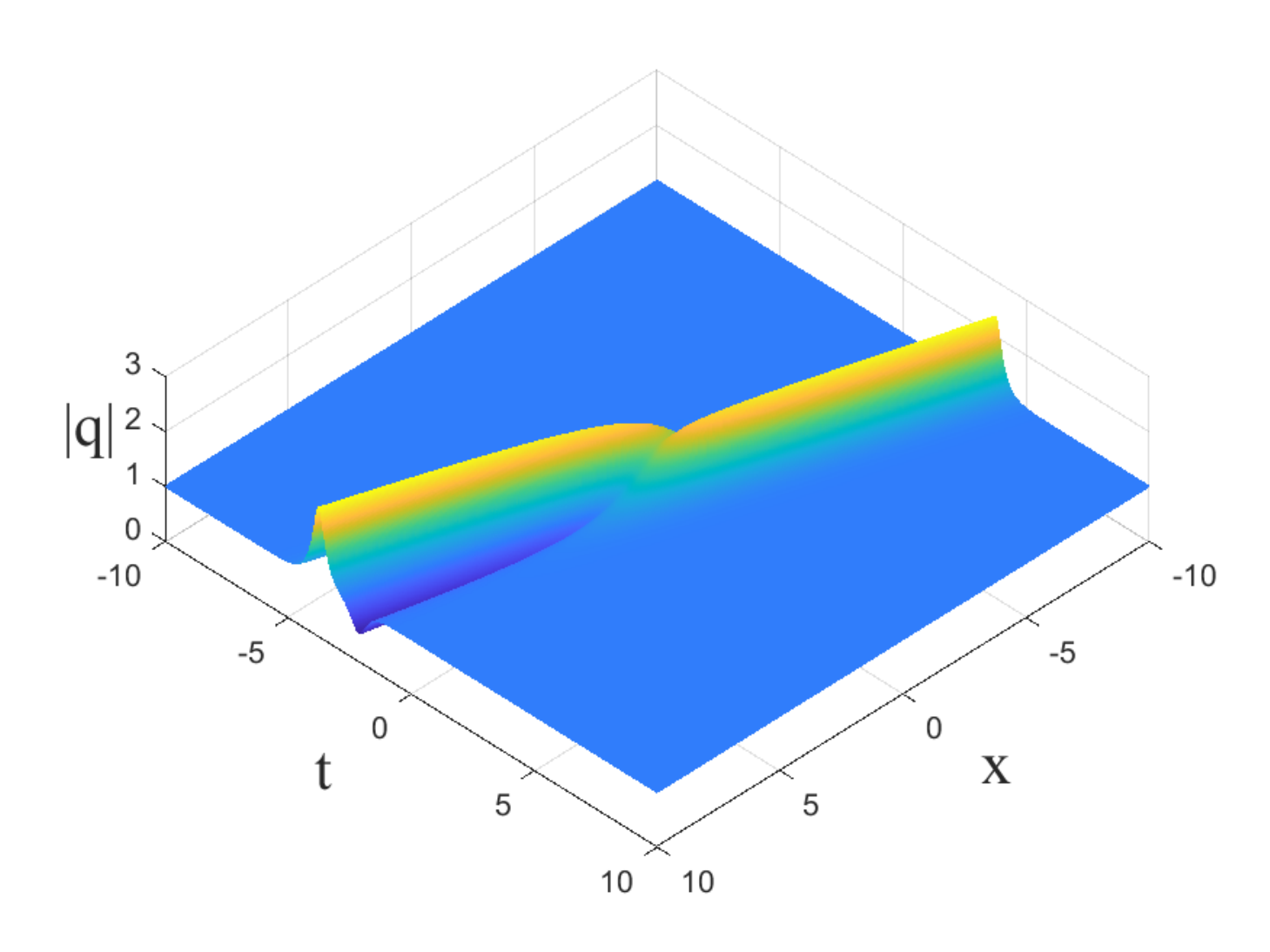}
%\caption{fig1}
\end{minipage}
}%
\subfigure[]{
\begin{minipage}[t]{0.33\textwidth}
\centering
\includegraphics[height=4.5cm,width=4.5cm]{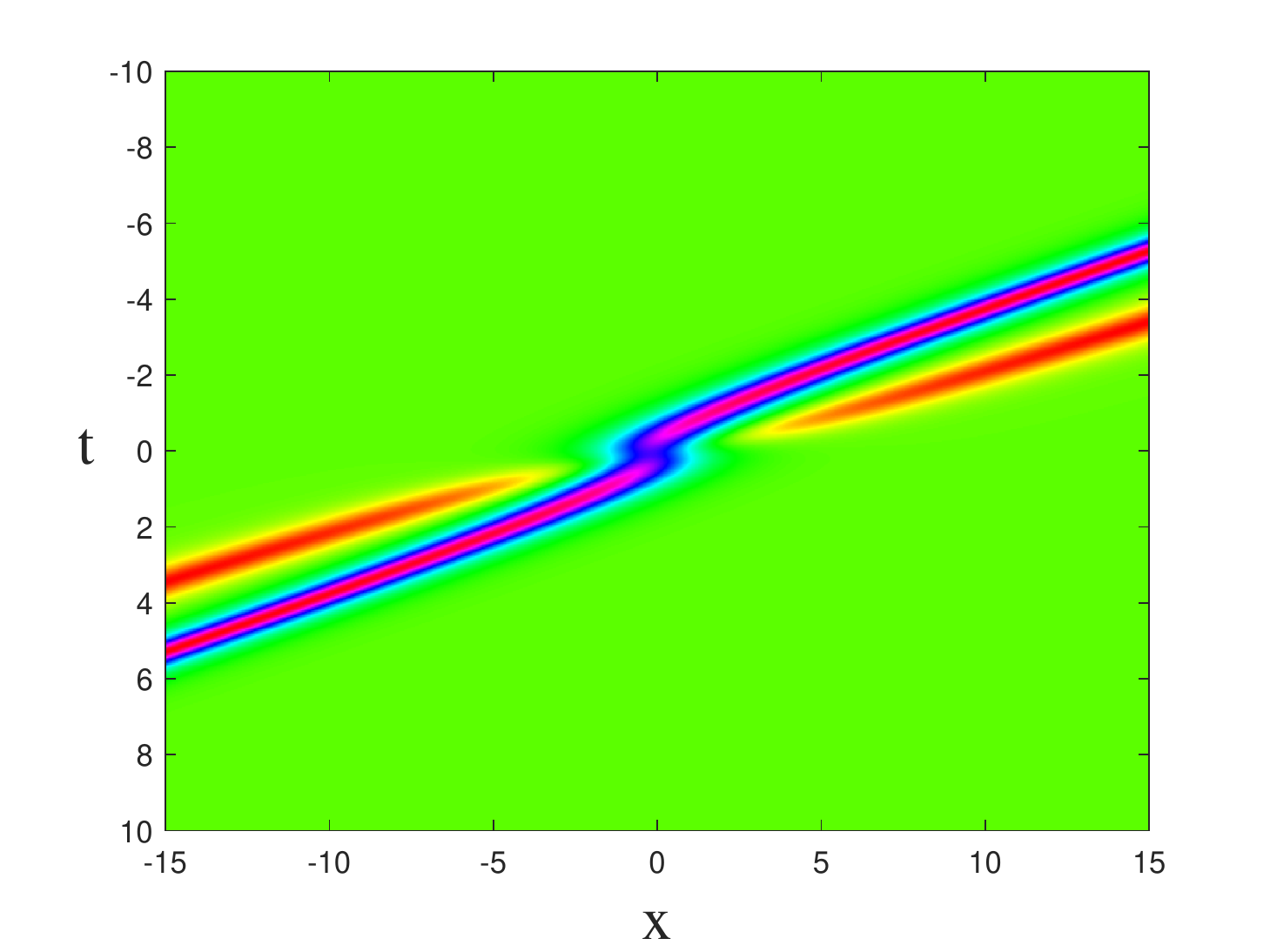}
%\caption{fig2}
\end{minipage}%
}%
\subfigure[]{
\begin{minipage}[t]{0.33\textwidth}
\centering
\includegraphics[height=4.5cm,width=4.5cm]{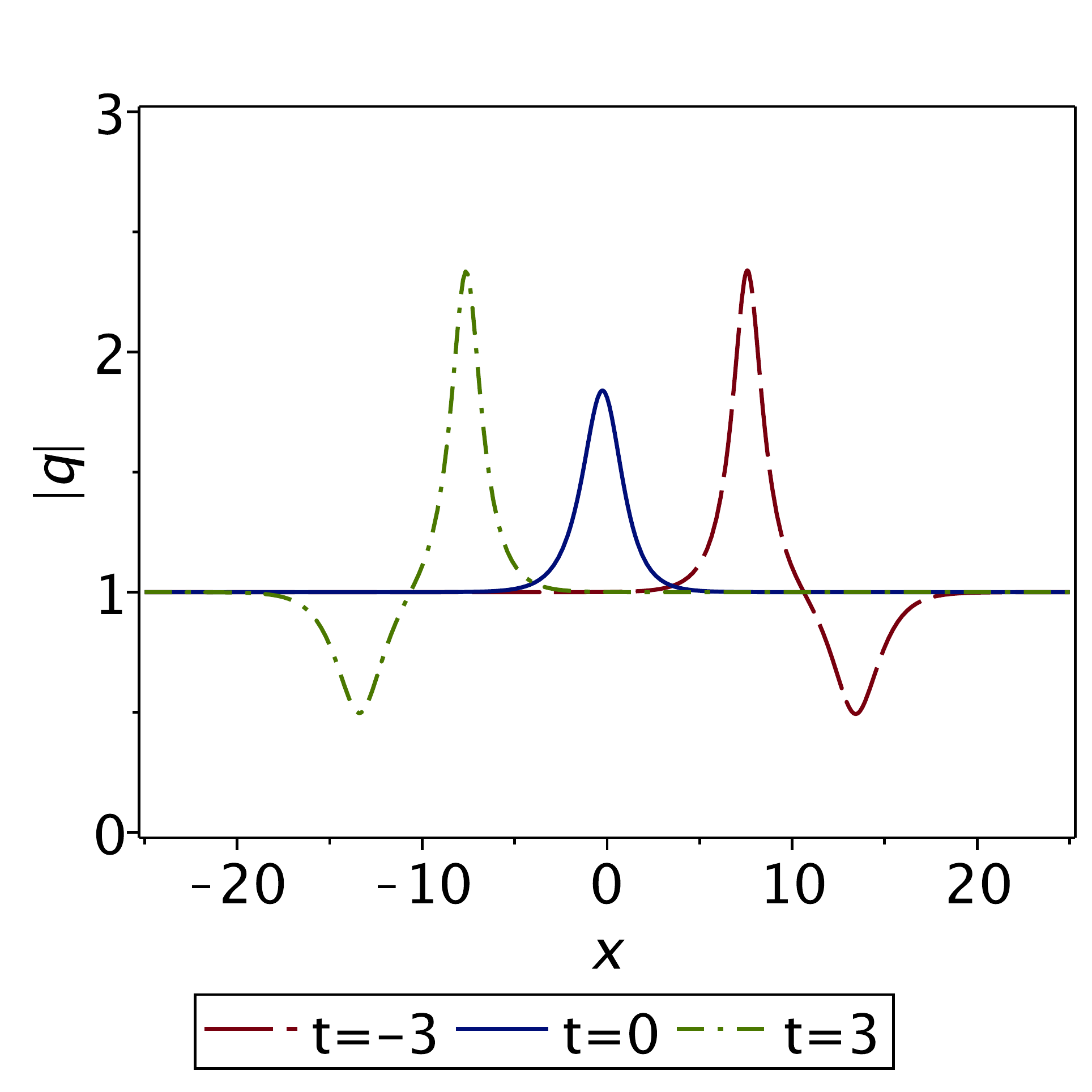}
%\caption{fig1}
\end{minipage}
}%
\centering
\caption{(Color online) The double-pole dark-bright soliton solution of TOFKN \eqref{T1} with NZBCs and $N_1=0,N_2=1,q_{\pm}=1,\omega_1=\mathrm{e}^{\frac{\pi}{4}\mathrm{i}},A[\omega_1]=\mathrm{i},B[\omega_1]=1+(1-\sqrt{2})\mathrm{i}$. (a) The three-dimensional plot; (b) The density plot; (c) The sectional drawings at $t=-3$ (dashed line), $t=0$ (solid line), and $t =3$ (dash-dot line).}
\label{F5}
\end{figure}

$\bullet$ When taking parameters $N_1=1,N_2=0,q_{\pm}=1,\zeta_1=2\mathrm{e}^{\frac{\pi}{4}\mathrm{i}},A[\zeta_1]=B[\zeta_1]=\mathrm{i}$, we can obtain the explicit double-pole breather-breather solution and give out relevant plots in Fig. \ref{F6}. Figs. \ref{F6} (a) and (b) exhibit the three-dimensional and density diagrams for the exact double-pole breather-breather solution of the TOFKN with NZBCs. Fig. \ref{F6} (c) displays the distinct profiles of the exact double-pole breather-breather solution for $t=\pm5,0$. Moreover, from the density plot Fig. \ref{F6} (b), we can find that the propagation of the double-pole breather-breather solution is localized in space $x$ and periodic in time $t$. Obviously, from the Fig. \ref{F6} (a) and (b), we can find that the double-pole breather-breather solution breathes about 10 times at $t=[-3, 3]$ and collides once at near $t=0$.

\begin{figure}[htbp]
\centering
\subfigure[]{
\begin{minipage}[t]{0.33\textwidth}
\centering
\includegraphics[height=4.5cm,width=4.5cm]{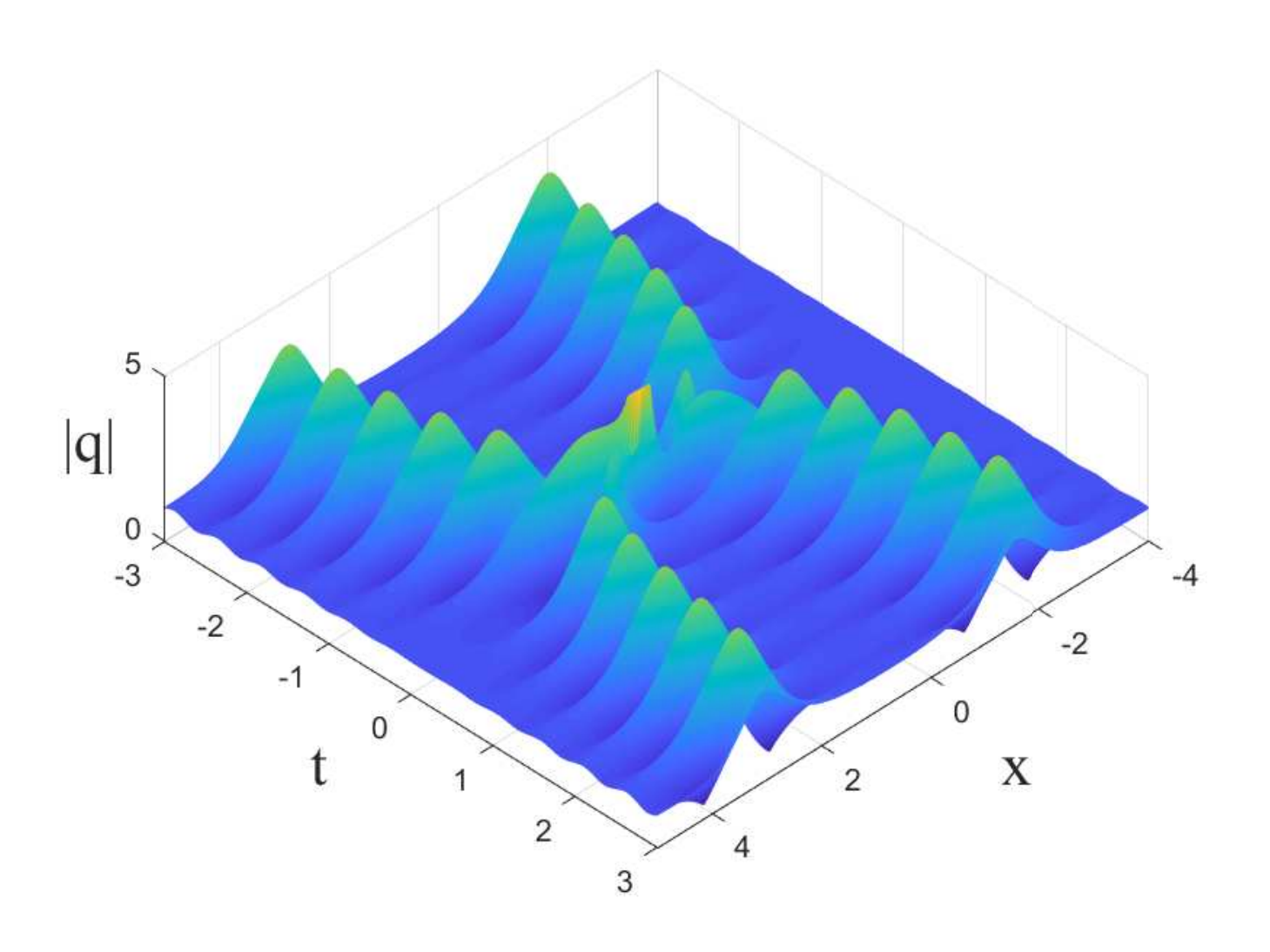}
%\caption{fig1}
\end{minipage}
}%
\subfigure[]{
\begin{minipage}[t]{0.33\textwidth}
\centering
\includegraphics[height=4.5cm,width=4.5cm]{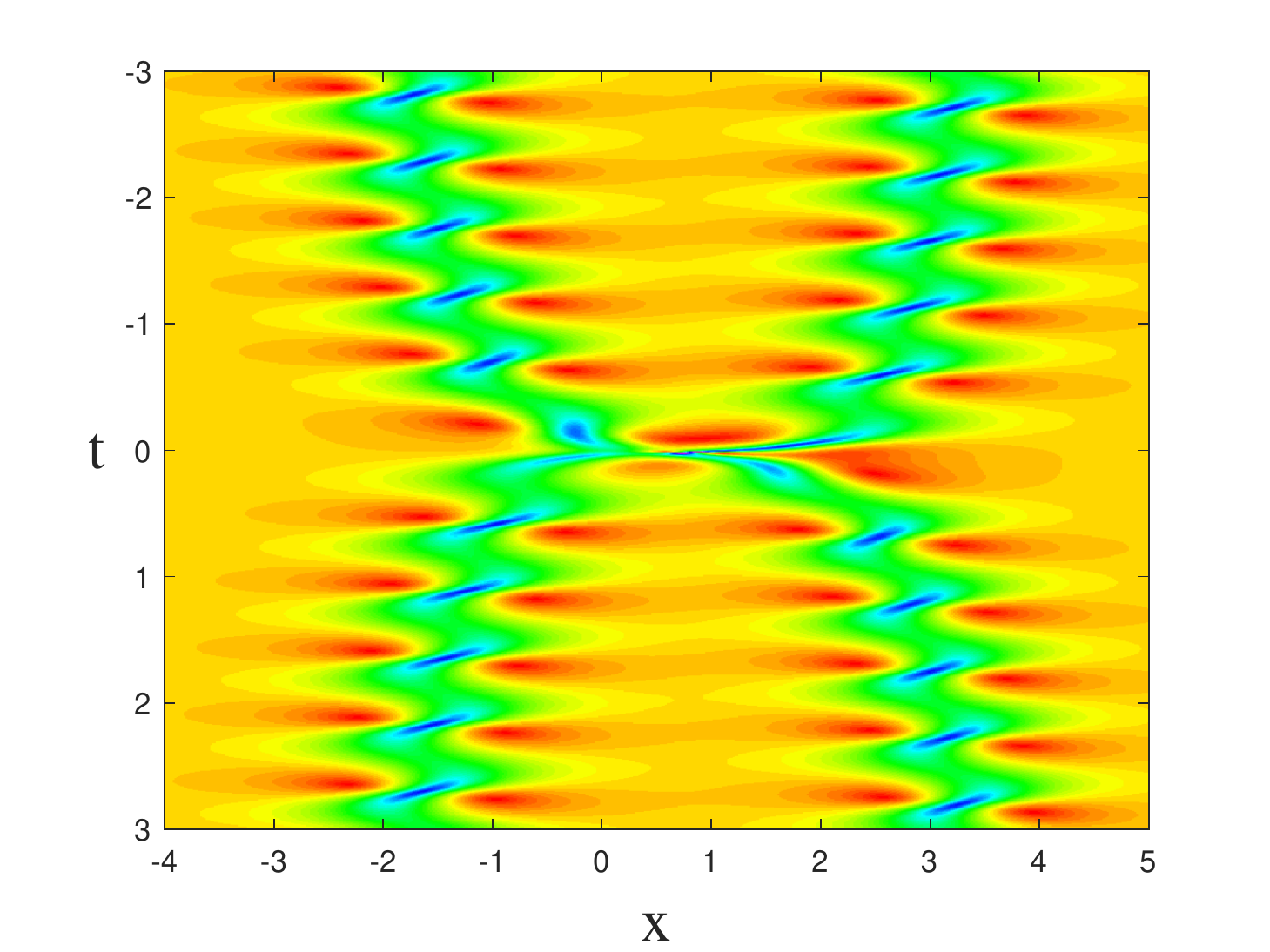}
%\caption{fig2}
\end{minipage}%
}%
\subfigure[]{
\begin{minipage}[t]{0.33\textwidth}
\centering
\includegraphics[height=4.5cm,width=4.5cm]{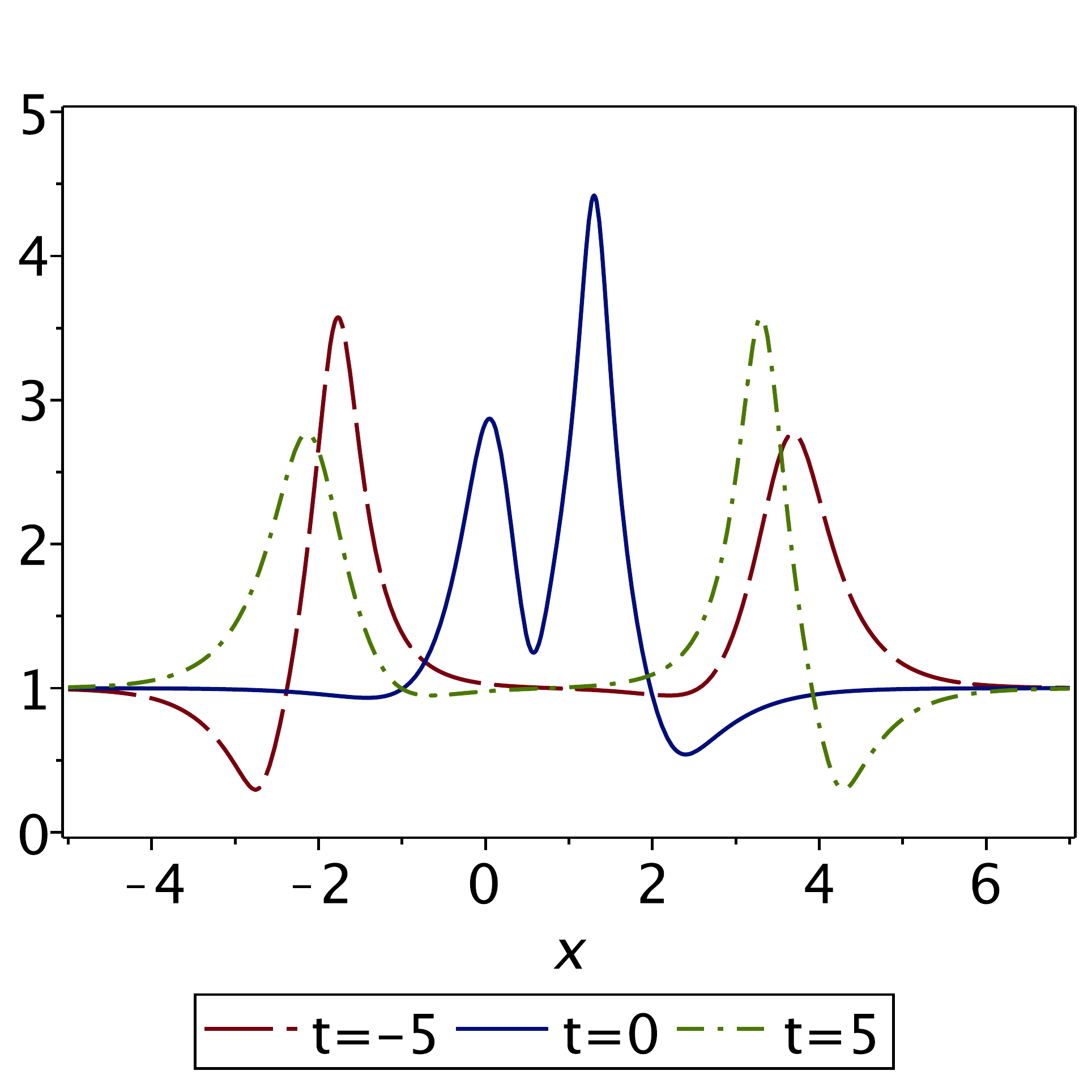}
%\caption{fig1}
\end{minipage}
}%
\centering
\caption{(Color online) The double-pole breather-breather solution of TOFKN \eqref{T1} with NZBCs and $N_1=1,N_2=0,q_{\pm}=1,\zeta_1=2\mathrm{e}^{\frac{\pi}{4}\mathrm{i}},A[\zeta_1]=B[\zeta_1]=\mathrm{i}$. (a) The three-dimensional plot; (b) The density plot; (c) The sectional drawings at $t=-5$ (dashed line), $t=0$ (solid line), and $t=5$ (dash-dot line).}
\label{F6}
\end{figure}

$\bullet$ When taking parameters $N_1=1,N_2=1,q_{\pm}=1,\zeta_1=2\mathrm{e}^{\frac{\pi}{4}\mathrm{i}},\omega_1=\mathrm{e}^{\frac{\pi}{4}\mathrm{i}},A[\zeta_1]=B[\zeta_1]=\mathrm{i},A[\omega_1]=B[\omega_1]=\mathrm{i}$, we can obtain the explicit double-pole breather-breather-dark-bright solution and give out relevant plots in Fig. \ref{F7}. Figs. \ref{F7} (a) and (b) exhibit the three-dimensional and density diagrams for the exact double-pole breather-breather-dark-bright solution of the TOFKN with NZBCs, which is equivalent to a combination of breather-breather solution and dark-bright soliton solutions. Fig. \ref{F7} (c) displays the distinct profiles of the exact double-pole breather-breather-dark-bright solution for $t=\pm5,0$. It is obvious that the double-pole breather-breather-dark-bright solution is an interaction between the double-pole breather-breather solution and the double-pole dark-bright soliton solution. Moreover, We can find that the amplitude at $t=0$ in Fig. \ref{F7} (c) is between that the amplitude at $t=0$ in Fig. \ref{F5} (c) and the amplitude at $t=0$ in Fig. \ref{F6} (c), which also in accordance with the energy conservation law of the interaction between two nonlinear waves.

\begin{figure}[htbp]
\centering
\subfigure[]{
\begin{minipage}[t]{0.33\textwidth}
\centering
\includegraphics[height=4.5cm,width=4.5cm]{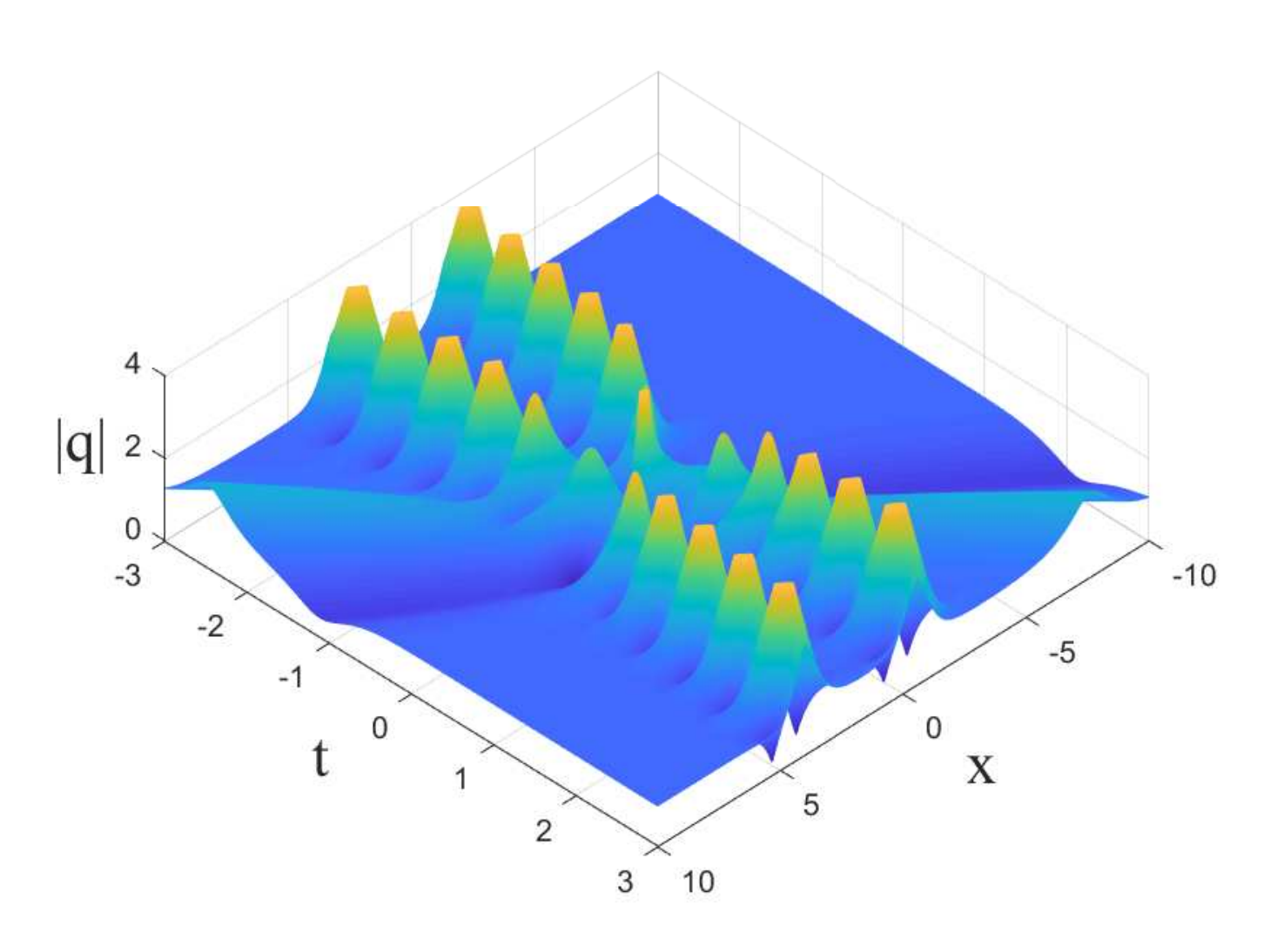}
%\caption{fig1}
\end{minipage}
}%
\subfigure[]{
\begin{minipage}[t]{0.33\textwidth}
\centering
\includegraphics[height=4.5cm,width=4.5cm]{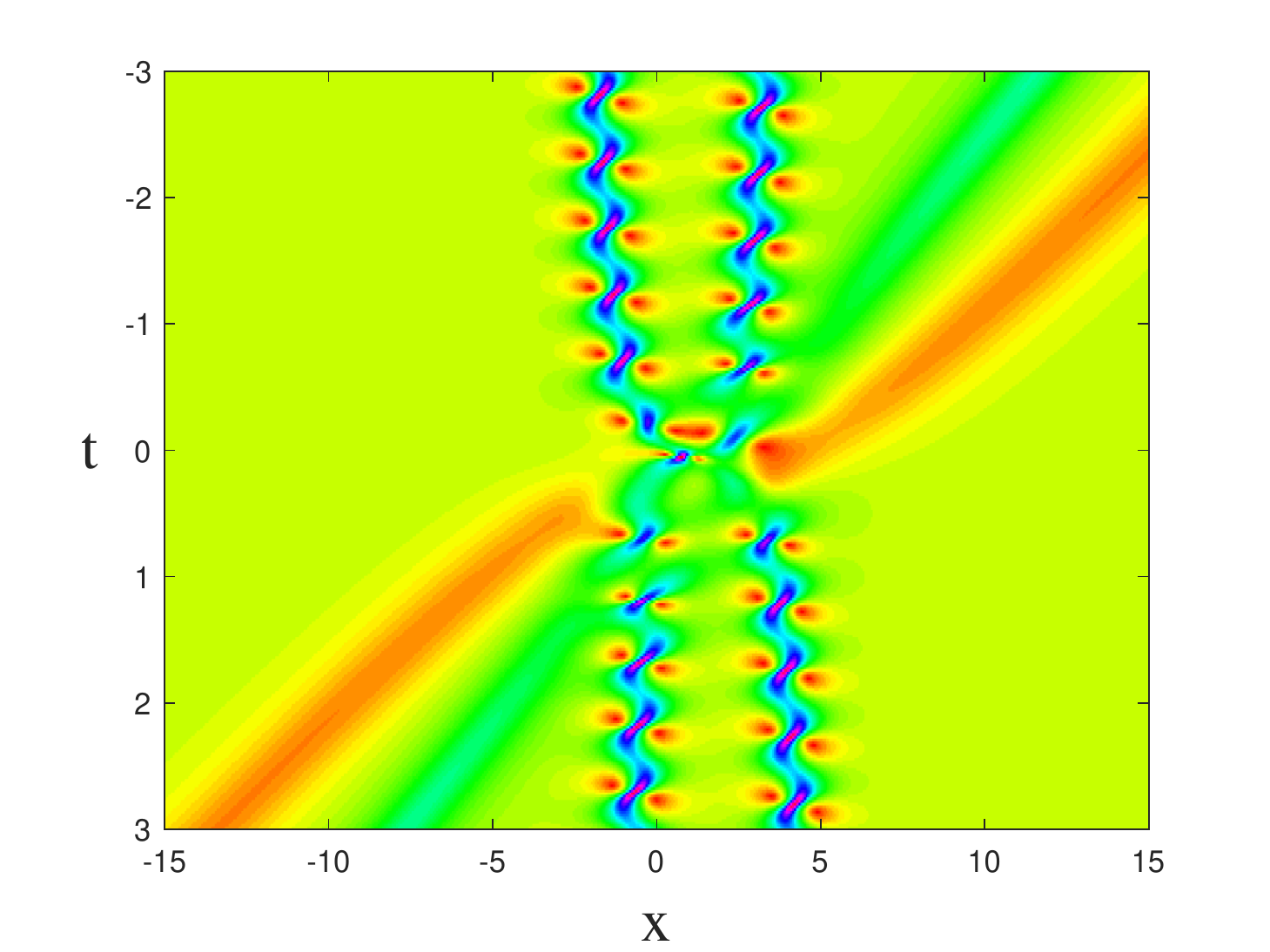}
%\caption{fig2}
\end{minipage}%
}%
\subfigure[]{
\begin{minipage}[t]{0.33\textwidth}
\centering
\includegraphics[height=4.5cm,width=4.5cm]{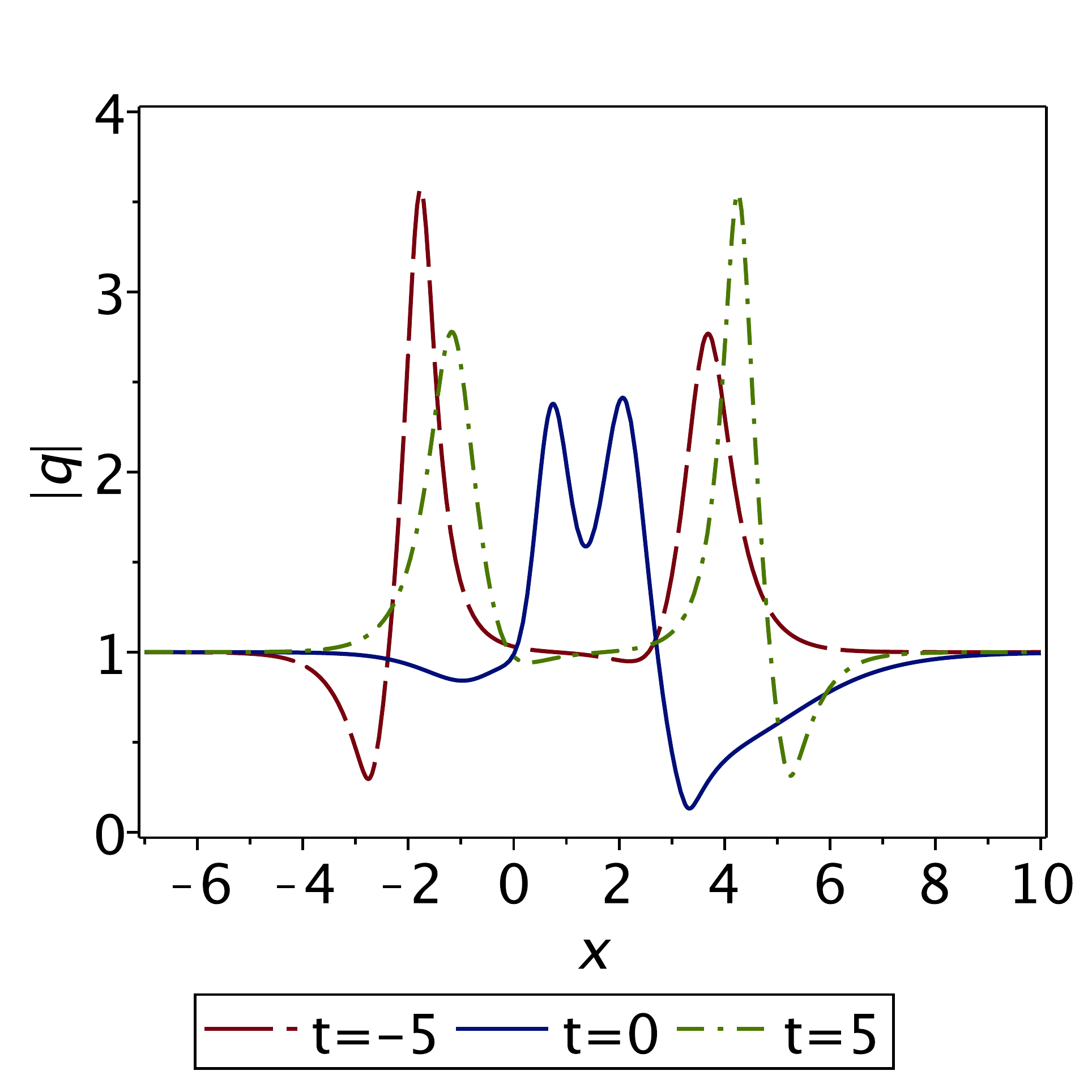}
%\caption{fig1}
\end{minipage}
}%
\centering
\caption{(Color online) The double-pole breather-breather-dark-bright solution of TOFKN \eqref{T1} with NZBCs and $N_1=1,N_2=1,q_{\pm}=1,\zeta_1=2\mathrm{e}^{\frac{\pi}{4}\mathrm{i}},\omega_1=\mathrm{e}^{\frac{\pi}{4}\mathrm{i}},A[\zeta_1]=B[\zeta_1]=\mathrm{i},A[\omega_1]=B[\omega_1]=\mathrm{i}$. (a) The three-dimensional plot; (b) The density plot; (c) The sectional drawings at $t=-5$ (dashed line), $t=0$ (solid line), and $t =5$ (dash-dot line).}
\label{F7}
\end{figure}

\subsection{Inverse Problem with NZBCs and Triple Poles}

In the section, we devote to propose an inverse problem with NZBCs and solve it to obtain accurate triple poles solutions for the TOFKN \eqref{T1}.

\subsubsection{Discrete Spectrum with NZBCs and Triple Zeros}
Differing from the previous results with ZBCs and double poles, we here suppose that $s_{11}(z)$ has $N$ triple zeros in $z_0 = \{z\in\mathbb{C}: \mathrm{Re}z>0, \mathrm{Im}z>0\}$ denoted by $z_n$, $n =1,2,\cdots,N$, that is, $s_{11}(z_n)=s'_{11}(z_n)=s''_{11}(z_n)=0$, and $s'''_{11}(z_n)\neq0$. Obviously, the corresponding discrete spectrum of triple poles is consistent with the discrete spectrum of double poles, as shown in Fig. \ref{F4}.

Once given $z_0\in Z\cap D^{+}$, one can obtain that $\psi''_{+1}(z_0;x,t)-b[z_0]\psi''_{-2}(z_0;x,t)-2d[z_0]\psi'_{-2}(z_0;x,t)$ and $\psi_{-2}(z_0;x,t)$ are linearly dependent by combining Eqs. \eqref{T97}-\eqref{T98} and $s''_{11}(z_0)=0$. Similarly, when a given $z_0\in Z\cap D^{-}$, one can obtain that $\psi''_{+2}(z_0;x,t)-b[z_0]\psi''_{-1}(z_0;x,t)-2d[z_0]\psi'_{-1}(z_0;x,t)$ and $\psi_{-1}(z_0;x,t)$ are linearly dependent by combining $s''_{22}(z_0)=0$. For convenience, we introduce the norming constant $h[z_0]$ such that
\begin{small}
\begin{align}\label{DS-NTZ1}
\begin{split}
\psi''_{+1}(z_0;x,t)-b[z_0]\psi''_{-2}(z_0;x,t)-2d[z_0]\psi'_{-2}(z_0;x,t)=h[z_0]\psi_{-2}(z_0;x,t),\text{ as }z_0\in Z\cap D^{+},\\
\psi''_{+2}(z_0;x,t)-b[z_0]\psi''_{-1}(z_0;x,t)-2d[z_0]\psi'_{-1}(z_0;x,t)=h[z_0]\psi_{-1}(z_0;x,t),\text{ as }z_0\in Z\cap D^{-}.
\end{split}
\end{align}
\end{small}

Then we notice that $\psi_{+1}(z;x,t)$ and $s_{11}(z)$ are analytic on $D^{+}$, Suppose $z_0$ is the triple zeros of $s_{11}$. Let $\psi_{+1}(z;x,t)$ and $s_{11}(z)$ carry out Taylor expansion at $z=z_0$, we have
\begin{small}
\begin{align}\nonumber
\begin{split}
&\frac{\psi_{+1}(z;x,t)}{s_{11}(z)}=\frac{6\psi_{+1}(z_0;x,t)}{s'''_{11}(z_0)}(z-z_0)^{-3}+\frac{-3\psi_{+1}(z_0;x,t)s''''_{11}(z_0)+12\psi'_{+1}(z_0;x,t)s'''_{11}(z_0)}{2(s'''_{11}(z_0))^2}\\
&(z-z_0)^{-2}+\frac{3\psi_{+1}(z_0;x,t)(s''''_{11}(z_0))^2-12\psi'_{+1}(z_0;x,t)s'''_{11}(z_0)s''''_{11}(z_0)+24\psi''_{+1}(z_0;x,t)(s'''_{11}(z_0))^2}{8(s'''_{11}(z_0))^3}\\
&(z-z_0)^{-1}+\cdots,
\end{split}
\end{align}
\end{small}
Then, as $z_0\in Z\cap D^{+}$, one has the compact form
\begin{align}\nonumber
\begin{split}
&\mathop{P_{-3}}_{z=z_0}\bigg[\frac{\psi_{+1}(z;x,t)}{s_{11}(z)}\bigg]=\frac{6b[z_0]\psi_{-2}(z_0;x,t)}{s'''_{11}(z_0)},\\
&\mathop{P_{-2}}_{z=z_0}\bigg[\frac{\psi_{+1}(z;x,t)}{s_{11}(z)}\bigg]=\frac{6b[z_0]}{s'''_{11}(z_0)}\bigg[\psi'_{-2}(z_0;x,t)+\bigg(\frac{d[z_0]}{b[z_0]}-\frac{s''''_{11}(z_0)}{4s'''_{11}(z_0)}\bigg)\psi_{-2}(z_0;x,t)\bigg],\\
&\mathop{\mathrm{Res}}_{z=z_0}\bigg[\frac{\psi_{+1}(z;x,t)}{s_{11}(z)}\bigg]=\frac{6b[z_0]}{s'''_{11}(z_0)}\bigg[\frac12\psi''_{-2}(z_0;x,t)+\bigg(\frac{d[z_0]}{b[z_0]}-\frac{s''''_{11}(z_0)}{4s'''_{11}(z_0)}\bigg)\psi'_{-2}(z_0;x,t)+\\
&\qquad\qquad\qquad\qquad\qquad\bigg(\frac{h[z_0]}{2b[z_0]}-\frac{d[z_0]s''''_{11}(z_0)}{4b[z_0]s'''_{11}(z_0)}+\frac{(s''''_{11}(z_0))^2}{16(s'''_{11}(z_0))^2}\bigg)\psi_{-2}(z_0;x,t)\bigg].
\end{split}
\end{align}

Similarly, for the case of $\psi_{+2}(z;x,t)$ and $s_{22}(z)$ are analytic on $D^{-}$, as $z_0\in Z\cap D^{-}$ we repeat the above process and obtain
\begin{align}\nonumber
\begin{split}
&\mathop{P_{-3}}_{z=z_0}\bigg[\frac{\psi_{+2}(z;x,t)}{s_{22}(z)}\bigg]=\frac{6b[z_0]\psi_{-1}(z_0;x,t)}{s'''_{22}(z_0)},\\
&\mathop{P_{-2}}_{z=z_0}\bigg[\frac{\psi_{+2}(z;x,t)}{s_{22}(z)}\bigg]=\frac{6b[z_0]}{s'''_{22}(z_0)}\bigg[\psi'_{-1}(z_0;x,t)+\bigg(\frac{d[z_0]}{b[z_0]}-\frac{s''''_{22}(z_0)}{4s'''_{22}(z_0)}\bigg)\psi_{-1}(z_0;x,t)\bigg],\\
&\mathop{\mathrm{Res}}_{z=z_0}\bigg[\frac{\psi_{+2}(z;x,t)}{s_{22}(z)}\bigg]=\frac{6b[z_0]}{s'''_{22}(z_0)}\bigg[\frac12\psi''_{-1}(z_0;x,t)+\bigg(\frac{d[z_0]}{b[z_0]}-\frac{s''''_{22}(z_0)}{4s'''_{22}(z_0)}\bigg)\psi'_{-1}(z_0;x,t)+\\
&\qquad\qquad\qquad\qquad\qquad\bigg(\frac{h[z_0]}{2b[z_0]}-\frac{d[z_0]s''''_{22}(z_0)}{4b[z_0]s'''_{22}(z_0)}+\frac{(s''''_{22}(z_0))^2}{16(s'''_{22}(z_0))^2}\bigg)\psi_{-1}(z_0;x,t)\bigg].
\end{split}
\end{align}

Moreover, let
\begin{align}\label{DS-NTZ2}
\begin{split}
&\widetilde{A}[z_0]=\left\{
\begin{aligned}
\frac{6b[z_0]}{s'''_{11}(z_0)},\text{ as }z_0\in Z\cap D^{+},\\
\frac{6b[z_0]}{s'''_{22}(z_0)},\text{ as }z_0\in Z\cap D^{-},
\end{aligned}
\right.\quad
\widetilde{B}[z_0]=\left\{
\begin{aligned}
\frac{d[z_0]}{b[z_0]}-\frac{s''''_{11}(z_0)}{4s'''_{11}(z_0)},\text{ as }z_0\in Z\cap D^{+},\\
\frac{d[z_0]}{b[z_0]}-\frac{s''''_{22}(z_0)}{4s'''_{22}(z_0)},\text{ as }z_0\in Z\cap D^{-},
\end{aligned}
\right.\\
&\widetilde{C}[z_0]=\left\{
\begin{aligned}
\frac{h[z_0]}{2b[z_0]}-\frac{d[z_0]s''''_{11}(z_0)}{4b[z_0]s'''_{11}(z_0)}+\frac{(s''''_{11}(z_0))^2}{16(s'''_{11}(z_0))^2},\text{ as }z_0\in Z\cap D^{+},\\
\frac{h[z_0]}{2b[z_0]}-\frac{d[z_0]s''''_{22}(z_0)}{4b[z_0]s'''_{22}(z_0)}+\frac{(s''''_{22}(z_0))^2}{16(s'''_{22}(z_0))^2},\text{ as }z_0\in Z\cap D^{-}.
\end{aligned}
\right.
\end{split}
\end{align}

Then, we have
\begin{align}\label{DS-NTZ3}
\begin{split}
&\mathop{P_{-3}}_{z=z_0}\bigg[\frac{\psi_{+1}(z;x,t)}{s_{11}(z)}\bigg]=\widetilde{A}[z_0]\psi_{-2}(z_0;x,t),\text{ as }z_0\in Z\cap D^{+},\\
&\mathop{P_{-3}}_{z=z_0}\bigg[\frac{\psi_{+2}(z;x,t)}{s_{22}(z)}\bigg]=\widetilde{A}[z_0]\psi_{-1}(z_0;x,t),\text{ as }z_0\in Z\cap D^{-},\\
&\mathop{P_{-2}}_{z=z_0}\bigg[\frac{\psi_{+1}(z;x,t)}{s_{11}(z)}\bigg]=\widetilde{A}[z_0][\psi'_{-2}(z_0;x,t)+\widetilde{B}[z_0]\psi_{-2}(z_0;x,t)],\text{ as }z_0\in Z\cap D^{+},\\
&\mathop{P_{-2}}_{z=z_0}\bigg[\frac{\psi_{+2}(z;x,t)}{s_{22}(z)}\bigg]=\widetilde{A}[z_0][\psi'_{-1}(z_0;x,t)+\widetilde{B}[z_0]\psi_{-1}(z_0;x,t)],\text{ as }z_0\in Z\cap D^{-},\\
&\mathop{\mathrm{Res}}_{z=z_0}\bigg[\frac{\psi_{+1}(z;x,t)}{s_{11}(z)}\bigg]=\widetilde{A}[z_0]\bigg[\frac12\psi''_{-2}(z_0;x,t)+\widetilde{B}[z_0]\psi'_{-2}(z_0;x,t)+\widetilde{C}[z_0]\psi_{-2}(z_0;x,t)\bigg],\\
&\qquad\qquad\qquad\qquad\qquad\text{ as }z_0\in Z\cap D^{+},\\
&\mathop{\mathrm{Res}}_{z=z_0}\bigg[\frac{\psi_{+2}(z;x,t)}{s_{22}(z)}\bigg]=\widetilde{A}[z_0]\bigg[\frac12\psi''_{-1}(z_0;x,t)+\widetilde{B}[z_0]\psi'_{-1}(z_0;x,t)+\widetilde{C}[z_0]\psi_{-1}(z_0;x,t)\bigg],\\
&\qquad\qquad\qquad\qquad\qquad\text{ as }z_0\in Z\cap D^{-}.
\end{split}
\end{align}

Accordingly, by mean of Eqs. \eqref{T97}-\eqref{T98}, Eqs. \eqref{DS-NTZ1}-\eqref{DS-NTZ2} as well as proposition \ref{P5}, we can derive the following symmetry relations.
\begin{prop}\label{DS-NTZ-P1}
For $z_0\in Z$, the three symmetry relations for $\widetilde{A}[z_0]$, $\widetilde{B}[z_0]$ and $\widetilde{C}[z_0]$ can be deduced as follows:\\
$\bullet$ The first symmetry relation $\widetilde{A}[z_0]=-\widetilde{A}[z^*_0]^*$, $\widetilde{B}[z_0]=\widetilde{B}[z^*_0]^*$, $\widetilde{C}[z_0]=\widetilde{C}[z^*_0]^*$.\\
$\bullet$ The second symmetry relation $\widetilde{A}[z_0]=-\widetilde{A}[-z^*_0]^*$, $\widetilde{B}[z_0]=-\widetilde{B}[-z^*_0]^*$ , $\widetilde{C}[z_0]=\widetilde{C}[-z^*_0]^*$.\\
$\bullet$ The third symmetry relation $\widetilde{A}[z_0]=\frac{z_0^6q_{-}^*}{q_0^6q_{-}}\widetilde{A}[-\frac{q_0^2}{z_0}]$, $\widetilde{B}[z_0]=\frac{q_0^2}{z_0^2}\widetilde{B}[-\frac{q_0^2}{z_0}]+\frac{3}{z_0}$ , $\widetilde{C}[z_0]=\frac{q_0^4}{z_0^4}\widetilde{C}[-\frac{q_0^2}{z_0}]+\frac{2q_0^2}{z_0^3}\widetilde{B}[-\frac{q_0^2}{z_0}]+\frac{3}{z_0^2}$.
\end{prop}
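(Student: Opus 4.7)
My plan is to derive each of the three symmetry relations by pushing the corresponding symmetry of the Jost solutions and scattering matrix (Proposition \ref{P5}) through the defining expressions \eqref{T97}, \eqref{T98}, \eqref{DS-NTZ1} for the norming constants $b[z_0], d[z_0], h[z_0]$, and through the derivatives $s'''_{11}, s^{(4)}_{11}$ appearing in \eqref{DS-NTZ2}. The strategy is essentially the same for all three reductions, but the computational burden is very different: the first two are natural analogues of the triple-pole ZBC case (cf.\ Proposition \ref{DS-TZ-P1}), while the third requires the chain rule through the involution $z\mapsto w:=-q_0^2/z$.

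For the first symmetry I would replace $z_0$ by $z_0^*$ and conjugate in \eqref{T97}, \eqref{T98}, \eqref{DS-NTZ1}, then invoke $\psi_\pm(z;x,t)=\sigma_2\psi_\pm(z^*;x,t)^*\sigma_2$ together with the corresponding symmetries of $s_{11},s_{22}$ and their derivatives (obtained by differentiating $S(z)=\sigma_2 S(z^*)^*\sigma_2$). This yields $b[z_0]=-b[z_0^*]^*$, $d[z_0]=-d[z_0^*]^*$, $h[z_0]=-h[z_0^*]^*$, and $s^{(k)}_{11}(z_0)=(s^{(k)}_{22}(z_0^*))^*$; substituting into \eqref{DS-NTZ2} gives the asserted relations. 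The second symmetry follows identically with $\sigma_2$ replaced by $\sigma_1$ and $z_0^*$ by $-z_0^*$.

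The heart of the proof is the third symmetry. Using $\psi_\pm(z;x,t)=\tfrac{i}{z}\psi_\pm(w;x,t)\sigma_3 Q_\pm$ I first deduce
$\psi_{\pm 1}(z)=\tfrac{iq_\pm^*}{z}\psi_{\pm 2}(w)$, $\psi_{\pm 2}(z)=\tfrac{iq_\pm}{z}\psi_{\pm 1}(w)$,
and differentiate repeatedly in $z$, exploiting $w'(z)=q_0^2/z^2$, $w''(z)=-2q_0^2/z^3$, $w'''(z)=6q_0^2/z^4$. Plugging the first two levels of these formulas into \eqref{T97}--\eqref{T98} produces, after cancellations, the clean transformations
\[
b[z_0]=\frac{q_+^*}{q_-}\,b[w_0],\qquad d[z_0]=\frac{q_+^* q_0^2}{q_- z_0^2}\,d[w_0],
\]
and plugging the third level into \eqref{DS-NTZ1} yields an analogous (longer) transformation for $h[z_0]$ in terms of $h[w_0]$, $d[w_0]$, $b[w_0]$. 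In parallel I apply the chain rule to $s_{11}(z)=\tfrac{q_+^*}{q_-^*}s_{22}(w)$ to obtain
\[
s'''_{11}(z_0)=\tfrac{q_+^*}{q_-^*}\tfrac{q_0^6}{z_0^6}\,s'''_{22}(w_0),\qquad
s^{(4)}_{11}(z_0)=\tfrac{q_+^*}{q_-^*}\!\left[\tfrac{q_0^8}{z_0^8}s^{(4)}_{22}(w_0)-\tfrac{12q_0^6}{z_0^7}s'''_{22}(w_0)\right],
\]
together with the corresponding formula for $s^{(5)}_{11}(z_0)$. Substituting all of this into \eqref{DS-NTZ2} and collecting terms produces the desired $\widetilde{A}[z_0]=\tfrac{z_0^6q_-^*}{q_0^6 q_-}\widetilde{A}[w_0]$, $\widetilde{B}[z_0]=\tfrac{q_0^2}{z_0^2}\widetilde{B}[w_0]+\tfrac{3}{z_0}$, $\widetilde{C}[z_0]=\tfrac{q_0^4}{z_0^4}\widetilde{C}[w_0]+\tfrac{2q_0^2}{z_0^3}\widetilde{B}[w_0]+\tfrac{3}{z_0^2}$.

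The main obstacle is organizational rather than conceptual: the formula for $\widetilde{C}$ involves $h$, $d$, $b$, $s'''_{22}$, $s^{(4)}_{22}$, and $s^{(5)}_{22}$ all evaluated at $w_0$, each entering with chain-rule factors in $q_0^2/z_0^k$. I would handle this by first isolating the transformation laws for each building block and then substituting in a single, controlled step, so that the inhomogeneous correction terms $3/z_0$ (in $\widetilde{B}$) and $2q_0^2\widetilde{B}[w_0]/z_0^3+3/z_0^2$ (in $\widetilde{C}$) arise transparently from the $w''(z_0),w'''(z_0)$ contributions rather than being obscured by bookkeeping. The consistency of these corrections with the involutive nature of $z\mapsto -q_0^2/z$ also serves as a useful internal check.
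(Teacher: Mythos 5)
Your strategy is the one the paper itself (tacitly) relies on — the paper gives no proof beyond citing Eqs.~\eqref{T97}--\eqref{T98}, \eqref{DS-NTZ1}--\eqref{DS-NTZ2} and Proposition~\ref{P5} — and your intermediate formulas are correct where they can be checked: with $w_0=-q_0^2/z_0$ one indeed gets $b[z_0]=\frac{q_+^*}{q_-}b[w_0]$, $d[z_0]=\frac{q_+^*q_0^2}{q_-z_0^2}d[w_0]$, $s'''_{11}(z_0)=\frac{q_+^*}{q_-^*}\frac{q_0^6}{z_0^6}s'''_{22}(w_0)$ and your expression for $s''''_{11}(z_0)$, and these deliver the first two symmetry relations in full and the $\widetilde{A}$, $\widetilde{B}$ parts of the third. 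Two small corrections to your bookkeeping: the $h$-transformation comes out as $h[z_0]=\frac{q_+^*}{q_-}\bigl[\frac{q_0^4}{z_0^4}h[w_0]-\frac{2q_0^2}{z_0^3}d[w_0]\bigr]$ (the $b[w_0]$ contributions cancel), and no fifth derivative of $s_{11}$ appears anywhere in \eqref{DS-NTZ2}, so the $s^{(5)}$ formulas you announce are not usable in the substitution you describe.

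The step that does not close is the final one for $\widetilde{C}$. Carrying your plan through literally — substituting the above laws into the definition \eqref{DS-NTZ2} — yields, with $r=q_0^2/z_0^2$ and $u=1/z_0$,
\begin{equation*}
\widetilde{C}[z_0]=r^2\widetilde{C}[w_0]+2ru\,\widetilde{B}[w_0]+3u^2+\Bigl(\tfrac{6}{z_0^2}-\tfrac{q_0^2}{z_0^3}\tfrac{s''''_{22}(w_0)}{s'''_{22}(w_0)}\Bigr),
\end{equation*}
i.e.\ the asserted identity plus a generically nonzero remainder; since $h$ is defined purely from Jost functions, its transformation law cannot supply the $s''''_{22}/s'''_{22}$ term needed to cancel it. The remainder equals $\frac{s^{(5)}_{11}(z_0)}{20s'''_{11}(z_0)}-\frac{q_0^4}{z_0^4}\frac{s^{(5)}_{22}(w_0)}{20s'''_{22}(w_0)}$, which is precisely the discrepancy between the quantity \eqref{DS-NTZ2} and the true coefficient of $\psi_{-2}(z_0)$ in $\mathrm{Res}_{z=z_0}[\psi_{+1}/s_{11}]$: at a triple zero the residue of $\psi_{+1}/s_{11}$ contains an extra term $-\frac{s^{(5)}_{11}(z_0)\psi_{+1}(z_0)}{20(s'''_{11}(z_0))^2}$ that the paper's Laurent expansion omits. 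So the third relation for $\widetilde{C}$ as stated holds for the residue coefficient of \eqref{DS-NTZ3}, not for the expression \eqref{DS-NTZ2}, and a proof by substitution into \eqref{DS-NTZ2} cannot produce it. The clean route is to bypass the norming constants entirely for the third reduction: write $\frac{\psi_{+1}(z)}{s_{11}(z)}=\frac{\mathrm{i}q_-^*}{z}\,\frac{\psi_{+2}(w)}{s_{22}(w)}$, expand $w-w_0=\frac{q_0^2(z-z_0)}{z_0^2\left(1+(z-z_0)/z_0\right)}$, and match the coefficients of $(z-z_0)^{-3},(z-z_0)^{-2},(z-z_0)^{-1}$ against \eqref{DS-NTZ3}; the corrections $\frac{3}{z_0}$ and $\frac{2q_0^2}{z_0^3}\widetilde{B}[w_0]+\frac{3}{z_0^2}$ then fall out in a few lines, and the passage through $h[z_0]$ (and the fifth derivative) is avoided altogether.
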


\subsubsection{The Matrix RH Problem with NZBCs and Triple Poles}

Similarly, a matrix RH problem is built as follows.
\begin{prop}\label{MRHP-NTP-P1}
Define the sectionally meromorphic matrices
\begin{align}\label{MRHP-NTP-1}
M(z;x,t)=\left\{
\begin{aligned}
M^+(z;x,t)=\bigg(\frac{\mu_{+1}(z;x,t)}{s_{11}(z)},\mu_{-2}(z;x,t)\bigg),\text{ as } z\in D^+,\\
M^-(z;x,t)=\bigg(\mu_{-1}(z;x,t),\frac{\mu_{+2}(z;x,t)}{s_{22}(z)}\bigg),\text{ as } z\in D^-,
\end{aligned}
\right.
\end{align}
where $\lim\limits_{\substack{z'\rightarrow z\\z'\in D^{\pm}}}M(z';x,t)=M^{\pm}(z;x,t)$. Then, the multiplicative matrix RH problem is given below:\\
$\bullet$ Analyticity: $M(z;x,t)$ is analytic in $D^+\cup D^-\backslash Z$ and has the triple poles in $Z$, whose principal parts of the Laurent series at each triple pole $\zeta_n$ or $\widetilde{\zeta}_n$, are determined as
\begin{align}\label{MRHP-NTP-2}
\begin{split}
&\mathop{\mathrm{Res}}_{z=\zeta_n}M^+(z;x,t)=\bigg(\widetilde{A}[\zeta_n]\mathrm{e}^{-2\mathrm{i}\theta(\zeta_n;x,t)}\bigg[\frac12\mu''_{-2}(\zeta_n;x,t)+(\widetilde{B}[\zeta_n]-2\mathrm{i}\theta'(\zeta_n;x,t))\mu'_{-2}(\zeta_n;x,t)\\
&\qquad\qquad\qquad\qquad+(\widetilde{C}[\zeta_n]-2(\theta'(\zeta_n))^2-\mathrm{i}\theta''(\zeta_n)-2\mathrm{i}\theta'(\zeta_n)\widetilde{B}[\zeta_n])\mu_{-2}(\zeta_n;x,t)\bigg],0\bigg),\\
&\mathop{\mathrm{P}_{-2}}_{z=\zeta_n}M^+(z;x,t)=\big(\widetilde{A}[\zeta_n]\mathrm{e}^{-2\mathrm{i}\theta(\zeta_n;x,t)}[\mu'_{-2}(\zeta_n;x,t)+(\widetilde{B}[\zeta_n]-2\mathrm{i}\theta'(\zeta_n;x,t))\mu_{-2}(\zeta_n;x,t)],0\big),\\
&\mathop{\mathrm{P}_{-3}}_{z=\zeta_n}M^+(z;x,t)=\big(\widetilde{A}[\zeta_n]\mathrm{e}^{-2\mathrm{i}\theta(\zeta_n;x,t)}\mu_{-2}(\zeta_n;x,t),0\big),\\
&\mathop{\mathrm{Res}}_{z=\widetilde{\zeta}_n}M^+(z;x,t)=\bigg(0,\widetilde{A}[\widetilde{\zeta}_n]\mathrm{e}^{2\mathrm{i}\theta(\widetilde{\zeta}_n;x,t)}\bigg[\frac12\mu''_{-1}(\widetilde{\zeta}_n;x,t)+(\widetilde{B}[\widetilde{\zeta}_n]+2\mathrm{i}\theta'(\widetilde{\zeta}_n;x,t))\mu'_{-1}(\widetilde{\zeta}_n;x,t)\\
&\qquad\qquad\qquad\qquad+(\widetilde{C}[\widetilde{\zeta}_n]-2(\theta'(\widetilde{\zeta}_n))^2+\mathrm{i}\theta''(\widetilde{\zeta}_n)+2\mathrm{i}\theta'(\widetilde{\zeta}_n)\widetilde{B}[\widetilde{\zeta}_n])\mu_{-1}(\widetilde{\zeta}_n;x,t)\bigg]\bigg),\\
&\mathop{\mathrm{P}_{-2}}_{z=\widetilde{\zeta}_n}M^+(z;x,t)=\Big(0,\widetilde{A}[\widetilde{\zeta}_n]\mathrm{e}^{2\mathrm{i}\theta(\widetilde{\zeta}_n;x,t)}[\mu'_{-1}(\widetilde{\zeta}_n;x,t)+(\widetilde{B}[\zeta_n]+2\mathrm{i}\theta'(\widetilde{\zeta}_n;x,t))\mu_{-1}(\widetilde{\zeta}_n;x,t)]\Big),\\
&\mathop{\mathrm{P}_{-3}}_{z=\widetilde{\zeta}_n}M^+(z;x,t)=\Big(0,\widetilde{A}[\widetilde{\zeta}_n]\mathrm{e}^{2\mathrm{i}\theta(\widetilde{\zeta}_n;x,t)}\mu_{-1}(\widetilde{\zeta}_n;x,t)\Big).
\end{split}
\end{align}
$\bullet$ Jump condition:
\begin{align}\label{MRHP-NTP-3}
M^-(z;x,t)=M^+(z;x,t)[I-J(z;x,t)],\text{ as }z\in Z,
\end{align}
where
\begin{align}\label{MRHP-NTP-4}
J(z;x,t)=\mathrm{e}^{\mathrm{i}\theta(z;x,t)\widehat{\sigma_3}}\bigg(\begin{array}{cc} 0 & -\tilde{\rho}(z) \\ \rho(z) & \rho(z)\tilde{\rho}(z) \end{array}\bigg).
\end{align}
$\bullet$ Asymptotic behavior:
\begin{align}\label{MRHP-NTP-5}
M(z;x,t)=\left\{
\begin{aligned}
&\frac{\mathrm{i}}{z}\mathrm{e}^{\mathrm{i}\nu_{-}(x,t)\sigma_3}\sigma_3Q_{-}+O(1),\text{ as } z\rightarrow0,\\
&\mathrm{e}^{\mathrm{i}\nu_{-}(x,t)\sigma_3}+\left(\frac{1}{z}\right),\quad\quad\quad\text{ as } z\rightarrow\infty.
\end{aligned}
\right.
\end{align}

\end{prop}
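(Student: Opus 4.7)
The plan is to verify the three defining features of the matrix RH problem (analyticity with prescribed principal parts at the triple poles, jump condition on $\Sigma$, and asymptotics as $z\to 0$ and $z\to\infty$) by reducing each to material already established in the earlier subsections. The structure parallels the proof of Proposition \ref{P9} (the double-pole case with NZBCs) and Proposition \ref{MRHP-P1} (the triple-pole case with ZBCs), with the extra subtlety that at $z\to 0$ the NZBCs force an additional explicit singular term $(\mathrm{i}/z)\mathrm{e}^{\mathrm{i}\nu_-\sigma_3}\sigma_3 Q_-$.

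\emph{Step 1: Analyticity and principal parts.} By Proposition \ref{P3}, $\mu_{+1}$ and $\mu_{-2}$ extend analytically to $D^+$ while $\mu_{-1}$ and $\mu_{+2}$ extend analytically to $D^-$. Since $s_{11}(z)$ is analytic in $D^+$ and $s_{22}(z)$ in $D^-$, the only singularities of $M^\pm$ in $D^+\cup D^-$ come from the zeros of $s_{11}$ and $s_{22}$, which by assumption are triple zeros located at $\zeta_n$ and $\widetilde\zeta_n$ respectively. I would translate the six Laurent-coefficient formulas in \eqref{DS-NTZ3} from $\psi$-language to $\mu$-language via $\mu_\pm=\psi_\pm \mathrm{e}^{-\mathrm{i}\theta\sigma_3}$; this replaces each $\psi_{\mp k}^{(j)}(\zeta)$ by the corresponding derivative of $\mu_{\mp k}(\zeta)\mathrm{e}^{\mp\mathrm{i}\theta(\zeta)\sigma_3}$, producing the extra shifts $-2\mathrm{i}\theta'(\zeta_n)$ in $\widetilde B[\zeta_n]$ and $-2(\theta'(\zeta_n))^2-\mathrm{i}\theta''(\zeta_n)-2\mathrm{i}\theta'(\zeta_n)\widetilde B[\zeta_n]$ in $\widetilde C[\zeta_n]$ (and their conjugates at $\widetilde\zeta_n$) that appear in \eqref{MRHP-NTP-2}. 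This is the same bookkeeping performed in the ZBCs triple-pole proof, and the sectional meromorphy follows.

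\emph{Step 2: Jump condition.} I would start from the scattering relation \eqref{T77}, rewrite it columnwise as
\begin{align*}
\mu_{+1}&=s_{11}\mu_{-1}+s_{21}\mathrm{e}^{-2\mathrm{i}\theta}\mu_{-2},\\
\mu_{+2}&=s_{12}\mathrm{e}^{2\mathrm{i}\theta}\mu_{-1}+s_{22}\mu_{-2},
\end{align*}
and divide by $s_{11}$ and $s_{22}$ respectively. Using the definitions \eqref{T79} of $\rho,\widetilde\rho$, the computation yields $M^-=M^+(I-J)$ with $J$ as in \eqref{MRHP-NTP-4}; this step is identical in form to its analogue in Proposition \ref{P9}.

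\emph{Step 3: Asymptotic behavior.} The $z\to 0$ and $z\to\infty$ behavior is inherited from Propositions \ref{P7} and \ref{PP7}. At $z\to\infty$, $\mu_{\pm}\sim \mathrm{e}^{\mathrm{i}\nu_\pm\sigma_3}$ and $s_{11}\sim\mathrm{e}^{-\mathrm{i}\nu}$, $s_{22}\sim\mathrm{e}^{\mathrm{i}\nu}$, and since $\nu=\nu_--\nu_+$ the two columns of $M^\pm$ combine into $\mathrm{e}^{\mathrm{i}\nu_-\sigma_3}+O(1/z)$. At $z\to 0$, the leading $\mathrm{i}/z$ singularities of $\mu_\pm$ together with the $1/s_{11}$, $1/s_{22}$ factors blow up like $z^{-1}$ in the opposite off-diagonal entry; using $s_{11}\sim(q_-/q_+)\mathrm{e}^{-\mathrm{i}\nu}$ and its symmetric counterpart in $s_{22}$, the two off-diagonal entries both reduce to $\mathrm{i}q_-\mathrm{e}^{\pm\mathrm{i}\nu_-}/z$, giving exactly $(\mathrm{i}/z)\mathrm{e}^{\mathrm{i}\nu_-\sigma_3}\sigma_3 Q_-$.

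The main obstacle I anticipate is purely organizational: the $\mu$-version of the principal parts at a triple pole is several lines of algebra, because one must differentiate the exponential prefactor twice and correctly collect the resulting $\theta'$, $\theta''$ contributions into the compact $\widetilde B,\widetilde C$-style constants in \eqref{MRHP-NTP-2}. There is no conceptual new difficulty beyond what appears in the ZBCs triple-pole case; the proof is essentially a direct synthesis of the techniques used in Propositions \ref{P9} and \ref{MRHP-P1}.
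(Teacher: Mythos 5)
Your proposal is correct and follows exactly the route the paper intends: the paper states this proposition without proof, deferring to the arguments already given for Proposition \ref{P9} (NZBCs, double poles) and the ZBC triple-pole case, and your three steps — column analyticity from Proposition \ref{P3} plus translation of \eqref{DS-NTZ3} into $\mu$-language (I checked that differentiating the exponential prefactor twice does reproduce the $\widetilde{B}[\zeta_n]-2\mathrm{i}\theta'$ and $\widetilde{C}[\zeta_n]-2(\theta')^2-\mathrm{i}\theta''-2\mathrm{i}\theta'\widetilde{B}$ combinations in \eqref{MRHP-NTP-2}), the columnwise rewriting of \eqref{T77} for the jump, and Propositions \ref{P7}--\ref{PP7} for the $z\to0,\infty$ asymptotics — are precisely that argument. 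The only nitpick is that in Step 3 the lower-left entry at $z\to0$ involves $q_-^*\mathrm{e}^{-\mathrm{i}\nu_-}$ rather than $q_-\mathrm{e}^{-\mathrm{i}\nu_-}$, but this does not affect the conclusion $(\mathrm{i}/z)\mathrm{e}^{\mathrm{i}\nu_-\sigma_3}\sigma_3Q_-$.
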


Therefore utilizing asymptotic values as $z\rightarrow\infty$, $z\rightarrow0$ and the singularity contributions, one can regularize the RH problem as a normative form. Then, applying the Plemelj's formula, the solutions of the corresponding matrix RH problem can be solved as follows

\begin{prop}\label{MRHP-NTP-P2}
The solution of the above-mentioned matrix Riemann-Hilbert problem can be expressed as
\begin{small}
\begin{align}\label{MRHP-NTP-6}
\begin{split}
M(z;x,t)=&\mathrm{e}^{\mathrm{i}\nu_{-}(x,t)\sigma_3}\Big(I+\frac{\mathrm{i}}{z}\sigma_3Q_{-}\Big)+\frac{1}{2\pi \mathrm{i}}\int_{\Sigma}\frac{M^+(\xi;x,t)J(\xi;x,t)}{\xi-z}d\xi+\sum^{4N_1+2N_2}_{n=1}\bigg(C_n(z)\bigg[\\
&\frac12\mu''_{-2}(\zeta_n;x,t)+\bigg(D_n+\frac{1}{z-\zeta_n}\bigg)\mu'_{-2}(\zeta_n;x,t)+\bigg(\frac{1}{(z-\zeta_n)^2}+\frac{D_n}{z-\zeta_n}+F_n\bigg)\\
&\mu_{-2}(\zeta_n;x,t)\bigg],\widehat{C}_n(z)\bigg[\frac12\mu''_{-1}(\widetilde{\zeta}_n;x,t)+\bigg(\widehat{D}_n+\frac{1}{z-\widetilde{\zeta}_n}\bigg)\mu'_{-1}(\widetilde{\zeta}_n;x,t)+\bigg(\frac{1}{(z-\widetilde{\zeta}_n)^2}\\
&+\frac{\widehat{D}_n}{z-\widetilde{\zeta}_n}+\widehat{F}_n\bigg)\mu_{-1}(\widetilde{\zeta}_n;x,t)\bigg),
\end{split}
\end{align}
\end{small}
where $z\in\mathbb{C}\backslash\Sigma$, $\int_{\Sigma}$ is an integral along the oriented contour exhibited in Fig. \ref{F4}, and
\begin{align}\label{MRHP-NTP-7}
\begin{split}
&C_n(z)=\frac{\widetilde{A}[\zeta_n]}{z-\zeta_n}\mathrm{e}^{-2\mathrm{i}\theta(\zeta_n;x,t)},\quad \widehat{C}_n(z)=\frac{\widetilde{A}[\widetilde{\zeta}_n]}{z-\widetilde{\zeta}_n}\mathrm{e}^{2\mathrm{i}\theta(\widetilde{\zeta}_n;x,t)},\\
&D_n=\widetilde{B}[\zeta_n]-2\mathrm{i}\theta'(\zeta_n;x,t),\quad \widehat{D}_n=\widetilde{B}[\widetilde{\zeta}_n]+2\mathrm{i}\theta'(\widetilde{\zeta}_n;x,t),\\
&F_n=\widetilde{C}[\zeta_n]-2(\theta'(\zeta_n))^2-\mathrm{i}\theta''(\zeta_n)-2\mathrm{i}\theta'(\zeta_n)\widetilde{B}[\zeta_n],\\
&\widehat{F}_n=\widetilde{C}[\widetilde{\zeta}_n]-2(\theta'(\widetilde{\zeta}_n))^2+\mathrm{i}\theta''(\widetilde{\zeta}_n)+2\mathrm{i}\theta'(\widetilde{\zeta}_n)\widetilde{B}[\widetilde{\zeta}_n],
\end{split}
\end{align}
in which, $\mu_{-2}(\zeta_n;x,t)$, $\mu'_{-2}(\zeta_n;x,t)$ and $\mu''_{-2}(\zeta_n;x,t)$ are determined via $\mu_{-1}(\widetilde{\zeta}_n;x,t)$, $\mu'_{-1}(\widetilde{\zeta}_n;x,t)$ and $\mu''_{-1}(\widetilde{\zeta}_n;x,t)$ as
\begin{align}\label{MRHP-NTP-8}
\begin{split}
&\mu_{-2}(\zeta_n;x,t)=\frac{\mathrm{i}q_{-}}{\zeta_n}\mu_{-1}(\widetilde{\zeta}_n;x,t),\quad \mu'_{-2}(\zeta_n;x,t)=-\frac{\mathrm{i}q_{-}}{\zeta^2_n}\mu_{-1}(\widetilde{\zeta}_n;x,t)+\frac{\mathrm{i}q_{-}q^2_0}{\zeta^3_n}\mu'_{-1}(\widetilde{\zeta}_n;x,t),\\
&\mu''_{-2}(\zeta_n;x,t)=\frac{2\mathrm{i}q_{-}}{\zeta^3_n}\mu_{-1}(\widetilde{\zeta}_n;x,t)-\frac{4\mathrm{i}q_{-}q_0^2}{\zeta^4_n}\mu'_{-1}(\widetilde{\zeta}_n;x,t)+\frac{\mathrm{i}q_{-}q_0^4}{\zeta^5_n}\mu''_{-1}(\widetilde{\zeta}_n;x,t),
\end{split}
\end{align}
and $\mu_{-1}(\widetilde{\zeta}_n;x,t)$, $\mu'_{-1}(\widetilde{\zeta}_n;x,t)$ and $\mu''_{-1}(\widetilde{\zeta}_n;x,t)$ satisfy the linear system of $12N_1+6N_2$ as bellow:
\begin{footnotesize}
\begin{align}\label{MRHP-NTP-9}
\begin{split}
&\sum^{4N_1+2N_2}_{n=1}\bigg\{\frac12\widehat{C}_n(\zeta_s)\mu''_{-1}(\widetilde{\zeta}_n;x,t)+\bigg[\widehat{C}_n(\zeta_s)\bigg(\widehat{D}_n+\frac{1}{\zeta_s-\widetilde{\zeta}_n}\bigg)\bigg]\mu'_{-1}(\widetilde{\zeta}_n;x,t)+\bigg[\widehat{C}_n(\zeta_s)\bigg(\frac{1}{(\zeta_s-\widetilde{\zeta}_n)^2}+\\
&\frac{\widehat{D}_n}{\zeta_s-\widetilde{\zeta}_n}+\widehat{F}_n\bigg)-\frac{\mathrm{i}q_{-}}{\zeta_s}\delta_{s,n}\bigg]\mu_{-1}(\widetilde{\zeta}_n;x,t)\bigg\}=-\mathrm{e}^{\mathrm{i}\nu_{-}(x,t)\sigma_3}\bigg(\begin{array}{cc} \frac{\mathrm{i}q_{-}}{\zeta_s} \\ 1  \end{array}\bigg)-\frac{1}{2\pi \mathrm{i}}\int_{\Sigma}\frac{(M^+(\xi;x,t)J(\xi;x,t))_{2}}{\xi-\zeta_s}d\xi,\\
&\sum^{4N_1+2N_2}_{n=1}\bigg\{\frac{\widehat{C}_n(\zeta_s)}{2(\zeta_s-\widetilde{\zeta}_n)}\mu''_{-1}(\widetilde{\zeta}_n;x,t)+\bigg[\frac{\widehat{C}_n(\zeta_s)}{\zeta_s-\widetilde{\zeta}_n}\bigg(\widehat{D}_n+\frac{2}{\zeta_s-\widetilde{\zeta}_n}\bigg)+\frac{\mathrm{i}q_{-}q^2_0}{\zeta^3_s}\delta_{s,n}\bigg]\mu'_{-1}(\widetilde{\zeta}_n;x,t)+\bigg[\frac{\widehat{C}_n(\zeta_s)}{\zeta_s-\widetilde{\zeta}_n}\\
&\bigg(\frac{3}{(\zeta_s-\widetilde{\zeta}_n)^2}+\frac{2\widehat{D}_n}{\zeta_s-\widetilde{\zeta}_n}+\widehat{F}_n\bigg)-\frac{\mathrm{i}q_{-}}{\zeta^2_s}\delta_{s,n}\bigg]\mu_{-1}(\widetilde{\zeta}_n;x,t)\bigg\}=-\mathrm{e}^{\mathrm{i}\nu_{-}(x,t)\sigma_3}\bigg(\begin{array}{cc} \frac{\mathrm{i}q_{-}}{\zeta^2_s} \\ 0  \end{array}\bigg)+\frac{1}{2\pi \mathrm{i}}\\
&\int_{\Sigma}\frac{(M^+(\xi;x,t)J(\xi;x,t))_{2}}{(\xi-\zeta_s)^2}d\xi,\\
&\sum^{4N_1+2N_2}_{n=1}\bigg\{\bigg[\frac{\widehat{C}_n(\zeta_s)}{(\zeta_s-\widetilde{\zeta}_n)^2}-\frac{\mathrm{i}q_{-}q^4_0}{\zeta^5_s}\delta_{s,n}\bigg]\mu''_{-1}(\widetilde{\zeta}_n;x,t)+\bigg[\frac{2\widehat{C}_n(\zeta_s)}{(\zeta_s-\widetilde{\zeta}_n)^2}\bigg(\widehat{D}_n+\frac{3}{\zeta_s-\widetilde{\zeta}_n}\bigg)+\frac{4\mathrm{i}q_{-}q^2_0}{\zeta^4_s}\delta_{s,n}\bigg]\mu'_{-1}(\widetilde{\zeta}_n;x,t)\\
&+\bigg[\frac{2\widehat{C}_n(\zeta_s)}{(\zeta_s-\widetilde{\zeta}_n)^2}\bigg(\frac{6}{(\zeta_s-\widetilde{\zeta}_n)^2}+\frac{3\widehat{D}_n}{\zeta_s-\widetilde{\zeta}_n}+\widehat{F}_n\bigg)-\frac{2\mathrm{i}q_{-}}{\zeta^3_s}\delta_{s,n}\bigg]\mu_{-1}(\widetilde{\zeta}_n;x,t)\bigg\}=-\mathrm{e}^{\mathrm{i}\nu_{-}(x,t)\sigma_3}\bigg(\begin{array}{cc} \frac{2\mathrm{i}q_{-}}{\zeta^3_s} \\ 0  \end{array}\bigg)\\
&-\frac{1}{2\pi \mathrm{i}}\int_{\Sigma}\frac{2(M^+(\xi;x,t)J(\xi;x,t))_{2}}{(\xi-\zeta_s)^3}d\xi,
\end{split}
\end{align}
\end{footnotesize}
where $s=1,2,\cdots,4N_1+2N_2$ and $\delta_{s,n}$ are the Kronecker $\delta$-symbol.

\end{prop}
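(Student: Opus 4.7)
The plan is to mirror the strategy used in Proposition~\ref{P8} (ZBCs, double poles) and in the proof of \eqref{T109}--\eqref{T112} (NZBCs, double poles), adapting it to the triple-pole setting via the principal-part structure already recorded in \eqref{MRHP-NTP-2}. First I would regularize the jump condition \eqref{MRHP-NTP-3} by subtracting the two asymptotic contributions in \eqref{MRHP-NTP-5} (the $z\to\infty$ leading term $\mathrm{e}^{\mathrm{i}\nu_{-}\sigma_3}$ and the $z\to 0$ pole $\frac{\mathrm{i}}{z}\mathrm{e}^{\mathrm{i}\nu_{-}\sigma_3}\sigma_3 Q_{-}$), together with the singular contributions from each triple pole, namely
\[
\sum_{n=1}^{4N_1+2N_2}\Bigg[\frac{\mathop{\mathrm{P}_{-3}}\limits_{z=\zeta_n}M^+}{(z-\zeta_n)^3}+\frac{\mathop{\mathrm{P}_{-2}}\limits_{z=\zeta_n}M^+}{(z-\zeta_n)^2}+\frac{\mathop{\mathrm{Res}}\limits_{z=\zeta_n}M^+}{z-\zeta_n}+(\zeta_n\leftrightarrow\widetilde{\zeta}_n)\Bigg],
\]
so that both sides become analytic on $D^{\pm}$ and vanish at $\infty$ and at $0$. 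Applying the Plemelj projector to the regularized jump then produces formula \eqref{MRHP-NTP-6}, where the coefficients $C_n,\widehat{C}_n,D_n,\widehat{D}_n,F_n,\widehat{F}_n$ in \eqref{MRHP-NTP-7} are read off directly from the principal parts \eqref{MRHP-NTP-2}.

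Next, to close the system I would exploit the third symmetry reduction in Proposition~\ref{P5}, $\mu_{\pm}(z;x,t)=\tfrac{\mathrm{i}}{z}\mu_{\pm}(-\tfrac{q_0^2}{z};x,t)\sigma_3 Q_{\pm}$, and extract its second column to obtain $\mu_{-2}(z;x,t)=\tfrac{\mathrm{i}q_-}{z}\mu_{-1}(-\tfrac{q_0^2}{z};x,t)$. Differentiating this identity once and twice in $z$, then evaluating at $z=\zeta_n$ (so that $-q_0^2/\zeta_n=\widetilde{\zeta}_n$), yields the three relations in \eqref{MRHP-NTP-8}. The second-derivative case in particular produces three summands once the chain rule is applied twice to $-q_0^2/z$, giving the combination $\tfrac{2\mathrm{i}q_-}{\zeta_n^3}\mu_{-1}-\tfrac{4\mathrm{i}q_-q_0^2}{\zeta_n^4}\mu'_{-1}+\tfrac{\mathrm{i}q_-q_0^4}{\zeta_n^5}\mu''_{-1}$ displayed in the statement.

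Finally, to derive the linear system \eqref{MRHP-NTP-9}, I would take the second column of \eqref{MRHP-NTP-6}, substitute $z=\zeta_s$, and then differentiate that expression once and twice in $z$ before setting $z=\zeta_s$. This produces three algebraic equations expressing $\mu_{-2}(\zeta_s;x,t),\mu'_{-2}(\zeta_s;x,t),\mu''_{-2}(\zeta_s;x,t)$ in terms of $\{\mu_{-1}(\widetilde{\zeta}_n;x,t),\mu'_{-1}(\widetilde{\zeta}_n;x,t),\mu''_{-1}(\widetilde{\zeta}_n;x,t)\}_{n=1}^{4N_1+2N_2}$ plus the Cauchy-type integral. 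Substituting \eqref{MRHP-NTP-8} to eliminate the left-hand-side $\mu_{-2}$-quantities in favour of $\mu_{-1}$-quantities, collecting like terms, and recognizing the diagonal corrections $-\tfrac{\mathrm{i}q_-}{\zeta_s}\delta_{s,n}$, $-\tfrac{\mathrm{i}q_-}{\zeta_s^2}\delta_{s,n}$, $-\tfrac{2\mathrm{i}q_-}{\zeta_s^3}\delta_{s,n}$ and the off-diagonal corrections involving $\tfrac{\mathrm{i}q_-q_0^2}{\zeta_s^3}$, $\tfrac{4\mathrm{i}q_-q_0^2}{\zeta_s^4}$, $-\tfrac{\mathrm{i}q_-q_0^4}{\zeta_s^5}$ gives exactly the three rows of \eqref{MRHP-NTP-9}.

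The main obstacle is the bookkeeping in the last step: differentiating the pole terms $\tfrac{1}{(z-\zeta_n)^k}$ up to order two, merging them with the chain-rule terms produced by the symmetry \eqref{MRHP-NTP-8}, and keeping the Kronecker-$\delta$ corrections in the right positions requires careful algebraic manipulation. Once the three symmetry-derived relations are firmly in hand, the rest is straightforward collection of coefficients, exactly analogous to the double-pole derivation of \eqref{T112} but enlarged from a $2\times 2$ block structure to a $3\times 3$ one.
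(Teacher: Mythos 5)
Your proposal is correct and follows essentially the same route as the paper: regularize the jump by subtracting the $z\to\infty$ and $z\to0$ asymptotics together with the $P_{-3}$, $P_{-2}$ and residue contributions at each triple pole, apply Plemelj to obtain \eqref{MRHP-NTP-6}--\eqref{MRHP-NTP-7}, derive \eqref{MRHP-NTP-8} by differentiating the third symmetry reduction $\mu_{-2}(z)=\tfrac{\mathrm{i}q_-}{z}\mu_{-1}(-q_0^2/z)$, and close the system by evaluating the second column of \eqref{MRHP-NTP-6} and its first two $z$-derivatives at $z=\zeta_s$ before eliminating the $\mu_{-2}$-quantities. Your explicit chain-rule computation for $\mu''_{-2}(\zeta_n)$ checks out against the coefficients in the statement, so the only remaining work is the bookkeeping you already identify.
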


\begin{proof}
In order to regularize the RH problem, one has to subtract out the asymptotic values as $z\rightarrow\infty$ and $z\rightarrow0$ which exhibited in Eq. \eqref{MRHP-NTP-5} and the singularity contributions. Then, the jump condition \eqref{MRHP-NTP-3} becomes
\begin{small}
\begin{align}\label{MRHP-NTP-10}
\begin{split}
&M^-(z;x,t)-\mathrm{e}^{\mathrm{i}\nu_{-}(x,t)\sigma_3}-\frac{\mathrm{i}}{z}\mathrm{e}^{\mathrm{i}\nu_{-}(x,t)\sigma_3}\sigma_3Q_{-}-\sum_{n=1}^{4N_1+2N_2}\Bigg[\frac{\mathop{\mathrm{P}_{-3}}\limits_{z=\zeta_n}M^+(z;x,t)}{(z-\zeta_n)^3}+\frac{\mathop{\mathrm{P}_{-2}}\limits_{z=\zeta_n}M^+(z;x,t)}{(z-\zeta_n)^2}+\\
&\frac{\mathop{\mathrm{Res}}\limits_{z=\zeta_n}M^+(z;x,t)}{z-\zeta_n}+\frac{\mathop{\mathrm{P}_{-3}}\limits_{z=\widetilde{\zeta}_n}M^-(z;x,t)}{(z-\widetilde{\zeta}_n)^3}+\frac{\mathop{\mathrm{P}_{-2}}\limits_{z=\widetilde{\zeta}_n}M^-(z;x,t)}{(z-\widetilde{\zeta}_n)^2}+\frac{\mathop{\mathrm{Res}}\limits_{z=\widetilde{\zeta}_n}M^-(z;x,t)}{z-\widetilde{\zeta}_n}\Bigg]=M^+(z;x,t)-\\
&\mathrm{e}^{\mathrm{i}\nu_{-}(x,t)\sigma_3}-\frac{\mathrm{i}}{z}\mathrm{e}^{\mathrm{i}\nu_{-}(x,t)\sigma_3}\sigma_3Q_{-}-\sum_{n=1}^{4N_1+2N_2}\Bigg[\frac{\mathop{\mathrm{P}_{-3}}\limits_{z=\zeta_n}M^+(z;x,t)}{(z-\zeta_n)^3}+\frac{\mathop{\mathrm{P}_{-2}}\limits_{z=\zeta_n}M^+(z;x,t)}{(z-\zeta_n)^2}+\frac{\mathop{\mathrm{Res}}\limits_{z=\zeta_n}M^+(z;x,t)}{z-\zeta_n}\\
&+\frac{\mathop{\mathrm{P}_{-3}}\limits_{z=\widetilde{\zeta}_n}M^-(z;x,t)}{(z-\widetilde{\zeta}_n)^3}+\frac{\mathop{\mathrm{P}_{-2}}\limits_{z=\widetilde{\zeta}_n}M^-(z;x,t)}{(z-\widetilde{\zeta}_n)^2}+\frac{\mathop{\mathrm{Res}}\limits_{z=\widetilde{\zeta}_n}M^-(z;x,t)}{z-\widetilde{\zeta}_n}\Bigg]-M^+(z;x,t)J(z;x,t),
\end{split}
\end{align}
\end{small}
where $\mathop{\mathrm{P}_{-3}}\limits_{z=\zeta_n}M^+,\mathop{\mathrm{P}_{-2}}\limits_{z=\zeta_n}M^+,\mathop{\mathrm{Res}}\limits_{z=\zeta_n}M^+,\mathop{\mathrm{P}_{-3}}\limits_{z=\widetilde{\zeta}_n}M^-,\mathop{\mathrm{P}_{-2}}\limits_{z=\widetilde{\zeta}_n}M^-,\mathop{\mathrm{Res}}\limits_{z=\widetilde{\zeta}_n}M^-$ have given in Eq. \eqref{MRHP-NTP-2}. By using Plemelj's formula, one can obtain the solution \eqref{MRHP-NTP-6} with formula \eqref{MRHP-NTP-7} of the matrix RH problem. According to the symmetry reduction $\mu_{\pm}(z;x,t)=\frac{\mathrm{i}}{z}\mu_{\pm}\big(-\frac{q^2_0}{z};x,t\big)\sigma_3Q_{\pm}$, we can obtain \eqref{MRHP-NTP-8}. By combining Eqs. \eqref{MRHP-NTP-1} and \eqref{MRHP-NTP-6}, $\mu_{-1}(\widetilde{\zeta}_n;x,t)$ is the first column element of the solution \eqref{MRHP-NTP-6} as the triple-pole $z=\widetilde{\zeta}_n\in D^{-}$, $\mu_{-2}(\zeta_s;x,t)$ is the second column element of the solution \eqref{MRHP-NTP-6} as the triple-pole $z=\zeta_s\in D^{+}$. Then, we can obtain the linear system \eqref{MRHP-NTP-9} by utilizing symmetry relation \eqref{MRHP-NTP-8}. Completing the proof.

\end{proof}

\subsubsection{Reconstruction Formula of the Potential with NZBCs and Triple Poles}
Similarly, the reconstruction formula of the triple-pole solution (potential) for the TOFKN \eqref{T1} with NZBCs is consistent with Eq. \eqref{T118}. From the Eq. \eqref{T116} and solution \eqref{MRHP-NTP-6} of the matrix RH problem, we have
\begin{align}\label{RFP-NTP-1}
M^{[1]}_{12}(x,t)=&\mathrm{i}q_{-}\mathrm{e}^{\mathrm{i}\nu_{-}(x,t)}-\frac{1}{2\pi\mathrm{i}}\int_{\Sigma}\big(M^{+}(\xi;x,t)J(\xi;x,t)\big)_{12}d\xi+\sum^{4N_1+2N_2}_{n=1}\bigg\{\widehat{A}[\widetilde{\zeta}_n]\mathrm{e}^{2\mathrm{i}\theta(\widetilde{\zeta}_n;x,t)}\\
&\bigg[\frac12\mu''_{-11}(\widetilde{\zeta}_n;x,t)+\widehat{D}_n\mu'_{-11}(\widetilde{\zeta}_n;x,t)+\widehat{F}_n\mu_{-11}(\widetilde{\zeta}_n;x,t)\bigg]\bigg\},
\end{align}
where $\mu_{-11}(\widetilde{\zeta}_n;x,t), \mu'_{-11}(\widetilde{\zeta}_n;x,t)$ and $\mu''_{-11}(\widetilde{\zeta}_n;x,t)$ represents the first row element of the column vector $\mu_{-1}(\widetilde{\zeta}_n;x,t), \mu'_{-1}(\widetilde{\zeta}_n;x,t)$ and $\mu''_{-1}(\widetilde{\zeta}_n;x,t)$, respectively. Then taking row vector $\alpha=\big(\alpha^{(1)},\alpha^{(2)},\alpha^{(3)}\big)$ and column vector $\gamma=(\gamma^{(1)},\gamma^{(2)},\gamma^{(3)})^{\mathrm{T}}$, where
\begin{align}\label{RFP-NTP-2}
\begin{split}
&\alpha^{(1)}=\bigg(\frac12\widetilde{A}[\widetilde{\zeta}_n]\mathrm{e}^{2\mathrm{i}\theta(\widehat{\zeta}_n;x,t)}\bigg)_{1\times(4N_1+2N_2)},\,\alpha^{(2)}=\big(\widetilde{A}[\widetilde{\zeta}_n]\mathrm{e}^{2\mathrm{i}\theta(\widetilde{\zeta}_n;x,t)}\widehat{D}_n\big)_{1\times(4N_1+2N_2)},\\
&\alpha^{(3)}=\big(\widetilde{A}[\widetilde{\zeta}_n]\mathrm{e}^{2\mathrm{i}\theta(\widetilde{\zeta}_n;x,t)}\widetilde{F}_n\big)_{1\times(4N_1+2N_2)},\\
&\gamma^{(1)}=\big(\mu''_{-11}(\widetilde{\zeta}_n;x,t)\big)_{1\times(4N_1+2N_2)},\,\gamma^{(2)}=\big(\mu'_{-11}(\widetilde{\zeta}_n;x,t)\big)_{1\times(4N_1+2N_2)},\\
&\gamma^{(3)}=\big(\mu_{-11}(\widetilde{\zeta}_n;x,t)\big)_{1\times(4N_1+2N_2)},
\end{split}
\end{align}
we can obtain a more concise reconstruction formulation of the triple poles solution (potential) for the TOFKN \eqref{T1} with NZBCs and as follows
\begin{align}\label{RFP-NTP-3}
q(x,t)=q_{-}\mathrm{e}^{2\mathrm{i}\nu_{-}(x,t)}-\mathrm{i}\mathrm{e}^{\mathrm{i}\nu_{-}(x,t)}\alpha\gamma+\frac{1}{2\pi}\int_{\Sigma}\big(M^{+}(\xi;x,t)J(\xi;x,t)\big)_{12}d\xi.
\end{align}

\subsubsection{Trace Formulae and Theta Condition with NZBCs and Triple Poles}
The discrete spectral points $\zeta_n$'s are the triple zeros of $s_{11}(\lambda)$, while $\widetilde{\zeta}_n$'s are the triple zeros of $s_{22}(\lambda)$. Define the functions $\beta^{\pm}(z)$ as follows:
\begin{align}\label{TFTC-NTP-1}
\beta^{+}(z)=s_{11}(z)\prod^{4N_1+2N_2}_{n=1}\bigg(\frac{z-\widetilde{\zeta}_n}{z-\zeta_n}\bigg)^3\mathrm{e}^{\mathrm{i}\nu},\,\beta^{-}(z)=s_{22}(z)\prod^{4N_1+2N_2}_{n=1}\bigg(\frac{z-\zeta_n}{z-\widetilde{\zeta}_n}\bigg)^3\mathrm{e}^{-\mathrm{i}\nu}.
\end{align}

Then, $\beta^{+}(z)$ and $\beta^{-}(z)$ are analytic and have no zero in $D^{+}$ and $D^{-}$, respectively. Furthermore, we have the relation $\beta^{+}(z)\beta^{-}(z)=s_{11}(z)s_{22}(z)$ and the asymptotic behaviors $\beta^{\pm}(z)\rightarrow1,\text{ as }z\rightarrow\infty$.

By means of employing the Cauchy projectors and Plemelj' formula, we have
\begin{align}\label{TFTC-NTP-2}
\mathrm{log}\beta^{\pm}(z)=\mp\frac{1}{2\pi \mathrm{i}}\int_{\Sigma}\frac{\mathrm{log}[1-\rho(z)\tilde{\rho}(z)]}{\xi-z}d\xi,\, z\in D^{\pm}.
\end{align}

After substituting Eq. \eqref{TFTC-NTP-2} into  Eq. \eqref{TFTC-NTP-1}, we can obtain the trace formulae
\begin{align}\label{TFTC-NTP-3}
\begin{split}
&s_{11}(z)=\mathrm{exp}\Bigg(\frac{\mathrm{i}}{2\pi}\int_{\Sigma}\frac{\mathrm{log}[1-\rho(z)\tilde{\rho}(z)]}{\xi-z}d\xi\Bigg)\prod^{4N_1+2N_2}_{n=1}\bigg(\frac{z-\zeta_n}{z-\zeta^*_n}\bigg)^3\mathrm{e}^{-\mathrm{i}\nu},\\
&s_{22}(z)=\mathrm{exp}\Bigg(-\frac{\mathrm{i}}{2\pi}\int_{\Sigma}\frac{\mathrm{log}[1-\rho(z)\tilde{\rho}(z)]}{\xi-z}d\xi\Bigg)\prod^{4N_1+2N_2}_{n=1}\bigg(\frac{z-\zeta^*_n}{z-\zeta_n}\bigg)^3\mathrm{e}^{\mathrm{i}\nu}.
\end{split}
\end{align}

As $z\rightarrow0$, we can derive the so-called theta condition is obtained. That is, there exists $j\in\mathbb{Z}$ such that
\begin{align}\label{TFTC-NTP-4}
\mathrm{arg}\bigg(\frac{q_{-}}{q_{+}}\bigg)+2\nu=24\sum^{N_1}_{n=1}\mathrm{arg}(z_n)+12\sum^{N_2}_{m=1}\mathrm{arg}(\omega_m)+2\pi j+\frac{1}{2\pi}\int_{\Sigma}\frac{\mathrm{log}[1-\rho(\xi)\tilde{\rho}(\xi)]}{\xi}d\xi.
\end{align}

\subsubsection{Reflectionless Potential with NZBCs: Triple-Pole Solutions}
Now, we consider the case of reflectionless potential $q(x,t)$ with the reflection coefficients $\rho(\lambda)=\tilde{\rho}(\lambda)=0$. Combining with the definition of scattering matrix, one has $S(q_0)=I$ and $q_{+}=q_{-}$. From the theta condition, there exists $j\in\mathbb{Z}$ lead to
\begin{align}\label{RP-NTPS-1}
\nu=12\sum^{N_1}_{n=1}\mathrm{arg}(z_n)+6\sum^{N_2}_{m=1}\mathrm{arg}(\omega_m)+\pi j.
\end{align}
Then the Eqs. \eqref{MRHP-NTP-9} and \eqref{RFP-NTP-3} with $J(\lambda;x,t)=0_{2\times2}$ become
\begin{footnotesize}
\begin{align}\label{RP-NTPS-2}
\begin{split}
&\sum^{4N_1+2N_2}_{n=1}\bigg\{\frac12\widehat{C}_n(\zeta_s)\mu''_{-11}(\widetilde{\zeta}_n;x,t)+\bigg[\widehat{C}_n(\zeta_s)\bigg(\widehat{D}_n+\frac{1}{\zeta_s-\widetilde{\zeta}_n}\bigg)\bigg]\mu'_{-11}(\widetilde{\zeta}_n;x,t)+\bigg[\widehat{C}_n(\zeta_s)\bigg(\frac{1}{(\zeta_s-\widetilde{\zeta}_n)^2}+\\
&\frac{\widehat{D}_n}{\zeta_s-\widetilde{\zeta}_n}+\widehat{F}_n\bigg)-\frac{\mathrm{i}q_{-}}{\zeta_s}\delta_{s,n}\bigg]\mu_{-11}(\widetilde{\zeta}_n;x,t)\bigg\}=-\mathrm{e}^{\mathrm{i}\nu_{-}(x,t)}\frac{\mathrm{i}q_{-}}{\zeta_s},\\
&\sum^{4N_1+2N_2}_{n=1}\bigg\{\frac{\widehat{C}_n(\zeta_s)}{2(\zeta_s-\widetilde{\zeta}_n)}\mu''_{-11}(\widetilde{\zeta}_n;x,t)+\bigg[\frac{\widehat{C}_n(\zeta_s)}{\zeta_s-\widetilde{\zeta}_n}\bigg(\widehat{D}_n+\frac{2}{\zeta_s-\widetilde{\zeta}_n}\bigg)+\frac{\mathrm{i}q_{-}q^2_0}{\zeta^3_s}\delta_{s,n}\bigg]\mu'_{-11}(\widetilde{\zeta}_n;x,t)+\bigg[\frac{\widehat{C}_n(\zeta_s)}{\zeta_s-\widetilde{\zeta}_n}\\
&\bigg(\frac{3}{(\zeta_s-\widetilde{\zeta}_n)^2}+\frac{2\widehat{D}_n}{\zeta_s-\widetilde{\zeta}_n}+\widehat{F}_n\bigg)-\frac{\mathrm{i}q_{-}}{\zeta^2_s}\delta_{s,n}\bigg]\mu_{-11}(\widetilde{\zeta}_n;x,t)\bigg\}=-\mathrm{e}^{\mathrm{i}\nu_{-}(x,t)}\frac{\mathrm{i}q_{-}}{\zeta^2_s},\\
&\sum^{4N_1+2N_2}_{n=1}\bigg\{\bigg[\frac{\widehat{C}_n(\zeta_s)}{(\zeta_s-\widetilde{\zeta}_n)^2}-\frac{\mathrm{i}q_{-}q^4_0}{\zeta^5_s}\bigg]\mu''_{-11}(\widetilde{\zeta}_n;x,t)+\bigg[\frac{2\widehat{C}_n(\zeta_s)}{(\zeta_s-\widetilde{\zeta}_n)^2}\bigg(\widehat{D}_n+\frac{3}{\zeta_s-\widetilde{\zeta}_n}\bigg)+\frac{4\mathrm{i}q_{-}q^2_0}{\zeta^4_s}\delta_{s,n}\bigg]\mu'_{-11}(\widetilde{\zeta}_n;x,t)\\
&+\bigg[\frac{2\widehat{C}_n(\zeta_s)}{(\zeta_s-\widetilde{\zeta}_n)^2}\bigg(\frac{6}{(\zeta_s-\widetilde{\zeta}_n)^2}+\frac{3\widehat{D}_n}{\zeta_s-\widetilde{\zeta}_n}+\widehat{F}_n\bigg)-\frac{2\mathrm{i}q_{-}}{\zeta^3_s}\delta_{s,n}\bigg]\mu_{-11}(\widetilde{\zeta}_n;x,t)\bigg\}=-\mathrm{e}^{\mathrm{i}\nu_{-}(x,t)}\frac{2\mathrm{i}q_{-}}{\zeta^3_s},
\end{split}
\end{align}
\end{footnotesize}
and
\begin{align}\label{RP-NTPS-3}
q(x,t)=q_{-}\mathrm{e}^{2\mathrm{i}\nu_{-}(x,t)}-\mathrm{i}\mathrm{e}^{\mathrm{i}\nu_{-}(x,t)}\alpha\gamma.
\end{align}

\begin{thm}\label{RP-NTPS-thm1}
The general expression of triple poles solution of the TOFKN \eqref{T1} with NZBCs is given by determinant formula
\begin{align}\label{RP-NTPS-4}
q(x,t)=q_{-}\mathrm{e}^{2\mathrm{i}\nu_{-}(x,t)}\bigg(1+\frac{\mathrm{det}(R)}{\mathrm{det}(G)}\bigg),
\end{align}
where
\begin{align}\label{RP-NTPS-5}
\begin{split}
&R=\bigg(\begin{array}{cc} 0 & \alpha \\ \beta & G \end{array}\bigg),\,\beta=\big(\beta^{(1)},\beta^{(2)},\beta^{(3)}\big)^{\mathrm{T}},\,\alpha=\big(\alpha^{(1)},\alpha^{(2)},\alpha^{(3)}\big),\\
&\beta^{(1)}=\bigg(\frac{1}{\zeta_s}\bigg)_{1\times(4N_1+2N_2)},\,\beta^{(2)}=\bigg(\frac{1}{\zeta^2_s}\bigg)_{1\times(4N_1+2N_2)},\beta^{(3)}=\bigg(\frac{2}{\zeta^3_s}\bigg)_{1\times(4N_1+2N_2)},\\
&\alpha^{(1)}=\bigg(\frac12\widetilde{A}[\widetilde{\zeta}_n]\mathrm{e}^{2\mathrm{i}\theta(\widehat{\zeta}_n;x,t)}\bigg)_{1\times(4N_1+2N_2)},\,\alpha^{(2)}=\big(\widetilde{A}[\widetilde{\zeta}_n]\mathrm{e}^{2\mathrm{i}\theta(\widetilde{\zeta}_n;x,t)}\widehat{D}_n\big)_{1\times(4N_1+2N_2)},\\
&\alpha^{(3)}=\big(\widetilde{A}[\widetilde{\zeta}_n]\mathrm{e}^{2\mathrm{i}\theta(\widetilde{\zeta}_n;x,t)}\widetilde{F}_n\big)_{1\times(4N_1+2N_2)},
\end{split}
\end{align}
and the $(12N_1+6N_2)\times(12N_1+6N_2)$ partitioned matrix $G=\Bigg(\begin{array}{ccc} G^{(1,1)} & G^{(1,2)} & G^{(1,3)} \\ G^{(2,1)} & G^{(2,2)} & G^{(2,3)} \\ G^{(3,1)} & G^{(3,2)} & G^{(3,3)} \end{array}\Bigg)$ with $G^{(i,j)}=\Big(g^{(i,j)}_{s,n}\Big)_{(4N_1+2N_2)\times(4N_1+2N_2)}\,(i,j=1,2,3)$ is given by
\begin{footnotesize}
\begin{align}\label{RP-NTPS-6}
\begin{split}
&g^{(1,1)}_{s,n}=\frac12\widehat{C}_n(\zeta_s),\,g^{(1,2)}_{s,n}=\widehat{C}_n(\zeta_s)\bigg(\widehat{D}_n+\frac{1}{\zeta_s-\widetilde{\zeta}_n}\bigg),\,g^{(1,3)}_{s,n}=\widehat{C}_n(\zeta_s)\bigg(\frac{1}{(\zeta_s-\widetilde{\zeta}_n)^2}+\frac{\widehat{D}_n}{\zeta_s-\widetilde{\zeta}_n}+\widehat{F}_n\bigg)-\\
&\frac{\mathrm{i}q_{-}}{\zeta_s}\delta_{s,n},\,g^{(2,1)}_{s,n}=\frac{\widehat{C}_n(\zeta_s)}{2(\zeta_s-\widetilde{\zeta}_n)},\,g^{(2,2)}_{s,n}=\frac{\widehat{C}_n(\zeta_s)}{\zeta_s-\widetilde{\zeta}_n}\bigg(\widehat{D}_n+\frac{2}{\zeta_s-\widetilde{\zeta}_n}\bigg)+\frac{\mathrm{i}q_{-}q^2_0}{\zeta^3_s}\delta_{s,n},\\
&g^{(2,3)}_{s,n}=\frac{\widehat{C}_n(\zeta_s)}{\zeta_s-\widetilde{\zeta}_n}\bigg(\frac{3}{(\zeta_s-\widetilde{\zeta}_n)^2}+\frac{2\widehat{D}_n}{\zeta_s-\widetilde{\zeta}_n}+\widehat{F}_n\bigg)-\frac{\mathrm{i}q_{-}}{\zeta^2_s}\delta_{s,n},\\
&g^{(3,1)}_{s,n}=\frac{\widehat{C}_n(\zeta_s)}{(\zeta_s-\widetilde{\zeta}_n)^2}-\frac{\mathrm{i}q_{-}q^4_0}{\zeta^5_s}\delta_{s,n},\,g^{(3,2)}_{s,n}=\frac{2\widehat{C}_n(\zeta_s)}{(\zeta_s-\widetilde{\zeta}_n)^2}\bigg(\widehat{D}_n+\frac{3}{\zeta_s-\widetilde{\zeta}_n}\bigg)+\frac{4\mathrm{i}q_{-}q^4_0}{\zeta^4_s}\delta_{s,n},\\
&g^{(3,3)}_{s,n}=\frac{2\widehat{C}_n(\zeta_s)}{(\zeta_s-\widetilde{\zeta}_n)^2}\bigg(\frac{6}{(\zeta_s-\widetilde{\zeta}_n)^2}+\frac{3\widehat{D}_n}{\zeta_s-\widetilde{\zeta}_n}+\widehat{F}_n\bigg)-\frac{2\mathrm{i}q_{-}}{\zeta^3_s}\delta_{s,n}.
\end{split}
\end{align}
\end{footnotesize}

\end{thm}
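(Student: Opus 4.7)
The plan is to mirror the determinant construction used in the double-pole case (Theorem preceding Eq.~\eqref{T129}) but adapted to the enlarged $\gamma$-vector that now carries three blocks $\mu_{-11},\mu'_{-11},\mu''_{-11}$ instead of two. First I would recast the closed linear system \eqref{RP-NTPS-2} (the reflectionless specialization of \eqref{MRHP-NTP-9}) in matrix form
\begin{align*}
G\,\gamma = -\,\mathrm{i}\,q_{-}\,\mathrm{e}^{\mathrm{i}\nu_{-}(x,t)}\,\beta,
\end{align*}
where $\gamma=(\gamma^{(1)},\gamma^{(2)},\gamma^{(3)})^{\mathrm T}$ is the column vector defined in \eqref{RFP-NTP-2}, $\beta=(\beta^{(1)},\beta^{(2)},\beta^{(3)})^{\mathrm T}$ is the column with $\beta^{(1)}_s=1/\zeta_s$, $\beta^{(2)}_s=1/\zeta_s^2$, $\beta^{(3)}_s=2/\zeta_s^3$, and $G$ is the $3\times 3$ block matrix whose entries $g^{(i,j)}_{s,n}$ are the very coefficients that appear on the left-hand sides of the three equations in \eqref{RP-NTPS-2}. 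Reading off those coefficients equation by equation will reproduce the nine blocks listed in \eqref{RP-NTPS-6}; the Kronecker-$\delta$ terms come from the symmetry reduction \eqref{MRHP-NTP-8} that was used to eliminate $\mu_{-2},\mu'_{-2},\mu''_{-2}$ in favor of $\mu_{-1},\mu'_{-1},\mu''_{-1}$.

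Next I would apply the standard block determinant identity
\begin{align*}
\det\begin{pmatrix} 0 & \alpha \\ \beta & G \end{pmatrix}
= -\,\det(G)\,\alpha\,G^{-1}\beta,
\end{align*}
valid when $G$ is invertible (a generic assumption on the discrete data). Combined with $\gamma=-\mathrm{i} q_{-}\mathrm{e}^{\mathrm{i}\nu_{-}}G^{-1}\beta$, this yields
\begin{align*}
\alpha\gamma
= -\,\mathrm{i}\,q_{-}\,\mathrm{e}^{\mathrm{i}\nu_{-}(x,t)}\,\alpha\,G^{-1}\beta
= \mathrm{i}\,q_{-}\,\mathrm{e}^{\mathrm{i}\nu_{-}(x,t)}\,\frac{\det(R)}{\det(G)},
\end{align*}
with $R$ the augmented matrix in \eqref{RP-NTPS-5}. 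Plugging this into the reconstruction formula \eqref{RP-NTPS-3} gives
\begin{align*}
q(x,t)
= q_{-}\,\mathrm{e}^{2\mathrm{i}\nu_{-}(x,t)}
 - \mathrm{i}\,\mathrm{e}^{\mathrm{i}\nu_{-}(x,t)}\cdot\mathrm{i}\,q_{-}\,\mathrm{e}^{\mathrm{i}\nu_{-}(x,t)}\,\frac{\det(R)}{\det(G)}
= q_{-}\,\mathrm{e}^{2\mathrm{i}\nu_{-}(x,t)}\left(1+\frac{\det(R)}{\det(G)}\right),
\end{align*}
which is precisely \eqref{RP-NTPS-4}.

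The main obstacle is purely combinatorial bookkeeping rather than analytic: one must verify that the nine sub-blocks $g^{(i,j)}_{s,n}$ produced by transcribing the linear system \eqref{RP-NTPS-2} coincide term by term with those listed in \eqref{RP-NTPS-6}, paying particular attention to (i) the factors of $1/2$ coming from the $\tfrac12\mu''_{-2}$ entries in the residue expansions \eqref{MRHP-NTP-2}, (ii) the coefficients $\tfrac{\mathrm{i}q_{-}}{\zeta_s^k}\delta_{s,n}$ inherited from differentiating the symmetry relation $\mu_{-2}(\zeta_n)=\tfrac{\mathrm{i}q_{-}}{\zeta_n}\mu_{-1}(\widetilde{\zeta}_n)$ twice, as recorded in \eqref{MRHP-NTP-8}, and (iii) the sign and the factor of $2$ in the $\beta^{(3)}$ component, which originate from the factor of $2$ in front of the $\mu''_{-2}$-equation of \eqref{MRHP-NTP-9}. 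Once this correspondence is established, the rest of the argument is the mechanical application of the block-determinant identity above, exactly as in the proof of Theorem for \eqref{T129}.
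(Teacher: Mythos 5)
Your proposal is correct and follows essentially the same route as the paper's own proof: rewrite the reflectionless system \eqref{RP-NTPS-2} as a linear matrix equation for $\gamma$, eliminate $\gamma$ via the block-determinant (Schur complement) identity, and substitute into the reconstruction formula \eqref{RP-NTPS-3}. You are in fact slightly more careful than the paper in carrying the scalar factor $-\mathrm{i}q_{-}\mathrm{e}^{\mathrm{i}\nu_{-}(x,t)}$ on the right-hand side of the system (the paper writes $G\gamma=\beta$ with that factor suppressed), and your bookkeeping of how it cancels to produce \eqref{RP-NTPS-4} is exactly right.
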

\begin{proof}
We rewrite the linear system \eqref{RP-NTPS-2} in the matrix form
\begin{align}\nonumber
G\gamma=\beta,
\end{align}
where
\begin{align}\nonumber
\begin{split}
&\gamma=(\gamma^{(1)},\gamma^{(2)},\gamma^{(3)})^{\mathrm{T}},\,\gamma^{(1)}=\big(\mu''_{-11}(\widetilde{\zeta}_n;x,t)\big)_{1\times(4N_1+2N_2)},\,\gamma^{(2)}=\big(\mu'_{-11}(\widetilde{\zeta}_n;x,t)\big)_{1\times(4N_1+2N_2)},\\
&\gamma^{(3)}=\big(\mu_{-11}(\widetilde{\zeta}_n;x,t)\big)_{1\times(4N_1+2N_2)},\,\beta=\big(\beta^{(1)},\beta^{(2)},\beta^{(3)}\big)^{\mathrm{T}},\\
&\beta^{(1)}=\bigg(\frac{1}{\zeta_s}\bigg)_{1\times(4N_1+2N_2)},\,\beta^{(2)}=\bigg(\frac{1}{\zeta^2_s}\bigg)_{1\times(4N_1+2N_2)},\beta^{(3)}=\bigg(\frac{2}{\zeta^3_s}\bigg)_{1\times(4N_1+2N_2)}.
\end{split}
\end{align}
Then combining Eq. \eqref{RP-NTPS-3} with the case of reflectionless potential and matrix operation, the triple poles solution \eqref{RP-NTPS-4} can be given out.

\end{proof}

For example, we obtain the triple-pole solutions of the TOFKN \eqref{T1} with NZBCs via Theorem \ref{RP-NTPS-thm1}.

$\bullet$ When taking parameters $N_1=0,N_2=1,q_{\pm}=1,\omega_1=2\mathrm{e}^{\frac{\pi}{6}\mathrm{i}},A[\omega_1]=\mathrm{i},B[\omega_1]=1+(1-\sqrt{2})\mathrm{i},C[\omega_1]=1$, we can obtain the triple-pole bright-dark-bright soliton solution, and give out relevant plots in Fig. \ref{RP-NTPS-F1}. Fig. \ref{RP-NTPS-F1} (a) and (b) exhibit the three-dimensional and density diagrams for the triple-pole bright-dark-bright soliton solution of the TOFKN with NZBCs. Fig. \ref{RP-NTPS-F1} (c) displays the distinct profiles of the triple-pole bright-dark-bright soliton solution for $t=\pm5,0$. It is a semi rational soliton, the triple-pole bright-dark-bright soliton solution shows the interaction of bright soliton, dark soliton and bright soliton. From Fig. \ref{F5}, as taking $N_1=0,N_2=1$, once double-pole solution with NZBCs change to triple-pole solution with NZBCs, the triple-pole solution will arise the second bright soliton branch.

\begin{figure}[htbp]
\centering
\subfigure[]{
\begin{minipage}[t]{0.33\textwidth}
\centering
\includegraphics[height=4.5cm,width=4.5cm]{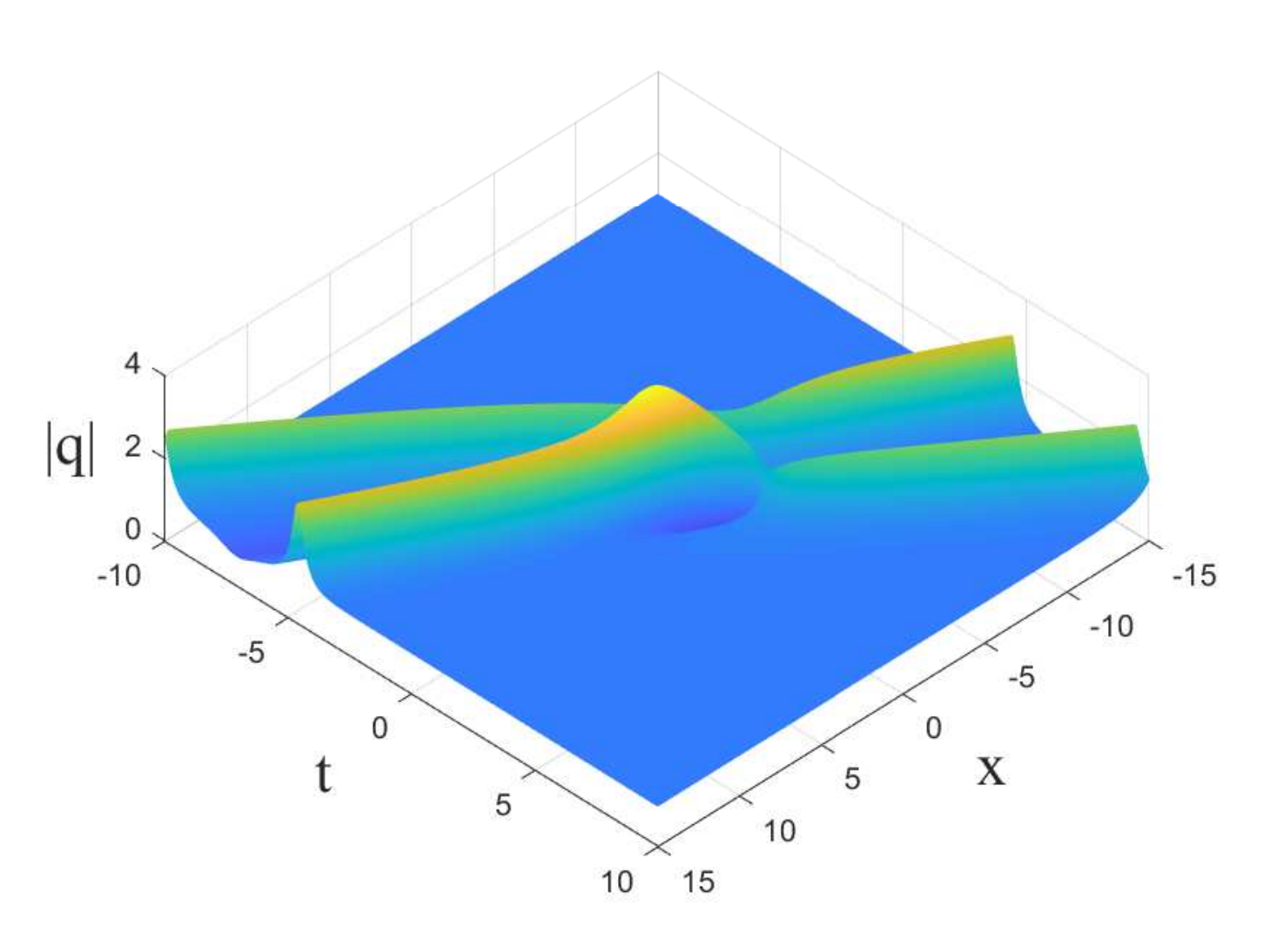}
%\caption{fig1}
\end{minipage}
}%
\subfigure[]{
\begin{minipage}[t]{0.33\textwidth}
\centering
\includegraphics[height=4.5cm,width=4.5cm]{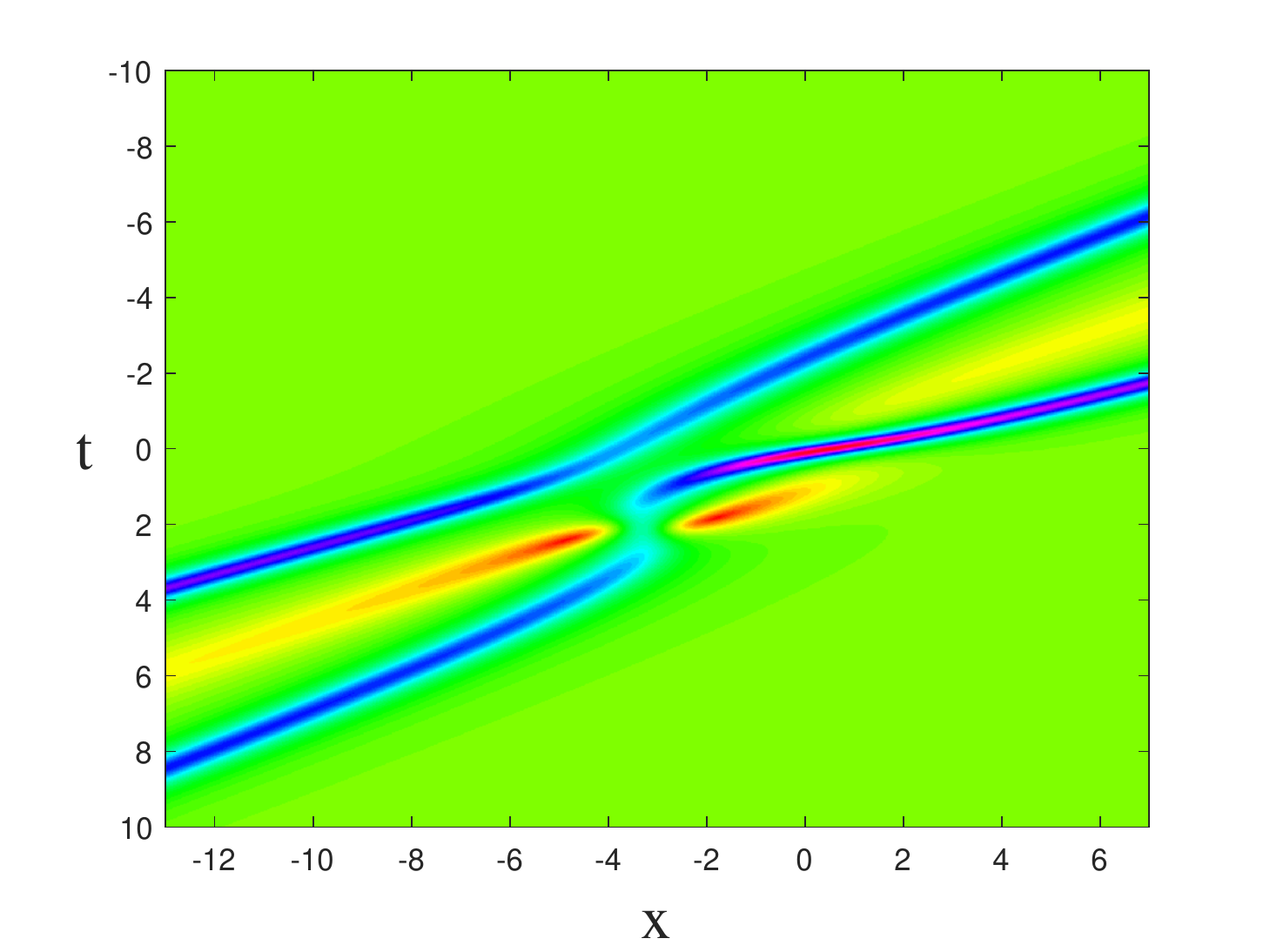}
%\caption{fig2}
\end{minipage}%
}%
\subfigure[]{
\begin{minipage}[t]{0.33\textwidth}
\centering
\includegraphics[height=4.5cm,width=4.5cm]{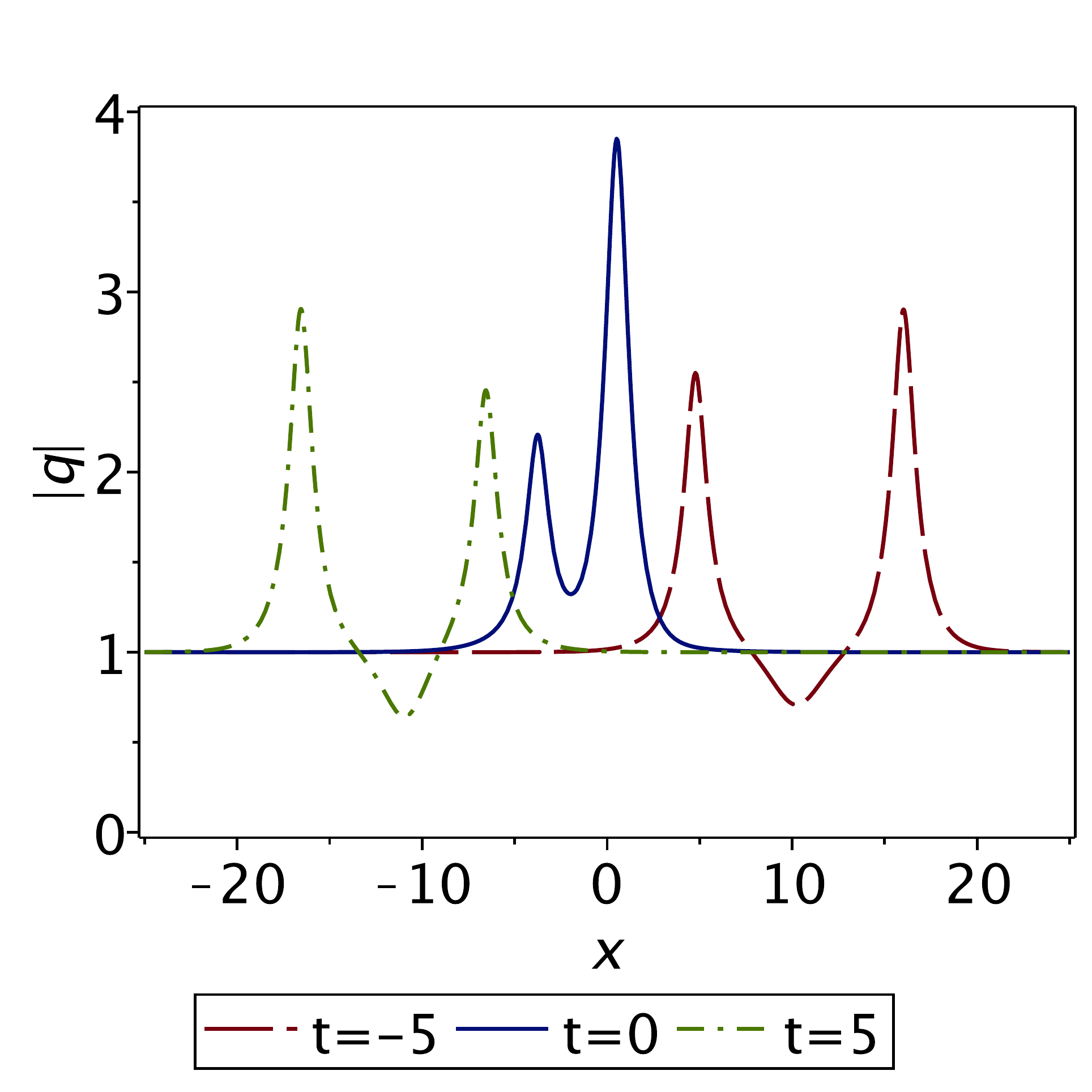}
%\caption{fig1}
\end{minipage}
}%
\centering
\caption{(Color online) The triple-pole bright-dark-bright soliton solution of TOFKN \eqref{T1} with NZBCs and $N_1=0,N_2=1,q_{\pm}=1,\omega_1=2\mathrm{e}^{\frac{\pi}{6}\mathrm{i}},A[\omega_1]=\mathrm{i},B[\omega_1]=1+(1-\sqrt{2})\mathrm{i},C[\omega_1]=1$. (a) The three-dimensional plot; (b) The density plot; (c) The sectional drawings at $t=-5$ (dashed line), $t=0$ (solid line), and $t=5$ (dash-dot line).}
\label{RP-NTPS-F1}
\end{figure}

$\bullet$ When taking parameters $N_1=1,N_2=0,q_{\pm}=1,\zeta_1=\sqrt{2}+\mathrm{i},A[\zeta_1]=B[\zeta_1]=C[\zeta_1]=1$, we can obtain the triple-pole breather-breather-breather solution and give out relevant plots in Fig. \ref{RP-NTPS-F2}. Figs. \ref{RP-NTPS-F2} (a) and (b) exhibit the three-dimensional and density diagrams for the triple-pole breather-breather-breather solution of the TOFKN with NZBCs, respectively. Fig. \ref{RP-NTPS-F2} (c) displays the distinct profiles of the triple-pole breather-breather-breather solution for $t=\pm8,0$. Moreover, from the density plot Fig. \ref{RP-NTPS-F2} (b), we can find that the propagation of the triple-pole breather-breather-breather solution is localized in space $x$ and periodic in time $t$. Obviously, from the Fig. \ref{RP-NTPS-F2} (a) and (b), we can find that the triple-pole breather-breather-breather solution breathes about 10 times at $t=[-8, 8]$ and collides once at near $t=0$. From Fig. \ref{F6}, as taking $N_1=1,N_2=0$, once double-pole solution with NZBCs change to triple-pole solution with NZBCs, the triple-pole solution will arise the third breather branch.

\begin{figure}[htbp]
\centering
\subfigure[]{
\begin{minipage}[t]{0.33\textwidth}
\centering
\includegraphics[height=4.5cm,width=4.5cm]{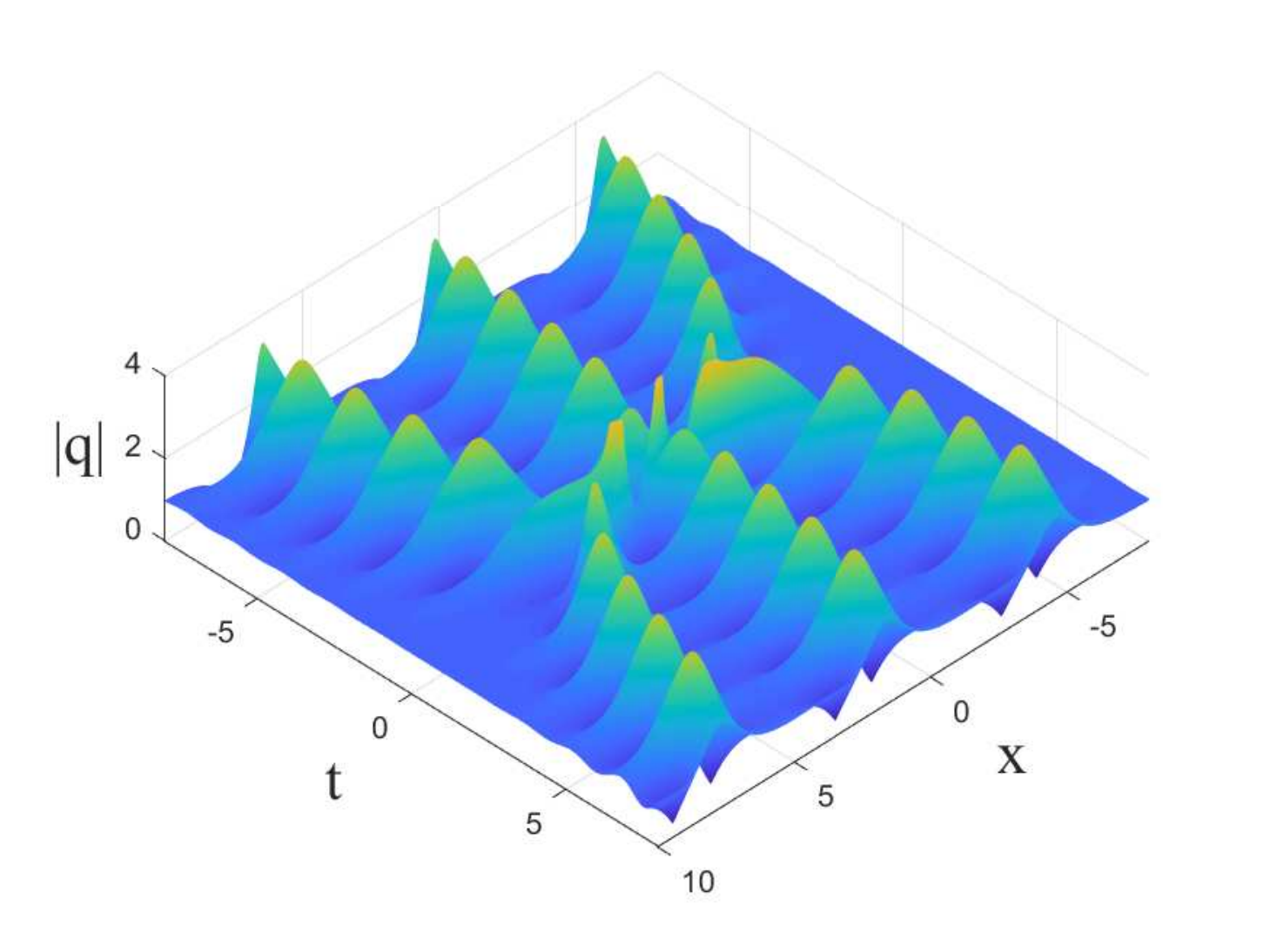}
%\caption{fig1}
\end{minipage}
}%
\subfigure[]{
\begin{minipage}[t]{0.33\textwidth}
\centering
\includegraphics[height=4.5cm,width=4.5cm]{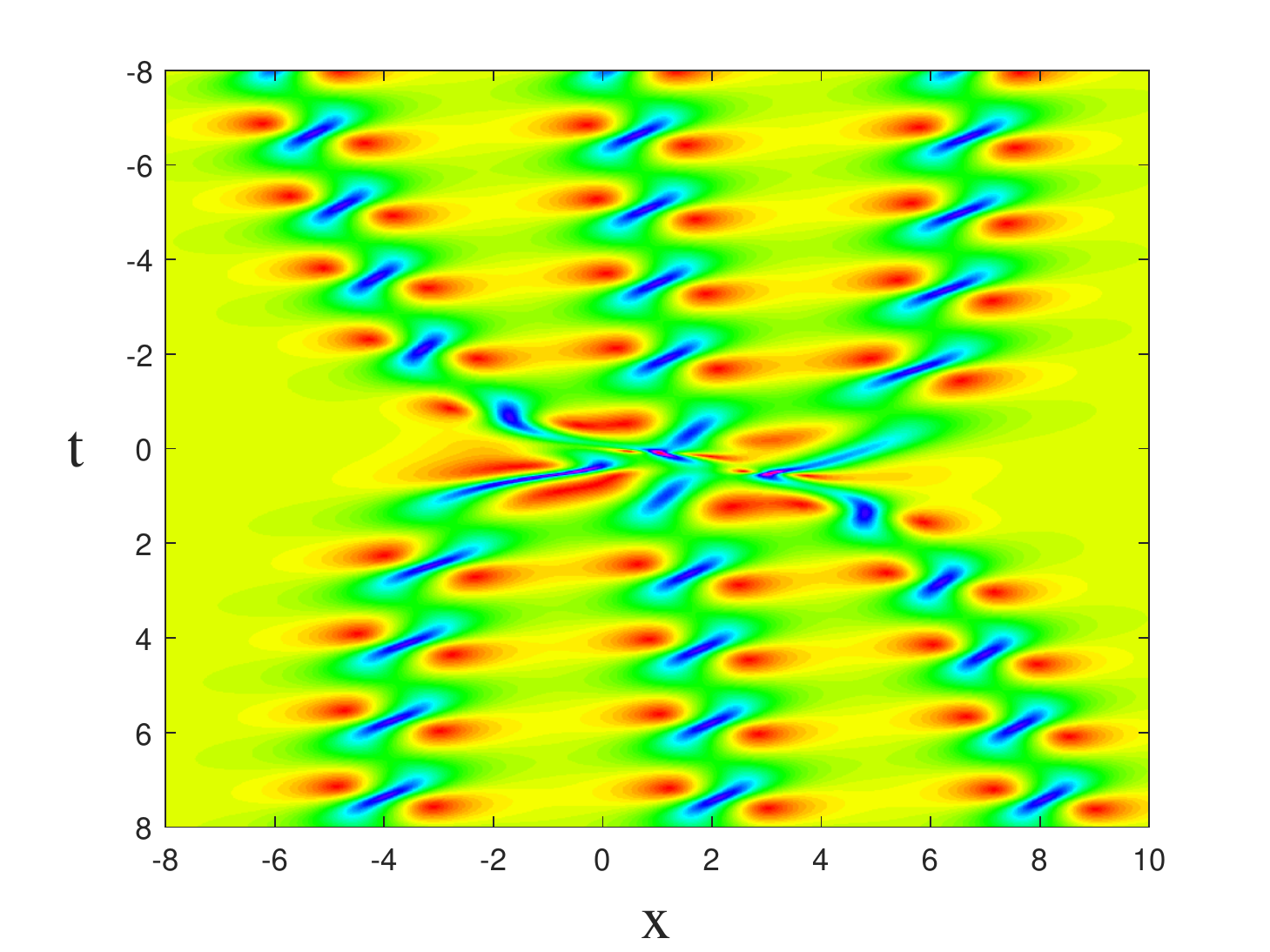}
%\caption{fig2}
\end{minipage}%
}%
\subfigure[]{
\begin{minipage}[t]{0.33\textwidth}
\centering
\includegraphics[height=4.5cm,width=4.5cm]{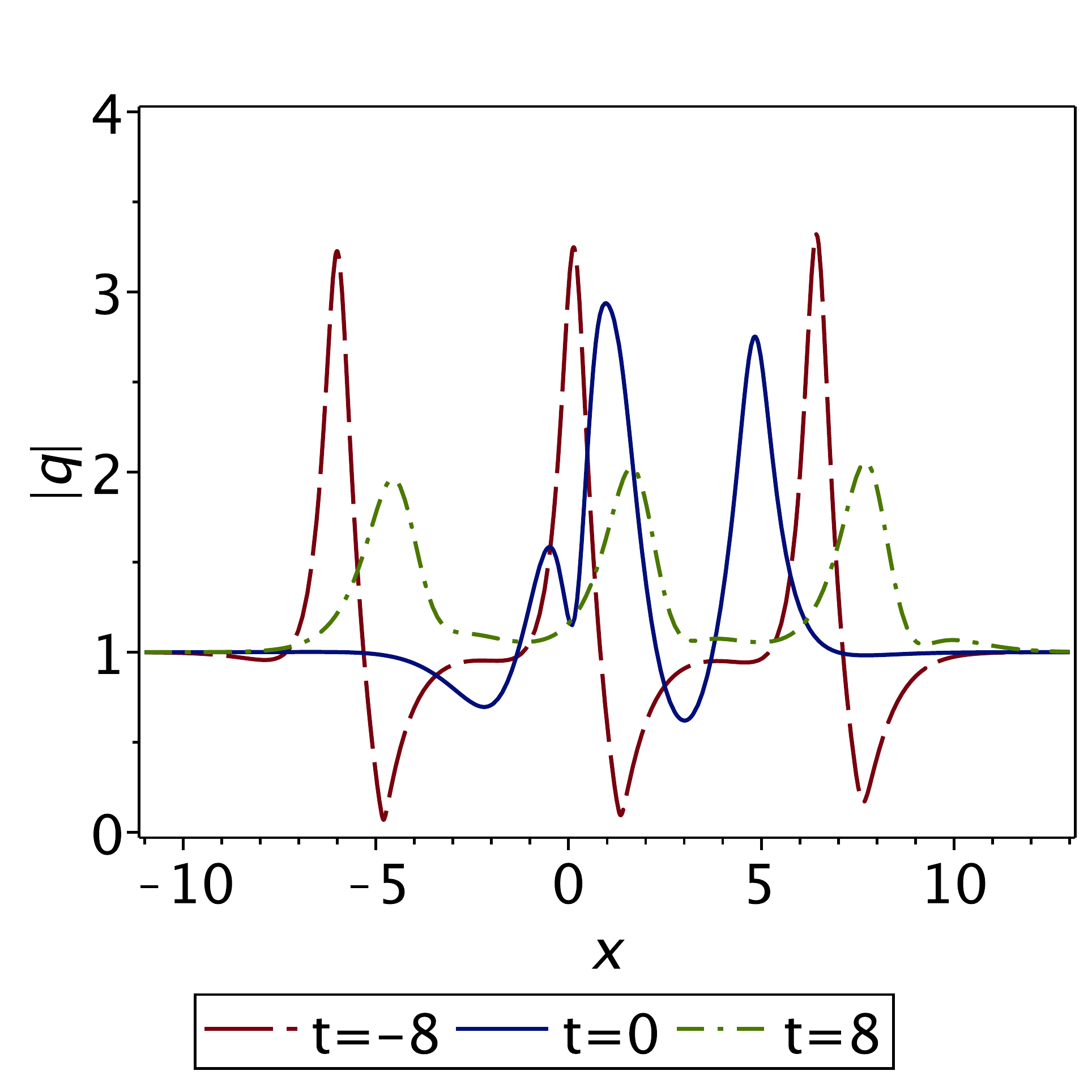}
%\caption{fig1}
\end{minipage}
}%
\centering
\caption{(Color online) The triple-pole breather-breather-breather solution of TOFKN \eqref{T1} with NZBCs and $N_1=1,N_2=0,q_{\pm}=1,\zeta_1=\sqrt{2}+\mathrm{i},A[\zeta_1]=B[\zeta_1]=C[\zeta_1]=1$. (a) The three-dimensional plot; (b) The density plot; (c) The sectional drawings at $t=-8$ (dashed line), $t=0$ (solid line), and $t =8$ (dash-dot line).}
\label{RP-NTPS-F2}
\end{figure}

\section{Conclusions and Discussions}\label{Sec4}
In this paper, we study the IST, double-pole solutions and triple-pole solutions for the TOFKN with ZBCs and NZBCs by means of the RH approach. Starting from spectral problem of the TOFKN, we not only construct the direct scattering problem which illustrates the analyticity, symmetries and asymptotic behaviors, but also establish and solve the inverse problem which can derive the discrete spectrum, trace formula and reflectionless potential with the aid of a matrix RH problem. Specifically, under the condition of reflectionless potential, the general formulas of $N$-double and $N$-triple poles solutions with ZBCs and NZBCs are derived systematically by means of determinants. Vivid illustration, dynamic behavior and asymptotic behavior for some representative double-pole and triple-pole solutions have been given out in details. In the case of ZBCs, we exhibit the dynamic behaviors and associated plots of the $N$-double-pole soliton solutions and $N$-triple-pole soliton solutions when taking fixed $N$, respectively. In the case of NZBCs, when taking different $N_1$ and $N_2$, we give out the dynamic behaviors of abundant double-pole and triple-pole solutions, including the double-pole dark-bright soliton solution as $N_1=0,N_2=1$, the double-pole breather-breather solution as $N_1=1,N_2=0$, the double-pole breather-breather-dark-bright solution as $N_1=1,N_2=1$, the triple-pole bright-dark-bright soliton solution as $N_1=0,N_2=1$ and the triple-pole breather-breather-breather solution as $N_1=1,N_2=0$.

In this work, we have solved many difficult problems and made many novelties. In section \ref{Sec1}, we reveal the corresponding parameter reduction from the coupled TOFKN to the general form of the TOFKN \eqref{T1} specifically, it also provide significant guidance for deriving other higher-order equations of KN systems. For the case of ZBCs in section \ref{Sec2}, we take the lead in obtaining the double-pole and triple-pole soliton solutions with ZBCs of high-order KN system by RH method, and we make a lot of complex derivation and matrix operation, especially in the case of triple-poles. From the asymptotic behavior for the scattering matrix in the direct scattering problem, we connect the constant $\nu$ with the conservation of mass (also called power) $I_1$ of the modified Zakharov-Shabat spectral problem and exhibit essential proof in details. Furthermore, we discover the 1-double-pole soliton solution is equivalent to the elastic collisions of two bright solitons, and the 1-triple-pole soliton solution is equivalent to the elastic collisions of three bright solitons. In order to verify aforementioned conclusion and reveal the long-time asymptotics of the double-pole and triple-pole soliton solution, we derive and analyze the asymptotic states of the 1-double poles soliton solution and the 1-triple poles soliton solution when $t$ tends to infinity. For the case of NZBCs in section \ref{Sec3}, due to the multi-valued case of eigenvalue $k(\lambda)$, we illustrate the scattering problem on a standard $z$-plane by utilizing two single-valued inverse mappings. Since the introduction of $z$-plane, the IST with NZBCs is more complex than the case of ZBCs. Then we take the lead in obtaining the double-pole and triple-pole solutions with NZBCs of high-order KN system by RH approach, we not only make a lot of complex derivation and matrix operation, but also consider more complex symmetries when drawing relevant plots, in which the matrix operation of $12\times12$ is involved in the case of the double-pole breather-breather-dark-bright solution as $N_1=1,N_2=1$ and the triple-pole breather-breather-breather solution as $N_1=1,N_2=0$. Based on the RH method, we further improve the rigorous theory of IST of high-order KN system, and also provides a valuable reference for the IST of the high-order flow equations of KN systems ever other nonlinear integrable systems.

In addition, by comparing the solutions of the TOFKN and the DNLS equation under the ZBCs and NZBCs, it is found that the third-order dispersion and quintic nonlinear term of the KN system can affect both the trajectory and the speed of the solutions. However, the third-order dispersion and quintic nonlinear term of KN system have little influence on the maximum amplitude of the solution at a certain moment. These conclusions are consistent with those in Ref. \cite{Lin(2020)}. These analytical results obtained in this work might have vital reference value for the study of the high-order KN systems even other high-order nonlinear integrable systems, and provide a theoretical basis for possible experimental research and applications. Since the case of the triple-pole solutions with ZBCs and NZBCs involves more complex derivation and matrix calculation, we have not considered the explicit formula of the triple-pole solutions in this paper, which will be further studied in the future. Recently, there are more and more researches on the triple poles solutions even the $N$-pole solutions, and the long-time asymptotics have attracted more and more attention by means of Dbar method and Deift-Zhou steepest descent method. In the future, we will study the multipole solutions and long-time asymptotics for the higher-order flow equations of KN systems even other nonlinear integrable systems.

\section*{Declaration of competing interest}
The authors declare that they have no known competing financial interests or personal relationships that could have appeared to influence the work reported in this paper.

\section*{Acknowledgements}
\hspace{0.3cm}
The authors gratefully acknowledge the support of the National Natural Science Foundation of China (No. 12175069) and Science and Technology Commission of Shanghai Municipality (No. 21JC1402500 and No. 18dz2271000).

\end{document}